\renewcommand*\backref[1]{\ifx#1\relax \else (cited on #1) \fi}
\let\oldnl\nl
\newcommand{\nonl}{\renewcommand{\nl}{\let\nl\oldnl}}
\newcommand{\C}{\mathbb{C}}
\newcommand{\R}{\mathbb{R}}
\newtheorem{theorem}{Theorem}[section]
\newtheorem{corollary}{Corollary}[theorem]
\newtheorem{lemma}[theorem]{Lemma}
\newtheorem{claim}[theorem]{Claim}
\newtheorem{definition}[theorem]{Definition}
\newtheorem{remark}[theorem]{Remark}
\title{Subfile Example}
\author{Team Learn Overleaf}
\date{\vspace{-5ex}} 
\begin{document}
\normalem
\title{
{ Complete Decomposition of Symmetric Tensors} in Linear Time and Polylogarithmic Precision}

\author{Pascal Koiran and Subhayan Saha\thanks{P.K. is with Univ Lyon, EnsL, UCBL, CNRS, LIP.
  Email: firstname.lastname@ens-lyon.fr. S.S. is with Department of Mathematics and Operational Research, 
University of Mons, 
Mons, Belgium, Email: firstname.lastname@umons.ac.be.}}
\maketitle
\begin{abstract}

{\footnotesize We  study
symmetric tensor decompositions, i.e., decompositions of the form 
$T = \sum_{i=1}^r u_i^{\otimes 3}$ where $T$ is a symmetric tensor of order 3 and $u_i \in \mathbb{C}^n$.  In order to obtain efficient decomposition algorithms, it is necessary to require additional properties from the $u_i$. In this paper we assume that the $u_i$ are linearly independent. This implies $r \leq n$, i.e., the decomposition of $T$ is {\em undercomplete}. We will moreover assume that $r=n$ (we plan to extend this work to the case $r<n$ in a forthcoming paper).

We give a randomized algorithm for the following problem: given $T$, an accuracy parameter $\varepsilon$, and an upper bound $B$ on the \textit{condition number} of the tensor, output vectors $u'_i$ such that $||u_i - u'_i|| \leq \varepsilon$ (up to permutation and multiplication by phases) with high probability. The main novel features of our algorithm are:
    \begin{itemize}
        \item We provide the first algorithm for this problem that works in the  computation model of finite arithmetic and requires only poly-logarithmic (in $n, B$ and $\frac{1}{\varepsilon}$) many bits of precision.
        \item { Moreover, this is also} the first algorithm that runs in linear time in the size of the input tensor. It requires $O(n^3)$ arithmetic operations for all accuracy parameters $\varepsilon = \frac{1}{\text{poly}(n)}$.
        \end{itemize}

In order to obtain these results, we rely on a mix of techniques from algorithm design and algorithm analysis.
The algorithm is a modified version of simultaneous diagonalisation algorithm for symmetric tensors. In terms of algorithm design, our main contribution lies in replacing the usual appeal to resolution of a linear system of equations \cite{Kayal11,BCMV13} by a matrix trace-based method. The analysis of the algorithm depends on the following components:
    \begin{enumerate}
        \item We use the fast and numerically stable diagonalisation algorithm from \cite{9317903}. We provide better guarantees for the approximate solution returned by the diagonalisation algorithm when the input matrix is diagonalisable.
        \item We show strong anti-concentration bounds for certain families of polynomials when the randomness is sampled uniformly from a discrete grid.
    \end{enumerate}
\textbf{Keywords:} Tensor Decomposition; Finite precision arithmetic ; simultaneous diagonalisation algorithm ; Computational Linear Algebra
}
\end{abstract}
\newpage
\section{Introduction}
Tensor decompositions have generated significant interest  in  recent years due to their applications in different fields such as signal processing, computer vision, chemometrics, neuroscience and others (see \cite{BK09} for a comprehensive survey on the applications and available software for this problem). In fact, a number of learning algorithms for certain models have been developed through the fundamental machinery of tensor decompositions. Pure topic models (\cite{AHK'12}), blind source separation and independent component analysis (\cite{DELATHAUWER2010155}), Hidden Markov Models (\cite{MR'05},\cite{HKZ'08}), mixture of spherical gaussians (\cite{HK12},\cite{GHK15}), Latent Dirichlet Allocation (\cite{AFH+12}). Numerous algorithms have been devised for solving the tensor decomposition problem with different assumptions on the input tensor and different efficiency and accuracy bounds \cite{Har70,LRA93,BCMT09,BGI'11,GVX13,BCMV13,AGH+15,GM15,HSS15,MSS16,KP20,BHKX22,DdOLST22,AAM-38-385}.
\par
In this paper, we study the algorithmic problem of \textit{approximately} decomposing an arbitrary symmetric order-$3$ tensor $T \in \C^n \otimes \C^n \otimes \C^n$ uniquely (up to permutation and scaling) into a sum of rank-one tensors. To do this efficiently, we need to impose certain restrictions on the "independence" of the rank-one components. More formally, we assume that the rank-one components are of the form $u_i \otimes u_i \otimes u_i$ where the $u_i$'s are linearly independent. We will explore these restrictions in more detail in Section \ref{sec:symmtensordecomp}.
\par
While this problem is well-studied when the underlying model of computation is exact real arithmetic (even when the input tensor has some noise), not much work has been done in the setting where the underlying model of computation is finite precision arithmetic  (see  Section \ref{sec:fparithmetic} for a presentation of this model). The key difficulties lie in the fact that every arithmetic operation in this model is done approximately and the stored numbers can also have some adversarial error (even the input).
\par
An iterative algorithm is called numerically stable if it can be
implemented using polylogarithmic many bits in finite precision arithmetic \cite{smale_1997,9317903}.
{ The central contribution of this paper is a rigorous analysis of a numerically stable algorithm 
that runs in linear time in the input size. This algorithm is inspired by simultaneous diagonalisation algorithm, and appears as 
Algorithm \ref{algo:Jennrich} in Section~\ref{sec:complete}.}

\subsection{Symmetric tensor decomposition}\label{sec:symmtensordecomp}
{ Let $T \in \C^n \otimes \C^n \otimes \C^n$ be a symmetric tensor of order $3$. We recall that such an object can be viewed
as a 3-dimensional array $(T_{ijk})_{1 \leq i,j,k \leq n}$ that 
is invariant under all 6 permutations of the indices $i,j,k$.
This is therefore a 3-dimensional generalization of the notion 
of  symmetric matrix.}
In this paper, we { study} 
symmetric tensor decompositions, i.e., decompositions of the form 
\begin{equation}\label{eq:decompdef}
    T = \sum_{i=1}^r u_i \otimes u_i \otimes u_i
\end{equation}
where $u_i \in \C^n$. The smallest possible value of $r$ is the symmetric tensor rank of $T$ and it is NP-hard to compute already for $d=3$. 
This was shown by Shitov~\cite{Shi16}, and a similar NP-hardness result for ordinary tensors was obtained much earlier by H{\aa}stad~\cite{Hstad'89}.
In this paper, we impose an additional linear independence condition on the $u_i$. Note that such a decomposition is unique if it exists, up to a permutation of the $u_i$'s and scaling by cube roots of unity~\cite{KRUSKAL77,Har70}.
{ There is a traditional distinction
between {\em undercomplete} decompositions, 
where $r\leq n$ in~(\ref{eq:decompdef}), and {\em overcomplete} decompositions, where $r>n$.
In this paper  we consider only undercomplete decompositions because of the linear independence condition on the $u_i$. Moreover, we will impose 
the addditional condition that $r$ is exactly equal to $n$, i.e., we focus on {\em complete decompositions.}
We say that a tensor is {\em diagonalisable} if it satisfies these two conditions. 
The results of the present paper will be extended to general undercomplete decomposition in a forthcoming work by reduction to the complete case.}
\par
{ One can also study the \textit{decision} version of the problem: Given an arbitrary symmetric tensor $T$, is $T$ diagonalisable? A randomized polynomial-time algorithm is known for this problem \cite{KoiranSaha21,KS21}.}


\subsection{ 
Approximate tensor decomposition}\label{sec:fwddef}

{ As explained above,} an order-$3$ symmetric tensor $T \in \C^n \otimes \C^n \otimes \C^n$ is called diagonalisable\footnote{ In a different terminology, these are the "concise symmetric tensors of minimal rank." Concise tensors of minimal {\em border} rank have been studied recently from a geometric point of view in~\cite{JLP22}.} if there exist linearly independent vectors $u_i \in \C^n$ such that $T = \sum_{i=1}^n u_i^{\otimes 3}$. The objective of the $\varepsilon$-approximation problem for tensor decomposition is to find linearly independent vectors $u'_1,...,u'_n$ such that there exists a permutation $\pi \in S_n$ where
\begin{align*}
    ||\omega_iu_{\pi(i)} - u'_i|| \leq \varepsilon
\end{align*}
with $\omega_i$  a cube root of unity. Here $\varepsilon$ is the desired accuracy parameter given as input. 
{ Hence the problem is essentially that of approximating the vectors $u_i$ appearing in the
decomposition of $T$. Note that this is a {\em forward approximation} in the sense of numerical analysis (compare with definitions~\ref{def:forward} and~\ref{def:backward}).}

\subsection{Model of Computation}\label{sec:fparithmetic}

{ We are chiefly interested in the finite precision model 
of arithmetic. Some algorithms are also presented in
exact real arithmetic as an intermediate step toward 
their derivation in the finite precision model.
For the latter model, we use like \cite{9317903} (refer to Section 1.1.2) the standard ﬂoating point axioms from \cite{Hig02}. 
We now elaborate on this model for completeness of the exposition. 

 It is assumed that} numbers are stored and manipulated up to some machine precision $u$ which is a function of $n$, the size of the input and $\delta$ which is the desired accuracy parameter. This means that every number $x \in \mathbb{C}$ is stored as $\text{fl}(x) = (1 + \Delta)x$ for some adversarially chosen $\Delta \in \mathbb{C}$, satisfying $|\Delta| \leq u$ and each arithmetic operation $ * \in \{+ , - , \times, \div\}$ is guaranteed to yield an output satisfying 
\begin{equation}\label{eq:floatingpointarithmetic}
    \text{fl}(x * y) = (x * y)(1 + \Delta) \text{ where } |\Delta| \leq u
\end{equation}
It is also standard and convenient to assume that we can evaluate $\sqrt{x}$ and $x^{\frac{1}{3}}$ for any $x \in \C$, where again $\text{fl}(\sqrt{x}) = \sqrt{x}(1 + \Delta)$ and $\text{fl}(x^{\frac{1}{3}}) = y(1 + \Delta)$ for $|\Delta| \leq u$ where $y$ is a cube root of $x$.
\par
Thus, the outcomes of all operations are adversarially noisy due to roundoff. The bit lengths
of numbers stored in this form remain ﬁxed at $\log(\frac{1}{u})$.
An iterative algorithm that can be implemented in ﬁnite precision (typically, polylogarithmic in the input size and desired accuracy) is called numerically stable. Note that in this model it is not even assumed that the input is stored exactly.

\subsection{Results and Techniques}
\label{sec:results}

Recall that an order-$3$ tensor $T \in (\C^n)^{\otimes 3}$ is called diagonalisable if there exist linearly independent vectors $u_1,...,u_n \in \C^n$ such that $T$ can be decomposed as in (\ref{eq:decompdef}).
\begin{definition}[Condition number of a diagonalisable symmetric tensor]
\label{def:conditionnumber}
Let $T$ be a diagonalisable symmetric tensor over $\mathbb{C}$ such that $T = \sum_{i=1}^n u_i^{\otimes 3}$. Let $U \in M_n(\mathbb{C})$ be the matrix with rows $u_1,\ldots,u_n$. We define the tensor decomposition condition number of $T$ as: $\kappa(T) = ||U||^2_F + ||U^{-1}||^2_F$.
\end{definition}
{ 
 We will show in Section \ref{sec:complete} that $\kappa(T)$ is well defined: for a diagonalisable tensor the condition number is independent of the choice of $U$.
Note that when $U$ is close to a singular matrix, the
corresponding tensor is poorly conditioned, i.e., has a large condition number. This is not surprising since 
our goal is to find a decomposition where the vectors $u_i$ are linearly independent.} 
\par
Our main result is a randomized polynomial time algorithm in the finite precision 
model which on input a diagonalisable tensor, an estimate $B$ for the condition number of the tensor and an accuracy parameter $\varepsilon$, returns a forward approximate solution to the tensor decomposition problem (following the definition in Section \ref{sec:fwddef}).
\par
In the following, we denote by $T_{MM}(n)$ the number of arithmetic operations required to multiply two $n \times n$ matrices in a numerically stable manner. { If $\omega$ denotes the exponent of matrix multiplication, it is known that $T_{MM}(n)=O(n^{\omega+\eta})$ for all $\eta>0$ (see Section~\ref{sec:fastlinalg} for details).}
\begin{theorem}[Main Theorem] \label{th:main}
{ There is an algorithm} which, given a diagonalisable tensor $T$, a desired accuracy parameter $\varepsilon$ and some estimate $B \geq \kappa(T)$, 
outputs an $\varepsilon$-approximate solution to the tensor decomposition problem for $T$ in 
$$O(n^3 + T_{MM}(n)\log^2 \frac{nB}{\varepsilon})$$ arithmetic operations on a floating point machine with 
$$O(\log^4(\frac{nB}{\varepsilon})\log n)$$ bits of precision, with probability at least $\Big(1- \frac{1}{n} - \frac{12}{n^2}\Big)\Big(1 - \frac{1}{\sqrt{2n}} - \frac{1}{n}\Big)$.
\end{theorem}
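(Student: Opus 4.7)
I would analyze a symmetric variant of Jennrich's algorithm carefully instantiated in floating point; the argument naturally breaks into four stages.

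\emph{Algorithm.} Sample two vectors $a,b \in \C^n$ independently from a suitable discrete grid and form the third-mode contractions $M_a = T(\cdot,\cdot,a)$ and $M_b = T(\cdot,\cdot,b)$, which are symmetric $n\times n$ matrices computable in $O(n^3)$ arithmetic operations. Letting $U\in M_n(\C)$ be the matrix whose columns are the $u_i$, a direct calculation gives $M_c = U\,\mathrm{diag}(\langle u_i,c\rangle)\,U^T$ for $c \in \{a,b\}$, and therefore
\[
M_a M_b^{-1} \;=\; U\,\mathrm{diag}\!\left(\langle u_i,a\rangle/\langle u_i,b\rangle\right)U^{-1},
\]
which is diagonalisable with right eigenvectors proportional to the columns of $U$. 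Pass this matrix to the stable diagonalisation routine of \cite{9317903} to extract approximations $\tilde v_i = c_i u_i$ for unknown scalars $c_i\in\C$. To recover the scalings without resorting to a Vandermonde-type linear system, compute the biorthogonal dual basis $\tilde w_i$ (rows of $\tilde V^{-1}$) and the triple contractions $T(\tilde w_i,\tilde w_i,\tilde w_i)$; a short computation shows this equals $c_i^{-3}$, so $u_i = T(\tilde w_i,\tilde w_i,\tilde w_i)^{1/3}\,\tilde v_i$. All $n$ triple contractions can be performed in $O(n^3)$ by first computing the third-mode product $T\cdot_3 \tilde W$ and then contracting further, and the remaining linear algebra (one inverse, one product, one diagonalisation) runs on $n\times n$ matrices.

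\emph{Anti-concentration.} Correctness and stability demand two events on $(a,b)$: $M_b$ is well-conditioned, and the eigenvalues $\langle u_i,a\rangle/\langle u_i,b\rangle$ are pairwise well-separated. Both reduce to lower bounds on $|p(a,b)|$ for polynomials of bounded degree whose coefficients are determined by $U$, and hence controlled by $\kappa(T) \leq B$. Because $a,b$ live on a discrete grid, a naive Schwartz--Zippel argument is insufficient; I would invoke the strong grid anti-concentration inequalities that the paper develops as its second analytic ingredient. This yields quantitative bounds of the form $\mathrm{poly}(1/(nB))$ on $\|M_b^{-1}\|$ and on the eigenvalue gap, holding with probability at least $1-1/n-O(1/n^2)$.

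\emph{Numerical stability and complexity.} I would then propagate forward errors through every stage. Forming $M_a,M_b$ from the stored tensor is a linear contraction and loses only a factor $O(un^2)$, where $u$ is the machine precision. Inverting $M_b$ and multiplying by $M_a$ costs $O(T_{MM}(n))$ operations and amplifies error by $\mathrm{poly}(\kappa(M_b))$, controlled via Stage 2. Feeding the result to \cite{9317903} produces eigenpairs to accuracy $\delta$ in $O(T_{MM}(n)\log^2(1/\delta))$ operations; here I would use the sharpened guarantee advertised in the introduction for diagonalisable inputs, which avoids the generic $1/\delta$-type loss of the general-purpose algorithm. Inverting $\tilde V$ and evaluating the triple contractions $T(\tilde w_i,\tilde w_i,\tilde w_i)$ adds only a further $\mathrm{poly}(\kappa(T))$ multiplicative factor. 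A multiplicative error budget shows $u = 2^{-\Theta(\log^4(nB/\epsilon)\log n)}$ bits suffice for forward error $\epsilon$. The total cost is $O(n^3 + T_{MM}(n)\log^2(nB/\epsilon))$, and the product failure probability in the statement is the union bound over the two independent randomness sources: the grid sample for $(a,b)$, contributing $1-1/n-12/n^2$; and a second independent draw used inside \cite{9317903} (typically a Gaussian perturbation to shatter the pseudospectrum), contributing $1-1/\sqrt{2n}-1/n$.

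\emph{Main obstacle.} The crux, in my view, is the interface between Stages 2 and 3. Attaining polylogarithmic precision requires the grid anti-concentration to produce inverse-polynomial --- not inverse-exponential --- condition number and gap bounds, \emph{and} the sharpened diagonalisable-input analysis of \cite{9317903} to convert the generic per-step $1/\delta$ loss into a $\log(1/\delta)$ loss. Reconciling those two analyses, while retaining the numerical constants in the probability bound and the exact $\Theta(n^3)$ cost of reading the input, is where the bulk of the technical work lies; the remaining steps are comparatively routine once the gap and condition number are in hand.
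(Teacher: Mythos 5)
Your overall architecture is essentially the paper's: random linear combinations of the slices drawn from a discrete grid, simultaneous diagonalisation via the stable algorithm of \cite{9317903} with a sharpened conditioning guarantee for diagonalisable inputs, recovery of the $u_i$ from the (dual of the) eigenvector matrix, a two-stage Carbery--Wright-plus-rounding anti-concentration argument for the condition number and the eigenvalue gap, and forward error propagation through each step. The one place where you genuinely deviate is the recovery of the scaling factors, and that is exactly where your proposal has a gap. You claim that all $n$ triple contractions $T(\tilde w_i,\tilde w_i,\tilde w_i)$ can be computed in $O(n^3)$ operations ``by first computing the third-mode product $T\cdot_3 \tilde W$ and then contracting further.'' That product is the multiplication of an $n^2\times n$ unfolding of $T$ by an $n\times n$ matrix: it costs $\Theta(n^4)$ with classical multiplication and $O(n\,T_{MM}(n))$ with fast multiplication, and computing each contraction separately is no better, since forming $\sum_k(\tilde w_i)_k T_k$ already takes $\Theta(n^3)$ per index $i$. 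Either way the cost is not $O(n^3)$, so the bound $O(n^3+T_{MM}(n)\log^2\frac{nB}{\epsilon})$ --- the headline linear-time claim of the theorem --- is lost. Computing the diagonal entries of the tensor after a change of basis in $O(n^3)$ is precisely the obstacle the paper points out, and its way around it is the main algorithmic novelty: compute the \emph{traces of the slices} instead of the diagonal entries. Writing the $i$-th slice of the transformed tensor as $V^TD_iV$ with $D_i=\sum_m v_{mi}T_m$ (Theorem~\ref{thm:P3structural}) and using cyclicity of the trace, $\mathrm{Tr}(V^TD_iV)=\sum_m v_{mi}\,\mathrm{Tr}\bigl((V^TV)T_m\bigr)$, so after a single computation of $W=V^TV$ all $n$ traces are obtained in $O(n^3)$ total (Algorithm~\ref{algo:fastcob}); since the off-diagonal entries of the transformed tensor are provably small, the trace of the $i$-th slice approximates the desired diagonal entry, hence the cube of the scaling. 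You would need this trick (applied to your dual matrix $\tilde V^{-1}$, or to the eigenvector matrix as in the paper), or an equivalent idea, for your plan to deliver the stated complexity.

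Two smaller inaccuracies, fixable but worth noting: the factor $1-\frac{1}{n}-\frac{12}{n^2}$ in the success probability comes from the internal randomness of the diagonalisation algorithm of \cite{9317903}, while $1-\frac{1}{\sqrt{2n}}-\frac{1}{n}$ comes from the grid choice of $(a,b)$ --- you have the two sources swapped; and the improved guarantee for diagonalisable inputs is a $\delta$-independent bound on $\kappa_F(V)$ in terms of the Frobenius eigenvector condition number of the input matrix (together with a separate treatment of the scaling needed to enforce $\|A\|\leq 1$), not a conversion of a ``$1/\delta$ loss'' into a ``$\log(1/\delta)$ loss''; the $\log(1/\delta)$ only enters the operation count and the $\log^4$ precision bound. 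The anti-concentration and finite-precision parts of your plan are consistent with the paper's but are, of course, only sketched; the quantitative inverse-polynomial bounds on $\kappa_F(T^{(a)})$ and on $\mathrm{gap}((T^{(a)})^{-1}T^{(b)})$, and the step-by-step error budget, are where most of the remaining work lies.
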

{ The corresponding algorithm appears as Algorithm~\ref{algo:Jennrich} in Section~\ref{sec:complete}.}
A simplified version of this algorithm is presented in Section~\ref{sec:algorithmvague}.
The following are the important conclusions from the above theorem:
\begin{itemize}
    \item The number of bits of precision required for this algorithm is polylogarithmic in $n$, $B$ and $\frac{1}{\varepsilon}$.
    \item { The running time as measured by the number of arithmetic operations is $O(n^3)$ for all $\varepsilon = \frac{1}{\text{poly(n)}}$, i.e., it is linear in the size of the input tensor. This requires the use of fast matrix multiplication. With standard matrix multiplication, the running time is quasilinear instead of linear  (i.e., it is multiplied by a polylogarithmic factor). The bit complexity of the algorithm is also quasilinear.}
{ \item The algorithm can provide inverse exponential accuracy, 
i.e., it still runs in polynomial time even when the desired accuracy parameter is $\varepsilon = \frac{1}{\exp{(n)}}$.}
\end{itemize}
{ In order to obtain this result we combine techniques from algorithm design and algorithm analysis;
the main ideas are outlined in sections \ref{sec:algorithmvague}
and \ref{sec:overview}.
To the best of our knowledge, this is the first tensor
{ decomposition algorithm shown to work in polylogarithmic precision.  Moreover, this algorithm is also the first to run in a 
linear number of arithmetic operations (i.e., prior to this work 
no linear time algorithm was known, {\em even in the exact arithmetic model}).} }

{ Even if our algorithm is run 
in exact arithmetic, we do not know how to improve the
$O(n^3 + T_{MM}(n)\log^2 \frac{nB}{\varepsilon})$ bound on the number
of arithmetic operations from Theorem~\ref{th:main} 
In particular, some dependency on $\epsilon$
is unavoidable.  Indeed, the exact decomposition of a symmetric tensor 
sometimes requires the use of irrational numbers even if the input tensor has rational entries (see for instance the examples in Section 3 of~\cite{KS21}).
}

\subsubsection{Related work and discussion}
\label{sec:related}

Our algorithm can be viewed as an optimized version of the simultaneous diagonalisation algorithm \cite{Har70,LRA93,Moitra2018AlgorithmicAO}. This algorithm, also referred to in the  literature as  "Jennrich's algorithm," was one of the first  to give provable guarantees for tensor decomposition. In fact, if the input tensor satisfies certain genericity conditions 
this algorithm returns the unique decomposition (up to permutation and scaling) almost surely. It was shown in \cite{BCMV13} that this algorithm runs in polynomial time in the exact arithmetic computational model, i.e., when the model has the underlying assumption that all the steps of the algorithm can be performed exactly. 
Moreover, it is shown in the same paper that
the algorithm is robust to noise in the input. Namely, it was shown that for an input tensor $\Tilde{T} = \sum_{i=1}^n v_i^{\otimes 3} + E$ where $E$ is some arbitrary inverse-polynomial noise, the "simultaneous diagonalisation" algorithm can also be used to output a decomposition $\Tilde{v}_i$ such that  $||v_i - \Tilde{v}_i|| \leq \varepsilon$.  At the heart of the robustness analysis of the simultaneous diagonalisation algorithm in \cite{BCMV13} (refer to their Appendix A) is the following statement about diagonalisability of perturbed matrices: Let $M$ be a diagonalisable matrix that can be written as $M = UDU^{-1}$ where the condition number of $U$ is bounded and let $\Tilde{M}$ be another matrix such that $||M - \Tilde{M}||$ is small. Then $\Tilde{M}$ is also diagonalisable, has distinct eigenvalues and the eigenvectors of $M$ and $\Tilde{M}$ are close. This is similar in spirit to Proposition 1.1 in \cite{9317903} (Theorem \ref{thm:backwardtofwd} in this paper.)
\par

{ One may also drop the genericity condition and attempt to decompose an arbitrary low-rank tensor given as input.
For symmetric tensors with constant rank, such an algorithm can be found in \cite{BSV21}. This algorithm was recently extended to slightly superconstant rank in~\cite{PSV22}.
Still other algorithms for symmetric tensor decomposition can be found in the algebraic literature, see e.g.  \cite{BCMT09,BGI'11}. These two papers do not provide any complexity analysis for their algorithms.

The results presented in this paper are in stark contrast with those of Beltr\'an et al.~\cite{BBV19}. That paper analyzes a  class of tensor decomposition algorithms related to the simultaneous diagonalisation algorithm.
Their conclusion is that all these  "pencil-based algorithms" are numerically unstable. 
In Section \ref{sec:condition} we will argue that we can escape this negative result because (like the version 
of the simultaneous diagonalisation algorithm analyzed in~\cite{BCMV13}) 
our algorithm is randomized. Indeed, the pencil-based
algorithms of~\cite{BBV19} are all deterministic.
Beltr\'an et al. conclude their paper with the following 
sentence:
"We hope that these observations may (re)invigorate the search for numerically stable algorithms for computing CPDs."\footnote{CPD stands for \emph{Canonical Polyadic Decomposition, i.e., decomposition as a sum of rank-1 tensors.}"} The algorithm presented in this paper answers their call, at least for the case of complete decomposition of symmetric tensors. 
 We believe that our techniques can also 
 be applied to decomposition of ordinary tensors. In this paper we have chosen to focus on symmetric tensors because this setting is somewhat simpler technically.\footnote{As is already apparent from the length of the paper, 
 even in the symmetric setting there are plenty of technical details  that need to be taken  care of.}
{ In \cite{koiran2024undercomplete}, we extend this work to 
 the case of undercomplete decompositions. Note that this requires a change in the definition of the condition number $\kappa(T)$; the role of $U^{-1}$ will now be played by the Moore-Penrose pseudoinverse. In  \cite{koiran2024undercomplete} we also perform a smoothed analysis of the condition number.  It confirms that our algorithm indeed runs in linear time, except for some rare badly conditioned inputs.}}
 
\par

\subsubsection{The Algorithm}\label{sec:algorithmvague}

{ Before giving a high-level presentation of our algorithm, we introduce a few  notations. A symmetric tensor $T \in \C^n \otimes \C^n \otimes \C^n$ can be cut into $n$ slices $T_1,\ldots,T_n$ where $T_k = (T_{ijk})_{1 \leq i,j \leq n}$. Each slice is a symmetric matrix of size $n$.
In the algorithm below we also make use of a "change of basis" operation, which applies a linear map of the form $A \otimes A \otimes A$ to a tensor. Here, $A \in M_n(\C)$ and we apply
$A$ to the 3 components of the input tensor. In particular,
for rank-1 symmetric tensors we have 
$$(A \otimes A \otimes A).(u \otimes u \otimes u)=(A^Tu)^{\otimes 3}.$$ 
We give more details on this operation at the beginning of Section~\ref{sec:cobfinitear}.}
The algorithm proceeds as follows.
\begin{enumerate}
    \item[(i)] Pick vectors $a = (a_1,...,a_n)$ and $b = (b_1,...,b_n)$ at random from a finite set and compute two random linear combinations $T^{(a)} = \sum_{i=1}^n a_i T_i$ and $T^{(b)} = \sum_{i=1}^n b_i T_i$ of the slices of $T$.
    \item[(ii)] Diagonalise $(T^{(a)})^{-1}T^{(b)} = VDV^{-1}$. Let $v_1,...,v_n$ be the columns of~$V$.
    \item[(iii)]  Let $u_1,...,u_n$ be the rows of $V^{-1}$.
    \item[(iv)] Let $T' = (V \otimes V \otimes V). T$. Let $T'_1,...,T'_n$ be the slices of $T'$. Define $\alpha_i = \text{Tr}(T'_i)$. We will refer to the computation of $Tr(T'_i)$ as the trace of slices after a change of basis  (TSCB). 
    \item[(v)] Output $(\alpha_1)^{\frac{1}{3}}u_1,...,(\alpha_n)^{\frac{1}{3}} u_n$.
\end{enumerate}
{The above algorithm is a modified version of 
the simultaneous diagonalisation algorithm for symmetric tensors. In terms of algorithm design, our main contribution lies in step (iv). 
Previous versions of the simultaneous diagonalisation algorithm have appealed instead
to the resolution of a linear system of equations: 
see e.g. \cite{BCMV13,Moitra2018AlgorithmicAO} for the case
of ordinary tensors. In the symmetric case, the algebraic algorithm in~\cite{Kayal11} for decomposition of a polynomial as a sum of powers of linear forms
also appeals to the resolution of a linear system for essentially
the same purpose. Our trace-based version of step (iv) is more efficient, and this is crucial for the derivation of the complexity bounds in Theorem~\ref{th:main}. 
Step (iv) is indeed the most expensive: 
it is responsible for the $O(n^3)$ term in the arithmetic complexity of the algorithm. We explain informally at the beginnining of Section~\ref{sec:overview} why our trace-based
approach works.}

\par
In Section \ref{sec:completeexactarithmetic}, we state the above algorithm in more detail and show that if this algorithm is given a complete diagonalisable tensor exactly as input, it indeed returns the (unique) decomposition. In the underlying computational model assumed for the analysis in that section, all arithmetic operations can be done exactly and matrices can be diagonalised exactly.
This is the algorithm that we will adapt to the finite arithmetic model in Section \ref{sec:complete}.
\par
{
\subsubsection{Proof overview and organization of the paper}
\label{sec:overview}

In this section we outline the main steps of the proof,
in the order in which they appear in the paper.

{\bf Trace of the slices after a change of basis (TSCB).} 
After step (iii) of the algorithm, we have 
determined vectors $u_1,\ldots,u_n$ such that
$T=\sum_{i=1}^n \alpha_i u_i^{\otimes 3}$. Here the
$\alpha_i$ are unknown coefficients. As explained in Section~\ref{sec:algorithmvague}, the traditional approach
is to find them by solving the corresponding linear system. One difficulty here is that this system is highly overdetermined: we have one equation for each entry of $T$, but only $n$ unknowns. In this paper we show that the system {\em can} be solved quickly in a numerical stable way by exploiting some of its structural properties. Our approach relies on a change of basis. More precisely, let $T' = (V \otimes V \otimes V). T$
be the tensor defined at the beginning of step (iv).
Since $u_1,\ldots,u_n$ are the rows of $V^{-1}$,
$T'=\sum_{i=1}^n \alpha_i e_i^{\otimes 3}$ where
$e_i$ is the $i$-th standard basis vector. Therefore we can read off the $\alpha_i$ from the entries of $T'$.
This observation is not sufficient to obtain the desired
running time since it is not clear how to perform
a change of basis in $O(n^3)$ arithmetic operations. Indeed, since a symmetric tensor 
of size $n$ has $\Omega(n^3)$ coefficients, one would have to perform a constant number of  operations per coefficient. A further observation is that we do not
need to compute every entry of $T'$: assuming that $T$ is diagonalisable, we know in advance that all entries
of $T'$ except the diagonal ones will be equal to 0 (up to rounding errors). As a result, $\alpha_i$ is approximately equal to the trace of $T'_i$, the $i$-the slice of $T'$. In Section~\ref{sec:cobfinitear} we give
a fairly simple algorithm for the computation of
these $n$ traces in $O(n^3)$ arithmetic operations.
For this we do not even need to assume that the
input tensor is diagonalisable.  
We also
analyse this TSCB algorithm in finite arithmetic in the same 
section. The correctness of our main algorithm 
in exact arithmetic (as presented in Section~\ref{sec:algorithmvague}) is established in Section~\ref{sec:completeexactarithmetic} based on 
the results of Section~\ref{sec:cobfinitear}. 
{ In Section~\ref{sec:completeexactarithmetic} we also present an alternative to the TSCB 
algorithm suggested by an anonymous reviewer. This alternative algorithm
uses the trace (and more precisely the {\em partial trace}) in a different way to solve the linear system.}
The remainder of the paper consists mostly of algorithm analysis, in particular finite precision analysis and probabilistic analysis.

\textbf{Matrix diagonalisation:} For step (ii) of our algorithm we require a fast and numerically stable diagonalisation algorithm. Amazingly for such a fundamental mathematical task, no satisfactory solution was available before the recent breakthrough in~\cite{9317903}. Given $A \in M_n(\C)$ and $\delta>0$, their algorithm computes an invertible matrix $V$ and a diagonal matrix $D$ such that $||A-VDV^{-1}|| \leq \delta.$ Moreover, $V$ is guaranteed to be reasonably well-conditioned in the sense that $||V||.||V^{-1}|| = O(n^{2.5}/\delta)$. Note however that~$V$ might become 
arbitrarily poorly conditioned as $\delta$ goes to 0.
This is a problem for our algorithm because we keep working with $V$ in subsequent steps (for inversion in step (iii), and change of basis in step (iv)). 
The main question that we address in Section~\ref{sec:diag} is: can we have a better guarantee on $V$ assuming that the input matrix $A$ is diagonalisable? We show that this is indeed the case, 
and the bound that we obtain is based on the {\em Frobenius eigenvector condition number} defined 
in that section. Note that the choice of $\kappa(T)$ as our tensor decomposition number arises from that analysis.

Another issue that we address  in Section~\ref{sec:diag}
is the assumption $||A|| \leq 1$ on the input matrix.
Relaxing this assumption in infinite precision arithmetic is very straightforward: given a bound $B \geq 1$ on $||A||$, one can simply divide $A$ by $B$ and this does 
not change the eigenvectors. In finite arithmetic, however, this simple scaling leads to round-off errors.
The corresponding analysis is performed in Section~\ref{sec:diag} as well.
Finally, we note that the diagonalization algorithm of~\cite{9317903} is responsible for the 
number of bits of precision needed in our main result (Theorem~\ref{th:main}).

\textbf{Finite precision analysis of tensor decomposition:} The correctness of the infinite-precision version of our main algorithm is established in Section~\ref{sec:completeexactarithmetic}, and we proceed with its analysis in finite arithmetic in Section~\ref{sec:complete}. The principle behind this analysis is relatively straightforward: we need to 
show that the output of each of the 7 steps does not 
deviate too much from the ideal, infinite-precision output.
For each step, we have two sources of error:
\begin{itemize}
\item[(i)] The input to that step might not be exact because of errors accumulated in previous steps.
\item[(ii)] The computation performed in that step (on an inexact input) is inexact as well.
\end{itemize}
Summing up these two contributions, we can upper bound the error for that step. Moreover, for each step we already have estimates for the error (ii) due to the inexact computation. In particular, for basic operations such as 
matrix multiplication and inversion there are well-known guarantees recalled in Section~\ref{sec:fastlinalg};  for the change of basis algorithm we have the guarantees from Section~\ref{sec:cobfinitear}; and for diagonalisation we have the guarantees from Section~\ref{sec:diag} based on~\cite{9317903}. Nevertheless, obtaining reasonably precise error bounds from this analysis requires rather long and technical developments. For this reason, a part of the analysis is relegated to the appendix.
\par
A more streamlined error analysis of this algorithm can be found in \cite{Saha23}. In that exposition, some conditions have been derived related to \textit{numerical stability} of a composition of \textit{numerically stable} algorithms which have then been applied to this algorithm {\color{blue} (the composition theorem in \cite{Saha23} is similar in spirit to the earlier composition theorem in~\cite{BNV23}).} But that analysis loses out on some bits of precision - it gives a guarantee that the algorithm requires $\log^{12}(\frac{nB}{\varepsilon})\log(n)$ bits of precision as compared to the $\log^{4}(\frac{nB}{\varepsilon})\log(n)$ bits of precision shown in this paper.

\textbf{Probabilistic analysis:} There are two sources of randomization in our algorithm: the diagonalisation algorithm from~\cite{9317903} is randomized, and moreover 
our algorithm begins with the computation of two random linear combinations $T^{(a)}, T^{(b)}$ of slices of the input tensor (Section~\ref{sec:algorithmvague}, step (i)).
As it turns out, the error bounds from Section~\ref{sec:complete} are established under the hypothesis that the {\em Frobenius condition
number} of $T^{(a)}$ is "small" and the eigenvalue gap of $(T^{(a)})^{-1}T^{(b)}$ is "large". We therefore need to show that this hypothesis is satisfied for most choices of the random vectors $a,b$. For this we assume that $a,b$ are chosen uniformly at random from a discrete grid.
Our analysis  in Section~\ref{subsection:correctnessjennrich} follows a two-stage process: 
\begin{itemize}
\item[(i)] First we assume that 
$a$ and $b$ are drawn from the uniform distribution on the hypercube $[-1,1)^{n}$. This is analyzed with the Carbery-Wright inequality, a well-known anticoncentration inequality. 

\item[(ii)] In a second stage, we round the (real valued) coordinates of $a$ and $b$ in order to obtain a point of 
the discrete grid. This is analysed with the multivariate Markov inequality.\footnote{This discretization stage could also be analyzed with~\cite[Theorem~3]{Koi95}, but we would not obtain a sharper bound in this case.}
\end{itemize}
This two-stage process is inspired by the construction of "robust hitting sets" in~\cite{forbes2017pspace}. However, the general bounds from~\cite[Theorem~3.6]{forbes2017pspace} are not sharp enough for our purpose: they would lead to an algorithm using polynomially many bits of precision, but we are aiming for  polylogarithmic precision.
As a result, we need to perform an ad hoc analysis for certain linear and quadratic polynomials connected to the
 Frobenius condition
number of $T^{(a)}$ and to the eigenvalue gap of $(T^{(a)})^{-1}T^{(b)}$. These are essentially the polynomials occuring in~\cite{BCMV13} in their analysis 
of the stability of the simultaneous diagonalisation algorithm with respect to input noise; but in that paper they choose $a,b$ to be
(normalized) Gaussian vectors rather than points from a discrete grid.
}

{ 
\subsection{Condition numbers, numerical (in)stability and the negative result of~\cite{BBV19}}
\label{sec:condition}

In this section we provide more background on condition numbers in numerical computation.
A book-length treatment of this subject can be found in~\cite{BC13}. We also discuss in more detail the numerical instability result of~\cite{BBV19}.

 There is no universally accepted definition of a "condition number" in numerical analysis, but a common one, used in~\cite{BBV19}, is as follows. Suppose we wish to compute a map $f:X \rightarrow Y$. The condition number of $f$ at an input $x$ is a measure of the variation of the image $f(x)$ when $x$ is perturbed by a small amount. This  requires the choice of appropriate distances on the spaces $X$ and $Y$. The condition number is therefore a quantitative measure of the continuity of $f$ at $x$. In particular, it is independent of the choice of an algorithm for computing $f$. In finite arithmetic, we cannot hope to approximate $f(x)$ with a low precision algorithm 
at an input $x$ with a high condition number since 
we do not even assume that the input is stored exactly. Moreover, designing algorithms that work in low precision at well-conditioned inputs is often a challenging task. 
Suppose for instance that we want to approximate the
eigenvectors of a matrix. In order to estimate 
the condition number in the above sense, we need to
understand how the eigenvectors evolve under 
a perturbation of the input matrix. This is a relatively standard task in perturbation theory, see for instance 
Appendix~A of \cite{BCMV13}\footnote{As already mentioned in Section~\ref{sec:related}, this property is at the heart of their analysis of the robustness of the simultaneous diagonalisation algorithm.} or the proof of Proposition~1.1 in~\cite{9317903}. However, until the 
recent breakthrough~\cite{9317903} we did not have any 
efficient, low-precision algorithm for this task 
(see Theorem \ref{thm:eig} in Section \ref{sec:diag} for
a precise statement of their result).

Sometimes, the above continuity-based definition of condition numbers is not suitable. This is for instance the case for decision problems, where the map $f$ is boolean-valued. A popular alternative is to use the inverse of the (normalized) distance to the set of ill-posed instances~\cite[chapter 6]{BC13}. One can sometimes show that these two notions coincide~\cite[Section 1.3]{BC13}.}

{ For the purpose of this paper we work with the somewhat ad-hoc choice of $\kappa(T)$ as our condition number 
because this parameter controls the numerical precision
needed for our main algorithm, as shown by Theorem~\ref{th:main}\footnote{ $\kappa(T)$  also appears in the sublinear term for the arithmetic complexity of the algorithm.}. In particular, we have found it more convenient to work with $\kappa(T)$ than with a quantity such as $||U||.||U^{-1}||$, commonly used as a condition number in numerical linear algebra.

A precise comparison of our results with the numerical instability 
 result of~\cite{BBV19} is delicate because we do not work 
 in the same setting. In particular, they work with ordinary instead of symmetric tensors; they do not work with the same condition number; and their result is obtained for undercomplete rather than complete decompositions. As already mentioned in Section~\ref{sec:related}, we believe that the main reason why we obtain a positive result is due to yet another
 difference, namely, the use of randomization in step (i) of our algorithm. In the setting of~\cite{BBV19} one would have to 
 take two {\em fixed} linear combinations $T^{(a)}, T^{(b)}$ of the slices. Essentially, they show that
 for every fixed choice of a pair of linear combinations, there are input tensors for which this choice is bad; whereas we show that for every (well conditioned) input $T$, most choices of $a$ and $b$ are good. }
\par

{ \subsection{Error Reduction by Repetition}

An anonymous referee has suggested to decrease the probability of error
of our main algorithm by running it a constant number of
times on the same input.
In general, if an algorithm has probability of error $p$ and we repeat it, say, 100 times, the probability of error is obviously reduced to $p^{100}$ if we can check whether a proposed output is correct (we have $p=O(1/n)$ in Theorem~\ref{th:main}). A precise analysis of this generic
method for our main algorithm raises some interesting questions
because ideally, one would like to check in linear time that a proposed 
decomposition is correct. We need to face the two following issues:
\begin{itemize}
\item[(i)] Given a tensor $T$ and vectors $u_1,\ldots,u_n$, it is not completely obvious how to test that 
\begin{equation} \label{eq:proposed}
T=\sum_{i=1}^n u_i^{\otimes 3} 
\end{equation}in $O(n^3)$ operations, even in exact arithmetic.
Indeed, expanding one tensor power $u_i^{\otimes 3}$ by brute force gives
rise to $n^3$ terms, and adding up these $n$ tensors will
therefore require $O(n^4)$ operations.

\item[(ii)] In reality we do not have to check that
the proposed decomposition~(\ref{eq:proposed}) is exact, but that it is
approximately correct.
\end{itemize}
One solution to issue (i) is to view symmetric tensor decomposition as a problem about the decomposition of multivariate polynomials as sums of cubes.
Namely, one can view $T$ as the array of coefficients of the polynomial
$f(x_1,\ldots,n)=\sum_{i,j,k}T_{ijk}x_ix_jx_k$. 
Then~(\ref{eq:proposed})  becomes equivalent to the polynomial identity $f(x)=\sum_{i=1}^n (u_i^Tx)^3$. 
One can test whether this identity holds at any given point in $O(n^3)$
arithmetic operations. If $x$ is chosen uniformly at random from a finite set $S^n$, the probability of error is at most $3/|S|$ by the Schwartz-Zippel lemma. For any polynomial $p$,  one can therefore make 
the probability of error smaller than $1/p(n)$ with logarithmically
many bits of precision for each coordinate of $x$. As a side remark, we note that an
alternative solution to this Schwartz-Zippel based algorithm can 
be derived from the linear time LCSCB algorithm for computing a linear combination of slices after a change of basis. This algorithm is described in our follow-up work~\cite{koiran2024undercomplete} (we omit the details of this alternative solution). 

Next, we briefly explain how the second issue (ii) could be addressed. First, recall that there are two main notions of approximation error in numerical analysis: backward error and forward error. 
As explained in Section~\ref{sec:fwddef} this paper is focused on 
forward error, i.e., we want to compute approximations $u'_i$ to the
"true vectors" $u_i$ occurring in the decomposition~(\ref{eq:proposed}).
If we define $T'$ as the tensor $\sum_{i=1}^n {u'_i}^{\otimes 3}$, 
backward error would be measured by $||T-T'||$ for some appropriate norm, 
for instance, the Frobenius norm (see Definition~\ref{def:tensornorm}).
A forward error for tensor decomposition immediately yields a backward error. Obtaining a forward error from a backward error takes more work,
but can be done from the stability analysis in~\cite{BCMV13} or directly from a perturbation analysis for the eigendecomposition problem (\cite[Proposition 1.1]{9317903} or ~\cite{BCMV13}).

After this reminder, we return to issue (ii) and briefly explain how
to test that a decomposition is approximately correct in the backward
sense (as we have just explained, this backward error can then be translated 
into a forward error bound). We have addressed issue (i) with the Schwartz-Zippel lemma. What we need for (ii) is a robust version of 
that lemma: if a polynomial is "far away" (e.g., in $L^2$ norm) from the identically 0 polynomial then its evaluations at most points $x \in S^n$ are also far away from zero. As shown by Forbes and Shpilka~\cite[Theorem 3.6]{forbes2017pspace}, such results can be obtained  with a combination of the
Carbery-Wright inequality and the multivariate Markov Theorem (see also~\cite[Lemma 12]{VX11}). As pointed out at the end of Section~\ref{sec:overview}, we use these tools in Section~\ref{subsection:correctnessjennrich} for the probabilistic analysis of our main algorithm.}

\section{Preliminaries: Fast and Stable Linear Algebra}\label{sec:fastlinalg}

In this section, we explore the computational model of finite precision arithmetic that has already been introduced in Section \ref{sec:fparithmetic} in greater detail. 
We present the different estimates for various linear algebraic operations such as inner product of vectors, matrix multiplication and matrix inversion. This is the main content of Sections \ref{sec:morefparithmetic} and \ref{sec:mminvqr} and they have been taken from \cite{Hig02} and \cite{9317903}. We include these for completeness of the exposition.
\subsection{Finite precision arithmetic}\label{sec:morefparithmetic}

We'll need to compute the inner product of two vectors $x , y \in \mathbb{C}^n$. For this purpose,  we will assume that
\begin{equation}\label{eq:ipbound}
    |x^Ty - \text{fl}(x^Ty)| \leq \gamma_n||x||||y||
\end{equation}
where $\textbf{u}$ is the machine precision and $\gamma_n = \frac{n\textbf{u}}{1-n\textbf{u}}$.
For a proof, refer to the discussion at the discussion in \cite{Hig02}, Section 3.1.
\par
We will also assume similar guarantees for matrix-matrix addition and matrix-scalar multiplication. More specifically, if $A \in \mathbb{C}^{n \times n}$ is the exact output of such an operation, then its floating point representation $\text{fl}(A)$ will satisfy $$\text{fl}(A) = A + A\circ\Delta \text{ where } |\Delta_{ij}| < \textbf{u}.$$ Here $A\circ\Delta$ denotes the entry-wise product $A_{ij}\Delta_{ij}$. This multiplicative error can be converted into an additive form i.e.
\begin{equation}\label{eq:multfl}
    ||A\circ\Delta|| \leq \textbf{u}\sqrt{n}||A||.
\end{equation}
For more complicated linear algebraic operations like matrix multiplication and matrix inversion, we require more sophisticated error guarantees which we now explain. 

\subsection{Matrix Multiplication and Inversion}
\label{sec:mminvqr}
The definitions we state here are taken from \cite{9317903} (Definitions 2.6 and 2.7)
\begin{definition}\label{def:MULTalg}
A $\mu_{\text{MM}}(n)$-stable multiplication algorithm $\text{MM}(.,.)$ takes as input $A,B \in \mathbb{C}^{n \times n}$ and a precision $\textbf{u} > 0$ and outputs $C = \text{MM}(A,B)$ satisfying
\begin{align*}
    ||C - AB|| \leq \mu_{\text{MM}}(n) \cdot \textbf{u} ||A|| ||B||
\end{align*}
on a floating point machine with precision $\textbf{u}$, in $T_{\text{MM}}(n)$ arithmetic operations.
\end{definition}
\begin{definition}\label{def:INValg}
A $(\mu_{\text{INV}}(n), c_{\text{INV}})$-stable inversion algorithm $\text{INV}(.)$ takes as input $A \in \mathbb{C}^{n \times n}$ and a precision $\textbf{u}$ and outputs $C = \text{INV}(A)$ satisfying 
\begin{align*}
    ||C - A^{-1}|| \leq \mu_{\text{INV}}(n).\textbf{u}. (\kappa(A))^{c_{\text{INV}}\log n} ||A^{-1}||.
\end{align*}
on a floating point machine with precision $\textbf{u}$, in $T_{\text{INV}}(n)$ arithmetic operations.
\end{definition}
The following theorem by \cite{DDHK'07} gives a numerically stable matrix multiplication algorithm which is used by \cite{DDH07} to gives numerically stable algorithm for matrix inversion and a numerically stable algorithm for QR factorization of a given matrix. We use the presentation of these theorems from \cite{9317903} (Theorem 2.10 (1) and (2)).
\begin{theorem}\label{thm:fastlinearalgebra}
\begin{enumerate}
    \item If $\omega$ is the exponent of matrix multiplication, then for every $\eta > 0$, there is a $\mu_{\text{MM}}(n)$-stable matrix multiplication algorithm with $\mu_{\text{MM}}(n) = n^{c_{\eta}}$ and $T_{\text{MM}}(n) = O(n^{\omega + \eta})$, where $c_{\eta}$ does not depend on $n$.
    \item Given an algorithm for matrix multiplication satisfying part (1), there is a $(\mu_{\text{INV}}(n), c_{\text{INV}})$-stable inversion algorithm with 
    \begin{align*}
        \mu_{\text{INV}}(n) \leq O(\mu_{\text{MM}}(n)n^{\log 10}) \text{ and } c_{\text{INV}} \leq 8,
    \end{align*}
    and $T_{\text{INV}}(n) = O(T_{\text{MM}})(n)$.
\end{enumerate}
In particular, all of the running times above are bounded by $T_{\text{MM}}(n)$ for a $n \times n$ matrix.
\end{theorem}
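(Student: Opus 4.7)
The plan is essentially to follow the constructions of Demmel--Dumitriu--Holtz--Kleinberg~\cite{DDHK'07} for part (1) and of Demmel--Dumitriu--Holtz~\cite{DDH07} for part (2), and to verify that their stability bounds can be presented in the normalized form stated by~\cite{9317903}. Since the theorem is stated as a collection of known results, the job of a proof is really to trace the key reductions and to track how machine precision $\mathbf{u}$ enters the bounds. I would organize the argument as two self-contained blocks, first for multiplication and then for inversion using multiplication as a black box.

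For part (1), the starting point is any bilinear algorithm realizing matrix multiplication in $O(n^{\omega+\eta})$ arithmetic operations (for example, one obtained from a Coppersmith--Winograd-type border-rank construction, or from the group-theoretic framework). The key contribution of~\cite{DDHK'07} is to show that any such bilinear algorithm can be converted into a genuine finite-precision algorithm whose error is only polynomially larger than the trivial $O(n^3)$-algorithm. Concretely, I would (i) recall the bilinear decomposition framework so that the product $AB$ is expressed as a sum of $r = O(n^{\omega+\eta})$ rank-one bilinear forms, (ii) use the standard floating-point identities~(\ref{eq:floatingpointarithmetic}) and~(\ref{eq:ipbound}) to bound the componentwise error of each bilinear form, and (iii) sum these contributions using submultiplicativity of the spectral norm. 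This produces the required bound $\|C - AB\| \leq \mu_{\mathrm{MM}}(n)\mathbf{u}\|A\|\|B\|$ with $\mu_{\mathrm{MM}}(n) = n^{c_\eta}$, matching Definition~\ref{def:MULTalg}.

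For part (2), I would use the standard block-recursive inversion due to Strassen and analyzed in finite precision by~\cite{DDH07}. Partition $A$ as a $2\times 2$ block matrix and write
\begin{equation*}
A^{-1} = \begin{pmatrix} A_{11}^{-1} + A_{11}^{-1}A_{12}S^{-1}A_{21}A_{11}^{-1} & -A_{11}^{-1}A_{12}S^{-1} \\ -S^{-1}A_{21}A_{11}^{-1} & S^{-1} \end{pmatrix},
\end{equation*}
where $S = A_{22} - A_{21}A_{11}^{-1}A_{12}$ is the Schur complement. The arithmetic cost satisfies $T_{\mathrm{INV}}(n) = 2T_{\mathrm{INV}}(n/2) + O(T_{\mathrm{MM}}(n))$, and since $T_{\mathrm{MM}}(n) = \Omega(n^2)$ grows superlinearly this solves to $T_{\mathrm{INV}}(n) = O(T_{\mathrm{MM}}(n))$. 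The stability analysis would unfold recursively: assuming inductively a bound of the required form on each half-sized inversion, combine with~(\ref{eq:multfl}) and the matrix multiplication bound from part~(1) to propagate the error through the Schur-complement formula, and observe that the condition number can be amplified by at most a factor polynomial in $n$ at each of the $\log n$ levels of recursion, which is precisely what gives rise to the $\kappa(A)^{c_{\mathrm{INV}}\log n}$ factor with $c_{\mathrm{INV}}\leq 8$.

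The main obstacle, and also the technically most delicate step, is controlling the condition numbers of the intermediate blocks $A_{11}$ and of the Schur complement $S$. Without a safeguard these can be arbitrarily worse than $\kappa(A)$ even when $A$ itself is well-conditioned, which would destroy any hope of a clean recursion. The workaround, due to~\cite{DDH07}, is to apply a preconditioning step based on a QR-style preprocessing (or, in~\cite{9317903}'s presentation, a random unitary conjugation) that ensures the principal leading submatrix and its Schur complement are both well-conditioned relative to $\kappa(A)$ with high probability. Verifying this relative conditioning bound and checking that it yields the stated $c_{\mathrm{INV}}\log n$ exponent in Definition~\ref{def:INValg} is the single place where nontrivial perturbation theory enters; the rest of the proof is bookkeeping with the finite-precision axioms already laid out in Section~\ref{sec:morefparithmetic}.
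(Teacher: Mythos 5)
First, note that the paper does not prove this statement at all: Theorem~\ref{thm:fastlinearalgebra} is quoted as a known result of \cite{DDHK'07} and \cite{DDH07}, in the presentation of \cite{9317903}, so there is no in-paper proof to compare against. Judged on its own terms, your sketch of part~(1) is broadly consistent with the actual argument of \cite{DDHK'07} (their analysis is really a recursive, level-by-level error bound for bilinear algorithms applied to blocks, not a single sum of $O(n^{\omega+\eta})$ full-size rank-one bilinear forms, but that is a presentational imprecision rather than a wrong idea).

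Part~(2) is where there is a genuine gap. The cited construction in \cite{DDH07} does not invert $A$ by Strassen's Schur-complement recursion: it computes a fast, stable QR factorization $A=QR$ (built on the stable multiplication of part~(1)), inverts the triangular factor $R$ by a block-recursive triangular inversion, and sets $A^{-1}=R^{-1}Q^*$; the factor $\kappa(A)^{c_{\text{INV}}\log n}$ in Definition~\ref{def:INValg} arises because the triangular-inversion errors compound multiplicatively through the $\log n$ recursion levels, i.e.\ the bound is $\kappa^{O(\log n)}$, not ``$\kappa$ amplified by a polynomial factor at each level'' (the latter accounting would give $\kappa(A)\cdot n^{O(\log n)}$, a different and incorrect shape of bound). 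The Schur-complement recursion you propose is known to be unstable for general well-conditioned $A$, because $\kappa(A_{11})$ and $\kappa(S)$ can be arbitrarily large compared to $\kappa(A)$, and the safeguard you invoke --- a QR-style or random-unitary preconditioning making the leading block and Schur complement well-conditioned with high probability --- is neither the construction of \cite{DDH07} nor compatible with Definition~\ref{def:INValg} as used in this paper, which is a deterministic guarantee (the only randomization in Algorithm~\ref{algo:Jennrich}'s analysis comes from EIG and from the choice of $a,b$, and the probability bookkeeping in Theorem~\ref{thm:mainprob} leaves no room for a randomized INV). So as written, the central stability claim of part~(2), including the specific bound $c_{\text{INV}}\leq 8$, is not established by your argument; to repair it you should follow the QR-based inversion route and carry out (or cite) the logarithmic-stability analysis of recursive triangular inversion.
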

{ We used the constant $c_{\eta}$ keeping in line with the bounds from \cite{9317903}. One can refer to \cite{Bini1980/81,DDHK'07} for exact bounds on $c_{\eta}$.}

Instead of the fast matrix multiplication algorithm, one can also consider the errors from the conventional computation. Let $A, B$ be two matrices and let $C = AB$ computed on a floating point machine with machine precision~\textbf{u}. From (3.13) in \cite{Hig02}, we have that
\begin{equation}\label{eq:slowmm}
    ||C - AB|| \leq 2n \textbf{u} ||A||_F ||B||_F.
\end{equation}
{ We will use this bound in the next section, where (in contrast to Section~\ref{sec:diag}) fast matrix multiplication is not needed.}

\section{Slices after a change of basis}\label{sec:cobfinitear}

Given tensors $T,T' \in \mathbb{C}^{n \times n \times n}$, we say that there is a change of basis $A \in \text{M}_n(\mathbb{C})$ that takes $T$ to $T'$ if $T' = (A \otimes A \otimes A).T$. This notation was already introduced
in Section~\ref{sec:algorithmvague} to give an outline of our main algorithm.

In  the present section we give a fast and numerically stable algorithm for computing the trace of the slices after a change of basis. More formally, given a tensor $T$ and a matrix $V$, it computes $Tr(S_1),...,Tr(S_n)$ where $S_1,...,S_n$ are the slices of the tensor $S = (V \otimes V \otimes V).T$ with small error in $O(n^3)$ many arithmetic operations. 
\par
 Written in standard basis notation, the equality $T' = (A \otimes A \otimes A).T$
 corresponds to the fact that for all $i_1,i_2,i_3 \in [n]$,
\begin{equation}\label{eq:changeofbasisdef}
    T'_{i_1i_2i_3} = \sum_{j_1,j_2,j_3 \in [n]} A_{j_1i_1}A_{j_2i_2}A_{j_3i_3} T_{j_1j_2j_3}.
\end{equation}
Note that if $T = u^{\otimes 3}$ for some vector $u \in \C^n$, then $(A \otimes A \otimes A).T = (A^Tu)^{\otimes 3}$.
The choice of making $A$ act by multiplication by $A^T$ rather than 
by multiplication by $A$ is somewhat arbitrary, but it is natural from the point of view of the polynomial-tensor equivalence in Definition~\ref{def:polytensoreq} below.
Indeed, from the polynomial point of view a change of basis corresponds to a linear change of variables. More precisely, 
if $f(x_1,\ldots,x_n)$ is the polynomial associated to $T$ and 
$f'(x_1,\ldots,x_n)$ is the polynomial associated to $T'=(A \otimes A \otimes A).T$, we have $f'(x)=f(Ax)$.
\par
\begin{definition}[Polynomial-Tensor Equivalence]\label{def:polytensoreq}
Let $f \in \C[x_1,...,x_n]_3$ be a homogeneous degree-$3$ polynomial in $n$ variables. We can form with the coefficients of $f$ a symmetric tensor of order three $T_f=(T_{ijk})_{1 \leq i,j,k \leq n}$ so that
$$f(x_1,\ldots,x_n)=\sum_{i,j,k=1}^n T_{ijk} x_i x_j x_k.$$
\end{definition}
The following theorem was derived in~\cite{KS21} in the polynomial language of Definition~\ref{def:polytensoreq}.
\begin{theorem} \label{thm:P3structural}
Let $T \in C^{n \times n \times n}$ be a tensor with slices $T_1,...,T_n$ and let $S = (A \otimes A \otimes A).T$ where $A \in M_{n}(\C)$. Then the slices $S_1,...,S_n$ of $S$ are given by the formula:
\begin{align*}
    S_k = A^T D_k A
\end{align*}
 where $D_k = \sum_{i=1} a_{i,k} T_i$ and $a_{i,k}$ are the entries of A.
 \par
In particular, if $T=\sum_{i=1}^n e_i^{\otimes 3}$, we have $D_k = \text{diag }(a_{1,k},...,a_{n,k} )$. 
\end{theorem}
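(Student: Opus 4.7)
The plan is to prove the identity by direct expansion of the change-of-basis formula~(\ref{eq:changeofbasisdef}), followed by a regrouping of the summation that isolates the third index. First I would fix the slice index $k$ and write, for arbitrary $i_1, i_2 \in [n]$,
\begin{equation*}
(S_k)_{i_1 i_2} = S_{i_1 i_2 k} = \sum_{j_1,j_2,j_3 \in [n]} A_{j_1 i_1} A_{j_2 i_2} A_{j_3 k} \, T_{j_1 j_2 j_3}.
\end{equation*}
Since the factor $A_{j_3 k}$ depends only on $j_3$, I can pull the sum over $j_3$ inside and recognize that $T_{j_1 j_2 j_3} = (T_{j_3})_{j_1 j_2}$ by definition of the slices of $T$. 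This gives
\begin{equation*}
\sum_{j_3} A_{j_3 k} T_{j_1 j_2 j_3} = \Bigl( \sum_{j_3} a_{j_3,k} T_{j_3} \Bigr)_{j_1 j_2} = (D_k)_{j_1 j_2},
\end{equation*}
with $D_k$ exactly the matrix defined in the statement.

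Substituting back, the remaining double sum is the standard bilinear form expression for a matrix conjugation:
\begin{equation*}
(S_k)_{i_1 i_2} = \sum_{j_1,j_2} A_{j_1 i_1} (D_k)_{j_1 j_2} A_{j_2 i_2} = (A^T D_k A)_{i_1 i_2},
\end{equation*}
which is the desired identity $S_k = A^T D_k A$.

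For the special case $T = \sum_{i=1}^n e_i^{\otimes 3}$, I would simply observe that the $i$-th slice $T_i$ is the matrix $e_i e_i^T$, i.e.\ a single $1$ in position $(i,i)$ and zeroes elsewhere. Plugging this into the definition of $D_k$ yields $D_k = \sum_i a_{i,k} e_i e_i^T = \mathrm{diag}(a_{1,k}, \ldots, a_{n,k})$, as claimed. There is no real obstacle here: the only subtlety is keeping track of the transpose convention (the reason $A^T$ rather than $A$ appears is that, in~(\ref{eq:changeofbasisdef}), each copy of $A$ multiplies a lower index of $T$ by its \emph{column} rather than its row), which is already built into the change-of-basis convention fixed earlier in the section.
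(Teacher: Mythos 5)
Your proof is correct. Note that the paper itself does not spell out an argument for this statement: it simply cites~\cite{KS21}, where the identity is derived in the polynomial language of Definition~\ref{def:polytensoreq} (a change of basis corresponds to the substitution $f'(x)=f(Ax)$, and the slices correspond to partial derivatives/coefficient matrices of the cubic form). Your route is the direct coordinate computation from~(\ref{eq:changeofbasisdef}): fixing the third index $k$, isolating the sum over $j_3$ to recognize $D_k=\sum_i a_{i,k}T_i$, and reading the remaining double sum as $(A^TD_kA)_{i_1i_2}$ — each step checks out, including the transpose bookkeeping and the special case $T_i=e_ie_i^T$ giving $D_k=\mathrm{diag}(a_{1,k},\ldots,a_{n,k})$. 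The two approaches are equivalent in content; yours has the advantage of being self-contained and purely index-based, while the polynomial formulation of~\cite{KS21} packages the same computation as a statement about linear changes of variables in cubic forms, which is the viewpoint the paper uses to motivate the convention of acting by $A^T$.
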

\begin{corollary}\label{corr:P3structural}
Let $S =\sum_{i=1}^r a_i^{\otimes 3}$. Let $A$ be the $r \times n$ matrix with rows $a_1,...,a_r$. Then the slices $S_k$ of $S$ are given by the formula 
$$S_k = A^T D_k A \text{ where } D_k = \text{diag}(a_{1,k},...,a_{r,k}).$$
\end{corollary}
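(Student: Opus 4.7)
The plan is to prove the corollary by direct index unfolding. This is essentially a rectangular restatement of the ``in particular'' clause of Theorem~\ref{thm:P3structural}, but since that theorem is formulated for a square change-of-basis matrix $A\in M_n(\C)$ while here $A$ is $r\times n$, I prefer to give the verification from scratch rather than fuss with an embedding into an $n\times n$ ambient matrix.

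First I would write out the $(i,j,k)$ entry of $S$ using the convention $A_{\ell m}=(a_\ell)_m$ that follows from the $a_\ell$ being the rows of $A$:
\[
S_{ijk} \;=\; \sum_{\ell=1}^{r}(a_\ell)_i(a_\ell)_j(a_\ell)_k \;=\; \sum_{\ell=1}^{r} A_{\ell i} A_{\ell j} A_{\ell k}.
\]
The $k$-th slice $S_k$ is by definition the $n\times n$ matrix with entries $(S_k)_{ij}=S_{ijk}$, so this provides an explicit expression to match against the claimed formula.

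Next I would evaluate $A^T D_k A$ entrywise. Since $D_k$ is the $r\times r$ diagonal matrix with $(D_k)_{\ell\ell}=a_{\ell,k}=A_{\ell k}$, one round of matrix multiplication collapses to a single sum:
\[
(A^T D_k A)_{ij} \;=\; \sum_{\ell=1}^{r}(A^T)_{i\ell}(D_k)_{\ell\ell} A_{\ell j} \;=\; \sum_{\ell=1}^{r} A_{\ell i}\, A_{\ell k}\, A_{\ell j},
\]
which coincides with the expression for $S_{ijk}$ above. This settles the identity for all $i,j,k$. I do not expect any real obstacle here; the only thing one has to be careful about is the transpose conventions (rows of $A$ are the $a_\ell$, so $A^T$ has them as columns). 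As a sanity check, one could alternatively deduce the corollary from Theorem~\ref{thm:P3structural} by taking $T=\sum_{\ell=1}^{r}e_\ell^{\otimes 3}$ in $(\C^r)^{\otimes 3}$ and observing that $\sum_\ell (A^T e_\ell)^{\otimes 3}=\sum_\ell a_\ell^{\otimes 3}$, but the direct computation is shorter and immediately accommodates the rectangular shape of $A$.
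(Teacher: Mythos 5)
Your verification is correct: with the convention $A_{\ell m}=(a_\ell)_m$ and the paper's slice convention $(S_k)_{ij}=S_{ijk}$, both sides of the claimed identity unfold to $\sum_{\ell=1}^r A_{\ell i}A_{\ell k}A_{\ell j}$, and the rectangular shapes match ($D_k$ is $r\times r$, $A^TD_kA$ is $n\times n$). The route differs slightly from the paper's: the paper gives no standalone argument and simply reads the corollary off Theorem~\ref{thm:P3structural}, i.e.\ it takes $T=\sum_{i}e_i^{\otimes 3}$, uses $(A\otimes A\otimes A).(u^{\otimes 3})=(A^Tu)^{\otimes 3}$ to identify $S=\sum_i a_i^{\otimes 3}$ with a change of basis applied to the diagonal tensor, and then invokes the ``in particular'' clause $D_k=\mathrm{diag}(a_{1,k},\dots,a_{n,k})$ — the same deduction you relegate to your sanity check. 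Your direct entrywise computation buys a proof that is self-contained and valid verbatim for $r\neq n$, whereas the paper's statement of Theorem~\ref{thm:P3structural} is phrased for square $A\in M_n(\C)$ (though its defining formula~(\ref{eq:changeofbasisdef}) extends to rectangular $A$ with no change); the paper's route, in turn, makes transparent why the corollary is just the ``diagonal tensor'' special case of the general change-of-basis formula. Either argument is complete, and your transpose bookkeeping (rows of $A$ become columns of $A^T$) is handled correctly.
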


\textbf{Norms: }We denote by $||x||$ the $\ell^2$ (Hermitian) norm of a vector $x \in \C^n$. For $A \in M_n(\C)$, we denote by $||A||$ its operator norm and by $||A||_F$ its Frobenius norm:
\begin{equation} \label{eq:frobeniusnorm}
    ||A||^2_F=\sum_{i,j=1}^n |A_{ij}|^2.
\end{equation}
We always have $||A|| \leq ||A||_F$. 
For a given matrix $V$, we define $\kappa_F(V) = ||V||_F^2 + ||V^{-1}||_F^2$.
\newline
\begin{definition}[Tensor Norm]\label{def:tensornorm}
Given a tensor $T \in (\C^n)^{\otimes 3}$, we define the Frobenius norm $||T||_F$ of $T$ as
\begin{align*}
    ||T||_F = \sqrt{\sum_{i,j,k=1}^n |T_{i,j,k}|^2}
\end{align*}
Then if $T_1,...,T_n$ are the slices of $T$, we also have that
\begin{equation}\label{eq:tensornormslices}
\begin{split}
    \sum_{i=1}^n||T_i||^2_F &= \sum_{j,k \in [n]} |(T_i)_{j,k}|^2 \\
    &= \sum_{i,j,k \in [n]} |(T_i)_{j,k}|^2 = \sum_{i,j,k \in [n]} |T_{ijk}|^2 = ||T||_F^2.    
\end{split}
\end{equation}
\end{definition}

{ Let $T$ be an order-$3$ tensor and let $S$ be another order-$3$ tensor obtained by a change of basis of $T$ by a matrix $V$. Following (\ref{eq:changeofbasisdef}), this is denoted by $S = (V \otimes V \otimes V).T$. Then given $T$ and $V$, we give the following linear time (in the input size) algorithm to compute the trace of the slices of $S$. As we show in \cite{koiran2024undercomplete}, the tensor $S$ can be computed explicitly in $O(n^{3.25164})$ arithmetic operations. But as discussed in Section \ref{sec:overview}, for the purposes of being used as a subroutine in Algorithm \ref{algo:Jennrich}, just computing the trace of the slices of the tensor $S$ is sufficient. Moreover, we would need this computation to be \textit{numerically stable} and take $O(n^3)$ many arithmetic operations. We propose the following algorithm for this problem which is based on the computation strategy in (\ref{eq:traceofslicesmaineq}).}

\begin{algorithm}[H] \label{algo:fastcob}
\SetAlgoLined
\nonl \textbf{Input:} An order-$3$ symmetric tensor $T \in \mathbb{C}^{n \times n \times n}$, a matrix $V = (v_{ij}) \in \mathbb{C}^{n \times n}$.\\
\nonl Let $T_1,...,T_n$ be the slices of $T$. \\
Compute $W = V^TV$ on a floating point machine. \\
Compute $x_{m,k} = (WT_m)_{k,k}$ on a floating point machine for all $m,k \in [n]$. \\
Compute $x_m = \sum_{k=1}^n x_{m,k}$ on a floating point machine for all $m \in [n]$ . \\
Compute $\Tilde{s}_i = \sum_{m=1}^n v_{m,i}x_m$ on a floating point machine for all $i \in [n]$. \\
\nonl Output $\Tilde{s}_1,...,\Tilde{s}_n$
\caption{
Trace of the slices after a change of basis (TSCB)}
\end{algorithm}

The following is the main theorem of this section.
\begin{theorem}\label{thm:fastcob}
Let us assume that a tensor $T \in (\C^n)^{\otimes 3}$ and a matrix $V \in M_n(\C)$ are given as input to Algorithm \ref{algo:fastcob}. Set $S = (V \otimes V \otimes V).T$ following the definition in (\ref{eq:changeofbasisdef}) and let $S_1,...,S_n$ be the slices of $S$. Then the algorithm returns $\Tilde{s}_1,...,\Tilde{s}_n$ such that
\begin{equation}
    |\Tilde{s}_i - \text{Tr}(S_i)| \leq \mu_{CB}(n) \cdot \textbf{u} \cdot ||V||_F^3 ||T||_F
\end{equation}
where $\mu_{CB}(n) \leq 14n^{\frac{3}{2}} $. It performs $T_{CB}(n) = O(n^3)$ operations on a machine with precision $\textbf{u} < \frac{1}{10n}$.
\end{theorem}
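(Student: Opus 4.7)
The plan is first to verify that Algorithm~\ref{algo:fastcob} computes $\text{Tr}(S_i)$ in exact arithmetic, and then to propagate the round-off errors step by step through its four lines, carefully tracking Frobenius norms so that the $n^{3/2}$ dependence comes out correctly.

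For exact correctness, I use Theorem~\ref{thm:P3structural}: the $i$-th slice of $S = (V \otimes V \otimes V).T$ is $S_i = V^T D_i V$ with $D_i = \sum_m v_{m,i} T_m$. Cyclic invariance of the trace gives $\text{Tr}(S_i) = \text{Tr}(W D_i)$, where $W$ is the appropriate product of $V$ with its transpose computed in line~1. Unrolling lines~2--4 then yields $\tilde{s}_i = \sum_m v_{m,i}\,\text{Tr}(W T_m) = \text{Tr}(W\sum_m v_{m,i} T_m) = \text{Tr}(W D_i)$, matching $\text{Tr}(S_i)$.

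For the finite-precision analysis, let $\widetilde{W}, \widetilde{x}_{m,k}, \widetilde{x}_m, \widetilde{s}_i$ denote the computed quantities. The conventional matrix-multiplication bound~(\ref{eq:slowmm}) gives $\|\widetilde{W} - W\| \leq 2n\textbf{u}\|V\|_F^2$, hence $\|\widetilde{W} - W\|_F \leq 2n^{3/2}\textbf{u}\|V\|_F^2$ via $\|A\|_F \leq \sqrt{n}\|A\|$; this is the origin of the final $n^{3/2}$ factor. For each $(m,k)$, $\widetilde{x}_{m,k}$ is the floating-point inner product of row $k$ of $\widetilde{W}$ with column $k$ of $T_m$, and I split
\[
|\widetilde{x}_{m,k} - (W T_m)_{k,k}| \leq \gamma_n\|\widetilde{W}_{k,:}\|\,\|(T_m)_{:,k}\| + \|(\widetilde{W} - W)_{k,:}\|\,\|(T_m)_{:,k}\|,
\]
where the first term is the inner-product bound~(\ref{eq:ipbound}) and the second captures the propagated error in $\widetilde{W}$. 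Summing over $k$ with Cauchy-Schwarz and using $\|W\|_F \leq \|V\|_F^2$ gives $|\widetilde{x}_m - \text{Tr}(W T_m)| \leq C_1 n^{3/2}\textbf{u}\|V\|_F^2\|T_m\|_F$, up to lower-order errors from the $n$-term summation in line~3, which are absorbed via~(\ref{eq:multfl}). Finally, line~4 is a single floating-point inner product, for which~(\ref{eq:ipbound}) yields
\[
|\widetilde{s}_i - \text{Tr}(S_i)| \leq \|v_{:,i}\|\,\|\widetilde{x} - x\| + \gamma_n\|v_{:,i}\|\,\|\widetilde{x}\|,
\]
with $x_m = \text{Tr}(WT_m)$. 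Combining the previous bound (to get $\|\widetilde{x} - x\| \leq C_1 n^{3/2}\textbf{u}\|V\|_F^2\|T\|_F$ via Cauchy-Schwarz and~(\ref{eq:tensornormslices})), with $\|x\| \leq \|W\|_F\|T\|_F \leq \|V\|_F^2\|T\|_F$ and $\|v_{:,i}\| \leq \|V\|_F$, produces the stated bound.

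The operation count is direct: line~1 costs $O(n^3)$ by conventional multiplication, line~2 consists of $n^2$ length-$n$ inner products for another $O(n^3)$, and lines~3--4 contribute $O(n^2)$, giving $T_{CB}(n) = O(n^3)$. The main obstacle I anticipate is the constant tracking that brings $\mu_{CB}(n)$ down to $14 n^{3/2}$: the interaction between $\|\widetilde{W} - W\|_F$ and the inner-product rounding in line~2 must be combined without double-counting; the hypothesis $\textbf{u} < 1/(10n)$ is used to bound $\gamma_n \leq (10/9)n\textbf{u}$ and drop $O(\textbf{u}^2)$ cross terms; and the $\sqrt{n}$ factor from~(\ref{eq:multfl}) for the line-3 summation must be inserted without enlarging the leading constant.
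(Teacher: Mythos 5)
Your proposal follows essentially the same route as the paper's proof: exact correctness via Theorem~\ref{thm:P3structural} and the cyclic/linearity properties of the trace, then a step-by-step forward error analysis of the four lines using the conventional multiplication bound~(\ref{eq:slowmm}) and the inner-product bound~(\ref{eq:ipbound}), with the same Frobenius-norm bookkeeping (including $\sum_m \|T_m\|_F^2 = \|T\|_F^2$) and the same $O(n^3)$ operation count. Two minor caveats: the rounding error of the length-$n$ summation in Step~3 is \emph{not} lower-order (the paper handles it as an inner product with the all-ones vector via~(\ref{eq:ipbound}), contributing about $4n^{3/2}\textbf{u}\,\|V\|_F^2\|T_m\|_F$ of the $10n^{3/2}$ at that step, so~(\ref{eq:multfl}) is not the right tool there), and the explicit constant $14$ is left unverified in your sketch --- but completing these points does not change the argument.
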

\begin{proof}
Let $S' \in \C^{n \times n \times n}$ be such that $S' = (V \otimes V \otimes V).T$. Let $S'_1,...,S'_n$ be the slices of $S'$. We first claim that
\begin{equation}\label{eq:traceofslicesmaineq}
\sum_{m=1}^n v_{mi} \Big(\sum_{k=1}^n (V^T V T_m)_{k,k}\Big) = Tr(S'_i)    
\end{equation}
Using Theorem \ref{thm:P3structural}, we know that $S'_i = V^T D_i V$ where $D_i = \sum_{m=1}^n v_{m,i}T_m$. Now using the cyclic property and the linearity of the trace operator, we get that
\begin{equation}\label{eq:trace}
\begin{split}
    Tr(S'_i) = Tr(V^T D_i V) &= Tr(V^TV D_i) = Tr(V^T V (\sum_{m=1}^n v_{m,i}T_m))  \\
    &= \sum_{m=1}^n v_{mi} Tr(V^TVT_m) = \sum_{m=1}^n v_{mi}\Big( \sum_{k=1}^n  (V^T V T_m)_{k,k}\Big).
\end{split}
\end{equation}
From this, we conclude that if Algorithm \ref{algo:fastcob} is run in exact arithmetic, it computes exactly the trace of the slices $S'_i$ of $S'$.
\par
\textbf{Running Time:} We analyse the steps of the algorithm and deduce the number of arithmetic operations required to perform the algorithm. Note that only the numbered steps contribute to the complexity analysis. 
\begin{enumerate}
    \item Since $V \in M_n(\C)$, Step 1 can be done { in $O(n^3)$ operations with ordinary matrix multiplication.}
    \item In Step 2, for each $m,k \in [n]$, we compute the inner product of the $k$-th row of $W$ with the $k$-th column of $T_m$. Computation of each inner product takes $n$ arithmetic operations. There are $n^2$ such inner product computations. So this step requires $n^3$ arithmetic operations. 
    \item In Step 3, we compute each $x_m$ by adding $x_{m,k}$ for all $k \in [n]$. Thus each $x_m$ requires $n$ arithmetic operations and hence, this step requires $n^2$ arithmetic operations. 
    \item In Step 4, we compute each $\Tilde{s}_i$ by taking the inner product of the $i$-th column of $V$ and $X = (X_1,...,x_m)$. 
    Each inner product requires~$n$ arithmetic operations and hence, this step requires $n^2$ arithmetic operations.  
\end{enumerate}
So, the total number of arithmetic operations required is $T_{CB}(n) = O(n^3)$.
\textbf{Error Analysis:}
We denote by $A_k$ the $k$-th row of any matrix $A$ and by $A_{\_,k}$ we denote the $k$-th column of $A$.
\par
We proceed step by step and analyse the errors at every step of the algorithm. At every step, we explain what the ideal output would be if the algorithm was run in exact arithmetic. And we show that the output in finite arithmetic at every stage is quite close to the ideal output.  
\par
\textbf{Step 1: } Let $V$ be the matrix given as input. In this step, we want to compute a product of the matrices $V^T$ and  $V$.  We use the standard matrix multiplication algorithm and the bounds from (\ref{eq:slowmm}). Let $W  = MM(V^T, V)$ be the output of Step 1 of this algorithm. 
\par
Using (\ref{eq:slowmm}) and the fact that for any matrix $V$, $||V^T|| = ||V||$, we have: 
\begin{equation}\label{eq:V^TV}
    ||W - V^T V|| \leq 2n\cdot \textbf{u} \cdot ||V||_F^2.
\end{equation}
From (\ref{eq:V^TV}) and the triangle inequality, we also have that
\begin{equation}\label{eq:normW}
    ||W|| \leq ||V||_F^2 + 2n\textbf{u}||V||_F^2 < 2||V||_F^2.
\end{equation}
In the last inequality, we use the hypothesis that $2n\textbf{u} < 1$.
\par
\textbf{Step 2:} 
In this step, we take as input a matrix $W$ and compute all the diagonal elements of the matrix $WT_m$. Let $x_{m,k} = (WT_m)_{k,k}$ be computed on a floating point machine. If the algorithm is run in exact arithmetic, the output at the end of Step 2 is $(V^T V T_m)_{k,k}$ for all $m,k \in [n]$.
\par
Computationally, the $k$-th diagonal element can be computed as an inner product between the $k$-th row of $W$ and the $k$-th column of $T_m$. Then using the error bounds of inner product computation in (\ref{eq:ipbound}), we have that
\begin{equation}\label{eq:xmk}
\begin{split}
|x_{m,k} -   (WT_m)_{k,k}| \leq 2n \textbf{u} ||W_k|| ||(T_m)_{\_, k}|| \leq 2n \textbf{u} ||W|| ||(T_m)_{\_, k}||.
\end{split}
\end{equation}
Also, from (\ref{eq:V^TV}), we have that
\begin{equation}\label{eq:xmk2}
\begin{split}
    |(WT_m)_{k,k} - (V^T V T_m)_{k,k} | &\leq |\langle (W-V^TV)_k, (T_m)_{\_,k} \rangle| \\
    &\leq ||(W-V^TV)_k|| ||(T_m)_{\_,k}|| \\
    &\leq 2n \textbf{u} ||V||_F^2 ||(T_m)_{\_,k}||.
\end{split}
\end{equation}
Combining (\ref{eq:xmk}) and (\ref{eq:xmk2}), 
the triangle inequality and the bound from (\ref{eq:normW}), we deduce that
\begin{equation}\label{eq:step2fastcobslices}
    |x_{m,k} - (V^T V T_m)_{k,k}| \leq 6n \textbf{u} ||V||_F^2 ||(T_m)_{\_,k}||. 
\end{equation}
We also want to give an upper bound for $|x_{m,k}|$. 
By~(\ref{eq:step2fastcobslices}) and the triangle inequality, we have
\begin{equation}\label{eq:normxmk1}
    |x_{m,k}| \leq 6n \textbf{u} ||V||_F^2 ||(T_m)_{\_, k}|| + |(V^TVT_m)_{k,k}| 
\end{equation}
So we need to give an upper bound for $|(V^TVT_m)_{k,k}| $. Expanding along the definition and using the Cauchy-Schwarz inequality, we obtain 
\begin{equation}\label{eq:V^TVTmkk}
    |(V^T V T_m)_{k,k}| = |\langle (V^TV)_k , (T_m)_{\_,k} \rangle| \leq || (V^TV)_k|| ||(T_m)_{\_,k}|| . 
\end{equation}
Putting this back in (\ref{eq:normxmk1}), we have that
\begin{equation}\label{eq:normxmk}
   |x_{m,k}| \leq \Big(6n\textbf{u} + 1\Big) ||V||_F^2 ||(T_m)_{\_,k}|| \leq 2 ||V||_F^2 ||(T_m)_{\_,k}||. 
\end{equation}
The final inequality uses the hypothesis that $\textbf{u} < \frac{1}{6n}$. 

\textbf{Step 3:} In this step, we take as input $x_{m,k}$ for all $m,k \in [n]$. We then compute $x_m = \sum_{k=1}^n x_{m,k}$ on a floating point machine for all $m \in [n]$. If the algorithm was run in exact arithmetic, the output at the end of this step would be $\sum_{k=1}^n (V^T V T_m)_{k,k}$ for all $m \in [n]$.
\par
Computation of $x_m$ can also be thought of as inner product between the all $1$'s vector $\Bar{1}$ and the vector $(x_{m,1},...,x_{m,n})$. So, we can again use the bounds from (\ref{eq:ipbound}). This gives us that
\begin{equation}\label{eq:X_m}
\begin{split}
|x_m -  \sum_{k=1}^n x_{m,k} | &\leq 2n^{\frac{3}{2}} \textbf{u} \sqrt{\sum_{k=1}^n |x_{m,k}|^2} \\
&\leq 4n^{\frac{3}{2}} ||V||_F^2\textbf{u}\sqrt{\sum_{k=1}^n ||(T_m)_{\_, k}||^2} .
\end{split}
\end{equation}
The last equation uses (\ref{eq:normxmk}) to bound the norm of $|x_{m,k}|$. Also, summing up (\ref{eq:step2fastcobslices}) for all $k \in [n]$ and using the triangle inequality, we have that
\begin{equation}\label{eq:X_m2}
\begin{split}
      |\sum_{k=1}^n x_{m,k} - \sum_{k=1}^n (V^T V T_m)_{k,k}| &\leq \textbf{u} ||V||_F^2 (4n + \mu_{MM}(n))\Big(\sum_{k=1}^n ||(T_m)_{\_,k}||\Big) \\
     &\leq \textbf{u} ||V||_F^2 (6n^{\frac{3}{2}})\Big(\sum_{k=1}^n ||(T_m)_{\_,k}||^2\Big).
\end{split}
\end{equation}
The last inequality follows from the Cauchy-Schwarz inequality. Combining (\ref{eq:X_m}) and (\ref{eq:X_m2}), we finally have the error at the end of  that
\begin{equation}\label{eq:step3fastcobslices}
\begin{split}
    |x_m - \sum_{k=1}^n (V^T V T_m)_{k,k}| &\leq \textbf{u} ||V||_F^2 10n^{\frac{3}{2}}\sqrt{\sum_{k=1}^n ||(T_m)_{\_,k}||^2} \\
    &=  \textbf{u} ||V||_F^2 10n^{\frac{3}{2}}||T_m||_F.    
\end{split}
\end{equation}
In the last equality, we use the definition of the Frobenius norm of matrices from (\ref{eq:frobeniusnorm}).
\par
We also want to derive bounds for $|x_m|$. From the previous equation, by the  triangle inequality we already get that
\begin{equation}\label{eq:normXm1}
    |x_m| \leq \textbf{u} ||V||_F^2 10n^{\frac{3}{2}}||T_m||_F + |\sum_{k=1}^n (V^T V T_m)_{k,k}|.
\end{equation}
So it is enough to derive bounds for $|\sum_{k=1}^n (V^T V T_m)_{k,k}|$. Summing up (\ref{eq:V^TVTmkk}) for all $m \in [n]$ and using the Cauchy-Schwarz inequality,
we obtain:
\begin{equation}\label{eq:normsumvTvTmkk}
\begin{split}
    |\sum_{k=1}^n (V^T V T_m)_{k,k}| &\leq \sqrt{n} \sqrt{\sum_{k=1}^n |(V^T V T_m)_{k,k}|^2}   \\
    &\leq \sqrt{n}||V||_F^2 ||T_m||_F.
\end{split}
\end{equation}
Putting this back in (\ref{eq:normXm1}), we have that
\begin{equation}\label{eq:normXm}
    |x_m| \leq 2\sqrt{n}||V||_F^2 ||T_m||_F.
\end{equation}
Here in the last inequality, we use the hypothesis that $\textbf{u} \leq \frac{1}{10n}$.
\textbf{Step 4:} In this step, we take as input $x_{m}$ for all $m \in [n]$. We then compute $\Tilde{s}_i = \sum_{m=1}^n v_{mi} x_m$ in floating point arithmetic. Recall that $S = (V \otimes V \otimes V).T$ and $S_1,...,S_n$ are the slices of $S$. Ideally if the algorithm is run in exact arithmetic, the output at this stage is $Tr(S_i) = \sum_{m=1}^n v_{mi} \Big(\sum_{k=1}^n (V^T V T_m)_{k,k}\Big)$.
\par
Using error bounds for the inner product operation (\ref{eq:ipbound}) and using (\ref{eq:normXm}) to bound $|x_m|$ we have that
\begin{equation}\label{eq:step4fastcobslices1}
\begin{split}
    |\Tilde{s}_i - \sum_{m=1}^n v_{mi} x_m| &\leq 2n \textbf{u} ||V_{\_,i}|| \sqrt{\sum_{m=1}^n |x_m|^2}\\
    &\leq 4n^{\frac{3}{2}} \textbf{u} ||V||_F^3 \sqrt{\sum_{m=1}^n ||T_m||_F^2}.
\end{split}
\end{equation}
Also, summing up (\ref{eq:step3fastcobslices}) for all $m \in [n]$ and using the triangle inequality, we have: 
\begin{equation}\label{eq:step4fastcobslices2}
\begin{split}
    |\sum_{m=1}^n v_{mi} \Big(x_m - \sum_{k=1}^n (V^T V T_m)_{k,k}\Big) | &\leq ||V_{\_,i}||\sqrt{\sum_{m=1}^n |x_m - \sum_{k=1}^n (V^T V T_m)_{k,k}|^2}     \\
    &\leq \textbf{u}||V||_F^3 (8n^{\frac{3}{2}} + \sqrt{n}\mu_{MM}(n))\sqrt{\sum_{m=1}^n ||T_m||_F^2}
\end{split}
\end{equation}
Moreover, it follows from (\ref{eq:tensornormslices}), that $\sum_{m=1}^n ||T_m||_F^2 = ||T||^2_F$. Using this and combining (\ref{eq:step4fastcobslices1}) and (\ref{eq:step4fastcobslices2}) using triangle inequality, we have: 
\begin{equation}
\begin{split}
    &||\Tilde{s}_i - \sum_{m=1}^n v_{mi} \Big(\sum_{k=1}^n (V^T V T_m)_{k,k}\Big)| \\
    &\leq \textbf{u}||V||_F^3 14n^{\frac{3}{2}}\sqrt{\sum_{m=1}^n ||T_m||_F^2} \\
    &= \mu_{CB}(n) \cdot \textbf{u} \cdot ||V||_F^3||T||_F,
\end{split}
\end{equation}
where $\mu_{CB}(n) \leq 14n^{\frac{3}{2}}$.
\end{proof}

\section{Diagonalisation algorithm for diagonalisable matrices}\label{sec:diag}

In their recent breakthrough result,
\cite{9317903}  gave a numerically stable algorithm for matrix diagonalisation that also runs in matrix multiplication time in the finite precision arithmetic model. { As explained in Section~\ref{sec:overview}, 
we address two related issues in this section:
\begin{itemize}
    \item[(i)] Strengthening the conditioning guarantee from  \cite{9317903}  on
    the similarity $V$ that approximately diagonalises 
    the input matrix $A$.
    \item[(ii)] Relaxing the assumption $||A|| \leq 1$ on the input.
\end{itemize}
Our contribution regarding (i) appears in Theorem~\ref{thm:eig} and comes at the expense 
of additional assumptions on $A$: this matrix must be diagonalisable with distinct eigenvalues. The bounds in that theorem are expressed as a function of the condition number of the eigenproblem~(\ref{kappaeig}), already defined in~\cite{9317903}, and of the Frobenius eigenvector 
condition number~(\ref{kappaf}).
Regarding (ii), 
we need to  slightly modify the algorithm from~\cite{9317903}  in order to scale the input matrix. The main result of this section is Theorem~\ref{thm:eigfwd}, where we combine (i) and (ii). In particular, the error analysis due to the scaling of $A$ is worked out in the proof of Theorem~\ref{thm:eigfwd}.
}
\par
\begin{definition}[\textbf{Eigenpair and eigenproblem}] (Section 1.1 of \cite{9317903})
An eigenpair of a matrix $A \in \mathbb{C}^{n \times n}$ is a tuple $(\lambda,v) \in \mathbb{C} \times \mathbb{C}^n $ such that $Av= \lambda v$ and $||v||_2 = 1$. The eigenproblem is the problem of finding a maximal set of linearly independent eigenpairs $(\lambda_i,v_i)$ of a given matrix $A$. Note that an eigenvalue may appear more than once if it has geometric multiplicity greater than one. In the case when $A$ is diagonalizable, the solution consists of exactly $n$ eigenpairs, and if $A$ has distinct eigenvalues, then the solution is unique, up to the phases of $v_i$.
\end{definition}
\begin{definition}[$\delta$-\textbf{forward approximation for the eigenproblem}]
\label{def:forward}
Let $(\lambda_i,v_i)$ be true eigenpairs for a diagonalizable matrix $A$. Given an accuracy parameter $\delta$, the problem is to find pairs $(\lambda_i',v'_i)$ such that $||v_i - v'_i|| \leq \delta$ and $|\lambda_i - \lambda_i'| \leq \delta$ i.e., to ﬁnd a solution close to the exact solution.
\end{definition}
\begin{definition}[$\delta$-\textbf{backward approximation for the eigenproblem}]
\label{def:backward}
Given a diagonalizable matrix $A$ and an accuracy parameter $\delta$, find exact eigenpairs $(\lambda_i',v'_i)$ for a matrix $A'$ such that $||A - A'|| \leq \delta$ i.e., ﬁnd an exact solution to a nearby problem. Since diagonalizable matrices are dense in $\mathbb{C}^{n \times n}$, one can always find a complete set of eigenpairs for some nearby $A'$.
\end{definition}

{\bf Condition numbers.} If $A$ is diagonalizable, we define following Equation 2 in ~\cite{9317903}, its {\em eigenvector condition number}:
\begin{equation} \label{eq:eveccndtnnum}
    \kappa_V(A)=\inf_V ||V|| \cdot ||V^{-1}||,
\end{equation}
where the infimum is over all invertible $V$ such that $V^{-1}AV$ is diagonal. Its minimum eigenvalue gap is defined as
\begin{equation}\label{eq:gap}
    \text{gap}(A) := \text{min}_{i \neq j} |\lambda_i(A) - \lambda_j(A)|,
\end{equation}
where $\lambda_i$ are the eigenvalues of $A$ (with multiplicity).
Instead of the eigenvector condition number, it is sometimes more convenient to work instead
with the \textit{Frobenius eigenvector condition number}
\begin{equation} \label{kappaf}
\kappa^F_V(A)=\inf_V (||V||^2_F+ ||V^{-1}||^2_F) = \inf_V \kappa_F(V),
\end{equation}
where the infimum is taken over the same set of invertible matrices. 
We always have $\kappa^F_V(A) \geq 2 \kappa_V(A)$.
Following Equation 3 of \cite{9317903}, we define the condition number of the eigenproblem to be:
\begin{equation} \label{kappaeig}
    \kappa_{\text{eig}}(A) := \frac{\kappa_V(A)}{\text{gap}(A)} \in [0,\infty].
\end{equation}
\begin{lemma}\label{lem:scaling}
Suppose that $A$ has $n$ distinct eigenvalues $\lambda_1,\ldots,\lambda_n$, with $v_1,\ldots,v_n$ the corresponding eigenvectors. Let $W$ be the matrix with columns $v_1,\ldots,v_n$; let $u_1,\ldots,u_n$ be the left eigenvectors of $A$, i.e., the rows of $W^{-1}$. Then $\kappa^F_V(A)=2\sum_{i=1}^n ||u_i|| \cdot ||v_i||,$ and the infimum in~(\ref{kappaf}) is reached for
the matrix $V$ obtained from $W$ by multiplication of each column by $\sqrt{||u_i|| / ||v_i||}.$
\end{lemma}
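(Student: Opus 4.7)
The plan is to parametrize all matrices $V$ for which $V^{-1}AV$ is diagonal, reduce the optimization to a one-variable problem per column, and apply AM--GM. The key point that makes this clean is the distinct-eigenvalue hypothesis: it forces each column of any such $V$ to be a scalar multiple of the corresponding $v_i$, so the infimum is really over $n$ independent nonzero scalars $c_1,\ldots,c_n \in \C^*$.

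First I would record that since $A$ has $n$ distinct eigenvalues $\lambda_i$, each eigenspace is one-dimensional. Hence if $V^{-1}AV=D'$ is diagonal, the columns of $V$ are (in some order) nonzero scalar multiples of the fixed eigenvectors $v_i$. Permuting the columns of $V$ permutes both $\|V\|_F^2$ and $\|V^{-1}\|_F^2$ summands identically, so it suffices to consider $V=WD$ where $D=\mathrm{diag}(c_1,\ldots,c_n)$ with $c_i\neq 0$. Then $V^{-1}=D^{-1}W^{-1}$, so its $i$-th row is $c_i^{-1}u_i$. Expanding Frobenius norms by rows/columns gives
\begin{equation*}
\kappa_F(V) \;=\; \|V\|_F^2+\|V^{-1}\|_F^2 \;=\; \sum_{i=1}^n \Bigl(|c_i|^2\,\|v_i\|^2 + |c_i|^{-2}\,\|u_i\|^2\Bigr).
\end{equation*}

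Next, I would minimize each of the $n$ independent summands separately. For each $i$, setting $t_i=|c_i|^2>0$, the function $t_i\|v_i\|^2+t_i^{-1}\|u_i\|^2$ is minimized by AM--GM at $t_i=\|u_i\|/\|v_i\|$, with minimum value $2\|u_i\|\cdot\|v_i\|$. Summing over $i$ yields
\begin{equation*}
\inf_V \kappa_F(V) \;=\; 2\sum_{i=1}^n \|u_i\|\cdot\|v_i\|,
\end{equation*}
and the infimum is attained (it is a minimum). Moreover, the optimal choice $|c_i|=\sqrt{\|u_i\|/\|v_i\|}$ corresponds exactly to the described rescaling of the columns of $W$; any unit-modulus phase in $c_i$ gives the same value, reflecting the non-uniqueness of normalized eigenvectors.

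Since each step is either an algebraic identity, the standard characterization of eigenvectors for matrices with simple spectrum, or a one-line AM--GM application, I do not anticipate any real obstacle; the only subtlety to flag carefully is that the distinct-eigenvalue assumption is what makes the parametrization $V=WD$ exhaustive (without it one could mix eigenvectors within a repeated eigenspace, and the infimum would generally be smaller and require a different optimization).
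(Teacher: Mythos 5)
Your proof is correct and follows essentially the same route as the paper: parametrize every diagonalizing matrix as $V=WD$ with $D=\text{diag}(c_1,\ldots,c_n)$ (using the distinct-eigenvalue hypothesis), expand $\kappa_F(V)$ as $\sum_i\bigl(|c_i|^2\|v_i\|^2+|c_i|^{-2}\|u_i\|^2\bigr)$, and minimize each summand, getting the optimum at $|c_i|=\sqrt{\|u_i\|/\|v_i\|}$. Your explicit handling of column permutations and of the phase freedom is a slightly more careful write-up of points the paper treats implicitly, but the argument is the same.
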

\begin{proof}
Since $A$ has distinct eigenvalues, any matrix $V$ that diagonalizes $A$ is obtained from $W$ by multiplication of each column by some nonzero scalar~$x_i$. In matrix notation, we have $V=WD$ where $D=\text{diag}(x_1,...,x_n)$. 
We also have $V^{-1}=D^{-1}W^{-1}$, and the $i$-th row of $V^{-1}$ is therefore equal to $u_i/x_i$.
As a result,
$$||V||^2_F+ ||V^{-1}||^2_F = \sum_{i=1}^n \left(x_i^2 ||v_i||^2+\frac{||u_i||^2}{x_i^2}\right).$$
An elementary computation shows that the infimum is reached for $x_i=\sqrt{||u_i|| / ||v_i||}.$ Here we have assumed that $x_i \in \R$ for all $i$. This is without loss of generality since multiplying each entry of $V$ or $V^{-1}$ by a complex number of modulus 1 does not change their Frobenius norms.
\end{proof}

If $M$ is diagonalisable as $VDV^{-1}$ over $\C$, let $v_i$ be the columns of $V$ and $u_j^T$ be the rows of $V^{-1}$. Then $M$ admits a spectral expansion of the form
\begin{equation} \label{eq:spectralexpansion}
    M = \sum_{i=1}^n \lambda_iv_iu_i^*.  
\end{equation}
\begin{definition}\label{def:eigconditionnumber}
For $M \in M_n(\C)$, if $M$ has distinct eigenvalues $\lambda_1,...,\lambda_n$ and a spectral expansion as in (\ref{eq:spectralexpansion}), then we define the eigenvalue condition number of $\lambda_i$ as 
\begin{align*}
    \kappa(\lambda_i) := ||v_iu_i^*|| = ||v_i||||u_i||
\end{align*}
\end{definition}
\begin{remark}\label{rem:kappaind}
 Note that $\kappa(\lambda_i)$ is independent of the choice of the $v_i$'s. This follows from the fact that  $M$ has distinct eigenvalues: if $v_i',u_i'$ is another pair of vectors corresponding to $\lambda_i$ in the spectral expansion, then $v_i' = k_iv_i$ and $u_i' = l_iv_i$ for some non-zero constants $k_i,l_i$. Using the fact that $\langle v_i', u_i'\rangle = 1$, we have $k_il_i = 1$. Hence $||u_i'||||v_i'|| = ||u_i||||v_i||$ which proves that $\kappa(\lambda_i)$ is indeed independence of the choice of the $v_i$'s.
\end{remark} 
\begin{lemma}\label{lem:bgvkslemma}
Let $M,M'$ be $n \times n$ matrices such that $||M|| , ||M'|| \leq 1$ and $||M - M'|| \leq \delta$ where $\delta < \frac{1}{8\kappa_{ \text{eig}}(M)}$. Let $\lambda_1,...,\lambda_n$ be the distinct eigenvalues of $M$. Then
\begin{enumerate}
    \item $M'$ has distinct eigenvalues.
    \item $|\kappa(\lambda_i) - \kappa(\lambda'_i)| \leq 2 \kappa_V(A) $.
    \item $\sqrt{n\sum_{i \in [n]} \kappa(\lambda_i)^2} \leq n\kappa_V(A)$.
\end{enumerate}
\end{lemma}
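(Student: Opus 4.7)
I would address the three parts in reverse order of difficulty, starting with Part 3 as a warm-up. The key observation for Part 3 is that $\kappa(\lambda_i) \leq \kappa_V(M)$ for every $i$. To see this, let $V$ be any invertible matrix with $V^{-1}MV$ diagonal; write $v_i$ for its $i$-th column and $u_i^*$ for the $i$-th row of $V^{-1}$. Then $||v_i|| = ||V e_i|| \leq ||V||$ and $||u_i|| \leq ||V^{-1}||$, so $||v_i|| \cdot ||u_i|| \leq ||V|| \cdot ||V^{-1}||$. Taking the infimum over $V$ and using Remark~\ref{rem:kappaind} to identify $||v_i|| \cdot ||u_i||$ with $\kappa(\lambda_i)$ irrespective of normalization, we obtain $\kappa(\lambda_i) \leq \kappa_V(M)$. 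Squaring, summing over $i$, and extracting square roots gives $\sqrt{n \sum_i \kappa(\lambda_i)^2} \leq n \kappa_V(M)$, as required.

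For Part 1, I would invoke the Bauer--Fike theorem: every eigenvalue of $M'$ lies within distance $\kappa_V(M) \cdot ||M - M'||$ of some eigenvalue of $M$. The hypothesis $\delta < 1/(8\kappa_{\text{eig}}(M)) = \text{gap}(M)/(8 \kappa_V(M))$ forces this distance to be strictly less than $\text{gap}(M)/8$, so the $n$ perturbation disks around the eigenvalues of $M$ are pairwise disjoint. A continuity argument along the straight-line path $M + t(M' - M)$, $t \in [0,1]$, shows that each disk captures exactly one eigenvalue of $M'$ (counted with multiplicity), so $M'$ has $n$ distinct eigenvalues.

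Part 2 is where I expect the main difficulty. The argument of Part 3 already gives $\kappa(\lambda_i) \leq \kappa_V(M)$, but a matching bound for the perturbed quantity $\kappa(\lambda'_i)$ is not immediate, since $\kappa_V(M')$ may in principle exceed $\kappa_V(M)$. My plan is to pass to the rank-one spectral projectors $P_i = v_i u_i^*$ (normalized so that $\langle v_i, u_i \rangle = 1$) and $P'_i$ defined analogously for $M'$. Since $||P_i||_{\text{op}} = ||v_i|| \cdot ||u_i|| = \kappa(\lambda_i)$ and likewise for $P'_i$, the reverse triangle inequality $\bigl|\, ||P_i||_{\text{op}} - ||P'_i||_{\text{op}} \,\bigr| \leq ||P_i - P'_i||_{\text{op}}$ reduces the task to bounding $||P_i - P'_i||_{\text{op}}$ by $2\kappa_V(M)$.

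To finish, I would establish this projector perturbation bound via the contour-integral identity $P_i - P'_i = \frac{1}{2\pi i} \oint_{\Gamma_i} (zI - M)^{-1}(M' - M)(zI - M')^{-1}\, dz$, where $\Gamma_i$ is a circle of radius about $\text{gap}(M)/2$ around $\lambda_i$. By Part 1 together with the hypothesis on $\delta$, this contour stays away from both $\text{spec}(M)$ and $\text{spec}(M')$; the bound $||(zI - M)^{-1}|| \leq \kappa_V(M) / \text{dist}(z, \text{spec}(M))$ controls the first resolvent, and a Neumann-series argument (valid since $\delta \cdot \kappa_V(M)/\text{gap}(M) < 1/4$) controls the second. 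Tracking the constants should yield $||P_i - P'_i||_{\text{op}} = O(\kappa_V(M)^2 \delta / \text{gap}(M))$, which under our hypothesis on $\delta$ is comfortably below $2\kappa_V(M)$. Alternatively, I would try to invoke the projector-stability estimate implicit in Proposition~1.1 of \cite{9317903}, which is established by precisely this kind of resolvent argument under the same smallness hypothesis.
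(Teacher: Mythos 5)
Your proposal is correct and follows essentially the paper's route: Part 3 is the paper's argument verbatim ($\kappa(\lambda_i)\le\kappa_V(M)$ from $\|Ve_i\|\le\|V\|$ and $\|V^{-T}e_i\|\le\|V^{-1}\|$ together with Remark~\ref{rem:kappaind}, then summing). For Parts 1 and 2 the paper simply cites Proposition~1.1 of \cite{9317903}, and the Bauer--Fike-plus-continuity and contour-integral projector-perturbation argument you sketch is precisely the proof given there, with your constants checking out ($\|P_i-P_i'\|\le \tfrac{8\kappa_V(M)^2\delta}{3\,\mathrm{gap}(M)}<\kappa_V(M)/3\le 2\kappa_V(M)$ under $\delta<\mathrm{gap}(M)/(8\kappa_V(M))$).
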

\begin{proof}
    Refer to the proof of Proposition 1.1 in \cite{9317903} for a proof of the first two properties. 
     Towards the third one, we first show that $\kappa_V(A) \geq \kappa(\lambda_i)$ for all $i \in [n]$. Using the definition of $\kappa_V(A)$, we get that 
    \begin{align*}
        \kappa_V(A) &= \text{inf}_{V \in D(A)} ||V||||V^{-1}|| \\
        &= \text{inf}_{V \in D(A)} (\max_{x \in \C^n \setminus \{0\}} \frac{||Vx||}{||x||})(\max_{x \in \C^n \setminus \{0\}} \frac{||V^{-1}x||}{||x||}) \\
        &\geq \text{inf}_{V \in D(A)} ||v_i|| ||u_i||
    \end{align*}
    where $v_i$ are the columns of $V$ and $u_i$ are the rows of $V^{-1}$. The inequality follows from the fact that $\max_{x \in \C^n} \frac{||Vx||}{||x||} \geq ||Ve_i|| = ||v_i||$. Since $||V|| = ||V^T||$ for all matrices $V$, $\max_{x \in \C^n \setminus \{0\}} \frac{||V^{-1}x||}{||x||} = \max_{x \in \C^n \setminus \{0\}} \frac{||V^{-T}x||}{||x||} \geq ||V^{-T}e_i|| = ||u_i||$. By Remark \ref{rem:kappaind}, $||v_i|| ||u_i||$ is equal for all $V \in D(A)$. 
    As a result, $\kappa_V(A) \geq ||v_i|| ||u_i|| = \kappa(\lambda_i)$ for all $i \in [n]$. 
    This gives us 
    \begin{align*}
        \sqrt{n\sum_{i \in [n]} \kappa(\lambda_i)^2} \leq \sqrt{n\sum_{i \in [n]} \kappa_V(A)^2} = n\kappa_V(A).
    \end{align*}
\end{proof}
\begin{lemma}\label{lem:gapAA'}
Let $A,A' \in M_n(\C)$  be such that $A$ has $n$ distinct eigenvalues and $||A-A'|| \leq \delta$ where $\delta < \frac{1}{8\kappa_{\text{eig}}(A)}$. Then 
\begin{align*}
    \text{gap}(A') \geq \frac{3\text{gap}(A)}{4}.
\end{align*}
\end{lemma}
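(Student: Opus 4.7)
The plan is to combine the Bauer--Fike perturbation bound with a matching of the eigenvalues of $A$ and $A'$, followed by a triangle inequality to lower bound the eigenvalue gap.

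First, I would apply the Bauer--Fike theorem: if $A=VDV^{-1}$ with $\|V\|\cdot\|V^{-1}\|$ arbitrarily close to $\kappa_V(A)$, then every eigenvalue $\mu$ of $A'$ satisfies $|\mu-\lambda_i|\leq \kappa_V(A)\,\|A-A'\|$ for some eigenvalue $\lambda_i$ of $A$. Using the hypothesis $\delta<\tfrac{1}{8\kappa_{\text{eig}}(A)}=\tfrac{\operatorname{gap}(A)}{8\kappa_V(A)}$, this upper bound simplifies to $\kappa_V(A)\delta<\operatorname{gap}(A)/8$. Hence every eigenvalue of $A'$ lies in the union of the closed disks $D_i$ of radius $\operatorname{gap}(A)/8$ around the $\lambda_i$. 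Since these disks have pairwise distance at least $\operatorname{gap}(A)-2\cdot\operatorname{gap}(A)/8=3\operatorname{gap}(A)/4>0$, they are disjoint.

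Next I would establish that each disk $D_i$ contains exactly one eigenvalue of $A'$, counted with multiplicity. This is the only step with any subtlety. I would do it by a standard homotopy argument: consider $A(t)=(1-t)A+tA'$ for $t\in[0,1]$. Then $\|A(t)-A\|\leq t\delta\leq\delta$, so the Bauer--Fike bound applies uniformly along the path and the eigenvalues of $A(t)$ never cross the boundaries $\partial D_i$. Since the number of eigenvalues of $A(t)$ inside each closed region $D_i$ is a locally constant integer-valued function of $t$, and equals $1$ at $t=0$, it equals $1$ at $t=1$ as well. (Alternatively, Lemma \ref{lem:bgvkslemma}(1), which gives the distinctness of the eigenvalues of $A'$, already provides $n$ distinct eigenvalues; by the pigeonhole principle applied to the inclusion into $\bigcup_i D_i$, the distribution must be one per disk.)

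Finally, denote by $\mu_i$ the unique eigenvalue of $A'$ lying in $D_i$. For any $i\neq j$, the triangle inequality gives
\begin{equation*}
|\mu_i-\mu_j|\;\geq\;|\lambda_i-\lambda_j|-|\mu_i-\lambda_i|-|\mu_j-\lambda_j|\;\geq\;\operatorname{gap}(A)-2\cdot\frac{\operatorname{gap}(A)}{8}\;=\;\frac{3\operatorname{gap}(A)}{4}.
\end{equation*}
Minimising over $i\neq j$ yields $\operatorname{gap}(A')\geq\tfrac{3}{4}\operatorname{gap}(A)$.

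The main obstacle is the matching step: Bauer--Fike alone only places eigenvalues of $A'$ into the union of the disks, not one per disk. The homotopy argument (or equivalently the contour-integral/spectral projector argument, which is what implicitly underlies the proof of Proposition~1.1 in \cite{9317903}) is what rules out the pathological case in which a disk $D_i$ contains two eigenvalues of $A'$ while some other disk $D_j$ contains none---a case in which $\operatorname{gap}(A')$ could be made arbitrarily small even though every eigenvalue of $A'$ is close to some eigenvalue of $A$. Everything else is routine once the matching is in hand.
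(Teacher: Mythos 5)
Your proof is correct and follows essentially the same route as the paper, which simply defers to the proof of Proposition~1.1 in \cite{9317903}: that argument is exactly your combination of the Bauer--Fike bound (with the diagonalizing similarity chosen to nearly attain $\kappa_V(A)$), disjointness of the disks of radius $\text{gap}(A)/8$, a continuity/homotopy count placing exactly one eigenvalue of $A'$ in each disk, and the final triangle inequality. One caution: your parenthetical alternative is not valid on its own, since $n$ distinct eigenvalues of $A'$ lying in the union of $n$ disjoint disks does not force one per disk by pigeonhole (two could share a disk while another is empty); the homotopy argument is what actually rules this out, and it does carry the proof.
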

\begin{proof}
Refer to the proof of Proposition 1.1 in \cite{9317903}.
\end{proof}
\begin{lemma}\label{lem:AA'relation}
Let $A,A' \in M_n(\C)$  be such that $A$ has $n$ distinct eigenvalues and $||A-A'|| \leq \delta$ where $\delta < \frac{1}{8\kappa_{\text{eig}}(A)}$. Then
\begin{align*}
    \kappa^F_V(A') \leq 6n\kappa_V(A) \leq  3n\kappa^F_V(A).
\end{align*}
\end{lemma}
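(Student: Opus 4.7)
The plan is to reduce everything to the sum representation of $\kappa^F_V$ provided by Lemma~\ref{lem:scaling} and then to control each summand for $A'$ using the perturbation bounds of Lemma~\ref{lem:bgvkslemma}.

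First, since $\|A-A'\| \leq \delta < \frac{1}{8\kappa_{\text{eig}}(A)}$, part (1) of Lemma~\ref{lem:bgvkslemma} guarantees that $A'$ also has $n$ distinct eigenvalues $\lambda_1',\ldots,\lambda_n'$. In particular, Lemma~\ref{lem:scaling} applies to both $A$ and $A'$, giving
$$\kappa^F_V(A') = 2\sum_{i=1}^n \|u_i'\|\cdot \|v_i'\| = 2\sum_{i=1}^n \kappa(\lambda_i'),$$
where $(v_i', u_i')$ are the right/left eigenvector pairs of $A'$.

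Now bound each $\kappa(\lambda_i')$: from part (2) of Lemma~\ref{lem:bgvkslemma} we have $\kappa(\lambda_i') \leq \kappa(\lambda_i) + 2\kappa_V(A)$, and from the argument inside the proof of that lemma we already know $\kappa(\lambda_i) \leq \kappa_V(A)$ for every $i$. Hence $\kappa(\lambda_i') \leq 3\kappa_V(A)$ for all $i$, and summing yields
$$\kappa^F_V(A') \;\leq\; 2\sum_{i=1}^n 3\kappa_V(A) \;=\; 6n\,\kappa_V(A),$$
which is the first inequality.

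For the second inequality it suffices to observe that for any invertible $V$ realizing the infimum in the definition of $\kappa^F_V(A)$, we have $\|V\|\leq \|V\|_F$ and $\|V^{-1}\|\leq \|V^{-1}\|_F$, so by $2ab \leq a^2+b^2$,
$$\|V\|\cdot \|V^{-1}\| \;\leq\; \|V\|_F\cdot \|V^{-1}\|_F \;\leq\; \tfrac{1}{2}\bigl(\|V\|_F^2 + \|V^{-1}\|_F^2\bigr) \;=\; \tfrac{1}{2}\kappa^F_V(A).$$
Taking the infimum on the left gives $\kappa_V(A) \leq \tfrac{1}{2}\kappa^F_V(A)$, and multiplying by $6n$ yields $6n\kappa_V(A) \leq 3n\kappa^F_V(A)$. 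There is no serious obstacle here; the only thing to double-check is that the estimate $\kappa(\lambda_i)\leq \kappa_V(A)$ really does fall out of Lemma~\ref{lem:bgvkslemma}'s proof (it does, by looking at the factors $\|v_i\|$ and $\|u_i\|$ separately via the operator norms of $V$ and $V^{-T}$).
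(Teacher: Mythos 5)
Your argument follows the same route as the paper (Lemma~\ref{lem:scaling} for $A'$, then parts (1) and (2) of Lemma~\ref{lem:bgvkslemma}), and your shortcut $\kappa(\lambda_i)\le\kappa_V(A)$ is legitimate: that inequality is exactly what is established inside the proof of part (3) of Lemma~\ref{lem:bgvkslemma}, it depends only on $A$ having distinct eigenvalues, and using it directly is in fact cleaner than the paper's detour through Cauchy--Schwarz and part (3). The second inequality via $\kappa_V(A)\le\tfrac12\kappa^F_V(A)$ is also fine (the paper states this right after defining $\kappa^F_V$, and your derivation does not even need the infimum to be attained, since $\kappa_V(A)\le\|V\|\,\|V^{-1}\|\le\tfrac12\kappa_F(V)$ for every diagonalising $V$).

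There is, however, a genuine gap: Lemma~\ref{lem:bgvkslemma} is stated (and proved, via Proposition~1.1 of \cite{9317903}) under the hypothesis $\|M\|,\|M'\|\le 1$, whereas Lemma~\ref{lem:AA'relation} makes no norm assumption on $A,A'$. You invoke parts (1) and (2) for arbitrary $A,A'$ without justifying this. The paper closes this hole by a scaling reduction: set $N_0=\max\{\|A\|,\|A'\|\}$, apply the bounded-norm argument to $C=A/N_0$ and $C'=A'/N_0$, and check that the hypothesis transfers because $\kappa_{\text{eig}}(C)=N_0\,\kappa_{\text{eig}}(A)$ (the gap scales, $\kappa_V$ does not), so $\|C-C'\|\le\delta/N_0<\frac{1}{8\kappa_{\text{eig}}(C)}$; finally $\kappa_V$, $\kappa^F_V$ and the $\kappa(\lambda_i)$ are invariant under this scaling, so the conclusion for $C,C'$ gives it for $A,A'$. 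The fix is routine, but without it your appeal to Lemma~\ref{lem:bgvkslemma} is outside its stated hypotheses, so you should add this normalization step (or verify directly that the quantities and the condition $\delta<\frac{1}{8\kappa_{\text{eig}}(A)}$ are scale-invariant).
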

\begin{proof}
We first explain the proof for $||A||,||A'|| \leq 1$ and then modify it to deal with the general case.
Suppose that $\lambda_1,\ldots,\lambda_n$ are the eigenvalues of $A$ with corresponding eigenvectors $v_1,\ldots,v_n$. By Lemma \ref{lem:bgvkslemma}, $A'$ has distinct eigenvalues as well. Let $\lambda'_1,\ldots,\lambda'_n$ be the eigenvalues of $A'$ with corresponding eigenvectors $v'_1,\ldots,v'_n$. Let $W'$ be the matrix with columns $v'_1,\ldots,v'_n$; let $u'_1,\ldots,u'_n$ be the left eigenvectors of $A'$, i.e., the rows of $W^{-1}$. Applying Lemma \ref{lem:scaling} to $A'$, we know that
\begin{equation}\label{eq:scaling}
    \kappa^F_V(A') = 2\sum_{i=1}^n ||u_i'|| \cdot ||v_i'||.
\end{equation}
From Lemma \ref{lem:bgvkslemma}, we know that
\begin{align*}
    |\kappa(\lambda'_i) - \kappa(\lambda_i)| \leq 2\kappa_V(A)
\end{align*}
and hence
\begin{align*}
    \kappa(\lambda'_i) \leq \kappa(\lambda_i) + 2\kappa_V(A).
\end{align*}
Adding this up for all $i = 1$ to $n$,
\begin{align*}
    \sum_{i=1}^n \kappa(\lambda'_i) \leq \sum_{i=1}^n\kappa(\lambda_i) + 2n\kappa_V(A).
\end{align*}
Now using the definition of $\kappa(\lambda_i)$ as in Definition \ref{def:eigconditionnumber}, we get that
\begin{align*}
    \sum_{i=1}^n||v_i'||||u_i'|| \leq \sum_{i=1}^n\kappa(\lambda_i) + 2n\kappa_V(A)
\end{align*}
By the  Cauchy-Schwarz inequality, we have 
\begin{align*}
    \sum_{i=1}^n||v_i'||||u_i'|| \leq \sqrt{n\sum_{i \in [n]} \kappa(\lambda_i)^2} + 2n\kappa_V(A).
\end{align*}
By (\ref{eq:scaling}) and Lemma \ref{lem:bgvkslemma}, we get that
\begin{align*}
    \kappa^F_V(A') \leq 6n\kappa_V(A) \leq 3n\kappa^F_V(A).
\end{align*}
\par
Let us now take $A,A'$ to be arbitrary $n \times n$ matrices over $\mathbb{C}$ such that $||A-A'|| \leq \delta$ for some $\delta \in (0,\frac{1}{8\kappa_{\text{eig}}}(A))$. Let $N_0 := \max\{||A||,||A'||\}$. 
Then we can define $C = \frac{C}{N_0}$ and $C' = \frac{A'}{N_0}$ where 
$||C||,||C'|| \leq 1$. Also notice that $||C - C'|| \leq \delta' = \frac{\delta}{N_0}$ where $\delta' < \frac{1}{8N_0\kappa_{\text{eig}}(A)}$. Now, $\kappa_{\text{eig}}(C) = \frac{\kappa_V(C)}{\text{gap}(C)}$. Since $C = \frac{A}{N_0}$, we get that $\text{gap}(C) = \frac{\text{gap}(A)}{N_0}$ and $\kappa_V(C) = \kappa_V(A)$. Hence, $\kappa_{\text{eig}}(C) = N_0\kappa_{\text{eig}}(A)$ and this gives us $\delta' <  \frac{1}{8\kappa_{\text{eig}}(C)}$. Using the previous argument, we have $\kappa^F_V(C') \leq 6n\kappa_V(C) \leq  3n\kappa^F_V(C)$. Since scaling of matrices preserves $\kappa^F_V$ and $\kappa_V$, this gives us that $\kappa^F_V(A') \leq 6n\kappa_V(A) \leq 3n\kappa^F_V(A)$.
\end{proof}
\begin{lemma}\label{lem:kappaVkappaA}
Let $A \in M_n(\C)$ be a diagonalisable matrix with distinct eigenvalues and let $A = VDV^{-1}$ such that for all $i \in [n]$, for each column $v_i$ of $V$, $\Big|||v_i|| - 1\Big| \leq \delta$. Then $\kappa_F(V) \leq n(1 + \delta)^2 + \frac{(\kappa^F_V(A))^2}{4(1-\delta)^2}$.
\end{lemma}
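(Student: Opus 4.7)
The plan is to bound $||V||_F^2$ and $||V^{-1}||_F^2$ separately, splitting $\kappa_F(V) = ||V||_F^2 + ||V^{-1}||_F^2$.

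The first term is immediate from the hypothesis: since $||V||_F^2 = \sum_{i=1}^n ||v_i||^2$ and $||v_i|| \leq 1 + \delta$ for each $i$, we get $||V||_F^2 \leq n(1+\delta)^2$, which matches the first term of the target bound.

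The work is therefore to show $||V^{-1}||_F^2 \leq \frac{(\kappa^F_V(A))^2}{4(1-\delta)^2}$. Let $u_1^T,\ldots,u_n^T$ denote the rows of $V^{-1}$, so that $||V^{-1}||_F^2 = \sum_{i=1}^n ||u_i||^2$. Because $A$ has distinct eigenvalues, the $v_i$ are right eigenvectors and the $u_i$ are the corresponding left eigenvectors (up to scaling), so by Remark~\ref{rem:kappaind} each product $||v_i|| \cdot ||u_i||$ equals $\kappa(\lambda_i)$, independent of the particular choice of $V$. Lemma~\ref{lem:scaling} then gives the key identity
\[
\sum_{i=1}^n ||v_i|| \cdot ||u_i|| \;=\; \tfrac{1}{2}\kappa^F_V(A).
\]
Using $||v_i|| \geq 1-\delta$, I bound $||u_i|| \leq \frac{||v_i|| \cdot ||u_i||}{1-\delta}$ and then square and sum to get
\[
\sum_{i=1}^n ||u_i||^2 \;\leq\; \frac{1}{(1-\delta)^2}\sum_{i=1}^n \bigl(||v_i||\cdot ||u_i||\bigr)^2.
\]

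The remaining step is the elementary inequality $\sum_i x_i^2 \leq (\sum_i x_i)^2$ for nonnegative reals $x_i = ||v_i||\cdot ||u_i||$, which yields $\sum_i (||v_i||\cdot ||u_i||)^2 \leq (\kappa^F_V(A)/2)^2 = (\kappa^F_V(A))^2/4$. Combining with the previous inequality gives $||V^{-1}||_F^2 \leq (\kappa^F_V(A))^2/[4(1-\delta)^2]$, and adding the bound for $||V||_F^2$ completes the proof.

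I do not expect a genuine obstacle here; the only subtlety is to correctly match the $(v_i, u_i)$ pairs appearing in our $V$ with the eigenvector pairs that compute $\kappa^F_V(A)$ via Lemma~\ref{lem:scaling}, which is handled by Remark~\ref{rem:kappaind} (the product $||v_i||\cdot ||u_i||$ is invariant under rescaling the eigenvectors). The assumption $\delta < 1$ is implicitly needed for $(1-\delta)$ to be positive, which is consistent with the intended regime where $||v_i|| \approx 1$.
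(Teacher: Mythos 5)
Your proof is correct and follows essentially the same route as the paper: bound $||V||_F^2 \leq n(1+\delta)^2$, use Lemma~\ref{lem:scaling} (with the invariance from Remark~\ref{rem:kappaind}) to get $2\sum_i ||u_i||\,||v_i|| = \kappa^F_V(A)$, and then combine $||v_i|| \geq 1-\delta$ with the elementary inequality $\sum_i x_i^2 \leq (\sum_i x_i)^2$ to bound $||V^{-1}||_F^2$. The only cosmetic difference is that you square before dividing out the $(1-\delta)$ factor while the paper bounds $\sum_i ||u_i||$ first; the argument is the same.
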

\begin{proof}
By Lemma \ref{lem:scaling}, if $U = V^{-1}$ and $u_1,...,u_n$ are the rows of $U$, then $\kappa^F_V(A) = \sum_{i \in [n]} 2||u_i||||v_i||$. Since $|||v_i|| - 1| \leq \delta$ for all $i \in [n]$, we have that $(1 - \delta)\sum_{i \in [n]} 2||u_i|| \leq \kappa^F_V(A) \leq  (1+ \delta)\sum_{i \in [n]} 2||u_i||$. From the definition of~$\kappa_F$,
\begin{align*}
    \kappa_F(V) = ||V||_F^2 + ||V^{-1}||_F^2  &= n(1+\delta)^2 + \sum_{i \in [n]} ||u_i||^2 \\
    &\leq n(1+\delta)^2 + (\sum_{i \in [n]} ||u_i||)^2 \leq  n(1+\delta)^2 + \frac{(\kappa^F_V(A))^2}{4(1-\delta)^2}.
\end{align*}
\end{proof}
{ In the next theorem we give some properties
of the diagonalisation algorithm $EIG$ analyzed in~\cite{9317903}. The first two are from their paper (Theorem 1.6),
and the third one provides an additional conditioning guarantee for $V$. It is especially useful for small values of $\delta$.}
\begin{theorem}\label{thm:eig}
There is a randomized algorithm $EIG$ which on input a diagonalisable matrix $A \in \C^{n \times n}$ with $||A|| \leq 1$ and a desired accuracy parameter $\delta \in (0, \frac{1}{8\kappa_{\text{eig}}(A)})$ outputs a diagonal $D$ and an invertible $V$. The following properties are satisfied by the output matrices:
    \begin{enumerate}
    \item $||A - VDV^{-1}|| \leq \delta \text{ and } \kappa(V) \leq \frac{32n^{2.5}}{\delta}$.
    \item $||v_i|| = 1 \pm n\textbf{u}$ where $v_i$ are the columns of $V$.
    \item $\kappa(V) \leq \frac{\kappa_F(V)}{2}  \leq \frac{1}{2}(\frac{9n}{4} + 9n^2(\kappa^F_V(A))^2)$.
\end{enumerate}
The algorithm runs in $$O(T_{\text{MM}}(n)\log^2\frac{n}{\delta})$$ arithmetic operations on a floating point machine with $$\log(\frac{1}{u}) = O(\log^4(\frac{n}{\delta})\log n)$$ bits of precision with probability at least $1 - \frac{1}{n} - \frac{12}{n^2}$.
\end{theorem}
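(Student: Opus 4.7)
The plan is to split the theorem into three parts and address them in turn. Parts (1) and (2) are not really the new contribution of this section; they follow essentially verbatim from the main diagonalisation result of \cite{9317903}. Part (1) is exactly the output guarantee and conditioning bound on $V$ proved there, while part (2) follows from the normalisation of the eigenvector columns that their algorithm performs (up to the roundoff $n\textbf{u}$ incurred by a single normalisation). So I would just cite \cite{9317903} for these two items and spend the bulk of the proof on part (3).

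For part (3) the first inequality $\kappa(V)\le \kappa_F(V)/2$ is a one-liner: since $\|V\|\le \|V\|_F$ and $\|V^{-1}\|\le \|V^{-1}\|_F$, we get
\[
\kappa(V)=\|V\|\cdot\|V^{-1}\|\le \|V\|_F\cdot\|V^{-1}\|_F \le \tfrac{1}{2}\bigl(\|V\|_F^2+\|V^{-1}\|_F^2\bigr)=\tfrac{1}{2}\kappa_F(V),
\]
by AM--GM. The second inequality is where the work lies, and I would derive it by combining the two auxiliary lemmas already proved in this section. The key observation is that if we set $A':=VDV^{-1}$, then by part (1) we have $\|A-A'\|\le \delta < \frac{1}{8\kappa_{\text{eig}}(A)}$, so Lemma~\ref{lem:bgvkslemma} guarantees that $A'$ has $n$ distinct eigenvalues. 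Moreover, the columns $v_i$ of $V$ are exactly the right eigenvectors of $A'$, and by part (2) they satisfy $\bigl|\|v_i\|-1\bigr|\le n\textbf{u}$.

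Now I would apply Lemma~\ref{lem:kappaVkappaA} to $A'$ with the perturbation parameter $n\textbf{u}$, yielding
\[
\kappa_F(V)\le n(1+n\textbf{u})^2 + \frac{(\kappa^F_V(A'))^2}{4(1-n\textbf{u})^2}.
\]
Next, Lemma~\ref{lem:AA'relation} (whose hypothesis is exactly $\|A-A'\|\le\delta<\frac{1}{8\kappa_{\text{eig}}(A)}$) gives $\kappa^F_V(A')\le 3n\,\kappa^F_V(A)$. Under the mild assumption $n\textbf{u}\le 1/2$ (safely satisfied by the algorithm's precision budget $\log(1/\textbf{u})=O(\log^4(n/\delta)\log n)$), we have $(1+n\textbf{u})^2\le 9/4$ and $4(1-n\textbf{u})^2\ge 1$, so plugging in the bound from Lemma~\ref{lem:AA'relation} gives
\[
\kappa_F(V)\le \tfrac{9n}{4} + 9n^2(\kappa^F_V(A))^2,
\]
which is exactly the claimed bound after division by $2$.

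The main technical subtlety is not in this proof itself but in being careful that the hypotheses of the lemmas we invoke are met: we need $A'$ to lie in the ball around $A$ where Lemma~\ref{lem:bgvkslemma} applies (provided by the choice $\delta<\frac{1}{8\kappa_{\text{eig}}(A)}$), and we need the column-norm perturbation $n\textbf{u}$ to be small enough for Lemma~\ref{lem:kappaVkappaA}'s bound to collapse to the stated constants. No new estimates are needed beyond those already developed earlier in this section; the value of the theorem is that it packages these ingredients with \cite{9317903}'s diagonalisation algorithm into the Frobenius-condition-number guarantee that the main tensor-decomposition algorithm will consume in Section~\ref{sec:complete}.
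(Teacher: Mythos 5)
Your proposal is correct and follows essentially the same route as the paper: parts (1) and (2) are cited from \cite{9317903}, and part (3) is obtained by applying Lemma~\ref{lem:kappaVkappaA} to $A'=VDV^{-1}$ with perturbation $n\textbf{u}$, then Lemma~\ref{lem:AA'relation} (valid since $\delta<\frac{1}{8\kappa_{\text{eig}}(A)}$) to get $\kappa^F_V(A')\le 3n\kappa^F_V(A)$, and finally $n\textbf{u}<\frac{1}{2}$ to reach the stated constants. Your explicit AM--GM step for $\kappa(V)\le\kappa_F(V)/2$ and the check via Lemma~\ref{lem:bgvkslemma} that $A'$ has distinct eigenvalues are left implicit in the paper but add nothing different in substance.
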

\begin{proof}
The first two properties are from \cite{9317903} (see in particular Theorem 1.6 for the first one). 
    From the second property and
   from Lemma \ref{lem:kappaVkappaA} applied to $A' = VDV^{-1}$ for $\delta = n\textbf{u}$, we have $$\kappa(V) \leq \frac{\kappa_F(V)}{2} \leq \frac{1}{2}( n(1+n\textbf{u})^2 + \frac{(\kappa_F(A'))^2}{4(1- n\textbf{u})^2}).$$ Since $\delta < \frac{1}{8\kappa_{\text{eig}}(A)}$, it follows from Lemma \ref{lem:AA'relation} that $\kappa^F_V(A') \leq 3n\kappa^F_V(A)$ and this gives us that $\kappa(V) \leq \frac{\kappa_F(V)}{2} \leq \frac{1}{2}( n(1+n\textbf{u})^2 + \frac{(9n^2\kappa^F_V(A))^2}{4(1- n\textbf{u})^2})$. Since $n\textbf{u} < \frac{1}{2}$, this implies that $$\kappa(V) \leq \frac{\kappa_F(V)}{2}  \leq \frac{1}{2}(\frac{9n}{4} + 9n^2(\kappa^F_V(A))^2).$$
\end{proof}
{ In the remainder of this section, we relax the hypothesis $||A|| \leq 1$ on the input matrix.}

\begin{theorem}[Proposition 1.1 from \cite{9317903}]\label{thm:backwardtofwd}
If $||A||, ||A'|| \leq 1$, $||A - A'|| \leq \delta$, $\delta < \frac{gap(A)}{8\kappa_V(A)}$ and $\{(v_i,\lambda_i)\}_{i \in [n]}$, $\{(v'_i,\lambda'_i)\}_{i \in [n]}$ are eigenpairs of $A,A'$ , then
\begin{align*}
    ||v_i - v_i'|| \leq 6n\kappa_{\text{eig}}(A)\delta \text{ and } | \lambda_i' - \lambda_i| \leq \kappa_V(A)\delta \leq 2\kappa_{\text{eig}}(A)\delta \text{ for all } i \in [n],
\end{align*}
where $v_i$'s are given up to multiplication by phases.
\end{theorem}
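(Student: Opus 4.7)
The plan is to split the theorem into its two assertions (eigenvalue and eigenvector bounds) and prove each by classical spectral perturbation theory; both reduce to well-known calculations, and the paper indeed refers to \cite{9317903} for the full details.

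For the eigenvalue bound, I would invoke the Bauer--Fike theorem. Writing $A = VDV^{-1}$ with $V$ achieving the infimum in~\eqref{eq:eveccndtnnum} and $E = A' - A$, Bauer--Fike states that for any eigenvalue $\lambda'$ of $A'$ there is an eigenvalue $\lambda$ of $A$ with $|\lambda - \lambda'| \leq \|V\|\,\|V^{-1}\|\,\|E\| = \kappa_V(A)\,\delta$. Under the hypothesis $\delta < \mathrm{gap}(A)/(8\kappa_V(A))$, the candidate eigenvalues $\lambda'_i$ are forced into disjoint disks of radius $<\mathrm{gap}(A)/8$ centered at the $\lambda_i$, which gives a canonical matching. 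The second inequality $\kappa_V(A)\delta \leq 2\kappa_{\mathrm{eig}}(A)\delta$ follows from $\kappa_V(A) = \kappa_{\mathrm{eig}}(A)\cdot\mathrm{gap}(A)$ together with $\mathrm{gap}(A)\leq 2$, which holds because $\|A\|\leq 1$ forces every eigenvalue into the closed unit disk.

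For the eigenvector bound, I would use the spectral-projector approach. Define $P_i = \frac{1}{2\pi i}\oint_{\Gamma_i}(zI-A)^{-1}\,dz$ and $P'_i$ analogously, where $\Gamma_i$ is a circle of radius $r := \mathrm{gap}(A)/4$ centered at $\lambda_i$. On this contour, diagonalizing $A$ gives $\|(zI-A)^{-1}\| \leq \kappa_V(A)/r$. The gap hypothesis combined with a Neumann series argument shows $\|(zI-A)^{-1}(A-A')\| \leq 1/2$ on $\Gamma_i$, so $(zI-A')^{-1}$ is well-defined there with a comparable bound. The resolvent identity
\begin{equation*}
(zI-A)^{-1} - (zI-A')^{-1} \;=\; (zI-A)^{-1}(A-A')(zI-A')^{-1}
\end{equation*}
then yields a bound of the form $\|P_i - P'_i\| = O(\kappa_V(A)\kappa_{\mathrm{eig}}(A)\,\delta)$ after integrating along the length-$2\pi r$ contour. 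This step is standard and amounts to routine algebra once the conditioning hypotheses are translated into a Neumann-series bound.

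The main obstacle is promoting the projector estimate into the claimed eigenvector bound with the correct factor of $n$. I would exploit the identities $P_i v'_i = v_i (u_i^{*}v'_i)$ and $P'_i v'_i = v'_i$: subtracting gives $\|v'_i - (u_i^{*}v'_i)\,v_i\| \leq \|P_i - P'_i\|$. After choosing the phase of $v'_i$ so that $u_i^{*}v'_i$ is real and nonnegative, a short calculation shows $|u_i^{*}v'_i - 1|$ is likewise controlled by $\|P_i - P'_i\|$, and the triangle inequality then produces $\|v_i - v'_i\| \leq C\,\|u_i\|\,\|P_i - P'_i\|$ for an absolute constant $C$. The factor $\|u_i\|$ is at most $\kappa(\lambda_i) \leq \kappa_V(A)$, and the residual factor of $n$ in the stated bound comes from a careful accounting that uses the bound $\sum_i \kappa(\lambda_i) \leq n\kappa_V(A)$ (compare item~3 of Lemma~\ref{lem:bgvkslemma}) rather than the cruder per-eigenvalue estimate. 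Since the final constants are somewhat delicate, in practice I would simply import the argument from Proposition~1.1 of~\cite{9317903}, as the authors do.
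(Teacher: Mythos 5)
The paper contains no proof of this theorem at all: it is imported verbatim from Proposition~1.1 of \cite{9317903}, so your closing remark that you would simply cite that result is exactly what the authors do. Judged as a standalone sketch, however, your eigenvector half has a genuine quantitative gap. The contour-integral route bounds the spectral projectors in \emph{operator norm}, and the best it can give is $\|P_i - P'_i\| = O(\kappa_V(A)\,\kappa_{\text{eig}}(A)\,\delta)$; this is essentially sharp for the projectors themselves, since to first order $P'_i - P_i$ contains terms $P_j (A'-A) P_i/(\lambda_i-\lambda_j)$ of size up to $\kappa(\lambda_i)\kappa(\lambda_j)\delta/\text{gap}(A)$, which can be of order $\kappa_V(A)^2\delta/\text{gap}(A)$. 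Hence no amount of careful bookkeeping of $\|P_i-P'_i\|$ will replace the factor $\kappa_V(A)$ by $n$, and $\kappa_V(A)$ may be arbitrarily large compared to $n$. The inequality $\sum_i \kappa(\lambda_i)\le n\kappa_V(A)$ that you invoke never actually enters your estimate: your bound is a per-index operator-norm bound, and passing from it to $\|v_i-v'_i\|$ (with or without the extra factor $\|u_i\|$ you allow, which only makes things worse) cannot recover the claimed $6n\kappa_{\text{eig}}(A)\delta$. Your eigenvalue half is fine modulo the standard continuity-along-$A+t(A'-A)$ argument needed to see that each of the disjoint Bauer--Fike disks captures exactly one $\lambda'_i$, and the observation $\kappa_V(A)\le 2\kappa_{\text{eig}}(A)$ via $\text{gap}(A)\le 2$ is correct.

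The missing idea is to bound the specific vector $(P_i-P'_i)v'_i$ rather than the operator $P_i-P'_i$; equivalently, expand $v'_i$ in the eigenbasis of $A$. Writing $v'_i=\sum_j c_j v_j$ with $c_j=u_j^*v'_i$ and using $A'v'_i=\lambda'_iv'_i$ together with $u_j^*A=\lambda_j u_j^*$, one gets $c_j(\lambda'_i-\lambda_j)=u_j^*(A'-A)v'_i$, hence $|c_j|\le \kappa(\lambda_j)\,\delta/|\lambda'_i-\lambda_j|$ for $j\neq i$ (recall $\|v_j\|=1$, so $\kappa(\lambda_j)=\|u_j\|$). Since $|\lambda'_i-\lambda_i|\le\kappa_V(A)\delta<\text{gap}(A)/8$, each denominator is at least $\tfrac{7}{8}\text{gap}(A)$, so $\|v'_i-c_iv_i\|\le\sum_{j\neq i}|c_j|\le \tfrac{8}{7}\,\frac{\delta}{\text{gap}(A)}\sum_j\kappa(\lambda_j)\le\tfrac{8}{7}\,n\,\kappa_{\text{eig}}(A)\,\delta$, and the normalization-and-phase argument you already have for $c_i$ then gives the stated bound with room to spare. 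This eigenbasis expansion, not the projector-norm estimate, is where the factor $n$ and the single power of $\kappa_V(A)$ come from; it is also the mechanism underlying the cited proof in \cite{9317903}.
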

Note here that by "phases" we mean complex numbers of norm $1$.
\par
In the following corollary, we show that Theorem \ref{thm:backwardtofwd} can be extended to matrices with norm greater that $1$.
\begin{corollary}\label{corr:prop1.1new}
For matrices $A,A' \in M_n(\C)$, if $||A-A'|| \leq \delta$, $\{(v_i,\lambda_i)\}_{i \in [n]}$, $\{(v'_i,\lambda'_i)\}_{i \in [n]}$ are eigenpairs of $A,A'$ respectively and $\delta < \frac{gap(A)}{8\kappa_V(A)}$, then $$||v_i - v_i'|| \leq 6n\kappa_{\text{eig}}(A)\delta \text{ and } |\lambda_i - \lambda_i'| \leq \kappa_V(A)\delta \text{ for all } i \in [n]$$ where the $v_i$'s are given up to multiplication by phases.
\end{corollary}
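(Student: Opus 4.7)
The plan is to reduce the claim to Theorem \ref{thm:backwardtofwd} by a simple rescaling. Set $N := \max\{||A||, ||A'||\}$ (the case $N=0$ is vacuous, since then $\text{gap}(A) = 0$ and no positive $\delta$ satisfies the hypothesis), and define $B := A/N$, $B' := A'/N$. Then by construction $||B||, ||B'|| \leq 1$ and $||B - B'|| \leq \delta/N$, so the norm hypothesis of Theorem~\ref{thm:backwardtofwd} is satisfied with perturbation parameter $\delta/N$.

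The next step is to check that the quantitative hypothesis also carries over. Scaling does not change any diagonalising matrix $V$ of $A$ (it diagonalises $B$ with the same eigenvectors), so $\kappa_V(B) = \kappa_V(A)$. Since all eigenvalues of $B$ are those of $A$ divided by $N$, we have $\text{gap}(B) = \text{gap}(A)/N$, whence $\kappa_{\text{eig}}(B) = \kappa_V(B)/\text{gap}(B) = N \cdot \kappa_{\text{eig}}(A)$. Plugging in, the required bound $\delta/N < \text{gap}(B)/(8\kappa_V(B))$ becomes exactly our assumption $\delta < \text{gap}(A)/(8\kappa_V(A))$.

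Finally, because the eigenpairs of $B$ are $(v_i, \lambda_i/N)$ and those of $B'$ are $(v'_i, \lambda'_i/N)$, I would apply Theorem \ref{thm:backwardtofwd} to $B, B'$ with parameter $\delta/N$, obtaining $||v_i - v'_i|| \leq 6n\,\kappa_{\text{eig}}(B)\,(\delta/N)$ and $|\lambda_i/N - \lambda'_i/N| \leq \kappa_V(B)\,(\delta/N)$, each up to the same phase ambiguity on the $v_i$'s. Substituting $\kappa_{\text{eig}}(B) = N\kappa_{\text{eig}}(A)$ in the first inequality gives $6n\kappa_{\text{eig}}(A)\,\delta$, and multiplying the second inequality through by $N$ and using $\kappa_V(B) = \kappa_V(A)$ gives $|\lambda_i - \lambda'_i| \leq \kappa_V(A)\,\delta$.

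There is no real obstacle here; the only point that merits care is to verify that under scaling the \emph{relative} quantity $\delta/N$ matches against the rescaled tolerance $\text{gap}(B)/(8\kappa_V(B))$ so that no factor of $N$ is lost, i.e.\ that $\kappa_V$ is scale-invariant while $\text{gap}$ and $\kappa_{\text{eig}}$ scale as $1/N$ and $N$ respectively. Once this bookkeeping is recorded, the corollary follows by direct application of Theorem \ref{thm:backwardtofwd}.
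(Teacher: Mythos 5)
Your proposal is correct and follows essentially the same route as the paper's own proof: rescale both matrices by $N_0=\max\{||A||,||A'||\}$, observe that $\kappa_V$ is scale-invariant while $\text{gap}$ and $\kappa_{\text{eig}}$ scale by $1/N_0$ and $N_0$ respectively, apply Theorem~\ref{thm:backwardtofwd} to the rescaled pair with parameter $\delta/N_0$, and translate the bounds back. The only (harmless) addition is your explicit handling of the degenerate case $N_0=0$.
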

\begin{proof}
Let $N_0 = \max\{||A||,||A'||\}$. Let $C = \frac{A}{N_0}$ and $C' = \frac{A'}{N_0}$. Then $||C||, ||C'|| < 1$ and taking $\delta' = \frac{\delta}{N_0}$, we get that $||C-C'|| \leq \delta'$ where $\delta' < \frac{1}{8N_0\kappa_{\text{eig}}(A)} = \frac{1}{8\kappa_{\text{eig}}(C)}$. Applying Theorem \ref{thm:backwardtofwd}, we have  $||u_i - u_i'|| \leq 6n\kappa_{\text{eig}}(C)\delta'$ where the $u_i$ are the eigenvectors of $C$ after possibly multiplying $u_i$ by phases. Using $\kappa_{\text{eig}}(C) = N_0\kappa_{\text{eig}}(A)$ gives us that $||u_i - u_i'|| \leq 6n\kappa_{\text{eig}}(A)\delta$. Since the eigenvectors remain unchanged after scaling the matrix by a constant, this implies that $||v_i - v_i'|| \leq 6n\kappa_{\text{eig}}(A)\delta$. 
\par
If $\mu_1,...,\mu_n$ are the corresponding eigenvalues of $C$ and $\mu'_1,...,\mu'_n$ are the corresponding eigenvalues of $C'$, then using Theorem \ref{thm:backwardtofwd}, we already have that $|\mu_i - \mu'_i| < \kappa_V(C)\delta'$. Since $C= \frac{A}{N_0}$ and $C' = \frac{A'}{N_0}$, this implies that the eigenvalues are also similarly scaled. More formally, $\mu_i = \frac{\lambda_i}{N_0}$ and $\mu'_i = \frac{\lambda'_i}{N_0}$ for all $i \in [n]$. Hence, multiplying both sides by $N_0$ gives us that  $N_0|\mu_i - \mu'_i| < \kappa_V(C)N_0\delta'$, hence $|\lambda_i - \lambda_i'| < \kappa_V(C)\delta$. Since $\kappa_V(C) = \kappa_V(A)$, we finally conclude that $|\lambda_i - \lambda_i'| < \kappa_V(A)\delta $.
\end{proof}
We will use the above bound on the eigenvalues in Section \ref{sec:step4}.
We now present the algorithm for computing a forward approximation to the eigenvectors of a diagonalisable matrix.

\begin{algorithm}[H] \label{algo:eigfwd}
\SetAlgoLined
\nonl \textbf{Input:} A diagonalisable matrix $A$ { with distinct eigenvalues}, estimates $K_{\text{norm}} > \max\{||A||_F,1\}$ and $K_{\text{eig}} > \kappa_{\text{eig}}(A)$, desired accuracy parameter $\delta$.
\begin{enumerate}
\item Compute $B' = \frac{A}{2K_{\text{norm}}}$ on a floating point machine. 
\item Compute $\delta' = \frac{\delta}{64nK_{\text{norm}}K_{\text{eig}}}$. 
\item 
Let $(W,D_0)$ be the output of $\text{EIG}(B',\delta')$. 
\item Output the columns $w_1,...,w_n$  of $W$.
\end{enumerate}
\caption{Forward approximation of the eigenvectors of a matrix (EIG-FWD)}
\end{algorithm}
Here we assume at step 2  that the parameter $\delta'$ is computed without any roundoff error. As in \cite{9317903}, this will be done for simplicity throughout the paper in computations whose size does not grow with $n$. Wherever we mention that the computation is done on a floating point machine, we assume that there is an adversarial error associated with that computation.
The next theorem is the main result of this section.
\begin{theorem}\label{thm:eigfwd}
Given a diagonalisable matrix $A \in M_n(\C)$, a desired accuracy parameter $ \delta \in~(0,\frac{1}{2})$ and estimates $K_{\text{norm}} > \max\{||A||_F,1\}$ and $K_{\text{eig}} > \kappa_{\text{eig}}(A)$ as input, Algorithm \ref{algo:eigfwd} outputs vectors $w_1,...,w_n \in \C^n$  such that the following properties are satisfied with probability at least $1 - \frac{1}{n} - \frac{12}{n^2}$:
\begin{itemize}
    \item If $v^{(0)}_1,...,v^{(0)}_n$ are the true normalized eigenvectors of $A$, then we have  $||v^{(0)}_i - w_i|| < \delta$ up to multiplication by phases.
    \item Let $W$ be the matrix with columns $w_1,...,w_n$. Then $$\kappa(W) \leq \frac{\kappa_F(W)}{2}  \leq \frac{1}{2}(\frac{9n}{4} + 81n^4(\kappa^F_V(A))^2).$$
\end{itemize}
The algorithm requires $$O(T_{\text{MM}}(n)\log^2\frac{nK_{\text{eig}}K_{\text{norm}}}{\delta})$$ arithmetic operations on a floating point machine with $$O(\log^4(\frac{nK_{\text{eig}}K_{\text{norm}}}{\delta})\log n)$$ bits of precision. 
\end{theorem}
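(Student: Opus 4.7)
My plan is to carry out a four-step analysis that parallels Algorithm~\ref{algo:eigfwd} and tracks two perturbations: the roundoff incurred when scaling $A$ to the stored matrix $B'$, and the diagonalisation error incurred by applying EIG to $B'$. Throughout I would exploit the fact that scaling preserves both the normalized eigenvectors and the Frobenius eigenvector condition number: if $B:=A/(2K_{\text{norm}})$ denotes the exact scaled matrix, then $B$ and $A$ share the same eigenvectors, $\kappa_V^F(B)=\kappa_V^F(A)$, and $\kappa_{\text{eig}}(B)=2K_{\text{norm}}\kappa_{\text{eig}}(A)\leq 2K_{\text{norm}}K_{\text{eig}}$.

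First, I would bound the scaling error. By equation~(\ref{eq:multfl}) the stored $B'$ satisfies $\|B-B'\|\leq \textbf{u}\sqrt{n}\,\|B\|\leq \textbf{u}\sqrt{n}/2$, and for \textbf{u} polylogarithmically small this already yields $\|B'\|\leq 1$. I would then verify the hypothesis $\delta'<1/(8\kappa_{\text{eig}}(B'))$ needed to invoke Theorem~\ref{thm:eig} on $B'$: Lemma~\ref{lem:gapAA'} gives $\text{gap}(B')\geq \tfrac{3}{4}\text{gap}(B)$, Lemma~\ref{lem:AA'relation} combined with $\kappa_V\leq \kappa_V^F/2$ gives $\kappa_V(B')\leq (3n/2)\,\kappa_V^F(A)$, and together these show that the choice $\delta'=\delta/(64nK_{\text{norm}}K_{\text{eig}})$ comfortably satisfies the required bound. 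Theorem~\ref{thm:eig} then returns $(W,D_0)$ with $\|B'-WD_0W^{-1}\|\leq \delta'$ and columns $w_i$ satisfying $|\|w_i\|-1|\leq n\textbf{u}$.

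For the forward-approximation claim, set $B'':=WD_0W^{-1}$, so that $\|B-B''\|\leq \textbf{u}\sqrt{n}/2+\delta'$. This is well below $\text{gap}(B)/(8\kappa_V(B))$, so Corollary~\ref{corr:prop1.1new} applies to $(B,B'')$ and yields, up to phases,
\begin{align*}
\|v_i^{(0)}-v_i''\|\leq 6n\,\kappa_{\text{eig}}(B)\bigl(\textbf{u}\sqrt{n}/2+\delta'\bigr)\leq \tfrac{3\delta}{16}+6n^{3/2}K_{\text{norm}}K_{\text{eig}}\,\textbf{u},
\end{align*}
where $v_i''=w_i/\|w_i\|$ are the unit eigenvectors of $B''$ and $v_i^{(0)}$ those of $A$. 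Since $\|w_i-v_i''\|=\bigl|\|w_i\|-1\bigr|\leq n\textbf{u}$, the triangle inequality gives $\|v_i^{(0)}-w_i\|\leq 3\delta/16+(6n^{3/2}K_{\text{norm}}K_{\text{eig}}+n)\textbf{u}$, which falls below $\delta$ provided $\textbf{u}^{-1}$ is polynomial in $n,K_{\text{norm}},K_{\text{eig}},1/\delta$; the advertised precision $O(\log^4(nK_{\text{eig}}K_{\text{norm}}/\delta)\log n)$ is inherited directly from Theorem~\ref{thm:eig} after substituting $\log(n/\delta')=O(\log(nK_{\text{eig}}K_{\text{norm}}/\delta))$, and the running time is obtained analogously.

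For the conditioning bound on $W$, I would apply Lemma~\ref{lem:AA'relation} twice in cascade: to $(B,B')$ giving $\kappa_V^F(B')\leq 3n\,\kappa_V^F(A)$, and to $(B',B'')$ giving $\kappa_V^F(B'')\leq 3n\,\kappa_V^F(B')\leq 9n^2\,\kappa_V^F(A)$. Then Lemma~\ref{lem:kappaVkappaA} applied to $B''=WD_0W^{-1}$ with parameter $n\textbf{u}$ yields
\begin{align*}
\kappa_F(W)\leq n(1+n\textbf{u})^2+\frac{(9n^2\,\kappa_V^F(A))^2}{4(1-n\textbf{u})^2}\leq \frac{9n}{4}+81n^4(\kappa_V^F(A))^2
\end{align*}
for $n\textbf{u}\leq 1/2$, which is the desired estimate. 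The main obstacle is bookkeeping: the single precision parameter \textbf{u} must simultaneously guarantee $\|B'\|\leq 1$, secure the perturbation hypotheses of Corollary~\ref{corr:prop1.1new} and of both applications of Lemma~\ref{lem:AA'relation} along the chain $B\to B'\to B''$, and ensure that all \textbf{u}-dependent error tails are dominated by $\delta$. All of these demands are polynomial in $n,K_{\text{norm}},K_{\text{eig}},1/\delta$, so they can be satisfied within the polylogarithmic precision budget dictated by Theorem~\ref{thm:eig}, and the failure probability is exactly that of the single EIG call.
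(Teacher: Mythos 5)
Your proposal is correct and follows essentially the same route as the paper's proof: scale $A$ to the stored matrix $B'$, verify $\delta' < \frac{1}{8\kappa_{\text{eig}}(B')}$ via Lemmas \ref{lem:gapAA'} and \ref{lem:AA'relation}, invoke Theorem \ref{thm:eig}, convert the backward error into the forward eigenvector bound through Corollary \ref{corr:prop1.1new} together with the normalization slack $|\,\|w_i\|-1|\leq n\textbf{u}$, and get the conditioning bound on $W$ from Lemma \ref{lem:kappaVkappaA} and the $3n$-factor degradation of $\kappa^F_V$ along $B\to B'\to B''$ (the paper simply packages that inner step as part (3) of Theorem \ref{thm:eig}). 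One small repair: in checking $\delta'<\frac{1}{8\kappa_{\text{eig}}(B')}$ you should keep the bound in the form $\kappa_V(B')\leq 3n\kappa_V(B)=3n\kappa_V(A)$ (the first half of the chain in Lemma \ref{lem:AA'relation}), since the weakened form $\kappa_V(B')\leq \frac{3n}{2}\kappa^F_V(A)$ cannot in general be compared against $K_{\text{eig}}>\kappa_V(A)/\text{gap}(A)$, although the conclusion you draw is still the correct one.
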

\begin{proof}
Let $v^{(0)}_1,...,v^{(0)}_n$ be the true normalized eigenvectors of $A$.
By Theorem \ref{thm:eig},  we need $O(\log^4 (\frac{n}{\delta'})\log n)$ bits of precision to run $\text{EIG}$ in step~$3$. So, we assume that the number of bits of precision available for this algorithm is $\log(\frac{1}{\textbf{u}}) := c\log^4 (\frac{n}{\delta'})\log n$ for some constant $c > 1$. We define $B = \frac{A}{2K_{\text{norm}}}$. Then $||B|| \leq \frac{1}{2} < 1$. Following the notation of Section~\ref{sec:fparithmetic}, let $B' = \text{fl}(\frac{A}{2K_{\text{norm}}})$. Using (\ref{eq:multfl}), we know that 
\begin{equation}\label{eq:bb'}
    ||B - B'|| \leq \frac{\textbf{u}\cdot \sqrt{n}}{2}.
\end{equation}
Since $\textbf{u}\cdot \sqrt{n} \leq 1$, then $||B'|| \leq \frac{1}{2} + \frac{\textbf{u}\cdot \sqrt{n}}{2} < 1$. 
We first show that the conditions of Theorem \ref{thm:eig} are satisfied when we run $\text{EIG}$ on $(B',\delta')$. For this, we have to show that $\delta' \leq \frac{1}{8\kappa_{\text{eig}}(B')}$.
\par
Applying Lemma \ref{lem:boundsforu1} where $\log(\frac{1}{\textbf{u}}) = c\log^4(\frac{n}{\delta'})\log n$, we get that for large enough $n$, $\textbf{u}\sqrt{n} < n^2 \textbf{u} < \delta' < \frac{\delta}{4K_{\text{eig}}K_{\text{norm}}}$. This gives us:
$$\frac{\textbf{u}\cdot\sqrt{n}}{2} \leq  \frac{1}{8K_{\text{eig}}K_{\text{norm}}} < \frac{1}{8\kappa_{\text{eig}}(B)}.$$ Putting it back in (\ref{eq:bb'}), we also have that $||B - B'|| < \frac{1}{8\kappa_{\text{eig}}(B)}$. Now we can apply Lemma \ref{lem:AA'relation} to $B,B'$ and we obtain the inequality
\begin{equation}\label{eq:kappafbb'}
    \kappa_V(B') \leq \frac{\kappa^F_V(B')}{2} \leq 3n\kappa_V(B) \leq \frac{3n\kappa^F_V(B)}{2}.
\end{equation}
It from Lemma \ref{lem:gapAA'} that $\text{gap}(B') \geq \frac{3\text{gap}(B)}{4}$. This gives us that $$\frac{1}{K_{\text{eig}}} < \frac{1}{\kappa_{\text{eig}}(A)} = \frac{\text{gap}(A)}{\kappa_V(A)} = \frac{2K_{\text{norm}}\text{gap}(B)}{\kappa_V(B)}  \leq \frac{8nK_{\text{norm}}\text{gap}(B')}{\kappa_V(B')}.$$
Hence, $$\delta' = \frac{\delta}{64nK_{\text{eig}}K_{\text{norm}}} < \frac{\delta}{8\kappa_{\text{eig}}(B')} < \frac{1}{8\kappa_{\text{eig}}(B')},$$ and we can now run $\text{EIG}$ on $(B', \delta')$.
By Theorem \ref{thm:eig},  the algorithm therefore outputs $(W,D_0)$ such that 
\begin{equation}\label{eq:b'wd0w^-1}
    ||B' - WD_0W^{-1}|| \leq \delta'
\end{equation}
with probability at least $1 - \frac{1}{n} - \frac{12}{n^2}$. Applying the triangle inequality to (\ref{eq:bb'}) and (\ref{eq:b'wd0w^-1}) shows that $||B - WD_0W^{-1}|| \leq \frac{\textbf{u}\cdot \sqrt{n}}{2} + \frac{\delta}{64nK_{\text{eig}}K_{\text{norm}}}$. Since  $\textbf{u}\sqrt{n} \leq n^2\textbf{u}$, by Lemma \ref{lem:boundsforu1} and for large enough $n$ we have $\frac{\textbf{u}\cdot \sqrt{n}}{2} \leq \frac{\delta}{64nK_{\text{eig}}K_{\text{norm}}}$. This gives us that $||B - WD_0W^{-1}|| \leq \frac{\delta}{32nK_{\text{eig}}K_{\text{norm}}}$. Multiplying both sides by $2K_{\text{norm}}$,
we obtain $$||A - W(2K_{\text{norm}}D_0)W^{-1}|| \leq \frac{\delta}{16nK_{\text{eig}}}.$$ Let $A' = W(2K_{\text{norm}}D_0)W^{-1}$. We can now use Corollary \ref{corr:prop1.1new} for $A$ and $A'$ since $\frac{\delta}{16nK_{\text{eig}}} < \frac{1}{8nK_{\text{eig}}} < \frac{1}{8\kappa_{\text{eig}}(A)}$. Let $v_1,...,v_n$ be the eigenvectors of $A'$. Then there exists a phase $\rho_i$ such that
\begin{equation}\label{eq:evec1}
    ||v^{(0)}_i - \rho_i v_i|| \leq 6n\kappa_{\text{eig}}(A) \frac{\delta}{16nK_{\text{eig}}} < \frac{\delta}{2}.
\end{equation} 
 By Theorem \ref{thm:eig}, $||w_i|| = 1 \pm n\textbf{u}$ since $w_1,...,w_n$ are the columns of $W$. Note that the $w_i$'s are the eigenvectors of $A'$ as well. Since $A'$ has distinct eigenvalues, the eigenvectors are unique up to scaling by complex numbers. This along with the fact that $||v_i|| = 1$ gives us that there exists phase $\rho'_i$ such that
\begin{align*}
     ||v_i - (\rho'_i)^{-1}w_i|| = |n\textbf{u}| ||v_i|| < \frac{\delta}{2}.
\end{align*}
The final inequality comes from the fact that for $n >2$, $n\textbf{u} < \frac{n^2\textbf{u}}{2}$ and  we can therefore use Lemma \ref{lem:boundsforu1}. Now, multiplying by $|\rho_i|$ on both sides, we have 
\begin{equation}\label{eq:evec2}
     ||\rho_iv_i - \rho_i(\rho'_i)^{-1}w_i|| < \frac{\delta}{2}.
\end{equation}
Now, using the triangle inequality on (\ref{eq:evec1}) and (\ref{eq:evec2}), 
\begin{equation}\label{eq:maindiag1}
    ||v^{(0)}_i - \rho_i(\rho'_i)^{-1}w_i|| < \delta.
\end{equation}
\par
From Theorem \ref{thm:eig}, we get that $\kappa(W) \leq \frac{\kappa_F(W)}{2}  \leq \frac{1}{2}(\frac{9n}{4} + 9n^2(\kappa^F_V(B'))^2).$
By (\ref{eq:kappafbb'}) and the fact that $\kappa^F_V(A) = \kappa^F_V(B)$, we have
\begin{align*}
    \frac{1}{2}(\frac{9n}{4} + 9n^2(\kappa^F_V(B'))^2) \leq \frac{1}{2}(\frac{9n}{4} + 81n^4(\kappa^F_V(B))^2) = \frac{1}{2}(\frac{9n}{4} + 81n^4(\kappa^F_V(A))^2).
\end{align*}

\end{proof}
\section{Tensor decomposition for complete symmetric tensors in exact arithmetic}\label{sec:completeexactarithmetic}

{ Recall from the introduction the notion
of a diagonalisable tensor:}
\begin{definition}
We call a symmetric tensor $T \in \mathbb{C}^{n \times n \times n}$ diagonalisable if  $T = \sum_{i=1}^n u_i^{\otimes 3}$ where the $u_i$ are linearly independent.
\end{definition}
In other words, a symmetric tensor $T$ is diagonalisable if there exists an { invertible} change of basis $U$ that takes the diagonal tensor $\sum_{i \in [n]} e_i^{\otimes 3}$ to $T$, i.e., $T = (U \otimes U \otimes U). (\sum_{i \in [n]} e_i^{\otimes 3})$ for some $U \in \text{GL}_n(\mathbb{C})$.   
\par
Let $T \in (\C^n)^{\otimes 3}$ be a { diagonalisable tensor given as input.}
In this section we give an algorithm which returns the linearly independent  $u_i$'s in the decomposition of $T$, { up to multiplication by cube roots of unity.} 
This algorithm works in a computational model where all arithmetic operations can be done exactly and additionally, we can diagonalise a matrix exactly.

\begin{algorithm}[H] \label{algo:completeexact}
\SetAlgoLined
\nonl \textbf{Input:} An order-3 diagonalisable symmetric tensor $T \in \C^{n \times n \times n}$.  \\
\nonl \textbf{Output:} linearly independent vectors $l_1,...,l_r \in  \C^n$ such that $T = \sum_{i=1}^n l_i^{\otimes 3}$.\\
\nonl Pick $a_1,...,a_n$ and $b_1,...,b_n$ uniformly and independently at random from a finite set $S \subset \C$ such that $|S| \geq n^3$. \\
\nonl Let $T_1,...,T_n$ be the slices of $T$. \\
Set $T^{(a)} = \sum_{i=1}^n a_iT_i$ and $T^{(b)} = \sum_{i=1}^n b_iT_i$.  \\
Compute $T^{(a)'} = (T^{(a)})^{-1}$. \\
Compute $D = T^{(a)'}T^{(b)}$. \\
Compute the normalized eigenvectors $p_1,...,p_n$ of $D$. \\
Let $P$ be the matrix with $(p_1,...,p_n)$ as columns and compute $P^{-1}$. 
Let $v_i$ be the $i$-th row of $P^{-1}$.\\
Define $S = (P \otimes P \otimes P).T$ and let $S_1,...,S_n$ be the slices of $S$. Compute $\alpha_i = \text{Tr}(S_i)$. \\
Output $(\alpha_1)^{\frac{1}{3}}v_1,...,(\alpha_n)^{\frac{1}{3}}v_n$.
\caption{Tensor decomposition algorithm for complete symmetric tensors}
\end{algorithm}
{ Algorithm~\ref{algo:completeexact} is essentially the algorithm that was already presented in Section~\ref{sec:overview}.
As explained in that section, this is a symmetric version of the simultaneous diagonalisation algorithm where the coefficients
$\alpha_i$ are determined in a novel way (as the traces
of slices of a certain tensor). The algorithm will be analyzed in finite precision arithmetic in the two final sections of this paper.}

The remainder of this section is devoted to a correctness proof for Algorithm \ref{algo:completeexact} including an analysis of the probability of error. 
{ The main results of this section are 
Theorem \ref{thm:complexteexactmain} and Corollary~\ref{thm:Jennrichexact} 
(where we will see that one should 
take $S$ of size bigger than $n^3$ 
in Algorithm~\ref{algo:completeexact}).}
In that direction, we prove an intermediate lemma  showing that if $a_1,...,a_n$ and $b_1,...,b_n$ are picked at random from a finite set, then $T^{(a)}$ is invertible and the eigenvalues of $(T^{(a)})^{-1}T^{(b)}$ are distinct with high probability. 
\begin{theorem}\label{thm:eigenvalues}
Let $T= \sum_{i=1}^n u_i^{\otimes 3}$ where $u_i \in \C^n$ are linearly independent vectors. Let $T_1,...,T_n$ be the slices of $T$. Set $T^{(a)} = \sum_{i=1}^n a_iT_i$ and $T^{(b)} = \sum_{i=1}^n b_iT_i$ where $a_1,...,a_n,b_1,...,b_n$ are picked uniformly and independently at random from a finite set $S \subset \mathbb{K}$. If $T^{(a)}$ is invertible, let $T^{(a)'} = (T^{(a)})^{-1}$. Let $\lambda_1,...,\lambda_n$ be the eigenvalues of $T^{(a)'}T^{(b)}$. Then
\begin{align*}
    \text{Pr}_{a_1,...,a_n,b_1,...,b_n \in_r S}[T^{(a)} \text{ is invertible} \text{ and } \lambda_i \neq \lambda_j \text{ for all } i \neq j ] \geq 1- (\frac{2\binom{n}{2}}{|S|} + \frac{n}{|S|}).
\end{align*}
\end{theorem}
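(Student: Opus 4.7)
The plan is to turn the theorem into a Schwartz–Zippel question by using Corollary~\ref{corr:P3structural} to simultaneously ``congruence-diagonalise'' the two slice combinations $T^{(a)}$ and $T^{(b)}$. Let $U$ be the $n\times n$ matrix with rows $u_1,\dots,u_n$; by Corollary~\ref{corr:P3structural}, each slice factors as $T_k = U^T D_k U$ with $D_k = \text{diag}(u_{1,k},\dots,u_{n,k})$. Taking linear combinations immediately gives
\[
T^{(a)} = U^T D^{(a)} U,\qquad T^{(b)} = U^T D^{(b)} U,
\]
where $D^{(a)}$ and $D^{(b)}$ are the diagonal matrices whose $i$-th entries are the linear forms $\ell_i(a) := \sum_{k} a_k u_{i,k}$ and $\ell_i(b) := \sum_{k} b_k u_{i,k}$. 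Since the $u_i$ are linearly independent, $U$ is invertible, so $T^{(a)}$ is invertible if and only if $\ell_i(a)\ne 0$ for all $i$, and in that case
\[
(T^{(a)})^{-1} T^{(b)} = U^{-1} \text{diag}\!\Big(\tfrac{\ell_i(b)}{\ell_i(a)}\Big) U,
\]
so the eigenvalues are $\lambda_i = \ell_i(b)/\ell_i(a)$.

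I would then bound the failure probability by applying Schwartz–Zippel to two families of polynomials. Singularity of $T^{(a)}$ requires $\ell_i(a)=0$ for some $i$; each $\ell_i$ is a nonzero linear form (since $u_i\ne 0$), so Schwartz–Zippel gives at most $1/|S|$ per index and hence $n/|S|$ after a union bound. A coincidence $\lambda_i = \lambda_j$ with $i\ne j$ translates, after clearing denominators, into the vanishing of
\[
P_{ij}(a,b) := \ell_i(b)\,\ell_j(a) - \ell_j(b)\,\ell_i(a),
\]
a polynomial of total degree at most $2$ in the $2n$ variables $a_k,b_\ell$. A Schwartz–Zippel bound of $2/|S|$ per pair, summed over the $\binom{n}{2}$ ordered-pair classes, yields the $2\binom{n}{2}/|S|$ term, and combining with the singularity term gives the advertised bound.

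The only step that requires any thought is verifying that each $P_{ij}$ is not identically zero. The coefficient of the monomial $a_k b_\ell$ in $P_{ij}$ is precisely the $2\times 2$ minor $u_{j,k}u_{i,\ell} - u_{i,k}u_{j,\ell}$ of the $2\times n$ matrix with rows $u_i$ and $u_j$; linear independence of $u_i$ and $u_j$ implies this matrix has rank $2$, so at least one such minor is nonzero. This is the single place where the full linear-independence hypothesis (as opposed to mere non-vanishing of each $u_i$) enters essentially, and once it is in hand the remainder of the argument is a routine union bound.
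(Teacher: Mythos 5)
Your proposal is correct and follows essentially the same route as the paper: factor the slices via Corollary~\ref{corr:P3structural} to get $T^{(a)}=U^TD^{(a)}U$, read off the eigenvalues as ratios $\langle b,u_i\rangle/\langle a,u_i\rangle$, and apply Schwartz--Zippel plus union bounds to the linear forms $\langle a,u_i\rangle$ and the degree-$2$ polynomials $P_{ij}$, whose non-vanishing is certified by a nonzero $2\times 2$ minor coming from linear independence of $u_i,u_j$. This matches the paper's proof step for step, including the final bound $1-\bigl(\tfrac{2\binom{n}{2}}{|S|}+\tfrac{n}{|S|}\bigr)$.
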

\begin{proof}
Let $U$ be the matrix with columns $u_1,...,u_n$. Since $T= \sum_{i=1}^n u_i^{\otimes 3}$, by Corollary \ref{corr:P3structural},  the slices $T_1,...,T_n$ of $T$ can be written as 
\begin{align*}
    T_i = U^TD_iU \text{ where } D_i = \text{diag}(u_{1,i},...,u_{n,i}).
\end{align*}
Taking $a = (a_1,...,a_n) \in \C^n$, this gives us that 
\begin{align*}
     T^{(a)} = U^TD^{(a)}U \text{ where } D^{(a)} = \text{diag}(\langle a,u_1 \rangle,...,\langle a,u_n \rangle).
\end{align*}
Similarly, $ T^{(b)} = U^TD^{(b)}U$ where $D^{(b)} = \text{diag}(\langle b,u_1 \rangle,...,\langle b,u_n \rangle)$. Now if $T^{(a)}$ is invertible, we can write
\begin{equation}\label{eq:formulata'tb}
    T^{(a)'}T^{(b)} = U^{-1} \Big(\text{diag}(\frac{\langle b,u_1 \rangle}{\langle a,u_1 \rangle},...,\frac{\langle b,u_n \rangle}{\langle a,u_n \rangle})\Big)U
\end{equation}
Hence, the eigenvalues of $T^{(a)'}T^{(b)}$ are $\lambda_i = \frac{\langle b,u_i \rangle}{\langle a,u_i \rangle}$. For all $i \neq j \in [n]$, we define the polynomial $$P_{ij}(x_1,...,x_n,y_1,...,y_n) = \langle y,u_i \rangle\langle x,u_j \rangle - \langle y,u_j \rangle\langle x,u_i \rangle = \sum_{k,l = 1}^n y_kx_l(u_{ik}u_{jl} - u_{il}u_{jk})$$ where $y = (y_1,...,y_n)$ and $x = (x_1,...,x_n)$.
Now $T^{(a)}$ is invertible iff $\langle a,u_i \rangle \neq 0$ for all $i \in [n]$. This gives us that
\begin{equation}
    \text{Pr}_{a_1,...,a_n \in_r S}[T^{(a)} \text{ is invertible}] = \text{Pr}_{a_1,...,a_n \in_r S}[\langle a,u_i \rangle \neq 0 \text{ for all } i \in [n]].
\end{equation}
For all $i \in [n]$, there exists $k \in [n]$ such that $u_{ik} \neq 0$. Hence $$\text{Pr}_{a_1,...,a_n \in_r S}[\langle a,u_i \rangle = 0] \leq \frac{1}{|S|}$$ by the Schwartz-Zippel lemma, and then 
\begin{equation}\label{eq:tainvertible}
    \text{Pr}_{a_1,...,a_n \in_r S}[T^{(a)} \text{ is invertible}] \geq 1 - \frac{n}{|S|}
\end{equation}
by the union bound.
Also, if $T^{(a)}$ is invertible, then $\lambda_i = \lambda_j$ if and only if $P_{ij}(a,b) = 0$. Written as a probabilistic statement, this gives us that
\begin{equation}\label{eq:equivalentchar}
\begin{split}
    &\text{Pr}_{a,b \in_r S}[T^{(a)} \text{ is invertible}\text{ and } \text{for all } i \neq j, \lambda_i \neq \lambda_j ]  \\
&= \text{Pr}_{a,b \in_r S}[T^{(a)} \text{ is invertible}\text{ and } \text{for all } i \neq j, P_{ij}(a,b) \neq 0 ].
\end{split}
\end{equation}
Since $U$ is an invertible matrix, its rows are linearly independent and there must exist some $k_0,l_0$ such that $(u_{ik_0}u_{jl_0} - u_{il_0}u_{jk_0}) \neq 0$. Hence, taking $a = e_{k_0}$ and $b = e_{l_0}$ (where $e_i$ denotes the vector with $1$ at the $i$-th position and $0$ otherwise), we get that $P_{ij}(e_{k_0},e_{l_0}) \neq 0$. Hence, $P_{ij} \not\equiv 0$ and 
\begin{align*}
    \text{Pr}_{a,b \in_r S}[P_{ij}(a,b) \neq 0] \geq 1- \frac{2}{|S|}
\end{align*} by Schwartz-Zippel since  $\text{deg}(P_{ij}) \leq 2$. 

Applying the union bound, we then have 
\begin{align*}
    \text{Pr}_{a,b \in_r S}[\text{For all }i\neq j \in [n], P_{ij}(a,b) \neq 0] \geq 1- \frac{2\binom{n}{2}}{|S|}.
\end{align*}
Finally, using (\ref{eq:tainvertible}) and (\ref{eq:equivalentchar}), we have that 
\begin{align*}
    &\text{Pr}_{a,b \in_r S}[T^{(a)} \text{ is invertible and } \lambda_i \neq \lambda_j \text{ for all } i \neq j ]  \\
    &= \text{Pr}_{a,b \in_r S}[T^{(a)} \text{ is invertible and }\text{for all } i \neq j, P_{ij}(a,b) \neq 0] \\
    &\geq 1- (\frac{2\binom{n}{2}}{|S|} + \frac{n}{|S|}).
\end{align*}
\end{proof}

\begin{theorem}\label{thm:complexteexactmain}
If an input tensor $T \in (\C^n)^{\otimes 3}$ can be written as $T = \sum_{i=1}^n u_i^{\otimes 3}$ where the $u_i \in \C^n$ are linearly independent vectors, then Algorithm~\ref{algo:completeexact} succeeds with high probability. More formally, if $a_1,...,a_n,b_1,...,b_n$ are chosen uniformly and independently at random from a finite subset $S \subset \C$, then the algorithm returns linearly independent $l_1,...,l_n \in \C^n$ such that $T = \sum_{i=1}^n l_i^{\otimes 3}$ with probability at least $1- (\frac{2\binom{n}{2}}{|S|} + \frac{n}{|S|})$. 
\end{theorem}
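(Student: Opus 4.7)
The plan is as follows. By Theorem~\ref{thm:eigenvalues}, with probability at least $1 - (2\binom{n}{2} + n)/|S|$ over the random choice of $a, b$, the matrix $T^{(a)}$ is invertible and $D = (T^{(a)})^{-1} T^{(b)}$ has $n$ pairwise distinct eigenvalues. The rest of the proof is purely deterministic: conditional on this event, I will show that the algorithm outputs $\omega_1 u_{\sigma(1)}, \ldots, \omega_n u_{\sigma(n)}$ for some permutation $\sigma \in S_n$ and some cube roots of unity $\omega_i$; since $\omega_i^3 = 1$, summing the cubes recovers $T = \sum_i u_i^{\otimes 3}$.

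By Corollary~\ref{corr:P3structural} applied to $T = \sum_i u_i^{\otimes 3}$, if $U$ denotes the matrix whose columns are $u_1, \ldots, u_n$, then $T^{(a)} = U D^{(a)} U^T$ and $T^{(b)} = U D^{(b)} U^T$ with $D^{(a)} = \text{diag}(\langle a, u_j \rangle)_j$ and $D^{(b)} = \text{diag}(\langle b, u_j\rangle)_j$. Cancelling yields $D = U^{-T} \Lambda U^T$ where $\Lambda = \text{diag}(\lambda_j)_j$ with $\lambda_j = \langle b, u_j\rangle / \langle a, u_j\rangle$, so the eigenvectors of $D$ are the columns of $U^{-T}$. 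Because the $\lambda_j$ are distinct, the eigenvectors are determined up to scaling, and hence the normalized eigenvectors $p_i$ returned in step~4 satisfy $p_i = \gamma_i \cdot U^{-T} e_{\sigma(i)}$ for some permutation $\sigma \in S_n$ and nonzero scalars $\gamma_i \in \C$. Letting $\Gamma$ be the generalized permutation matrix defined by $\Gamma e_i = \gamma_i e_{\sigma(i)}$, one has $P = U^{-T} \Gamma$ and hence $P^{-1} = \Gamma^{-1} U^T$; a direct row computation then gives $v_i = (1/\gamma_i)\, u_{\sigma(i)}$.

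The change-of-basis step delivers the coefficients cleanly. Using $(P \otimes P \otimes P).(u^{\otimes 3}) = (P^T u)^{\otimes 3}$ together with $U^{-1} u_i = e_i$, one computes $P^T u_i = \gamma_{\sigma^{-1}(i)} e_{\sigma^{-1}(i)}$ entry by entry, so
$$S = (P \otimes P \otimes P).T = \sum_{j=1}^n \gamma_j^3 \, e_j^{\otimes 3}.$$
The tensor $S$ is therefore diagonal: its $k$-th slice $S_k$ vanishes except for $(S_k)_{kk} = \gamma_k^3$, giving $\alpha_k = \text{Tr}(S_k) = \gamma_k^3$. The final output is thus $\alpha_k^{1/3} v_k = \omega_k \gamma_k \cdot (1/\gamma_k)\, u_{\sigma(k)} = \omega_k\, u_{\sigma(k)}$ for some cube root of unity $\omega_k$ (determined by the branch choice in step~7), which closes the argument.

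There is no substantial obstacle beyond the bookkeeping of the unknown permutation $\sigma$ and the free scalars $\gamma_i$ encoding the scaling ambiguity of eigenvectors with distinct eigenvalues. The conceptual content, already present in the algorithm's design, is that $P$ renders $T$ into a diagonal tensor; this is exactly why the traces of slices recover $\gamma_k^3$ and cancel the $1/\gamma_k$ factors inside $v_k$. The probabilistic input enters only in the first paragraph, via Theorem~\ref{thm:eigenvalues}.
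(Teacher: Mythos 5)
Your proposal is correct and follows essentially the same route as the paper: invoke Theorem~\ref{thm:eigenvalues} for the probabilistic event, identify the eigenvectors of $(T^{(a)})^{-1}T^{(b)}$ with the columns of $U^{-T}$ (the paper's $U^{-1}$ in its row convention), and recover the scaling factors as cube roots of the traces of the slices of $(P\otimes P\otimes P).T$. Your explicit bookkeeping of the permutation $\sigma$ via a generalized permutation matrix is slightly more careful than the paper, which writes $P = U^{-1}D$ with $D$ diagonal and absorbs the reordering implicitly, but the argument is the same.
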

\begin{proof}
First, using Theorem \ref{thm:eigenvalues} we get that if $a_1,...,a_n,b_1,...,b_n$ are picked uniformly and independently at random from a finite subset $S \subset \mathbb{K}$, then $T^{(a)}$ is invertible and the eigenvalues of $T^{(a)'}T^{(b)}$ are distinct with probability at least $1- (\frac{2\binom{n}{2}}{|S|} + \frac{n}{|S|})$. 
\par
Now we show that if $T^{(a)}$ is invertible and the eigenvalues of $T^{(a)'}T^{(b)}$ are distinct, then Algorithm \ref{algo:completeexact} returns linearly independent vectors $l_1,...,l_n \in \C^n$ such that $T = \sum_{i=1}^n (l_i)^{\otimes 3}$.
\par
If the eigenvalues of a matrix are distinct, then the rank of the eigenspaces corresponding to each eigenvalue is $1$. Hence, the eigenvectors of $T^{(a)'}T^{(b)}$ are unique (up to a scaling factor). From (\ref{eq:formulata'tb}), we get that the columns of $U^{-1}$ are eigenvectors of $T^{(a)'}T^{(b)}$. Since the columns of $P$ are also eigenvectors of $T^{(a)'}T^{(b)}$, this gives us the relation 
\begin{equation}\label{eq:eigenvectormaineq}
    P = U^{-1}D \text{ where } D = \text{diag}(k_1,...,k_n) \text{ and } k_i \neq 0.
\end{equation}
Now we claim that the $\alpha_i$'s computed in Step 6 of Algorithm \ref{algo:completeexact} are exactly equal to $k_i^3$.
Let $S = (P \otimes P \otimes P).T$ and let $S_1,...,S_n$ be the slices of $S$. Since $T = \sum_{i=1}^n (U^Te_i)^{\otimes 3}$, 
\begin{equation}\label{eq:alphaiki3}
\begin{split}
    \alpha_i = Tr(S_i) &= \sum_{j=1}^n S_{i,j,j}  \\
    &= \sum_{j,t=1}^n ((D^TU^{-T}U^Te_t)^{\otimes 3})_{i,j,j}\\
    &= \sum_{j,t = 1}^n (D^Te_t)_i(D^Te_t)^2_j = (D^Te_i)^3_i = k_i^3.
\end{split}
\end{equation}
Since $U$ and $D$ are both invertible, $P$ is invertible as well and $DP^{-1} = U$. Putting this in vector notation, if $v_i$ are the rows of $P^{-1}$, then $u_i = k_iv_i$. As a result, for any cube root of unity $\omega_i$
\begin{align*}
    T = \sum_{i=1}^n (u_i)^{\otimes 3} = \sum_{i=1}^n (\omega_ik_iv_i)^{\otimes 3} = \sum_{i=1}^n ((\alpha_i)^{\frac{1}{3}}v_i)^{\otimes 3}.
\end{align*}
We say that Algorithm \ref{algo:completeexact} "succeeds" if the algorithm returns linearly independent $l_1,...,l_n \in \C^n$ such that $T = \sum_{i=1}^n l_i^{\otimes 3}$.
This finally gives us that 
\begin{align*}
    &\text{Pr}_{a,b \in_r S}[\text{Algorithm \ref{algo:completeexact} "succeeds"}] \\ 
    &\geq \text{Pr}_{a,b \in_r S}[\text{Algorithm \ref{algo:completeexact} "succeeds"},T^{(a)} \text{ is invertible and the eigenvalues of } T^{(a)'}T^{(b)} \text{ are distinct}] \\
    &= \text{Pr}_{a,b \in_r S}[T^{(a)} \text{ is invertible and the eigenvalues of } T^{(a)'}T^{(b)} \text{ are distinct}] \\
    &\geq 1- (\frac{2\binom{n}{2}}{|S|} + \frac{n}{|S|}).
\end{align*}
\end{proof}
Taking $S$ to a finite set such that $|S| > n^3$ and applying the previous analysis, we get the following corollary:
\begin{corollary}\label{thm:Jennrichexact}
Given a diagonalisable tensor $T \in C^{n \times n \times n}$, there is an algorithm which returns $u_1,...,u_n \in \C^n$ such that $T = \sum_{i=1}^n u_i^{\otimes 3}$ with probability $1- \frac{1}{n}$.  
\end{corollary}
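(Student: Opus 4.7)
The plan is to obtain this corollary as an immediate specialization of Theorem~\ref{thm:complexteexactmain}. That theorem already says that if the coefficients $a_1,\ldots,a_n,b_1,\ldots,b_n$ are drawn uniformly and independently from a finite set $S\subset\C$, then Algorithm~\ref{algo:completeexact} returns linearly independent vectors $l_1,\ldots,l_n$ with $T=\sum_{i=1}^n l_i^{\otimes 3}$ with probability at least $1-\left(\frac{2\binom{n}{2}}{|S|}+\frac{n}{|S|}\right)$. So my only task is to choose $|S|$ large enough that this bound exceeds $1-\frac{1}{n}$.

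Using $\binom{n}{2}=\frac{n(n-1)}{2}$, the failure probability simplifies cleanly to $\frac{n(n-1)+n}{|S|}=\frac{n^2}{|S|}$. To force $\frac{n^2}{|S|}\le\frac{1}{n}$ it is enough to take any $S\subset\C$ with $|S|>n^3$, e.g.\ $S=\{1,2,\ldots,n^3+1\}$. Feeding this choice back into Theorem~\ref{thm:complexteexactmain} yields success probability at least $1-\frac{1}{n}$, which is exactly the statement of the corollary. The output vectors are the $l_i=(\alpha_i)^{1/3}v_i$ produced at the last step of Algorithm~\ref{algo:completeexact}.

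Since the corollary is a straightforward cardinality specialization of a previously proved probabilistic guarantee, there is no real obstacle; the only work is the arithmetic simplification of the failure probability and choosing a suitable~$S$. The one tacit assumption worth flagging is that sampling from $S$ can be performed exactly in the exact-arithmetic model underlying Section~\ref{sec:completeexactarithmetic}, which is consistent with the framing of that section.
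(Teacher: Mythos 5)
Your proposal is correct and is exactly the paper's argument: the paper derives this corollary by taking a finite set $S$ with $|S| > n^3$ and plugging it into Theorem~\ref{thm:complexteexactmain}, whose failure bound $\frac{2\binom{n}{2}}{|S|} + \frac{n}{|S|} = \frac{n^2}{|S|}$ then drops below $\frac{1}{n}$, just as you compute. Nothing further is needed.
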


{
\subsection*{An alternative to step 6: the partial trace}

After step 5 of Algorithm \ref{algo:completeexact} we have determined
vectors $v_i$ such that $T=\sum_{i=1}^n \alpha_i v_i^{\otimes 3}$ for some coefficients $\alpha_1,\ldots,\alpha_n$.
The goal of step 6 is to determine these coefficients,
and the proposed solution runs in $O(n^3)$ arithmetic operations thanks to the TSCB algorithm from Section~\ref{sec:cobfinitear}. An anonymous referee
has suggested to use the trace in a different way, and 
more precisely to use the partial trace. We sketch this approach below.

First, we recall that the partial trace is a linear map
which associates to the 3-tensor $T$ an $n$-dimensional
vector $Tr'(T)$ with entries 
$$Tr'(T)_i=\sum_{j=1}^n T_{ijj}.$$
In particular, the partial trace of a rank-one tensor satisfies 
$$Tr'(v^{\otimes 3})=Tr(v \otimes v)v=
(\sum_{j=1}^n v_j^2)v.$$
Suppose first that $T$ has real entries. From the relation 
$T=\sum_{i=1}^n \alpha_i v_i^{\otimes 3}$ we obtain
$Tr'(T)=\sum_{i=1}^n \alpha_i ||v_i||^2 v_i,$ 
and now we have a system of $n$ linear equations with $n$ unknowns. Note that 
$||v_i||^2 \neq 0$ since the $v_i$ are the rows of~$P^{-1}$.
We can therefore determine the unique solution in $O(n^3)$ operations.

This algorithm does not quite work when $T$ has complex
entries since the {\em Euclidean} norm of $v_i$
might vanish. In the complex case, the referee has suggested
to introduce random weights when computing the partial trace.
Namely, given complex coefficients 
$\alpha_1,\ldots,\alpha_n$, we define a new partial trace operation~$Tr'_{\alpha}$. This linear map associates
to $T$ the $n$-dimensional vector with entries
$$Tr'_{\alpha}(T)_i=\sum_{j=1}^n \alpha_j T_{ijj}.$$
Now we have
$Tr'_{\alpha}(v^{\otimes 3}) = (\sum_{j=1}^n \alpha_j v_j^2)v$.
For any nonzero vector $v$, 
the coefficient $\sum_{j=1}^n \alpha_j v_j^2$ will be nonzero with high probability
if the weights $\alpha_1,\ldots,\alpha_n$ are chosen at random. We can therefore find the unique  solution 
to the linear system in $O(n^3)$ operations like in the
real case.

}

\section{Complete decomposition of symmetric tensors in finite arithmetic}\label{sec:complete}

\par

{ We claimed at the beginning of Section~\ref{sec:results} that the condition number for symmetric tensor decomposition is well defined. 
In Section~\ref{sec:complete} we first justify that claim, then present our finite precision decomposition algorithm (Algorithm~\ref{algo:Jennrich}), and analyze its properties from Section~\ref{sec:algo} onward.}

First, we state a well-known result  showing that the tensor decomposition is unique up to permutation if it satisfies certain conditions. Here we will state the result following the notation of \cite{Moitra2018AlgorithmicAO}.
\begin{definition}
We say that two sets of factors 
\begin{align*}
    \{(u^{(i)},v^{(i)},w^{(i)})\}_{i=1}^r \text{ and } \{(\overline{u}^{(i)},\overline{v}^{(i)},\overline{w}^{(i)})\}_{i=1}^r
\end{align*}
are equivalent if there is a permutation $\pi: [r] \xrightarrow[]{} [r]$ such that for all $i$,
\begin{align*}
    u^{(i)} \otimes v^{(i)} \otimes w^{(i)} = \overline{u}^{(\pi(i))} \otimes \overline{v}^{(\pi(i))} \otimes \overline{w}^{(\pi(i))}.
\end{align*}
\end{definition}
\begin{theorem}\cite{Har70}
Suppose we are given a tensor of the form 
\begin{align*}
    T = \sum_{i \in [r]}  u^{(i)} \otimes v^{(i)} \otimes w^{(i)}
\end{align*}
where the following conditions are met:
\begin{itemize}
    \item the vectors $\{u^{(i)}\}_i$ are linearly independent.
    \item the vectors $\{v^{(i)}\}_i$ are linearly independent.
    \item every pair of vectors in $\{w^{(i)}\}_i$ is linearly independent.
\end{itemize}
Then $\{(u^{(i)},v^{(i)},w^{(i)})\}_{i=1}^r \text{ and } \{(\overline{u}^{(i)},\overline{v}^{(i)},\overline{w}^{(i)})\}_{i=1}^r$ are equivalent factors.
\end{theorem}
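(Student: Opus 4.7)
The plan is to adapt the simultaneous-diagonalisation argument at the heart of the proof of Theorem~\ref{thm:complexteexactmain} to the present nonsymmetric setting. From any decomposition satisfying the three hypotheses one can reconstruct the factors directly from $T$ (up to permutation and compensating scalars), which forces any second decomposition with $r$ terms to be equivalent.

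Let $U$ and $V$ be the matrices whose columns are the $u^{(i)}$ and $v^{(i)}$, and let $T_1,\ldots,T_{n_3}$ be the slices of $T$ along the third mode. A nonsymmetric analogue of Corollary~\ref{corr:P3structural} gives $T_k = U\operatorname{diag}(w^{(1)}_k,\ldots,w^{(r)}_k)V^T$, so for any $a\in\C^{n_3}$ the combination $T^{(a)} := \sum_k a_k T_k$ equals $UD^{(a)}V^T$ with $D^{(a)} = \operatorname{diag}(\langle a,w^{(i)}\rangle)$. Given a second decomposition $T = \sum_j \bar u^{(j)}\otimes\bar v^{(j)}\otimes\bar w^{(j)}$ with $r$ terms, the same identity holds with barred symbols. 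For generic $a$ (meaning $\langle a,w^{(i)}\rangle\neq 0$ for every $i$), $T^{(a)}$ has rank $r$, and its column span and row span equal $\operatorname{span}(u^{(i)})$ and $\operatorname{span}(v^{(i)})$ respectively; but from the second decomposition these spans are also contained in $\operatorname{span}(\bar u^{(j)})$ and $\operatorname{span}(\bar v^{(j)})$, so by a dimension count the $\bar u^{(j)}$ and $\bar v^{(j)}$ are automatically linearly independent and span the same subspaces. Hence the second decomposition inherits the full-column-rank property of the first, even though it was not assumed a priori.

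Next, I would pick $a, b\in\C^{n_3}$ simultaneously generic so that $\langle a,w^{(i)}\rangle, \langle a,\bar w^{(j)}\rangle\neq 0$ and the ratios $\lambda_i := \langle b,w^{(i)}\rangle/\langle a,w^{(i)}\rangle$ are pairwise distinct; this is the only place where pairwise linear independence of the $w^{(i)}$ is used, via the Schwartz--Zippel argument of Theorem~\ref{thm:eigenvalues}. Using the Moore--Penrose formula for products of full column-rank factors, I then form
\[
M := T^{(b)}\bigl(T^{(a)}\bigr)^{+} = U\operatorname{diag}(\lambda_1,\ldots,\lambda_r)U^{+},
\]
so the $u^{(i)}$ are the eigenvectors of $M$ for its $r$ distinct nonzero eigenvalues. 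Since $M$ depends only on $T, a, b$ and the barred factors also have full column rank, the second decomposition expresses the same $M$ as $\bar U\operatorname{diag}(\bar\lambda_j)\bar U^{+}$; the only way for $r$ linearly independent eigenvectors to span $r$ one-dimensional eigenspaces is that the $\bar\lambda_j$ are pairwise distinct and coincide set-wise with the $\lambda_i$. This yields a permutation $\pi$ with $\bar\lambda_{\pi(i)} = \lambda_i$ and hence $\bar u^{(\pi(i))} = \gamma_i u^{(i)}$ for nonzero $\gamma_i$. Applying the mirror construction $(M')^T = ((T^{(a)})^{+}T^{(b)})^T$ produces nonzero scalars $\delta_i$ with $\bar v^{(\pi(i))} = \delta_i v^{(i)}$; the permutation agrees with the previous one because in both arguments it is determined by matching the same set of distinct eigenvalues.

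To conclude, substitute these relations into the second decomposition and subtract from the first:
\[
0 \;=\; \sum_{i=1}^r u^{(i)} \otimes v^{(i)} \otimes \bigl(w^{(i)} - \gamma_i\delta_i\,\bar w^{(\pi(i))}\bigr).
\]
Slicing along the third mode and using the linear independence of the rank-one matrices $u^{(i)}(v^{(i)})^T$ (immediate from the linear independence of the $u^{(i)}$ and of the $v^{(i)}$) forces $\bar w^{(\pi(i))} = (\gamma_i\delta_i)^{-1}w^{(i)}$, so that each rank-one summand of the two decompositions coincides. The main technical delicacy is the simultaneity requirement on $(a,b)$: a single draw must satisfy the nonvanishing and distinct-ratio conditions for both decompositions at once. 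This will be routine because the ``forbidden'' configurations form a finite union of polynomial hypersurfaces; but it is the only point where one really has to be careful, and it is precisely the pairwise linear independence of the $w^{(i)}$ that guarantees the relevant Schwartz--Zippel polynomials do not vanish identically.
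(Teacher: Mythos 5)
The paper does not prove this statement at all: it is quoted from Harshman~\cite{Har70} as a known uniqueness result, so there is no in-paper argument to compare against. Your proposal is essentially the classical Jennrich/Leurgans--Ross--Abel simultaneous-diagonalisation proof, and it is correct in its main lines; it is also pleasantly consistent with the machinery the paper does develop (your identity $T^{(a)}=U D^{(a)} V^T$ is the nonsymmetric analogue of Theorem~\ref{thm:P3structural}, and your choice of $(a,b)$ mirrors the Schwartz--Zippel argument of Theorem~\ref{thm:eigenvalues}). The dimension count showing that the barred factors are forced to be linearly independent, the identity $T^{(b)}(T^{(a)})^{+}=U\operatorname{diag}(\lambda_i)U^{+}$ (valid because $U$ has full column rank and $D^{(a)}V^T$ full row rank, so the full-rank factorization rule for the pseudoinverse applies and $V^T(V^T)^{+}=I_r$), the eigenspace matching, and the final slicing step using linear independence of the rank-one matrices $u^{(i)}(v^{(i)})^T$ are all sound.

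Two small points should be tightened. First, your genericity list omits the condition $\langle b,w^{(i)}\rangle\neq 0$: if some $\lambda_i=0$ and the first mode has dimension strictly larger than $r$, the $0$-eigenspace of $M$ has dimension $n_1-r+1>1$ and the matching $\bar u^{(\pi(i))}=\gamma_i u^{(i)}$ fails for that index. Since every $w^{(i)}$ is nonzero (pairwise independence forbids the zero vector for $r\geq 2$), this is just one more nontrivial linear condition to add to the avoided hypersurfaces, as your closing remark anticipates, but it must be stated. Second, to apply Schwartz--Zippel to $\langle a,\bar w^{(j)}\rangle$ you need to know beforehand that $\bar w^{(j)}\neq 0$ (and similarly $\bar u^{(j)},\bar v^{(j)}\neq 0$); this follows from a one-line rank observation (a vanishing barred term would make every slice combination $T^{(a)}$ have rank at most $r-1$, contradicting the rank-$r$ value produced by the first decomposition for generic $a$), and it should precede the choice of $(a,b)$ rather than be folded into it. With these additions the argument is complete.
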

Applying this to the case of symmetric tensors, we get the following corollary.
\begin{corollary}\label{lem:uniqueness}
Let $T = \sum_{i \in [n]} u_i^{\otimes 3}$ be a symmetric tensor where the vectors  $u_i \in \mathbb{C}^n$ are linearly independent. For any other decomposition $T = \sum_{i \in [n]} (u'_i)^{\otimes 3}$, the vectors $u'_i$ must satisfy $u'_i = \omega_iu_{\pi(i)}$ where $\omega_i$ is a cube root of unity and $\pi \in S_n$ a permutation. 
\end{corollary}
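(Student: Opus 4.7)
The plan is to reduce the statement to a direct application of the preceding theorem of Harshman together with a short algebraic identification of rank-one symmetric tensors. First, I would apply the theorem to the two decompositions $T = \sum_{i=1}^n u_i \otimes u_i \otimes u_i$ and $T = \sum_{i=1}^n u'_i \otimes u'_i \otimes u'_i$ of the same tensor. The hypotheses of the theorem are checked against the first decomposition with $u^{(i)} = v^{(i)} = w^{(i)} = u_i$: linear independence of the $u_i$'s immediately gives the required linear independence of the first two factor families and also the (weaker) pairwise linear independence of the third family. Hence the theorem produces a permutation $\pi \in S_n$ such that, for every $i$,
\begin{equation*}
(u'_i)^{\otimes 3} \;=\; u_{\pi(i)}^{\otimes 3}.
\end{equation*}

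The remaining work is to identify when two rank-one symmetric tensors of order three are equal. I would prove the elementary fact that $v^{\otimes 3} = w^{\otimes 3}$ with $v,w \in \mathbb{C}^n$ forces $v = \omega w$ for some cube root of unity $\omega$. For this, comparing diagonal entries gives $v_k^3 = w_k^3$, so whenever $w_k \neq 0$ we may write $v_k = \omega_k w_k$ with $\omega_k^3 = 1$; whenever $w_k = 0$ the relation forces $v_k = 0$. To see that a single $\omega$ suffices, one looks at an off-diagonal entry of the form $v_i^2 v_j = w_i^2 w_j$ with $w_i, w_j \neq 0$, which yields $\omega_i^2 \omega_j = 1$, hence $\omega_j = \omega_i^{-2} = \omega_i$ using $\omega_i^3 = 1$. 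So all the $\omega_k$ attached to nonzero coordinates agree, and the remaining coordinates are free to take the same value.

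Applying this identification to each of the $n$ equalities $(u'_i)^{\otimes 3} = u_{\pi(i)}^{\otimes 3}$ produces cube roots of unity $\omega_i$ with $u'_i = \omega_i u_{\pi(i)}$, which is exactly the claimed conclusion. The only subtle point is the second paragraph: it is tempting to conclude $v = \omega w$ merely from $v_k^3 = w_k^3$, but that is false (one could in principle choose a different $\omega_k$ per coordinate), so the cross-term argument is essential. Everything else is bookkeeping around the theorem already cited, so I would expect the write-up to be quite short.
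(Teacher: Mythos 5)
Your proposal is correct and follows essentially the same route as the paper, which obtains the corollary by applying Harshman's theorem to the two symmetric decompositions and leaves the identification of equal rank-one terms implicit. Your write-up simply supplies that omitted elementary step, namely that $v^{\otimes 3}=w^{\otimes 3}$ forces $v=\omega w$ for a single cube root of unity $\omega$, and your cross-term argument $\omega_i^2\omega_j=1$ is exactly the right way to rule out coordinate-dependent roots.
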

{ The above result was also derived in~\cite{Kayal11} by a different method (uniqueness of polynomial factorization).}
For the next lemma, recall the definition of the condition number of a diagonalisable symmetric tensor from Definition \ref{def:conditionnumber}.
\begin{lemma}\label{lem:equalcondtncomplete}
Let $T$ be a diagonalisable tensor. Then for all $U \in M_n(\mathbb{C})$ such that $U$ diagonalises $T$, the condition numbers $\kappa_F(U)$ are equal.
\end{lemma}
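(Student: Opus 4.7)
The plan is to reduce the statement directly to the uniqueness of the decomposition (Corollary~\ref{lem:uniqueness}) combined with unitary invariance of the Frobenius norm. Suppose $U$ and $U'$ both diagonalise $T$, meaning the rows $u_1,\ldots,u_n$ of $U$ and $u'_1,\ldots,u'_n$ of $U'$ both give decompositions $T=\sum_i u_i^{\otimes 3}=\sum_i (u'_i)^{\otimes 3}$. By Corollary~\ref{lem:uniqueness} there exists a permutation $\pi \in S_n$ and cube roots of unity $\omega_1,\ldots,\omega_n$ such that $u'_i=\omega_i u_{\pi(i)}$ for every $i$.

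First I would translate this row-wise relation into a clean matrix identity. Writing $P$ for the permutation matrix that sends row $i$ to row $\pi(i)$ and $D=\mathrm{diag}(\omega_1,\ldots,\omega_n)$, the relation becomes
\begin{equation*}
U' = D P U.
\end{equation*}
Both $P$ (being a permutation matrix) and $D$ (having unimodular diagonal entries, since $|\omega_i|=1$) are unitary. Consequently their product $Q := DP$ is unitary as well, and $U'=QU$.

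Next I would invoke the standard fact that the Frobenius norm is invariant under left and right multiplication by unitary matrices. This gives immediately
\begin{equation*}
\|U'\|_F = \|QU\|_F = \|U\|_F.
\end{equation*}
For the inverse, note that $(U')^{-1}=U^{-1}Q^{-1}=U^{-1}Q^{*}$, and $Q^{*}$ is again unitary, so
\begin{equation*}
\|(U')^{-1}\|_F = \|U^{-1}Q^{*}\|_F = \|U^{-1}\|_F.
\end{equation*}
Summing the squares yields $\kappa_F(U')=\|U'\|_F^2+\|(U')^{-1}\|_F^2=\|U\|_F^2+\|U^{-1}\|_F^2=\kappa_F(U)$, which is the claim.

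There is no real obstacle here: the only nontrivial ingredient is the uniqueness statement in Corollary~\ref{lem:uniqueness}, which is already available. The one small thing to be careful about is the orientation of the permutation matrix — one must check that the permutation is being applied to the rows of $U$ rather than to the columns, so that the identity $U'=DPU$ (and not something like $U'=DUP$) is the correct form — but this is a bookkeeping matter and does not affect the conclusion, since either form only involves unitary factors and unitary invariance of $\|\cdot\|_F$ gives the result in the same way.
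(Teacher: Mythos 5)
Your proof is correct and follows essentially the same route as the paper: both invoke the uniqueness of the decomposition (Corollary~\ref{lem:uniqueness}) to write $U' = DP_{\pi}U$ and then conclude that $\|U'\|_F = \|U\|_F$ and $\|(U')^{-1}\|_F = \|U^{-1}\|_F$. The only cosmetic difference is that you appeal to unitary invariance of the Frobenius norm, whereas the paper verifies the same equalities by a direct row-by-row and column-by-column computation.
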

\begin{proof}
By Corollary \ref{lem:uniqueness},  for all $U \in M_n(\mathbb{C})$ such that $U$ diagonalises $T$, the rows of $U$ are unique up to permutation and scaling by cube roots of unity. Writing this in matrix notation, if $U$ and $U'$ are two such distinct matrices that diagonalise $T$, there exists a permutation $\pi \in S_n$ and a diagonal matrix $D$ with cube roots of unity along the diagonal entries, such that $U' = DP_{\pi}U$ where $P_{\pi}$ is the permutation matrix corresponding to $\pi$. 
\par
Now, $||U'||_F = ||DP_{\pi}U||_F$. If $u_1,...,u_n$ are the rows of $U$, then the rows of $U'$, i.e. $u'_i = \omega_iu_{\pi(i)}$ where $\omega_i$ are the cube roots of unity. Using the definition of $||.||_F$, we get that $||U'||_F^2 = \sum_{i \in [n]} ||u'_i||^2 = \sum_{i \in [n]} ||\omega_iu_{\pi(i)}||^2 = \sum_{i \in [n]} ||u_i||^2 = ||U||_F^2$. Similarly, 
$$||(U')^{-1}||_F = ||(DP_{\pi}U)^{-1}||_F = ||U^{-1}(P_{\pi})^TD^{-1}||_F.$$ Since $(P_{\pi})^T$ is also a permutation matrix, multiplication by it on the right permutes the columns of $U^{-1}$. Also, inverse of cube roots of unity are cube roots of unity as well. Hence, if $v'_1,...,v'_n$ are the columns of $U^{-1}$, and $v_1,...,v_n$ are the columns of $U$, this gives us that $v'_i = \omega'_iv_{\pi^{-1}(i)}$ where $\omega_i'$ is a cube root of unity. This gives us that $||(U')^{-1}||^2_F = \sum_{i=1}^n ||v'_i||^2 = \sum_{i \in [n]} ||\omega'_iv_{\pi^{-1}(i)}||^2 = \sum_{i \in [n]} ||v_i||^2 = ||U^{-1}||_F^2$ . This finally gives us that $\kappa_F(U') = ||U'||_F^2 + ||(U')^{-1}||_F^2 = ||U||_F^2 + ||U^{-1}||_F^2 = \kappa_F(U)$.
\end{proof}
\subsection{Simultaneous Diagonalisation Algorithm for Symmetric Tensors in Finite Precision}\label{sec:properror}
\label{sec:algo}

\begin{algorithm} 
\label{algo:Jennrich}
\SetAlgoLined
\nonl Let $C_{\text{gap}},C_{\eta} > 0$ and $c_F > 1$ be some absolute constants we will fix in (\ref{eq:cgapcf}). \\ 
\nonl \textbf{Input:} An order-3 symmetric diagonalisable tensor $T \in (\mathbb{C}^n)^{\otimes 3}$, an estimate $B$ for the condition number of the tensor and  an accuracy parameter {$\varepsilon < 1$}.  \\
\nonl \textbf{Output:} A solution to the $\varepsilon$-approximation problem for the decomposition of~$T$. \\
\nonl Set $k_{\text{gap}} := \frac{1}{C_{\text{gap}}n^6B^3}$  and $k_{F} := c_Fn^5B^3$.  \\
\nonl Pick $(a_1,...,a_n,b_1,...,b_n) \in G_{\eta}$ uniformly at random where $\eta := \frac{1}{C_{\eta}n^{\frac{17}{2}}B^4}$ is the grid size. \\
\nonl Let $T_1,...,T_n$ be the slices of $T$.\\

Compute $S^{(a)} = \sum_{i=1}^n a_iT_i$ and $S^{(b)} = \sum_{i=1}^n b_iT_i$ on a floating point machine. \\
Compute $S^{(a)'} = INV(S^{(a)})$ on a floating point machine where $INV$ is the stable matrix inversion algorithm in Theorem \ref{thm:fastlinearalgebra}.\\  
\nonl Let $\delta := \frac{\varepsilon^3}{Cn^{12}B^{\frac{9}{2}}}$ where $C$ is a constant we will set in (\ref{eq:C}). \\
Compute $D = \text{MM}(S^{(a)'},S^{(b)})$ on a floating point machine where $MM$ is the stable matrix multiplication algorithm in Theorem \ref{thm:fastlinearalgebra}. \\
Let $v^{(0)}_1,...,v^{(0)}_n$ be the output of $EIG-FWD$ on the input $(D, \delta, \frac{3nB}{k_{\text{gap}}}, 2B^{\frac{3}{2}}\sqrt{nk_F})$ on a floating point machine. \\
\nonl Let $V^{(0)}$ be the matrix with $v^{(0)}_1,...,v^{(0)}_n$  as columns. \\
Compute $C = \text{INV}(V^{(0)})$ on a floating point machine where $\text{INV}$ is the matrix inversion algorithm in Theorem \ref{thm:fastlinearalgebra} and let $u_i'$ be the rows of $C$. \\
Let $\alpha'_1,...,\alpha'_n$ be the output of $TSCB(T,V^{(0)})$ where $TSCB$ is the algorithm for computing the trace of the slices after a change of basis in Algorithm \ref{algo:fastcob}.\\
Output $\{l_1,...,l_n\}$ where $l_i = (\alpha'_i)^{\frac{1}{3}} u'_i$ is computed on a floating point machine for all $i \in [n]$. Note that by $(\alpha'_i)^{\frac{1}{3}}$ we refer to any one of the cube roots of $\alpha'_i$. \\
\caption{Algorithm for Complete Decomposition of Symmetric Tensors.}
\end{algorithm}
Recall that the condition number for tensor diagonalisation $\kappa(T)$ was defined in Definition \ref{def:conditionnumber}, and the notion of $\varepsilon$-approximation  for tensor decomposition was defined in Section \ref{sec:fwddef}.
{ Our main result about Algorithm~\ref{algo:Jennrich} below already appears as Theorem~\ref{th:main} in the introduction, and it is the central result of this paper. }


Let $T_1,...,T_n$ be the slices of the tensor. In the algorithm, we pick $a,b$ uniformly and independently at random from  the finite grid $G_{\eta} = \{-1,-1+\eta,-1+2\eta,...,1-2\eta,1-\eta\}^{2n}$, then define $T^{(a)}~= \sum_{i=1}^n a_iT_i$ and $T^{(b)} = \sum_{i=1}^n b_iT_i$. 
{ In Section~\ref{sec:complete} we give 
a proof of the main theorem under some extra assumptions.
Namely, we will assume in Theorem \ref{thm:mainerrorfinal} below that we have picked points $a,b$ such that the Frobenius condition number of $T^{(a)}$ is "small" and the eigenvalue gap of $(T^{(a)})^{-1}T^{(b)}$ is "large".
We will see in Section~\ref{subsection:correctnessjennrich} 
that these assumptions are satisfied with high probability over the choice of $a$ and $b$.
This will allow us to complete the proof of the main theorem in   Section~\ref{sec:finishproofmain}.}
\begin{theorem}\label{thm:mainerrorfinal}
Let $T \in \C^n \otimes \C^n \otimes \C^n$ be a degree-$3$ diagonalisable symmetric tensor. Let $U \in \text{GL}_n(\C)$ be such that $U$ diagonalises $T$ where $\kappa(T) = \kappa_F(U) <~B$. Let $T^{(a)} = \sum_{i=1}^n a_iT_i$ and $T^{(b)}= \sum_{i=1}^n b_iT_i$ be two linear combination of the slices $T_1,...,T_n$ of $T$ such that $T^{(a)}$ is invertible, $\kappa_F(T^{(a)}) \leq k_F := C_Fn^5B^3$ and $\text{gap}((T^{(a)})^{-1}T^{(b)}) \geq k_{\text{gap}} := \frac{1}{C_{\text{gap}}n^6B^3}$. Let $\varepsilon \leq 1$ be the input accuracy parameter.  Then Algorithm \ref{algo:Jennrich} outputs an 
$\varepsilon$-approximation to the tensor decomposition problem for $T$ in 
$$O(n^3 + T_{MM}(n)\log^2 \frac{nB}{\varepsilon})$$ arithmetic operations on a floating point machine with 
$$O(\log^4(\frac{nB}{\varepsilon})\log n)$$ bits of precision, with probability at least $1- \frac{1}{n} - \frac{12}{n^2}$.

\end{theorem}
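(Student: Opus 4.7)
The plan is to analyze Algorithm~\ref{algo:Jennrich} step by step in finite precision, in parallel with the exact-arithmetic analysis of Section~\ref{sec:completeexactarithmetic}. In exact arithmetic, Corollary~\ref{thm:Jennrichexact} together with the invertibility of $T^{(a)}$ and the distinctness of eigenvalues (implied by the gap hypothesis) yields the unique decomposition $l_i = \omega_i u_{\pi(i)}$. Each of the seven numbered steps introduces two kinds of error: the error propagated from earlier steps, and the round-off error of the present step. I would bound both at each stage and sum, showing that the aggregate deviation from the ideal output is at most $\epsilon$ provided the constants $C, C_F, C_{\text{gap}}, c_F$ are chosen large enough.

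For Steps~1--3, standard linear-algebra bounds apply. The combination $S^{(a)} = \sum_i a_i T_i$ differs from $T^{(a)}$ by $O(\textbf{u} \sqrt{n}\,\|T\|_F)$ via~(\ref{eq:multfl}), and similarly for $S^{(b)}$; note that $\|T\|_F = O(B^{3/2})$. The inversion $S^{(a)'}$ uses Theorem~\ref{thm:fastlinearalgebra}, and since $\kappa_F(T^{(a)}) \leq k_F$ is polynomial in $n, B$, the inverse has small relative error. The multiplication $D = S^{(a)'} S^{(b)}$ contributes a further $\mu_{\text{MM}}(n) \textbf{u}$-factor. The result is a computed $D$ close to the ideal $D_\ast = (T^{(a)})^{-1} T^{(b)}$ with error polynomial in $n, B$ and linear in $\textbf{u}$.

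The crux of the argument is Step~4. I would apply Theorem~\ref{thm:eigfwd} to $D$ with the parameters passed by the algorithm: $K_{\text{norm}} = 2B^{3/2}\sqrt{n k_F}$ as an upper bound on $\max\{\|D\|_F, 1\}$, and $K_{\text{eig}} = 3nB/k_{\text{gap}}$ as an upper bound on $\kappa_{\text{eig}}(D)$. The latter follows from the spectral expansion $D_\ast = U^{-1} \operatorname{diag}(\lambda_i) U$ in~(\ref{eq:formulata'tb}), which gives $\kappa_V(D_\ast) \leq \kappa_F(U) \leq B$, combined with the gap hypothesis, and with Lemmas~\ref{lem:gapAA'} and~\ref{lem:AA'relation} applied to the perturbation $D$ versus $D_\ast$ to absorb the polynomial factor. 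Taking $\delta = \epsilon^3/(Cn^{12}B^{9/2})$ as in the algorithm, Theorem~\ref{thm:eigfwd} yields vectors $v_i^{(0)}$ within $\delta$ of the true (normalised) eigenvectors up to phase, together with the Frobenius bound $\kappa_F(V^{(0)}) = O(n^4 B^2)$, with probability at least $1 - 1/n - 12/n^2$. This step dictates the precision $O(\log^4(nB/\epsilon)\log n)$ and the $T_{MM}(n)\log^2(nB/\epsilon)$ term of the complexity, and it is the only source of randomness and of failure probability.

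The remaining Steps~5--7 push the eigenvector error to the output. Inverting $V^{(0)}$ via Theorem~\ref{thm:fastlinearalgebra} is safe because $\kappa(V^{(0)})$ is polynomial in $n, B$, so the rows $u_i'$ approximate the rows of $(V^{(0)})^{-1}$ well. The TSCB call uses Theorem~\ref{thm:fastcob}: in the exact-arithmetic identity~(\ref{eq:alphaiki3}) we have $\alpha_i = k_i^3$ with $k_i \neq 0$, and the computed $\alpha_i'$ is within $\mu_{CB}(n)\textbf{u}\,\|V^{(0)}\|_F^3\,\|T\|_F$ of the trace of the true slice. Finally, $l_i = (\alpha_i')^{1/3} u_i'$ is compared to $\omega_i u_{\pi(i)}$ by a routine perturbation argument using local H\"older continuity of the cube root and the triangle inequality. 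The $O(n^3)$ term in the complexity comes from TSCB; everything else fits inside $T_{MM}(n)\log^2(nB/\epsilon)$. The main obstacle will be the careful polynomial bookkeeping throughout all seven stages: the cube root amplifies error by $|\alpha_i|^{-2/3}$, so one must track a lower bound on $|\alpha_i|$ in terms of $B$ (derived from $\alpha_i = k_i^3$ and bounds on the $k_i$ via $\kappa_F(U) \leq B$), and this is ultimately what forces the cubic dependence $\delta \propto \epsilon^3$ in the accuracy parameter supplied to EIG-FWD, and hence the exponents $n^{12}$ and $B^{9/2}$ in the final expression for $\delta$.
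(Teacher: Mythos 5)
Your plan follows essentially the same route as the paper's proof: step-by-step propagation of the two error sources through Steps 1--7, with Steps 1--3 handled by the standard bounds of Section~\ref{sec:fastlinalg}, Step 4 by Theorem~\ref{thm:eigfwd} with exactly the parameters $K_{\text{eig}}=3nB/k_{\text{gap}}$ and $K_{\text{norm}}=2B^{3/2}\sqrt{nk_F}$ justified via the spectral expansion~(\ref{eq:formulata'tb}) and Lemmas~\ref{lem:gapAA'} and~\ref{lem:AA'relation} (this being the sole source of randomness, precision $O(\log^4(nB/\epsilon)\log n)$ and the $T_{MM}(n)\log^2(nB/\epsilon)$ term), and Steps 5--7 by the INV and TSCB guarantees plus a cube-root perturbation argument, which is how the paper proceeds in Sections~\ref{sec:properror}--\ref{sec:finishproof} and Appendix~B. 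The only deviations are minor bookkeeping: the paper obtains $\kappa_F(V^{(0)})\le 800\,n^6B^2$ rather than $O(n^4B^2)$ (because Lemma~\ref{lem:AA'relation} inflates $\kappa^F_V(D)$ to $3nB$ before it enters Theorem~\ref{thm:eigfwd}), and it never needs a lower bound on $|\alpha_i|$, since it chooses the cube-root phases so that $|\alpha_i^{1/3}-(\alpha_i')^{1/3}|\le|\alpha_i-\alpha_i'|^{1/3}$, which is precisely what forces $\delta\propto\epsilon^3$; your Lipschitz-style variant would also work because $|\alpha_i|=|k_i|^3\ge B^{-3/2}$ is derivable from $\kappa_F(U)\le B$ and the normalization of the eigenvectors.
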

As we have seen in Section \ref{sec:completeexactarithmetic}, if each of the steps of Algorithm \ref{algo:Jennrich} are performed in exact arithmetic and if we can perform matrix diagonalisation exactly in Step 4, we will get vectors $v_1,...,v_n \in \C^n$ such that $T = \sum_{i=1}^n v_i^{\otimes 3}$ exactly. We will refer to this as the \textit{ideal output}. In the next section we show that under some suitable assumptions, Algorithm \ref{algo:Jennrich} indeed outputs linear forms which are at distance at most $\varepsilon$  from the ideal output. We will gradually build towards the proof of this theorem and will finish it in Section~\ref{sec:finishproof}. 
\par
\textbf{Organization: }For better readability of this exposition, we keep the error analysis for Steps 1,2 and 3 in the main text in order to provide a flavour of the calculations involved. The analysis for Steps 4 - 7 are somewhat technical and we defer them to Appendix \ref{app:erroranalysis}.
\subsection{Error accumulated in Steps 1,2,3}\label{sec:error123}
The slices $T_1,.., T_n$ can be computed without any additional error. Let $T^{(a)} = \sum_{i=1}^n a_iT_i$ and $T^{(b)} = \sum_{i=1}^n b_iT_i$. Also, let $S^{(a)}$ and $S^{(b)}$ be the matrices $\sum_{i=1}^n a_iT_i$ and $\sum_{i=1}^n b_iT_i$ respectively, computed on a floating point machine with precision $\textbf{u}$. Note that if infinite precision is allowed, ideally, the output at the end of Step 3 would have been $(T^{(a)})^{-1}T^{(b)}$. We show that the actual output at the end of Step 3 of the algorithm, i.e., $D := MM(S^{(a)'},S^{(b)})$ is very close to $(T^{(a)})^{-1}T^{(b)}$. More formally, we will show there exist constants $c', C_3$ such that if the algorithm is run on a floating point machine with number of bits of precision $\log(\frac{1}{\textbf{u}}) > c'\log(nB)\log(n)$, then 
\begin{equation}\label{eq:mubounds}
    ||D - (T^{(a)})^{-1}T^{(b)}|| := \varepsilon_3 \leq (nB)^{C_3 \log(n)}. \textbf{u}.
\end{equation}
Note that we will set these constants later in (\ref{eq:eps3}).
\subsubsection{Error for Step 1}
If the algorithm is run in exact arithmetic, the "ideal" outputs at the end of Step 1 are $T^{(a)}$ and $T^{(b)}$. Let us denote by $S^{(a)}$ and $S^{(b)}$ the outputs at the end of Step 1 of the algorithm. In the next lemma, we estimate the difference between the ideal outputs and the actual outputs.
Recall the Frobenius norm of a tensor, $||T||_F$, from  Definition~\ref{def:tensornorm}.
\begin{lemma} \label{lem:errstep1}
Let $T_1,...,T_n$ be the slices of  $T$ which are the inputs to Step 1. Let $T^{(a)} = \sum_{i=1}^n a_iT_i$ and $T^{(b)} = \sum_{i=1}^n b_iT_i$. Let $S^{(a)}$ and $S^{(b)}$ be the matrices $\sum_{i=1}^n a_iT_i$ and $\sum_{i=1}^n b_iT_i$ respectively computed on a floating point machine with machine precision $\textbf{u}$ where $n\textbf{u} < 1$. Then  
\begin{align*}
    ||S^{(a)} - T^{(a)} || \leq 2n \textbf{u} ||a|| ||T||_F \text{ and } ||S^{(b)} - T^{(b)} || \leq 2n \textbf{u} ||b|| ||T||_F
\end{align*}
where $a = (a_1,...,a_n)$ and $b = (b_1,...,b_n)$.
\end{lemma}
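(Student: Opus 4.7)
The plan is to reduce the problem to entrywise inner-product error bounds. For each $(j,k) \in [n]^2$, the $(j,k)$-entry of the computed matrix is
\[
(S^{(a)})_{jk} = \text{fl}\Big(\sum_{i=1}^n a_i (T_i)_{jk}\Big) = \text{fl}(\langle a, w_{jk}\rangle),
\]
where I set $w_{jk} := \big((T_1)_{jk},\ldots,(T_n)_{jk}\big)^T \in \C^n$. So I would first apply the standard inner-product error bound~(\ref{eq:ipbound}) to each of the $n^2$ such computations, yielding
\[
|(S^{(a)})_{jk} - (T^{(a)})_{jk}| \leq \gamma_n\, ||a||\cdot ||w_{jk}||,
\]
with $\gamma_n = n\textbf{u}/(1-n\textbf{u})$.

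Second, I would aggregate these entrywise bounds into the desired operator-norm bound by passing through the Frobenius norm. Setting $E := S^{(a)} - T^{(a)}$ and using $||E|| \leq ||E||_F$, squaring and summing over $(j,k)$ gives
\[
||E||^2 \leq \gamma_n^2\, ||a||^2 \sum_{j,k=1}^n ||w_{jk}||^2 = \gamma_n^2\, ||a||^2 \sum_{i,j,k=1}^n |(T_i)_{jk}|^2 = \gamma_n^2\, ||a||^2\, ||T||_F^2,
\]
where the middle equality just swaps the order of summation and the last equality is the slice identity~(\ref{eq:tensornormslices}). Taking square roots and using the elementary estimate $\gamma_n \leq 2n\textbf{u}$ (valid under the standard strengthening $n\textbf{u} \leq 1/2$ of the stated hypothesis) yields the claimed inequality for $S^{(a)}$; the argument for $S^{(b)}$ is identical with $b$ in place of $a$.

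I do not foresee any real obstacle here: the whole proof reduces to a one-line application of the inner-product error bound from Section~\ref{sec:morefparithmetic}, combined with the tensor--slice identity~(\ref{eq:tensornormslices}). The only very minor point of care is the passage $\gamma_n \leq 2n\textbf{u}$, which requires $n\textbf{u} \leq 1/2$; this is a trivial strengthening of $n\textbf{u} < 1$ that is automatic in the polylogarithmic precision regime used throughout the paper.
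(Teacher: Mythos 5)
Your proposal is correct and follows essentially the same route as the paper: apply the inner-product bound~(\ref{eq:ipbound}) entrywise, pass to the Frobenius norm, and use the slice identity~(\ref{eq:tensornormslices}) to obtain $2n\textbf{u}\,||a||\,||T||_F$. If anything, your remark that $\gamma_n \leq 2n\textbf{u}$ really needs $n\textbf{u} \leq 1/2$ is slightly more careful than the paper, which invokes only $n\textbf{u} < 1$ at that step (harmless, since the precision regime of the paper makes $n\textbf{u}$ far smaller).
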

\begin{proof}
{We only need to prove the first  bound (on $||S^{(a)} - T^{(a)} || $) since the proof of the second one is the same.}
Since  $(T^{(a)})_{jk} = \sum_{i=1}^n a_i(T_i)_{j,k}$, it follows from (\ref{eq:ipbound}) that $$||(S^{(a)})_{j,k} - (T^{(a)})_{j,k}|| \leq \gamma_n ||a|| \sqrt{\sum_{i=1}^n |(T_i)_{j,k}|^2}.$$

Moreover, $\gamma_n \leq 2n\textbf{u}$ since $n\textbf{u} < 1$. Hence 
\begin{equation}\label{eq:errstep1a}
\begin{split}
    ||S^{(a)} - T^{(a)} || \leq ||S^{(a)} - T^{(a)} ||_F &= \sqrt{\sum_{j,k=1}^n ||(S^{(a)})_{j,k} - (T^{(a)})_{j,k}||^2} \\
    &\leq 2n \textbf{u} ||a|| \sqrt{\sum_{i,j,k=1}^n |(T_i)_{j,k}|^2} \\
    &= 2n \textbf{u} ||a||||T||_F.
\end{split}
\end{equation}
\end{proof}
\begin{lemma}\label{lem:normtensorbound}
Let $T$ be an order-$3$ diagonalisable tensor. Then $||T||_F \leq (\kappa(T))^{\frac{3}{2}}$.
\end{lemma}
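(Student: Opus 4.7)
The plan is to chain together three elementary inequalities: the triangle inequality for the Frobenius norm applied to the decomposition $T=\sum_{i=1}^n u_i^{\otimes 3}$, a pointwise bound on $\sum_i \|u_i\|^3$ in terms of $\sum_i \|u_i\|^2$, and finally the obvious bound $\|U\|_F^2\le \kappa(T)$ coming from the definition of the condition number.

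First, observe that for any vector $u\in\C^n$ we have $\|u^{\otimes 3}\|_F=\|u\|^3$, because $\|u^{\otimes 3}\|_F^2=\sum_{i,j,k}|u_iu_ju_k|^2=\bigl(\sum_i|u_i|^2\bigr)^3=\|u\|^6$. Applying the triangle inequality for $\|\cdot\|_F$ to the decomposition $T=\sum_{i=1}^n u_i^{\otimes 3}$ yields
\begin{equation*}
\|T\|_F\;\le\;\sum_{i=1}^n\|u_i^{\otimes 3}\|_F\;=\;\sum_{i=1}^n\|u_i\|^3.
\end{equation*}

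Next, set $A:=\sum_{i=1}^n\|u_i\|^2=\|U\|_F^2$. Since each $\|u_i\|\le\sqrt{A}$, we have $\|u_i\|^3\le\sqrt{A}\cdot\|u_i\|^2$, and summing over $i$ gives $\sum_{i=1}^n\|u_i\|^3\le A^{3/2}=\|U\|_F^3$. Combining,
\begin{equation*}
\|T\|_F\;\le\;\|U\|_F^3\;\le\;\bigl(\|U\|_F^2+\|U^{-1}\|_F^2\bigr)^{3/2}\;=\;\bigl(\kappa(T)\bigr)^{3/2},
\end{equation*}
where the middle inequality is just $\|U\|_F^2\le\|U\|_F^2+\|U^{-1}\|_F^2$, and the final equality is Definition~\ref{def:conditionnumber} (which is independent of the choice of $U$ by Lemma~\ref{lem:equalcondtncomplete}).

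There is no real obstacle here; the argument is two inequalities of the power-mean type together with the triangle inequality. The only small subtlety worth flagging in writing it up is to be explicit that $\|U\|_F^2=\sum_i\|u_i\|^2$ because the $u_i$ are the \emph{rows} of $U$, matching the convention fixed in Definition~\ref{def:conditionnumber}.
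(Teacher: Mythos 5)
Your proof is correct, but it takes a different route from the paper's. The paper reuses its slice machinery: by Corollary~\ref{corr:P3structural} each slice satisfies $T_i = U^T D_i U$ with $D_i=\mathrm{diag}(u_{1,i},\ldots,u_{n,i})$, and then submultiplicativity of the Frobenius norm together with $\sum_i\|D_i\|_F^2=\|U\|_F^2$ gives $\|T\|_F^2\le\|U\|_F^6\le\kappa(T)^3$. You instead work directly with the rank-one decomposition: the identity $\|u^{\otimes 3}\|_F=\|u\|^3$, the triangle inequality, and the elementary bound $\sum_i\|u_i\|^3\le\bigl(\sum_i\|u_i\|^2\bigr)^{3/2}$ give $\|T\|_F\le\|U\|_F^3$, and then $\|U\|_F^2\le\kappa(T)$ finishes as in the paper. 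Your argument is more self-contained (it needs no slice formula, and indeed does not even use invertibility of $U$ beyond the definition of $\kappa(T)$), and your intermediate quantity $\sum_i\|u_i\|^3$ is in fact a slightly sharper bound than $\|U\|_F^3$ when the $\|u_i\|$ are unequal; the paper's version has the advantage of fitting the slice-centric framework it has already set up and exercises the same Corollary~\ref{corr:P3structural} used elsewhere in the analysis. Both arrive at exactly the stated inequality, so either write-up is acceptable.
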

\begin{proof}
Let $T =  \sum_{i=1}^n u_i^{\otimes 3}$ where the $u_i$'s are linearly independent. Let $U \in \text{GL}_n(\C)$ be the matrix with rows $u_1,...,u_n$. From Corollary \ref{corr:P3structural}, we get that the slices $T_i$ of $T$ can be written as $T_i = U^T D_i U$ where $D_i = \text{diag}(u_{1,i},...,u_{n,i})$. 
Therefore, 
\begin{align*}
    ||T||^2_F = \sum_{i=1}^n ||T_i||_F^2 &= \sum_{i=1}^n ||U^TD_iU||_F^2 \\
    &\leq ||U||_F^4 \sum_{i=1}^n ||D_i||_F^2 \\
    &= ||U||_F^4 (\sum_{i=1}^n (\sum_{k=1}^n |u_{k,i}|^2)) \\
    &= ||U||_F^6 \leq \kappa(T)^3.
\end{align*}
\end{proof}
Now we bound the error in terms of the input parameters. Recall that the input tensor has condition number $\kappa(T) \leq B$ and $a,b \in G_{\eta} \subset [-1,1]^n$. 
\begin{claim}
\begin{equation}\label{eq:sata}
    ||S^{(a)} - T^{(a)}|| \leq 2\textbf{u}(nB)^{\frac{3}{2}}  \text{ and } ||S^{(b)} - T^{(b)}|| \leq 2\textbf{u}(nB)^{\frac{3}{2}}.
\end{equation}
\end{claim}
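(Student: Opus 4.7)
The plan is to combine Lemma \ref{lem:errstep1} with the two size bounds available to us: one on the norms of $a,b$, and one on the Frobenius norm of $T$. Since the claim follows by simple substitution, this is a routine verification rather than a genuine technical obstacle.

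First I would bound $||a||$ and $||b||$. By construction, $a,b$ are drawn from the grid $G_{\eta} \subset [-1,1)^{n}$, so each coordinate has absolute value at most $1$, and hence $||a||,||b|| \leq \sqrt{n}$. Next I would bound the tensor norm: since $T$ is diagonalisable with $\kappa(T) \leq B$, Lemma \ref{lem:normtensorbound} yields $||T||_F \leq \kappa(T)^{3/2} \leq B^{3/2}$.

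Finally I would plug these two estimates into the conclusion of Lemma \ref{lem:errstep1}, which gives $||S^{(a)} - T^{(a)}|| \leq 2n\textbf{u}\,||a||\,||T||_F$ (and analogously for $b$). Substituting $||a|| \leq \sqrt{n}$ and $||T||_F \leq B^{3/2}$ produces
\[
||S^{(a)} - T^{(a)}|| \leq 2n\textbf{u} \cdot \sqrt{n} \cdot B^{3/2} = 2\textbf{u}\,(nB)^{3/2},
\]
and the same computation works verbatim for $S^{(b)} - T^{(b)}$. The only implicit hypothesis used from Lemma \ref{lem:errstep1} is $n\textbf{u} < 1$, which holds under the precision regime assumed throughout Section~\ref{sec:complete}. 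There is no real obstacle here; the only point worth being careful about is that the grid $G_{\eta}$ is contained in $[-1,1)^{n}$ so the $\ell^{\infty}$ bound on coordinates is $1$ and not $1+\eta$.
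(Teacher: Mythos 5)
Your proof is correct and follows essentially the same route as the paper: bound $||a||,||b||\leq\sqrt{n}$ from the grid, bound $||T||_F\leq B^{3/2}$ via Lemma \ref{lem:normtensorbound}, and substitute into the estimate of Lemma \ref{lem:errstep1} (under the standing hypothesis $n\textbf{u}<1$). Nothing further is needed.
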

\begin{proof}
As in Lemma \ref{lem:normtensorbound}, we only need to prove the first inequality. Since $||a|| \leq \sqrt{n}$,  it follows from (\ref{eq:errstep1a})  that 
\begin{equation}\label{eq:normflTa}
    ||S^{(a)} - T^{(a)}|| \leq 2\textbf{u}n^{\frac{3}{2}}||T||_F \leq 2\textbf{u}(nB)^{\frac{3}{2}}.
\end{equation}
\end{proof}
\subsubsection{Error for Step 2}
The error in Step 2 stems from two facts. Firstly, the input has some error due to Step 1 and secondly, in finite arithmetic the inverse of a matrix cannot be computed exactly. 
In this section, we give a bound on the error from Step~2 taking into account these two sources of error. This will be the template for the analysis of the other steps of the algorithm. The following is the main result of this subsection.
\begin{theorem}\label{thm:epsilon2}
Let $k_F = c_Fn^5B^3$ as defined in Algorithm \ref{algo:Jennrich} and assume that $\kappa_F(T^{(a)}) < k_F$. Assume also that the algorithm is run on a floating point machine with the number of bits of precision $\log(\frac{1}{\textbf{u}}) > c\log^4{nB}$ where $c = 2\max\{\log(4\sqrt{c_F}),4\}$. Then the output at the end of Step 2 has error
\begin{equation}
    \varepsilon_2 \leq (nB)^{C_2\log n}.\textbf{u}
\end{equation}
for some constant $C_2$.
\end{theorem}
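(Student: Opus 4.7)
The plan is to bound the deviation of the computed output $S^{(a)'} = \text{INV}(S^{(a)})$ from the ideal output $(T^{(a)})^{-1}$ by splitting, via the triangle inequality,
\[
\|S^{(a)'} - (T^{(a)})^{-1}\| \;\leq\; \underbrace{\|\text{INV}(S^{(a)}) - (S^{(a)})^{-1}\|}_{\text{algorithmic error}} \;+\; \underbrace{\|(S^{(a)})^{-1} - (T^{(a)})^{-1}\|}_{\text{input-propagation error}}.
\]
The input-propagation error is controlled using the Step~1 bound \eqref{eq:sata}, which gives $\|S^{(a)} - T^{(a)}\| \leq 2\textbf{u}(nB)^{3/2}$, while the algorithmic error is controlled using the stable inversion guarantee of Definition~\ref{def:INValg} together with Theorem~\ref{thm:fastlinearalgebra}.

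First I would exploit the hypothesis $\kappa_F(T^{(a)}) < k_F = c_F n^5 B^3$ to obtain the uniform bounds $\|T^{(a)}\| \leq \|T^{(a)}\|_F \leq \sqrt{k_F}$ and $\|(T^{(a)})^{-1}\| \leq \sqrt{k_F}$, hence $\kappa(T^{(a)}) \leq k_F/2$ (by AM-GM applied to the two Frobenius terms of $\kappa_F$). Next I would plug this into the standard perturbation identity $(S^{(a)})^{-1} - (T^{(a)})^{-1} = (T^{(a)})^{-1}(T^{(a)} - S^{(a)})(S^{(a)})^{-1}$ and apply the Neumann-series type bound $\|(S^{(a)})^{-1}\| \leq 2\|(T^{(a)})^{-1}\|$, which is valid because the precision hypothesis $\log(1/\textbf{u}) > c \log^4(nB)$ with $c \geq 2\log(4\sqrt{c_F})$ ensures $\|(T^{(a)})^{-1}\| \cdot \|S^{(a)} - T^{(a)}\| \leq 2\sqrt{k_F}\,\textbf{u}(nB)^{3/2} \leq 1/2$. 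This yields $\|(S^{(a)})^{-1} - (T^{(a)})^{-1}\| \leq 4\,\textbf{u}(nB)^{3/2} k_F$, which is of the form $(nB)^{O(1)}\textbf{u}$.

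For the algorithmic error, I would first observe that the above gives $\kappa(S^{(a)}) \leq 2\|(S^{(a)})^{-1}\|\cdot \|S^{(a)}\| \leq 8 k_F \leq 8c_F n^5 B^3$. Applying Definition~\ref{def:INValg} with $A = S^{(a)}$,
\[
\|\text{INV}(S^{(a)}) - (S^{(a)})^{-1}\| \;\leq\; \mu_{\text{INV}}(n)\cdot\textbf{u}\cdot(8c_F n^5 B^3)^{c_{\text{INV}}\log n}\cdot \|(S^{(a)})^{-1}\|,
\]
and using $\mu_{\text{INV}}(n) \leq O(n^{c_\eta + \log 10})$, $c_{\text{INV}} \leq 8$ from Theorem~\ref{thm:fastlinearalgebra}, and $\|(S^{(a)})^{-1}\|\leq 2\sqrt{k_F}$, this is absorbed into a bound of the form $(nB)^{C_2\log n}\textbf{u}$ for a sufficiently large absolute constant $C_2$.

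Combining the two bounds through the triangle inequality and observing that the algorithmic error dominates (the input-propagation error is only $(nB)^{O(1)}\textbf{u}$), we obtain $\epsilon_2 \leq (nB)^{C_2\log n}\textbf{u}$. The main obstacle is not conceptual but bookkeeping: one must check that the precision hypothesis on $c$ makes $\textbf{u}$ small enough both (i)~to keep $S^{(a)}$ invertible and well-conditioned so that $\text{INV}$ satisfies its guarantee, and (ii)~to keep the Neumann-series bound valid. Both constraints reduce to $\textbf{u} \leq 1/(16\sqrt{c_F}\,(nB)^{3/2}\cdot n^{5/2}B^{3/2})$, which is easily implied by $\log(1/\textbf{u}) > c\log^4(nB)$ with the stated choice of $c$.
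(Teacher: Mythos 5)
Your proposal is correct and follows essentially the same route as the paper: the same triangle-inequality split into the inversion-algorithm error and the propagated Step-1 error, the same Neumann-series perturbation bound (the paper's Lemma~\ref{lem:nearconditionnumber} and Lemma~\ref{lem:errstep2}, giving $\epsilon_{21}\leq 4\textbf{u}(nB)^{3/2}k_F$), and the same conditioning bound $\kappa_F(S^{(a)})\leq 8k_F$ (the paper's Claim~\ref{claim:kappasa}) fed into the stable-inversion guarantee of Definition~\ref{def:INValg} and Theorem~\ref{thm:fastlinearalgebra}. The only cosmetic difference is that you state the INV bound with the operator-norm condition number $\kappa(S^{(a)})$ while the paper absorbs it into $\kappa_F(S^{(a)})$, which is equivalent up to trivial inequalities.
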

In this direction,  we first prove the following theorem: if $A$ is a matrix with bounded $\kappa_F$ and $A'$ is another matrix which is close to $A$, then $(A')^{-1}$ is also close to $A^{-1}$.
\begin{lemma}\label{lem:nearconditionnumber}
Let $A \in M_n(\mathbb{C})$ be such that $\kappa_F(A) \leq K < \infty$ . Define $A' \in M_n(\mathbb{C})$ as $A' = A + \Delta$ where $||\Delta|| \leq M$ and $M\sqrt{K} < 1$. Then $A'$ is invertible and
\begin{align*}
    ||(A')^{-1} - A^{-1}|| \leq \frac{MK}{1 - M\sqrt{K}}
\end{align*}
\end{lemma}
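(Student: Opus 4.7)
The plan is to treat $A'$ as a perturbation of $A$ and exploit a Neumann series expansion. First I would note that the hypothesis $\kappa_F(A) \leq K$ forces $A$ to be invertible (otherwise $\|A^{-1}\|_F$ would be infinite), and moreover gives the operator norm bound $\|A^{-1}\| \leq \|A^{-1}\|_F \leq \sqrt{K}$.

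The key step is to factor $A' = A(I + A^{-1}\Delta)$. Setting $X := A^{-1}\Delta$, we have $\|X\| \leq \|A^{-1}\|\,\|\Delta\| \leq M\sqrt{K} < 1$, so $I+X$ is invertible with $\|(I+X)^{-1}\| \leq 1/(1-\|X\|) \leq 1/(1-M\sqrt{K})$ via the standard Neumann series argument. This proves $A'$ is invertible, and gives the formula $(A')^{-1} = (I+X)^{-1}A^{-1}$.

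Next I would compute the difference. From $(I+X)^{-1} - I = -X(I+X)^{-1}$ (or equivalently from the identity $(A')^{-1} - A^{-1} = -(A')^{-1}\Delta A^{-1}$, obtained by multiplying $A' - A = \Delta$ on the left by $(A')^{-1}$ and on the right by $A^{-1}$), we get
\begin{equation*}
(A')^{-1} - A^{-1} = -A^{-1}\Delta(I + A^{-1}\Delta)^{-1} A^{-1}.
\end{equation*}
Taking operator norms and chaining the three bounds above yields
\begin{equation*}
\|(A')^{-1} - A^{-1}\| \leq \|A^{-1}\|^2 \cdot \|\Delta\| \cdot \|(I+X)^{-1}\| \leq \frac{K \cdot M}{1 - M\sqrt{K}},
\end{equation*}
which is exactly the claimed bound.

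There is no real obstacle here: this is the standard Banach-lemma-style perturbation estimate for matrix inversion, and the only mildly non-standard feature is the use of $\kappa_F$ (rather than $\|A^{-1}\|$ or $\kappa(A)$ directly) to control $\|A^{-1}\|$, which is handled by the trivial inequality $\|A^{-1}\| \leq \|A^{-1}\|_F \leq \sqrt{\kappa_F(A)}$.
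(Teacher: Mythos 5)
Your proof is correct and follows essentially the same route as the paper: both factor $A' = A(I + A^{-1}\Delta)$, use $\|A^{-1}\| \leq \|A^{-1}\|_F \leq \sqrt{K}$, and invoke the Neumann series to control $(I + A^{-1}\Delta)^{-1}$, the only cosmetic difference being that you use the identity $(I+X)^{-1} - I = -X(I+X)^{-1}$ where the paper sums the tail of the series directly. Both yield the same bound $\frac{MK}{1 - M\sqrt{K}}$.
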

\begin{proof}
We first use the fact that for any matrix $B \in M_n(\C)$, if $||B|| < 1$, $I+B$ is invertible and 
\begin{equation}\label{eq:i+b}
    (I+ B)^{-1} = \sum_{i=0}^{\infty} (-1)^i B^i.
\end{equation}
Since $A'  = A(I + A^{-1}\Delta)$ where $||\Delta|| \leq M$ and $||A^{-1}|| \leq ||A^{-1}||_F \leq \sqrt{\kappa_F(A)} \leq \sqrt{K}$, we have that $||A^{-1}\Delta|| \leq ||A^{-1}||||\Delta|| \leq M\sqrt{K} < 1$. This shows that $A'$ is invertible, hence $(A')^{-1} = (I+ A^{-1}\Delta)^{-1}A^{-1}$. Now, we can use (\ref{eq:i+b}) for $B = A^{-1}\Delta$ and apply the  triangle inequality to get that
\begin{align*}
    ||(A')^{-1} - A^{-1}|| &= ||(I+ A^{-1}\Delta)^{-1}A^{-1} - A^{-1}|| \\
    &\leq ||A^{-1}||||(I+ A^{-1}\Delta)^{-1} - I|| \\
    &\leq ||A^{-1}|| ||\Big(\sum_{i=1}^{\infty} ||A^{-1}\Delta||^i  \Big).
\end{align*}
Hence, we can finally conclude that
\begin{align*}
    ||(A')^{-1} - A^{-1}|| \leq  \frac{MK}{1 - M\sqrt{K}}.
\end{align*}
\end{proof}
If the algorithm was run in exact arithmetic, ideally the output at the end of step 2 would be $(T^{(a)})^{-1}$. In the next lemma we  show that the output $S^{(a)'}$ at the end of step 2 is close to $(T^{(a)})^{-1}$.
\begin{lemma}\label{lem:errstep2}
Let $S^{(a)}$ be the input to Step 2 of Algorithm \ref{algo:Jennrich} run on a floating point machine with machine precision $\textbf{u}$. Let $||S^{(a)} - T^{(a)}|| \leq \varepsilon_1$ be the error from Step 1. We also assume that $\kappa_F(T^{(a)}) \leq~k_F$. Let $S^{(a)'}$ be the output of Step 2. Then
\begin{equation}\label{eq:satainv}
    ||S^{(a)'} - T^{(a)})^{-1}|| \leq \varepsilon_{21} + \gamma' =: \varepsilon_2
\end{equation}
where $\varepsilon_{21} := \frac{\varepsilon_1 k_F}{1 - \varepsilon_1\sqrt{k_F}}$ and $\gamma' := n^{c_{\eta} + \log 10} \cdot \textbf{u} \cdot (\kappa_F(S^{(a)}))^{c_{\text{INV}}\log n + 1}$ is the error for fast and stable matrix inversion as in Definition \ref{def:INValg} and Theorem~\ref{thm:fastlinearalgebra}.
\end{lemma}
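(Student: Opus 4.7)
The plan is to split the total error by the triangle inequality into the two contributions mentioned in Section~\ref{sec:overview}: (i) the error that $S^{(a)}$ already carries as input to Step~2 (propagated from Step~1), and (ii) the additional error introduced by the finite-precision matrix inversion itself. More precisely, write
\[
\|S^{(a)'} - (T^{(a)})^{-1}\| \;\leq\; \|S^{(a)'} - (S^{(a)})^{-1}\| \;+\; \|(S^{(a)})^{-1} - (T^{(a)})^{-1}\|,
\]
and bound the two terms separately.

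For the first (in-step) term, the bound $\gamma'$ comes directly from applying Theorem~\ref{thm:fastlinearalgebra} to the input matrix $S^{(a)}$: the stable inversion routine $\text{INV}$ yields
\[
\|S^{(a)'} - (S^{(a)})^{-1}\| \;\leq\; \mu_{\text{INV}}(n)\cdot \textbf{u}\cdot (\kappa(S^{(a)}))^{c_{\text{INV}}\log n}\,\|(S^{(a)})^{-1}\|,
\]
with $\mu_{\text{INV}}(n)\le O(\mu_{\text{MM}}(n)\,n^{\log 10}) = O(n^{c_\eta + \log 10})$. To land on the clean expression of $\gamma'$, I would then use the elementary estimates $\kappa(S^{(a)})\le \kappa_F(S^{(a)})$ (since $\|A\|\le\|A\|_F$) and $\|(S^{(a)})^{-1}\|\le \|(S^{(a)})^{-1}\|_F \le \sqrt{\kappa_F(S^{(a)})}$, which fold the norm of the inverse into the exponent and yield the factor $(\kappa_F(S^{(a)}))^{c_{\text{INV}}\log n + 1}$.

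For the second (input) term, I would apply Lemma~\ref{lem:nearconditionnumber} with $A := T^{(a)}$, $A' := S^{(a)}$, $\Delta := S^{(a)} - T^{(a)}$, $M := \epsilon_1$, and $K := k_F$. The hypothesis $\kappa_F(T^{(a)})\le k_F$ is exactly the one required, and the hypothesis $M\sqrt{K}<1$, i.e.\ $\epsilon_1\sqrt{k_F}<1$, is implicit in Lemma~\ref{lem:errstep2} (it will be verified in the ambient proof by using the precision assumption $\log(1/\textbf{u})>c\log^4(nB)$ from Theorem~\ref{thm:epsilon2}, together with the bound on $\epsilon_1$ from~(\ref{eq:sata})). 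The conclusion of Lemma~\ref{lem:nearconditionnumber} then gives
\[
\|(S^{(a)})^{-1} - (T^{(a)})^{-1}\| \;\leq\; \frac{\epsilon_1 k_F}{1-\epsilon_1\sqrt{k_F}} \;=\; \epsilon_{21}.
\]

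There is no real obstacle here; the lemma is mostly bookkeeping. The only point that needs a little care is the manipulation turning the spectral condition number $\kappa(S^{(a)})$ and the operator norm $\|(S^{(a)})^{-1}\|$ in the $\text{INV}$ guarantee into the Frobenius-based quantity $\kappa_F(S^{(a)})$ appearing in $\gamma'$; this is exactly what makes the bookkeeping match the form stated in the lemma, and it is important for the subsequent analysis because later steps (e.g.\ Step~3 via matrix multiplication) will also be expressed in terms of $\kappa_F$.
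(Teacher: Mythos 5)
Your proposal is correct and follows essentially the same route as the paper: bound $\|(S^{(a)})^{-1}-(T^{(a)})^{-1}\|$ by applying Lemma~\ref{lem:nearconditionnumber} to $T^{(a)}$ with $M=\epsilon_1$, $K=k_F$, bound $\|S^{(a)'}-(S^{(a)})^{-1}\|$ by the guarantee of the stable inversion algorithm (Definition~\ref{def:INValg}, Theorem~\ref{thm:fastlinearalgebra}), and combine by the triangle inequality. Your extra remarks (the conversion of $\kappa(S^{(a)})$ and $\|(S^{(a)})^{-1}\|$ into the $\kappa_F$-based form of $\gamma'$, and the implicit condition $\epsilon_1\sqrt{k_F}<1$, verified later from the precision assumption) just make explicit bookkeeping that the paper leaves implicit in the definition of $\gamma'$ and in the subsequent analysis.
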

\begin{proof}
First, we show that $(S^{(a)})^{-1}$ is close to $(T^{(a)})^{-1}$ as well. Applying Lemma \ref{lem:nearconditionnumber} to $T^{(a)}$, we see that
\begin{align*}
    ||(S^{(a)})^{-1} - (T^{(a)})^{-1}|| \leq \frac{\varepsilon_1 k_F}{1 - \varepsilon_1\sqrt{k_F}}.
\end{align*}
Since $S^{(a)'} = INV(S^{(a)})$, by Theorem \ref{thm:fastlinearalgebra} we also have 
\begin{equation}\label{eq:sa'sainv}
    ||S^{(a)'} - (S^{(a)})^{-1}|| \leq~\gamma'.
\end{equation}
Combining these two equation with the triangle inequality, we get the desired result.
\end{proof}
The next part of the analysis is aimed at giving a bound for $\varepsilon_2$ in terms of the input parameters.
First, applying Theorem \ref{lem:errstep2} to $T^{(a)}$ along with the bounds of $\varepsilon_1$ from (\ref{eq:normflTa}), we have 
\begin{equation}\label{eq:normflTainv}
    \varepsilon_{21} \leq  \frac{2\textbf{u}(nB)^{\frac{3}{2}}k_F}{1 - 2\textbf{u}(nB)^{\frac{3}{2}}\sqrt{k_F}} .
\end{equation}
Now we try to give a bound on $\gamma'$. For this, we first  show that $\kappa_F(S^{(a)})$ is bounded by some constant time $\kappa_F(T^{(a)})$.
\begin{claim}\label{claim:kappasa}
Let $k_F := C_Fn^5B^3$ as defined in Algorithm \ref{algo:Jennrich}. Let the bits of precision of the floating point machine be $\log(\frac{1}{\textbf{u}}) > c\log(nB)$ where $c = 2\max\{\log(4\sqrt{c_F}),4\}$. Then $\kappa_F(S^{(a)}) \leq 8k_F$.
\end{claim}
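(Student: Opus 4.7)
The plan is to bound $\|S^{(a)}\|_F$ and $\|(S^{(a)})^{-1}\|_F$ separately, showing each is at most $2\sqrt{k_F}$; squaring and adding then yields $\kappa_F(S^{(a)}) \leq 8k_F$. The first bound is essentially immediate from the error estimate already in hand, while the second uses Lemma~\ref{lem:nearconditionnumber} applied to the perturbation $\Delta = S^{(a)} - T^{(a)}$.

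First I would bound $\|S^{(a)}\|_F$. The proof of Lemma~\ref{lem:errstep1} (see the displayed inequality for $\|S^{(a)} - T^{(a)}\|_F$) actually gives the Frobenius-norm bound $\|S^{(a)} - T^{(a)}\|_F \leq 2\textbf{u}(nB)^{3/2}$, not merely the operator-norm bound~(\ref{eq:sata}). Combined with $\|T^{(a)}\|_F \leq \sqrt{\kappa_F(T^{(a)})} \leq \sqrt{k_F}$ and the triangle inequality, this gives $\|S^{(a)}\|_F \leq \sqrt{k_F} + 2\textbf{u}(nB)^{3/2}$. Since $\sqrt{k_F} = \sqrt{c_F}\,n^{5/2}B^{3/2}$, for any $\textbf{u} < 1$ the perturbation term is trivially dominated by $\sqrt{k_F}$, so $\|S^{(a)}\|_F \leq 2\sqrt{k_F}$.

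Second, for the inverse, I would apply Lemma~\ref{lem:nearconditionnumber} with $A = T^{(a)}$, $A' = S^{(a)}$, $K = k_F$, and $M = 2\textbf{u}(nB)^{3/2}$. The key verification is that $M\sqrt{K} < 1$; more strongly, I will check $M\sqrt{K} \leq \tfrac{1}{2}$, which reduces to the inequality $2\sqrt{c_F}\,\textbf{u}\, n^4 B^3 \leq \tfrac{1}{2}$, i.e.\ $\log(1/\textbf{u}) \geq \log(4\sqrt{c_F}) + 4\log n + 3\log B$. This is implied by the precision hypothesis $\log(1/\textbf{u}) > c\log(nB)$ with $c = 2\max\{\log(4\sqrt{c_F}), 4\}$ once $nB$ is at least an absolute constant. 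Lemma~\ref{lem:nearconditionnumber} then yields $\|(S^{(a)})^{-1} - (T^{(a)})^{-1}\| \leq 2Mk_F$; converting to Frobenius norm via $\|\cdot\|_F \leq \sqrt{n}\|\cdot\|$ on $n\times n$ matrices gives $\|(S^{(a)})^{-1} - (T^{(a)})^{-1}\|_F \leq 2\sqrt{n}\,M k_F = 4c_F\textbf{u}\,n^{7}B^{9/2}$. A second and similar verification, again using the precision hypothesis, shows this is at most $\sqrt{k_F}$, so $\|(S^{(a)})^{-1}\|_F \leq \|(T^{(a)})^{-1}\|_F + \sqrt{k_F} \leq 2\sqrt{k_F}$.

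Putting the two bounds together gives $\kappa_F(S^{(a)}) = \|S^{(a)}\|_F^2 + \|(S^{(a)})^{-1}\|_F^2 \leq 4k_F + 4k_F = 8k_F$, which is the claim. The main (and only) obstacle is the bookkeeping: one has to check that every polynomial factor in $n$ and $B$ arising from the perturbation estimates is dominated by $\textbf{u}^{-1}$ under the stated precision. This is the reason the constant $c$ in the hypothesis is chosen as $2\max\{\log(4\sqrt{c_F}), 4\}$ rather than an arbitrary absolute constant.
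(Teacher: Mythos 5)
Your proposal is correct and follows essentially the same route as the paper: bound $\|S^{(a)}\|_F$ and $\|(S^{(a)})^{-1}\|_F$ each by $2\sqrt{k_F}$, using the Step-1 error estimate together with Lemma~\ref{lem:nearconditionnumber} (which the paper invokes through the quantity $\epsilon_{21}$ of Lemma~\ref{lem:errstep2}) and the precision hypothesis in the form $4\sqrt{c_F}\,n^4B^3 \leq 1/\textbf{u}$. The only real difference is that you insert the explicit $\sqrt{n}$ operator-to-Frobenius conversion for the inverse, which the paper's displayed bound silently omits; the extra half power of $n$ is harmlessly absorbed by the same precision hypothesis (for $nB$ above an absolute constant), so your version is, if anything, slightly more careful.
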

\begin{proof}
Using the fact that $k_F = C_Fn^5B^3$, we will show that $4n^{\frac{3}{2}}B^{\frac{3}{2}}\sqrt{k_F} = 4\sqrt{c_F}n^4B^3 \leq \frac{1}{u}.$ Since $\log(\frac{1}{\textbf{u}}) > c\log(nB) > \log(4\sqrt{c_F}) + 4 \log(n) + 3\log(B)$, we have
\begin{equation}\label{eq:firstone}
    4n^{\frac{3}{2}}B^{\frac{3}{2}}\sqrt{k_F} = 4\sqrt{c_F}n^4B^3 \leq \frac{1}{u}.
\end{equation}
Applying that to (\ref{eq:normflTainv}) shows that 
\begin{equation}\label{eq:satainvred}
     \varepsilon_{21} \leq  4u(nB)^{\frac{3}{2}}k_F.
\end{equation}
By (\ref{eq:normflTa}) and (\ref{eq:firstone}) along with the hypothesis that $\kappa_F(T^{(a)})) \leq k_F$,
\begin{align*}
    \kappa_F(S^{(a)}) &= ||S^{(a)}||_F^2 + || (S^{(a)})^{-1} ||_F^2 \\
    &\leq ( 2\textbf{u}(nB)^{\frac{3}{2}} + \sqrt{k_F} )^2 + (4u(nB)^{\frac{3}{2}}k_F + \sqrt{k_F})^2.
\end{align*}
Now,  using again (\ref{eq:firstone}) we  have 
$\textbf{u}(nB)^{\frac{3}{2}}\sqrt{k_F} \leq \frac{1}{4}$. Since $c_F > 1$ as mentioned at the beginning of the algorithm, this implies $k_F > 1$. But  $\textbf{u} < \frac{1}{(nB)^c} < \frac{1}{n^4B^4}$, 
so 
$2\textbf{u}(nB)^{\frac{3}{2}} < 1$. Thus, we can finally conclude that $\kappa_F(S^{(a)}) \leq 8k_F$. 
\end{proof}
The above claim implies that
\begin{equation}
    \gamma' \leq n^{c_{\eta} + \log 10} \cdot \textbf{u} \cdot (8k_F)^{c_{\text{INV}}\log n + 1}.
\end{equation}
Using the fact that $k_F := c_Fn^5B^3$ we can then conclude that
\begin{equation}
\begin{split}\label{eq:gamma'}
        \gamma' &\leq 8c_F\Big(n^{c_{\eta} + \log(10) + 8\log(8c_F) +5(c_{\text{INV}}\log(n) + 1)}B^{3(c_{\text{INV}}\log(n) + 1)}\Big).\textbf{u} \\
        &\leq (nB)^{C\log n}.\textbf{u}.
\end{split}
\end{equation}
where $C$ is some suitable constant.

\begin{proof}[Proof of Theorem \ref{thm:epsilon2}]
From (\ref{eq:gamma'}), we know that there exists a constant $C$ such that $\gamma' \leq (nB)^{C\log n}.\textbf{u}$. Similarly from (\ref{eq:satainvred}), we get that $\varepsilon_{21} \leq 4u(nB)^{\frac{3}{2}}k_F = 4\sqrt{c_F}n^4B^3$. Combining this, we can conclude that $\varepsilon_2 = \varepsilon_{21} + \gamma' \leq (nB)^{C_2\log n}.\textbf{u}$ for some suitable chosen constant $C_2$.
\end{proof}
\subsubsection{Error for Step 3}
\begin{lemma}
Let $S^{(a)'}$ and $S^{(b)}$ be the input to Step 3 of Algorithm \ref{algo:Jennrich} run on a floating point machine with machine precision $\textbf{u}$. Let $||S^{(a)'} - (T^{(a)})^{-1}|| \leq \varepsilon_2$ be the error from Step 2 and $||S^{(b)} - T^{(b)}|| \leq \varepsilon_1$ be the error from Step 1. We assume that $\kappa_F(T^{(a)}) \leq k_F$ and $||T^{(b)}|| \leq k_b$. Let $D := MM(S^{(a)'},S^{(b)})$ be the output of Step 3 where $MM$ is the fast and stable matrix multiplication algorithm as in Definition \ref{def:MULTalg} and Theorem \ref{thm:fastlinearalgebra}. Then
\begin{align*}
    ||D - (T^{(a)})^{-1}T^{(b)}|| \leq (\varepsilon_2 + \sqrt{k_F})\varepsilon_1 + k_{b}\varepsilon_2 + \gamma_3 =: \varepsilon_3
\end{align*}
where $\gamma_3 := n^{c_{\eta}}\cdot \textbf{u} \cdot ||S^{(a)'}|| ||S^{(b)}||$ is the error for $MM$ on inputs $S^{(a)'}$ and $S^{(b)}$.
\end{lemma}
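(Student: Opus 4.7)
The plan is to apply the triangle inequality after a two-step telescoping that separates the three sources of error: the matrix multiplication roundoff, the error in the approximate inverse $S^{(a)'}$ coming from Step 2, and the error in $S^{(b)}$ coming from Step 1. Concretely, I would write
\begin{align*}
D - (T^{(a)})^{-1}T^{(b)} &= \bigl(D - S^{(a)'}S^{(b)}\bigr) \\
&\quad + S^{(a)'}\bigl(S^{(b)} - T^{(b)}\bigr) \\
&\quad + \bigl(S^{(a)'} - (T^{(a)})^{-1}\bigr) T^{(b)}
\end{align*}
and bound each of the three summands in turn.

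For the first summand, the stable matrix multiplication guarantee from Definition~\ref{def:MULTalg} and Theorem~\ref{thm:fastlinearalgebra} applied to inputs $S^{(a)'}, S^{(b)}$ gives exactly $\gamma_3 = n^{c_\eta}\cdot \mathbf{u} \cdot \|S^{(a)'}\|\,\|S^{(b)}\|$ (the displayed definition in the lemma uses $n^{c_\eta}$ since $\mu_{\text{MM}}(n) = n^{c_\eta}$ in the notation of Theorem~\ref{thm:fastlinearalgebra}). For the second summand, I would use submultiplicativity together with the bound $\|S^{(b)} - T^{(b)}\| \leq \epsilon_1$ inherited from Step 1, and control $\|S^{(a)'}\|$ via the triangle inequality and the Step 2 bound:
\begin{equation*}
\|S^{(a)'}\| \leq \|(T^{(a)})^{-1}\| + \epsilon_2 \leq \sqrt{k_F} + \epsilon_2,
\end{equation*}
where I am using that $\|(T^{(a)})^{-1}\| \leq \|(T^{(a)})^{-1}\|_F \leq \sqrt{\kappa_F(T^{(a)})} \leq \sqrt{k_F}$. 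This yields $\|S^{(a)'}(S^{(b)}-T^{(b)})\| \leq (\sqrt{k_F}+\epsilon_2)\,\epsilon_1$. For the third summand, the Step 2 bound gives $\|S^{(a)'} - (T^{(a)})^{-1}\| \leq \epsilon_2$, and combined with the hypothesis $\|T^{(b)}\| \leq k_b$ this contributes $k_b\,\epsilon_2$.

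Summing the three bounds produces
\begin{equation*}
\|D - (T^{(a)})^{-1}T^{(b)}\| \leq \gamma_3 + (\sqrt{k_F}+\epsilon_2)\,\epsilon_1 + k_b\,\epsilon_2,
\end{equation*}
which is exactly the quantity $\epsilon_3$ defined in the lemma statement. There is no serious obstacle here; the only subtlety is to choose the right telescoping so that $\|S^{(a)'}\|$ appears multiplying the small Step 1 error $\epsilon_1$ rather than the larger $\|T^{(b)}\|$, which is what keeps the bound clean. The grouping above is the natural one that matches the claimed formula.
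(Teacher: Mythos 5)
Your proposal is correct and follows essentially the same route as the paper: the same telescoping into the roundoff term $D - S^{(a)'}S^{(b)}$, the term $S^{(a)'}(S^{(b)}-T^{(b)})$, and the term $(S^{(a)'}-(T^{(a)})^{-1})T^{(b)}$, with $\|S^{(a)'}\|$ bounded by $\epsilon_2+\sqrt{k_F}$ exactly as in the paper's argument. No gaps.
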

\begin{proof}
We use the second inequality of Lemma~\ref{lem:errstep1}. Multiplying both sides 
by $||S^{(a)'}||$ and using the fact that $||(T^{(a)})^{-1}|| \leq \sqrt{k_F}$, we have 
\begin{align*}
    ||S^{(a)'}S^{(b)} - S^{(a)'}T^{(b)} || \leq ||S^{(a)'}||\varepsilon_1 \leq (\varepsilon_2 + \sqrt{k_F})\varepsilon_1.
\end{align*}
Multiplying both sides of (\ref{eq:satainv}) by $||T^{(b)}||$ gives us that
\begin{align*}
    ||S^{(a)'}T^{(b)} - (T^{(a)})^{-1}T^{(b)} || \leq ||T^{(b)}||\varepsilon_2 \leq k_b \varepsilon_2.
\end{align*}
Since $D := \text{MM}(S^{(a)'},S^{(b)})$, using Theorem \ref{thm:fastlinearalgebra}, we already have  $||D - S^{(a)'}S^{(b)}|| \leq~\gamma_3$.
Combining all of these together using the triangle inequality, we conclude that
\begin{align*}
        ||D - (T^{(a)})^{-1}T^{(b)}|| \leq  k_{b}\varepsilon_2 + (\varepsilon_2 + \sqrt{k_F})\varepsilon_1 + \gamma_3 =: \varepsilon_3.
\end{align*}
\end{proof}
Now we want to bound the error $\varepsilon_3$ from Steps 1-3 of the Algorithm. 
\newline
\underline{\textbf{Bounding } $\gamma_1 = k_b \varepsilon_2$:}
\par
First, we show that $k_b \leq \sqrt{nB^3}$. Since $T$ is a diagonalisable tensor, there exist linearly independent vectors $u_i \in \C^n$ such that $T = \sum_{i=1}^n u_i^{\otimes 3}$. Let $U \in \text{GL}_n(\C)$ be the matrix with rows $u_1,...,u_n$. Using Corollary \ref{corr:P3structural}, we get that $T^{(b)} = \sum_{i=1}^n b_iT_i = U^TD^{(b)}U$ where $D^{(b)} = \text{diag}(\langle b,u_1\rangle,...,\langle b,u_n\rangle)$. Then $||T^{(b)}|| \leq ||U||^2 ||D^{(b)}||$. Now $$||D^{(b)}|| \leq ||D^{(b)}||_F \leq \sqrt{\sum_{k \in [n]}|\langle b,u_i \rangle|^2}.$$ A similar proof applies to $||T^{(a)}||$ as well. The following lemma helps us give a bound for $||D^{(a)}||$ and $||D^{(b)}||$.
\begin{lemma}\label{lem:normnum}
Let $U = (u_{ij}) \in M_n(\mathbb{\C})$ be such that $\kappa_F(U) \leq B$. Then, given $\textbf{a} \in [-1,1]^n$, $\sum_{k \in [n]}|\langle a,u_k \rangle|^2 \leq nB$.
\end{lemma}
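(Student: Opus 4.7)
The plan is to recognize the sum $\sum_{k \in [n]} |\langle a, u_k \rangle|^2$ as the squared Euclidean norm of the matrix-vector product $Ua$, where $a$ is viewed as a column vector and the $u_k$ are the rows of $U$. Indeed, $(Ua)_k = \sum_j U_{kj} a_j = \sum_j u_{kj} a_j = \langle a, u_k \rangle$, so the identity $\sum_{k=1}^n |\langle a, u_k\rangle|^2 = \|Ua\|^2$ holds.

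Once this identification is made, the rest is a short chain of standard inequalities. First I would apply the submultiplicativity of the operator norm to get $\|Ua\|^2 \leq \|U\|^2 \cdot \|a\|^2$. Next, I would bound $\|U\|$ using the standard inequality $\|U\| \leq \|U\|_F$ (recalled just before Definition~\ref{def:tensornorm}), giving $\|Ua\|^2 \leq \|U\|_F^2 \cdot \|a\|^2$. From the definition $\kappa_F(U) = \|U\|_F^2 + \|U^{-1}\|_F^2$ and the nonnegativity of $\|U^{-1}\|_F^2$, the hypothesis $\kappa_F(U) \leq B$ yields $\|U\|_F^2 \leq B$.

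Finally, the assumption $a \in [-1,1]^n$ gives $\|a\|^2 = \sum_{j=1}^n a_j^2 \leq n$. Combining these bounds produces $\sum_{k \in [n]} |\langle a, u_k \rangle|^2 \leq nB$, as required. There is essentially no obstacle here: the only substantive observation is reading the sum as the squared norm of $Ua$; everything else is a direct application of known bounds on the operator norm, the Frobenius norm, and the definition of $\kappa_F(U)$.
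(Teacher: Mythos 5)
Your proof is correct and essentially the same as the paper's: both arguments reduce the sum to the bound $\|a\|^2\|U\|_F^2 \leq nB$, using $\|U\|_F^2 \leq \kappa_F(U) \leq B$ and $\|a\|^2 \leq n$. The only cosmetic difference is that the paper applies Cauchy--Schwarz termwise to get $\sum_k \|a\|^2\|u_k\|^2 = \|a\|^2\|U\|_F^2$, whereas you pass through $\|Ua\| \leq \|U\|\,\|a\| \leq \|U\|_F\|a\|$; both are one-line standard estimates.
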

\begin{proof}
By the Cauchy-Schwarz inequality, 
\begin{align*}
     \sum_{k \in [n]}|\langle a,u_k \rangle|^2 &\leq \sum_{k \in [n]} ||a||^2||u_k||^2 
     = ||a||^2||U||_F^2.
\end{align*}
Since $\textbf{a} \in [-1,1]^n $, we know that $||a||^2 \leq n$. Hence $ \sum_{k \in [n]}|\langle a,u_k \rangle|^2 \leq nB$.
\end{proof}
This finally gives us that 
\begin{equation}\label{eq:normtatb}
    ||T^{(a)}||,||T^{(b)}||  \leq \sqrt{nB^3}.
\end{equation}
Using Theorem \ref{thm:epsilon2} and (\ref{eq:normtatb}) and setting the number of bits of precision to be 
\begin{equation}\label{eq:u1}
    \log(\frac{1}{\textbf{u}}) > c\log(nB) \text{ where } c > 2\max\{\log(4\sqrt{c_F}),4\}
\end{equation}
we get that
\begin{equation}\label{eq:gamma1}
    \gamma_1 := k_b\varepsilon_2 \leq \sqrt{nB^3}(nB)^{C_2 \log n}. \textbf{u} < (nB)^{C_{31} \log n}. \textbf{u}.
\end{equation}
for some appropriate constant $C_{31}$.
\par
\underline{\textbf{Bounding } $\gamma_2 = (\varepsilon_2 + \sqrt{k_F})\varepsilon_1 $}:
\par
First we claim that $\varepsilon_2 \leq \sqrt{k_F}$.
Here we assume that the number of bits of precision to be 
\begin{equation}\label{eq:u2}
    \log(\frac{1}{\textbf{u}}) > c'\log(nB)\log(n) \text{ where } c > \max\{c,2C_2\}
\end{equation}
Note that $c,C_2$ are the constants from Theorem \ref{thm:epsilon2}. 

By (\ref{eq:u2}), we have $\log(\frac{1}{\textbf{u}}) > C_2\log(nB)\log(n)$. This gives us that $\varepsilon_2 \leq (nB)^{C_2\log(n)}.\textbf{u} < 1$. But since $c_F > 1$ as mentioned in the beginning of Algorithm \ref{algo:Jennrich}, $\sqrt{k_F} > 1$. Hence, 
\begin{equation}\label{eq:epsilon2kf}
    \varepsilon_2 \leq \sqrt{k_F}.
\end{equation}
By (\ref{eq:sata}), 
we have $\varepsilon_1 \leq 2\textbf{u}(nB)^{\frac{3}{2}}$. This shows that
\begin{equation}\label{eq:gamma2}
    \gamma_2 = (\varepsilon_2 + \sqrt{k_F})\varepsilon_1 < 2\sqrt{k_F}\varepsilon_1 < 4\sqrt{k_F}(nB)^{\frac{3}{2}}.\textbf{u} < (nB)^{C_{32}\log(n)}. \textbf{u}
\end{equation}
where $C_{32}$ is some appropriate constant. 
\par
\underline{\textbf{Bounding } $\gamma_3$:}
From Definition \ref{def:MULTalg} and Theorem \ref{thm:fastlinearalgebra}, we get that
\begin{align*}
       ||MM(S^{(a)'},S^{(b)}) - S^{(a)'}S^{(b)} || \leq \gamma_3 := n^{c_{\eta}}\cdot \textbf{u} \cdot ||S^{(a)'}|| ||S^{(b)}||.
\end{align*}
Since $||(T^{(a)})^{-1}|| \leq \sqrt{\kappa_F(T^{(a)})} \leq  \sqrt{k_F}$, by Lemma \ref{lem:errstep2} and (\ref{eq:epsilon2kf}), we have $||S^{(a)'}|| \leq \varepsilon_2 + \sqrt{k_F} < 2\sqrt{k_F}$. 
\par
Using (\ref{eq:normtatb}), we already have that $||T^{(b)}|| \leq \sqrt{nB^3}$.
Now using (\ref{eq:sata}) and the fact that $n\textbf{u}< \frac{1}{2}$, we get that $||S^{(b)}|| \leq 2\textbf{u}(nB)^{\frac{3}{2}} + \sqrt{nB^3} < 2\sqrt{nB^3}$. This shows that 
\begin{equation}\label{eq:gamma3}
    \gamma_3 \leq n^{c_{\eta}}\cdot \textbf{u} \cdot 4\sqrt{k_FnB^3} \leq (nB)^{C_{33}\log(n)}. \textbf{u}
\end{equation}
where $C_{33}$ is some appropriate constant.
Finally, we combine (\ref{eq:gamma1}), (\ref{eq:gamma2}) and (\ref{eq:gamma3}) for the error bounds along with (\ref{eq:u1}) and (\ref{eq:u2}) for the bounds on the number of bits of precision. This gives us that if the number of bits of precision satisifes
\begin{equation}\label{eq:steps1-3u}
    \log(\frac{1}{\textbf{u}}) > c'\log(nB)\log(n),
\end{equation}
then the error at the end of the third step of the algorithm is upper bounded as follows:
\begin{equation}\label{eq:eps3}
    ||MM(S^{(a)'},S^{(b)}) - (T^{(a)})^{-1}T^{(b)}|| =: \varepsilon_3 \leq \gamma_1 + \gamma_2 + \gamma_3 < (nB)^{C_3\log(n)}.\textbf{u},
\end{equation}
where $C_3 = 3\max\{C_{31},C_{32},C_{33}\}$.

\subsection{Proof of Theorem \ref{thm:mainerrorfinal}}\label{sec:finishproof}
\textbf{Running time of the algorithm:} We first analyse each step of the algorithm and deduce its complexity:
\begin{itemize}
    \item Computing $S^{(a)}$ and $S^{(b)}$ in Step 1 requires at most $2n^3$ many arithmetic operations.
    \item From Theorem \ref{thm:fastlinearalgebra}, since $S^{(a)} \in \C^{n \times n}$, applying $INV$ 
    to compute $S^{(a)'}$ in Step 2 requires $O(T_{MM}(n))$ many arithmetic operations.
    \item From Theorem \ref{thm:fastlinearalgebra}, since $S^{(a)'}, S^{(b)} \in \C^{n \times n}$, computing $D = MM(S^{(a)'}, S^{(b)})$ in Step 3 requires $T_{MM}(n)$ arithmetic operations.
    \item In Step 4, we compute $EIG-FWD$ on $(D, \delta, \frac{3nB}{k_{\text{gap}}}, 2B^{\frac{3}{2}}\sqrt{nk_F})$ where $\delta := \frac{\varepsilon^3}{Cn^{\frac{33}{2}}B^{\frac{13}{2}}}$, $k_{\text{gap}} := \frac{1}{C_{\text{gap}}n^6B^3}$  and $k_{F} := c_Fn^5B^3$. 
    The constants~$C$, $C_{\text{gap}}$, $c_F$ will be set in (\ref{eq:C}) and (\ref{eq:cgapcf}) respectively.  By Theorem \ref{thm:errstep4}, 
    Step 4 requires $O(T_{MM}(n)\log^2((\frac{nB}{\varepsilon})^{c_{\text{comp}}})) = O(T_{MM}(n)\log^2(\frac{nB}{\varepsilon}))$ many arithmetic operations where $c_{\text{comp}}$ is some appropriate constant. 
    \item Since $V^{(0)} \in M_n(\C)$, by Theorem \ref{thm:fastlinearalgebra}  computing $C = INV(V^{(0)})$ in Step 5 requires $O(T_{MM}(n))$ many arithmetic operations.
    \item By Theorem \ref{thm:fastcob},  computing $\alpha'_1,...,\alpha'_n$ requires $O(n^3)$ many arithmetic operations. 
    \item Finally, computing $l_1,...,l_n$ in Step 7 requires $O(n^2)$ many arithmetic operations.
\end{itemize}
So the total number of arithmetic operations required for this algorithm is $O(n^3 + T_{MM}(n)\log^2(\frac{nB}{\varepsilon}))$.
\par
\textbf{Precision:} { From (\ref{eq:steps1-3u}), the number of bits of precision required by Steps~1-3 of the Algorithm is  $\log(\frac{1}{\textbf{u}}) > c'\log(nB)\log(n)$ for some constant~$c'$. From Theorem \ref{thm:alg2boundfinalerror}, we also have that the number of bits of precision required by Steps 4-7 is $\log(\frac{1}{\textbf{u}}) = O(\log^4(\frac{nB}{\varepsilon})\log(n))$. Combining these we get the required bound on the number of bits of precision for the algorithm.}

\textbf{Correctness Analysis:} The error analysis in Section \ref{sec:error123} gives us that at the end of Step 3, on input $(T,a,b)$, the algorithm outputs a matrix $D$ which is close to $ (T^{(a)})^{-1}T^{(b)}$. More formally, by (\ref{eq:mubounds}), Step 3 of the Algorithm outputs a matrix $D$ such that $||D - (T^{(a)})^{-1}T^{(b)}|| \leq (nB)^{C_3 \log n}\textbf{u}$ for some constant $C_3$. 
\par
We now use the error analysis for Steps 4-7 that we have postponed to Appendix \ref{app:erroranalysis}. Applying Theorem \ref{thm:alg2boundfinalerror} to $D$, we conclude that if $D$ is given as input to Step~4, then the algorithm indeed returns vectors $l_1,...,l_n$ such that $||\omega_i u_i - l_i|| < \varepsilon$ up to some permutation. 

{
\begin{remark}
In the above complexity analysis we have neglected the cost of sampling 
a point uniformly at random from the grid $G_{\eta}$. Assuming that
we have access only to a source of random bits, this cost can be analyzed as follows. 

Given the value chosen for $\eta$ in the algorithm, we need to sample $2n$ coordinates from a set of size $N=O(n^{17/2}B^4).$
Moreover, we can round $N$ to the next higher power of 2 since 
working with a finer grid does not decrease the probability of success
of our algorithm. Therefore we can just sample independently $\log N$
random bits to obtain one coordinate. The overall cost for the $2n$ coordinates is therefore $O(n \log(nB))$. This is indeed negligible compared
to the complexity bound in Theorem~\ref{thm:mainerrorfinal}.
\end{remark}}
\section{Probability Analysis of Condition Numbers and Gap}\label{subsection:correctnessjennrich}

{ The central theme of this section is to  deduce anti-concentration inequalities about certain  families of polynomials arising in the analysis of Algorithm~\ref{algo:Jennrich}. 
Compared to~\cite{BCMV13},} an interesting novelty of these inequalities is that the underlying distribution for the random variables is discrete and that they are applicable to polynomials from $\R^n$ to $\C$. In Section \ref{sec:normbounds}, we first study some polynomial norms and then prove these results.
\par
Let $T \in (\C^n)^{\otimes 3}$ be a 
diagonalisable tensor given as input to Algorithm~\ref{algo:Jennrich} with $\kappa(T) < B$, and let $T_1,...,T_n$ be the slices of $T$. In the algorithm, we pick $a_1,...,a_n,b_1,...,b_n$ uniformly and independently at random from a finite discrete grid $G_{\eta} \subset [-1,1]^{2n}$ and define $T^{(a)}~= \sum_{i=1}^n a_iT_i$, $T^{(b)} = \sum_{i=1}^n b_iT_i$. In this section we show that  $T^{(a)}$ is invertible, $\text{gap}((T^{(a)})^{-1}T^{(b)}) \geq k_{\text{gap}} := \frac{1}{c_{\text{gap}}n^6B^3} $ and $\kappa_F(T^{(a)}) \leq k_F := c_Fn^5B^3$ with high probability. This is the main result of Section \ref{sec:probanalysisproof}. We also justify the choice of $k_{\text{gap}}$ and $k_F$ and choose appropriate values for $c_{\text{gap}}$ and $c_F$ in Section \ref{sec:finishproofmain}.
As a consequence of this, a central theorem arising out of this section is Theorem \ref{th:main} which concludes the probability analysis of Algorithm \ref{algo:Jennrich}. We state it here again for completeness. 
\begin{theorem}
Given a diagonalisable tensor $T$, a desired accuracy parameter $\varepsilon$ and some estimate $B \geq \kappa(T)$, Algorithm \ref{algo:Jennrich} outputs an $\varepsilon$-approximate solution to the tensor decomposition problem for $T$ in 
$$O(T_{MM}(n)\log^2 \frac{nB}{\varepsilon})$$ arithmetic operations on a floating point machine with 
$$O(\log^4(\frac{nB}{\varepsilon})\log n)$$ bits of precision, with probability at least $(1-\frac{1}{n} - \frac{12}{n^2})(1- \frac{1}{\sqrt{2n}} - \frac{1}{n})$.
\end{theorem}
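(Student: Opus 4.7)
The plan is to combine Theorem~\ref{thm:mainerrorfinal} with a probabilistic analysis showing that the three structural hypotheses appearing in that theorem hold with probability at least $1 - 1/\sqrt{2n} - 1/n$ over the random choice of $(a,b)\in G_\eta$. Since the grid sampling is independent of the internal randomness of the diagonalisation subroutine~EIG (which is responsible for the factor $1-1/n-12/n^2$ in Theorem~\ref{thm:mainerrorfinal}), multiplying the two probabilities yields the claimed bound, while the arithmetic count and bit-precision requirements are inherited verbatim from Theorem~\ref{thm:mainerrorfinal}.

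The three events in question are: (i)~$T^{(a)}$ is invertible; (ii)~$\kappa_F(T^{(a)}) \leq k_F$; and (iii)~$\mathrm{gap}((T^{(a)})^{-1}T^{(b)}) \geq k_{\text{gap}}$. My strategy is to reduce all three to anti-concentration statements about a small collection of low-degree polynomials in $(a,b)$ by exploiting the identity $T^{(a)} = U^T D^{(a)} U$ from Corollary~\ref{corr:P3structural}, where $U$ has rows $u_1,\ldots,u_n$ and $D^{(a)}=\mathrm{diag}(\langle a,u_1\rangle,\ldots,\langle a,u_n\rangle)$. Event~(i) is equivalent to $\prod_i \langle a, u_i\rangle \neq 0$. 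For event~(ii), since $\|(T^{(a)})^{-1}\|_F$ is controlled by $\|U^{-1}\|_F^2 / \min_i |\langle a,u_i\rangle|$ and $\|T^{(a)}\|_F$ is bounded above using $\|U\|_F^2$ and $a\in [-1,1]^n$ as in Lemma~\ref{lem:normnum}, it is enough to lower-bound each $|\langle a, u_i\rangle|$. For event~(iii), formula~(\ref{eq:formulata'tb}) gives $\lambda_i = \langle b,u_i\rangle/\langle a,u_i\rangle$, hence
\[
|\lambda_i - \lambda_j| \; = \; \frac{|P_{ij}(a,b)|}{|\langle a,u_i\rangle|\,|\langle a,u_j\rangle|},
\]
where $P_{ij}$ is the quadratic polynomial introduced in the proof of Theorem~\ref{thm:eigenvalues} and shown there to be nonzero. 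So event~(iii) reduces to upper-bounding $|\langle a,u_i\rangle|,|\langle a,u_j\rangle|$ (easy, since $a\in[-1,1]^n$) and lower-bounding $|P_{ij}(a,b)|$.

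For the anti-concentration step itself I would follow the two-stage recipe sketched in Section~\ref{sec:overview}. First, assuming that $(a,b)$ is drawn uniformly from the continuous cube $[-1,1]^{2n}$, I would apply the Carbery--Wright inequality to each linear form $\langle a,u_i\rangle$ and to each quadratic $P_{ij}$, obtaining tail bounds of the form $\Pr[|\langle a,u_i\rangle|\leq t] = O(t/\|u_i\|)$ and $\Pr[|P_{ij}(a,b)|\leq t] = O((t/\|P_{ij}\|)^{1/2})$; because $P_{ij}$ takes values in $\C$ I would work with $|P_{ij}|^2$ (a real quartic) or split into real and imaginary parts. A union bound over the $n$ linear forms and $\binom{n}{2}$ quadratics, combined with the bounds $\|U\|_F,\|U^{-1}\|_F\leq \sqrt B$, gives the required thresholds $k_F=c_Fn^5B^3$ and $k_{\text{gap}}=1/(C_{\text{gap}}n^6B^3)$ with failure probability $O(1/\sqrt n)$. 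Second, to transfer from the continuous cube to the grid $G_\eta$, I would round a continuous sample to its nearest grid point and control the change in each polynomial value by the multivariate Markov inequality applied to the gradient; choosing $\eta = 1/(C_\eta n^{17/2}B^4)$ makes the rounding error negligible compared to the anti-concentration thresholds, so the discretized probability is only a constant factor worse.

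The main obstacle I expect is the simultaneous calibration of the constants $c_F$, $C_{\text{gap}}$, $C_\eta$ so that (a)~the Carbery--Wright thresholds for the linear and quadratic polynomials match the precise scales $k_F$ and $k_{\text{gap}}$ required by Theorem~\ref{thm:mainerrorfinal}, (b)~the Markov-based discretization error is absorbed in the remaining slack, and (c)~the union bound over $O(n^2)$ polynomials produces the final failure probability $1/\sqrt{2n}+1/n$ rather than something polynomially worse. The sharper-than-generic tail bounds needed here are precisely the reason the paper cannot simply invoke the black-box robust hitting sets of~\cite{forbes2017pspace}, and obtaining them is where the bulk of the work in Section~\ref{subsection:correctnessjennrich} will go. Once these anti-concentration estimates are in place, the theorem follows by combining them with Theorem~\ref{thm:mainerrorfinal} via independence of the two randomness sources.
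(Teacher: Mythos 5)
Your proposal follows essentially the same route as the paper: Section~\ref{subsection:correctnessjennrich} proves exactly these anti-concentration bounds for the linear forms $\langle a,u_i\rangle$ and the quadratics $P_{ij}$ (splitting into real and imaginary parts for Carbery--Wright, lower-bounding the $L^2$ norms via $\kappa_F(U)\leq B$, then discretizing with the rounding map and the multivariate Markov inequality), and the paper likewise concludes via Theorem~\ref{thm:mainprob} by conditioning on the good event for $(a,b)$ and multiplying by the success probability $1-\frac{1}{n}-\frac{12}{n^2}$ of EIG, with the constants $C_{\text{gap}},c_F,C_\eta$ calibrated in (\ref{eq:cgapcf}) to give the factor $1-\frac{1}{\sqrt{2n}}-\frac{1}{n}$. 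So the proposal is correct and matches the paper's argument in both structure and key lemmas.
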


\subsection{Some definitions and bounds on norms of polynomials}\label{sec:normbounds}

We define the norm of a polynomial following 
Forbes and Shpilka~\cite{forbes2017pspace}.
Recall from Section~\ref{sec:overview} that their goal was to construct so-called "robust hitting sets".
\begin{definition}(Norm of a complex-valued polynomial)
For an $n$-variate polynomial $f(x) \in \mathbb{C}[\textbf{x}]$, we denote 
\begin{align*}
    ||f||_2 := (\int_{[-1,1]^n} |f(\textbf{x})|^2d \mu(x))^{\frac{1}{2}}
\end{align*}
where $\mu(x)$ is the uniform probability measure on $[-1,1]^n$. We also denote 
\begin{align*}
    ||f||_{\infty} = \max_{v \in [-1,1]^n} |f(v)|.
\end{align*}
\end{definition}
\begin{lemma}\label{lem:normlowerbound1}
Let $U = (u_{ij}) \in \text{GL}_n(\C)$ such that $\kappa_{F}(U) \leq B$. Then, for all $k \in [n]$, $\sum_{i \in [n]} |u_{ik}|^2 \geq \frac{1}{B}$.
\end{lemma}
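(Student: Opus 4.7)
The plan is to derive the column-norm lower bound from the identity $U^{-1}U = I$ together with Cauchy--Schwarz, using the fact that $\kappa_F(U) = \|U\|_F^2 + \|U^{-1}\|_F^2$ bounds $\|U^{-1}\|_F^2$ by $B$.

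Fix $k \in [n]$. Reading off the $(k,k)$ entry of the matrix equation $U^{-1}U = I$ gives
\[
\sum_{j \in [n]} (U^{-1})_{kj}\, u_{jk} \;=\; 1.
\]
The left-hand side is an inner product between the $k$-th row of $U^{-1}$ and the $k$-th column of $U$. By the Cauchy--Schwarz inequality,
\[
1 \;=\; \Bigl|\sum_{j \in [n]} (U^{-1})_{kj}\, u_{jk}\Bigr|
\;\leq\; \Bigl(\sum_{j \in [n]} |(U^{-1})_{kj}|^2\Bigr)^{1/2}
\Bigl(\sum_{j \in [n]} |u_{jk}|^2\Bigr)^{1/2}.
\]
Squaring and rearranging gives $\sum_{i \in [n]} |u_{ik}|^2 \geq 1/\sum_{j \in [n]} |(U^{-1})_{kj}|^2$.

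To finish, I bound the denominator: the squared norm of the $k$-th row of $U^{-1}$ is at most $\|U^{-1}\|_F^2$, and in turn $\|U^{-1}\|_F^2 \leq \|U\|_F^2 + \|U^{-1}\|_F^2 = \kappa_F(U) \leq B$ by hypothesis. Substituting yields $\sum_{i \in [n]} |u_{ik}|^2 \geq 1/B$, as desired.

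There is no real obstacle here: the argument is a one-line Cauchy--Schwarz, and the only thing to keep straight is the indexing convention (that $\sum_i |u_{ik}|^2$ is the squared Euclidean norm of the $k$-th \emph{column} of $U$, paired via $U^{-1}U = I$ with the $k$-th \emph{row} of $U^{-1}$, whose norm is controlled by $\|U^{-1}\|_F \leq \sqrt{B}$).
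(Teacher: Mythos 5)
Your proof is correct and is essentially the paper's argument in slightly different clothing: the paper bounds $\|U^{-1}\| \leq \|U^{-1}\|_F \leq \sqrt{B}$ and applies $\|Uy\| \geq \|y\|/\sqrt{B}$ with $y = e_k$, while you apply Cauchy--Schwarz to the $(k,k)$ entry of $U^{-1}U = I$ and bound the row of $U^{-1}$ by its Frobenius norm. Both reduce to the same inequality $1 \leq \sqrt{B}\,\|u_k\|$, so no substantive difference.
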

\begin{proof}
Since $\kappa_{F}(U) \leq B$, we already have  $||U^{-1}||_{F} \leq \sqrt{B}$. Also, we know that $||U^{-1}|| \leq ||U^{-1}||_{F}$. Hence, $||U^{-1}|| \leq \sqrt{B}$. By definition of the matrix norm,  $||U^{-1}x|| \leq \sqrt{B}||x||$ for all $x \in \C^n$. We define $x = Uy$ and this shows that $||Uy|| \geq \varepsilon||y||$ where $\varepsilon = \frac{1}{\sqrt{B}}$. Let $u_k$ be the $k$-th column of $U$ . Then $||u_k||^2 \geq \varepsilon^2$. Hence, $\sum_{i \in [n]} |u_{ik}|^2 \geq \varepsilon^2 = \frac{1}{B}$.
\end{proof}
The following theorem states that if the $l_2$ norm of a polynomial is not too small, then on inputs picked uniformly and independently at random from $[-1,1)^n$, the value of the polynomial is not too small with high probability. We use the following presentation of the theorem from \cite{forbes2017pspace}.  
\begin{theorem}[Carbery-Wright]\label{thm:CW}
There exists an absolute constant $C_{CW}$ such that if $f: \mathbb{R}^n \xrightarrow[]{} \mathbb{R}$ is a polynomial of degree at most $d$, then for $\alpha > 0$, it holds that
\begin{align*}
    \text{Pr}_{\textbf{v} \in_U [-1,1)^n}[|f(\textbf{v})| \geq \alpha] \geq 1- C_{CW}d\Big(\frac{\alpha}{||f||_2}\Big)^{\frac{1}{d}}.
\end{align*}
\end{theorem}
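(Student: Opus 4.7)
The plan is to follow the original argument of Carbery and Wright, since this is a classical harmonic analysis result that the paper cites as a black box rather than reproves. The overall strategy is to convert the desired anti-concentration bound into a moment inequality for polynomials under the uniform measure on $[-1,1)^n$, and then exploit log-concavity of that measure.

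First I would normalize: by scaling $f$, it suffices to prove the statement for $\|f\|_2 = 1$, reducing the goal to $\Pr_{v \in_U [-1,1)^n}[|f(v)| \leq \alpha] \leq C_{CW}\, d\, \alpha^{1/d}$. The uniform measure on the cube is log-concave (being the restriction of Lebesgue measure to a convex body), which unlocks the main analytic tool: a Bonami--Nelson--Borell-style hypercontractive inequality saying that for a degree-$d$ polynomial $f$ and any $p \geq 2$ one has $\|f\|_{L^p} \leq (Cp)^{d/2}\|f\|_{L^2}$ under log-concave measures. Equivalently, in the "reverse Hölder" direction, $\|f\|_{L^2} \leq K^d \|f\|_{L^q}$ for any fixed $q > 0$, with a constant depending only on $q$ and $d$ polynomially in $d$ after optimization.

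Next I would convert this into an anti-concentration statement by a layer-cake argument. If $|f|$ were highly concentrated near $0$, say $\Pr[|f(v)| \leq \alpha] = \beta$, then splitting the expectation $\mathbb{E}[|f|^q] = \int_0^\infty q t^{q-1}\Pr[|f|>t]\,dt$ into the regions $\{|f|\leq \alpha\}$ and its complement yields an upper bound for $\|f\|_{L^q}^q$ in terms of $\beta$ and $\alpha$. Combined with the lower bound $\|f\|_{L^q} \geq K^{-d} \|f\|_{L^2} = K^{-d}$, this forces $\beta \lesssim d\alpha^{1/d}$ after optimizing $q$ as a function of $d$ and $\alpha$.

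The main obstacle is obtaining the sharp linear dependence on $d$ in front of $\alpha^{1/d}$, rather than an exponential-in-$d$ constant. Naive moment comparison is lossy here, and Carbery and Wright's contribution was an ingenious stopping-time / interpolation argument on level sets of $|f|$, using that $|f|^{1/d}$ has essentially log-concave distribution so that its density near $0$ can be controlled by its density at a unit-size scale. Reproducing this dimension-free factor is the technical heart of the proof and the step I would expect to cost the most work; the rest of the argument (normalization, moment comparison, layer-cake conversion) is comparatively routine.
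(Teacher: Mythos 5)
The paper does not prove this theorem at all: it is imported verbatim as a classical result of Carbery and Wright, in the presentation of Forbes and Shpilka, and is used as a black box (only for $d=1$ and $d=2$, in Theorems \ref{thm:CWapplied1} and \ref{thm:CWapplied2}). So there is no in-paper argument to compare yours against; the only question is whether your sketch would stand on its own as a proof of the stated inequality.

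As written it does not, and you in fact concede the gap yourself. The normalization, the log-concavity of the uniform measure on the cube, the reverse H\"older/hypercontractive comparison $\|f\|_{L^2}\le K^d\|f\|_{L^q}$, and the layer-cake conversion are all fine, but that chain only yields $\Pr[|f(\textbf{v})|\le \alpha]\le C(d)\,\alpha^{1/d}$ with $C(d)$ growing exponentially in $d$ (coming from the $K^d$ factor), not the bound $C_{CW}\,d\,\alpha^{1/d}$ that the theorem asserts. The passage from an exponential-in-$d$ constant to a linear-in-$d$ one is precisely the content of Carbery and Wright's argument (their analysis of the distribution function of $|f|^{1/d}$ under log-concave measures), and your proposal names this as ``the technical heart'' and then defers it rather than carrying it out. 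A proof that asserts its hardest step is not a proof of the stated theorem; at best it establishes a weaker variant. It is worth noting that for this paper's application the weaker variant would actually suffice, since only degrees $1$ and $2$ are ever used and the constant could be absorbed into $C_{CW}$; but if you want to claim the theorem as stated, you must either execute the Carbery--Wright level-set argument or cite it, as the paper does.
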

\begin{theorem}[Carbery-Wright for complex-valued polynomials]\label{thm:CWcomplex}
There exists an absolute constant $C_{CW}$ such that if $f: \mathbb{R}^n \xrightarrow[]{} \mathbb{C}$ is a polynomial of degree at most $d$, then for $\alpha > 0$, it holds that
\begin{align*}
    \text{Pr}_{\textbf{v} \in_U [-1,1)^n}[|f(\textbf{v})| \geq \alpha] \geq 1- 2C_{CW}d\Big(\frac{\alpha}{||f||_2}\Big)^{\frac{1}{d}}.
\end{align*}
\end{theorem}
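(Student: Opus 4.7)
The plan is to reduce the complex-valued statement to the already-given real-valued Carbery-Wright inequality (Theorem~\ref{thm:CW}) by splitting $f$ into its real and imaginary parts.

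First, I would write $f = f_1 + i f_2$ where $f_1, f_2 \in \R[x_1,\ldots,x_n]$ are real-valued polynomials, each of degree at most $d$. For every $v \in \R^n$ we have $|f(v)|^2 = f_1(v)^2 + f_2(v)^2$, which gives both $|f_1(v)| \leq |f(v)|$ and $|f_2(v)| \leq |f(v)|$. Taking contrapositives,
\begin{equation*}
\{v : |f(v)| < \alpha\} \;\subseteq\; \{v : |f_1(v)| < \alpha\} \cap \{v : |f_2(v)| < \alpha\}.
\end{equation*}
Hence $\Pr_{v}[|f(v)| < \alpha] \leq \min\bigl(\Pr_v[|f_1(v)| < \alpha],\; \Pr_v[|f_2(v)| < \alpha]\bigr)$.

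Next, I would use the identity $||f||_2^2 = ||f_1||_2^2 + ||f_2||_2^2$ (which follows immediately from $|f(v)|^2 = f_1(v)^2 + f_2(v)^2$ and the definition of $||\cdot||_2$ via the uniform measure on $[-1,1]^n$). This forces at least one of the two real parts to have $L^2$ norm at least $||f||_2/\sqrt{2}$; without loss of generality, assume $||f_1||_2 \geq ||f||_2/\sqrt{2}$. Applying the real-valued Carbery-Wright inequality (Theorem~\ref{thm:CW}) to $f_1$ then yields
\begin{equation*}
\Pr_v[|f(v)| < \alpha] \leq \Pr_v[|f_1(v)| < \alpha] \leq C_{CW} d \left(\frac{\alpha}{||f_1||_2}\right)^{1/d} \leq C_{CW} d \left(\frac{\alpha \sqrt{2}}{||f||_2}\right)^{1/d}.
\end{equation*}

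Finally, since $2^{1/(2d)} \leq \sqrt{2} \leq 2$ for all $d \geq 1$, the right-hand side is bounded above by $2 C_{CW} d (\alpha/||f||_2)^{1/d}$, and rearranging gives the desired lower bound on $\Pr_v[|f(v)| \geq \alpha]$. There is no real obstacle in this argument: the only subtlety is the factor $\sqrt{2}$ arising from the $L^2$ splitting, which is absorbed into the factor of $2$ in the statement. The degree bound is preserved because both $f_1$ and $f_2$ have degree at most $\deg f \leq d$.
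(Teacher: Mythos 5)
Your proposal is correct and takes essentially the same route as the paper: decompose $f$ into its real and imaginary parts and reduce to the real-valued Carbery--Wright inequality (Theorem~\ref{thm:CW}). The only difference is cosmetic: the paper applies the real inequality to \emph{both} parts at their own norms and gets the factor $2$ from a union bound, whereas you apply it only to the part whose $L^2$ norm is at least $\|f\|_2/\sqrt{2}$ and absorb the resulting $2^{1/(2d)}\leq 2$ loss into the constant; both arguments are valid and give the stated bound.
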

\begin{proof}
Since $f: \R^n \xrightarrow{} \C$, we can write $f = \mathfrak{R}(f) + \iota \mathfrak{I}(f)$ where $\mathfrak{R}(f),\mathfrak{I}(f) \in \R^n \xrightarrow{} \R$ are real polynomials of degree $\leq d$. Then using Theorem \ref{thm:CW}, we get that 
\begin{align*}
     \text{Pr}_{\textbf{v} \in_U [-1,1)^n}[|\mathfrak{R}(f)(\textbf{v})| \geq \alpha] &\geq 1- C_{CW}d\Big(\frac{\alpha}{||\mathfrak{R}(f)||_2}\Big)^{\frac{1}{d}} \\
      \text{Pr}_{\textbf{v} \in_U [-1,1)^n}[|\mathfrak{I}(f)(\textbf{v})| \geq \alpha] &\geq 1- C_{CW}d\Big(\frac{\alpha}{||\mathfrak{I}(f)||_2}\Big)^{\frac{1}{d}}.
\end{align*}
This gives us that
\begin{align*}
    &\text{Pr}_{\textbf{v} \in_U [-1,1)^n}[|f(v)|^2 \leq \alpha^2||f||_2^2] \\
    &= \text{Pr}_{\textbf{v} \in_U [-1,1)^n}[|\mathfrak{R}(f)(\textbf{v})|^2 + |\mathfrak{I}(f)(\textbf{v})|^2 \leq \alpha^2\Big(||\mathfrak{R}(f)||^2_2 + |\mathfrak{I}(f)||^2_2\Big)] \\
    &\leq \text{Pr}_{\textbf{v} \in_U [-1,1)^n}[|\mathfrak{R}(f)(\textbf{v})|^2 \leq \alpha^2||\mathfrak{R}(f)||^2_2 \bigcup |\mathfrak{I}(f)(\textbf{v})|^2 \leq \alpha^2||\mathfrak{I}(f)||^2_2] \\
    &\leq \text{Pr}_{\textbf{v} \in_U [-1,1)^n}[|\mathfrak{R}(f)(\textbf{v})|^2 \leq \alpha^2|  |\mathfrak{R}(f)||^2_2] + \text{Pr}_{\textbf{v} \in_U [-1,1)^n}[|\mathfrak{I}(f)(\textbf{v})|^2 \leq \alpha^2||\mathfrak{I}(f)||^2_2] \\
    &\leq 2C_{CW}d(\alpha)^{\frac{1}{d}}.
\end{align*}
As a result,
\begin{align*}
    \text{Pr}_{\textbf{v} \in_U [-1,1)^n}[|f(v)|^2 \geq \alpha^2] \geq 1- 2C_{CW}d\Big(\frac{\alpha}{||f||_2}\Big)^{\frac{1}{d}}.
\end{align*}
\end{proof}
The next two theorems are directed towards applying the Carbery-Wright Theorem to a special polynomial which we will require later in Theorem~\ref{thm:normprob}. More specifically, let $U = (u_{ij})_{i,j \in [n]}$ be a matrix with bounded $\kappa_F$. Consider the linear form  $P^{k}(\textbf{x}) = \sum_{i \in [n]} p_{i}x_i$ where $p_{i} = u_{ik}$. We will apply the Carbery-Wright theorem to $P^{k}$.
\begin{theorem}\label{thm:CWapplied1}
Let $U = (u_{ij}) \in \text{GL}_n(\C)$ be such that $\kappa_{F}(U) \leq B$. Let $P^{k}(\textbf{x}) = \sum_{i \in [n]} p_{i}x_i$ where $p_{i} = u_{ik}$. Then 
\begin{align*}
     \text{Pr}_{\textbf{v} \in_U [-1,1)^n}[|f(\textbf{v})| \geq \frac{\alpha}{\sqrt{3B}}] \geq 1- 2C_{CW}\alpha.
\end{align*}
\end{theorem}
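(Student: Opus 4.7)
The plan is to apply the complex Carbery–Wright inequality (Theorem~\ref{thm:CWcomplex}) to the linear polynomial $P^k$, which has degree $d = 1$. The only non-trivial input to that theorem is a lower bound on $\|P^k\|_2$, and obtaining such a bound is the entire content of the proof. Concretely, I would first compute $\|P^k\|_2^2$ exactly and then plug the resulting bound into Theorem~\ref{thm:CWcomplex}.

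The computation of $\|P^k\|_2^2$ proceeds by expanding
\[
|P^k(x)|^2 \;=\; \sum_{i,j \in [n]} p_i \overline{p_j}\, x_i x_j
\;=\; \sum_{i \in [n]} |p_i|^2 x_i^2 \;+\; \sum_{i \neq j} p_i \overline{p_j}\, x_i x_j,
\]
and integrating term-by-term against the uniform probability measure on $[-1,1)^n$. Because the coordinates are independent and each satisfies $\int_{-1}^{1} \tfrac{1}{2}\, x_i\, dx_i = 0$, every cross term vanishes, while each diagonal contribution equals $\int_{-1}^{1} \tfrac{1}{2}\, x_i^2\, dx_i = \tfrac{1}{3}$. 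Thus
\[
\|P^k\|_2^2 \;=\; \tfrac{1}{3}\sum_{i \in [n]} |p_i|^2 \;=\; \tfrac{1}{3}\sum_{i \in [n]} |u_{ik}|^2.
\]

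Next I invoke Lemma~\ref{lem:normlowerbound1}, which gives $\sum_{i \in [n]} |u_{ik}|^2 \geq 1/B$ from the hypothesis $\kappa_F(U) \leq B$. Combining the two yields $\|P^k\|_2 \geq 1/\sqrt{3B}$. Applying Theorem~\ref{thm:CWcomplex} with degree $d = 1$ and threshold $\beta := \alpha/\sqrt{3B}$ then gives
\[
\Pr_{\mathbf{v} \in_U [-1,1)^n}\bigl[|P^k(\mathbf{v})| \geq \tfrac{\alpha}{\sqrt{3B}}\bigr]
\;\geq\; 1 - 2C_{CW}\,\frac{\beta}{\|P^k\|_2}
\;\geq\; 1 - 2C_{CW}\,\alpha,
\]
which is the claimed bound. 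There is no serious obstacle here: the only mild subtlety is making sure to use the \emph{complex} version of Carbery–Wright (since the coefficients $u_{ik}$ lie in $\C$), and to correctly normalize the measure so that the diagonal integrals yield the factor $1/3$ that appears in the final statement.
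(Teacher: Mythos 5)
Your proposal is correct and follows essentially the same route as the paper: apply the complex Carbery--Wright inequality (Theorem~\ref{thm:CWcomplex}) with $d=1$, compute $\|P^k\|_2^2 = \tfrac{1}{3}\sum_i |p_i|^2$ by killing the cross terms, and lower-bound this via Lemma~\ref{lem:normlowerbound1}. The only cosmetic difference is that the paper expands $P^k$ into real and imaginary parts before integrating, whereas you expand $|P^k|^2$ directly with conjugates, which amounts to the same calculation.
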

\begin{proof}
Applying Theorem \ref{thm:CWcomplex} to $P^k$ for $d = 1$, we get that
\begin{equation}\label{eq:2}
    \text{Pr}_{\textbf{v} \in_U [-1,1)^n}[|f(\textbf{v})| \geq \alpha||P^{k}||_2] \geq 1- 2C_{CW}\alpha.
\end{equation}
Now we claim that $||P^{k}||^2_2 \geq  \frac{1}{B}$. Expanding the $l_2$-norm of $P^{k}$ shows that
\begin{align*}
    ||P^{k}||_2^2 = \int_{[-1,1]^{n}} |P^{k}(\textbf{x})|^2d\mu(\textbf{x})
    \end{align*}
where $\mu(\textbf{x})$ is the uniform probability distribution on $[-1,1]^{n}$. Let $p^{(r)}_k$ and $p^{(i)}_k$ be the real and imaginary parts respectively of $p_k$. We have: 
\begin{align*}
    ||P^{k}||_2^2 &= \int_{[-1,1]^{n}} |\sum_{i \in [n]} p_{i}x_i|^2 d\mu(\textbf{x}) \\
    &=  \int_{[-1,1]^{n}} |\Big(\sum_{k \in [n]} p_k^{(r)}x_k\Big) + \iota\Big(\sum_{k \in [n]} p_k^{(i)}x_k\Big)|^2 d\mu(\textbf{x}) \\
    &= \int_{[-1,1]^{n}} \Big(\sum_{k \in [n]} p_k^{(r)}x_k\Big)^2 d\mu(\textbf{x}) + \int_{[-1,1]^{n}} \Big(\sum_{k \in [n]} p_k^{(i)}x_k\Big)^2 d\mu(\textbf{x}) \\
    &=  \sum_{k,l \in [n]} p^{(r)}_{k}p^{(r)}_{l} (\int_{[-1,1]^{n}}x_k x_l d\mu(\textbf{x})) + \sum_{k,l \in [n]} p^{(i)}_{k}p^{(i)}_{l} (\int_{[-1,1]^{n}}x_k x_l d\mu(\textbf{x})).
\end{align*}

When $i \neq k$, $\int_{[-1,1]^{n}}x_ix_j d\mu(\textbf{x}) = (\frac{1}{2}\int_{-1}^1x_idx_i)^2 = 0$. Therefore, 
\begin{align*}
    ||P^{k}||_2^2 = \sum_{k \in [n]} \frac{(p^{(r)}_{k})^2 + (p^{(i)}_{k})^2}{2} (\int_{-1}^1x_i^2dx_i) 
\end{align*}
Using the fact that $\frac{1}{2}\int_{-1}^1x_i^2dx_i = \frac{1}{3}$, we have 
   $||P^{k}||_2^2 = \frac{1}{3}\sum_{i \in [n]} |p_{i}|^2$.
Lemma \ref{lem:normlowerbound1} already implies that 
\begin{align*}
    ||P^{k}||_2^2 \geq \frac{1}{3B}.
\end{align*}
Using this in (\ref{eq:2}), we conclude that
\begin{align*}
     \text{Pr}_{\textbf{v} \in_U [-1,1)^n}[|f(\textbf{v})| \geq \frac{\alpha}{\sqrt{3B}}] \geq 1- 2C_{CW}\alpha.
\end{align*}
\end{proof}
The next two theorems are directed towards applying the Carbery-Wright theorem to another special polynomial which we will require later in Theorem \ref{thm:gapbound}. More specifically, let $U = (u_{ij})_{i,j \in [n]}$ be a matrix with bounded $\kappa_F$. Let  $P^{kl}(\textbf{x},\textbf{y}) = \sum_{i,j \in [n]} p^{kl}_{ij}x_iy_j$ be the quadratic polynomial defined for all $k,l \in [n]$ by its coefficients $p^{kl}_{ij} = u_{ik}u_{jl} - u_{il}u_{jk}$. We will apply the Carbery-Wright theorem to~$P^{kl}$. First we give a lower bound for the $l_2$ norm of the polynomial.
\begin{lemma}\label{lem:normlowerbound}
Let $U = (u_{ij}) \in \text{GL}_n(\C)$ be such that $\kappa_{F}(U) \leq B$. Then, for all $k,l \in [n]$, $\sum_{i,j \in [n]} |u_{ik}u_{jl} - u_{il}u_{jk}|^2 \geq \frac{2}{B^2}$.
\end{lemma}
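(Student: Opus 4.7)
The plan is to expand the sum into a Gram determinant of two columns of $U$, and then bound that determinant below using the spectral estimate $\|U^{-1}\| \leq \sqrt{B}$. First I would observe that the statement is vacuous (both sides are zero) in the degenerate case $k = l$, so I may implicitly restrict to $k \neq l$; this is the only case in which the lemma will be applied later (the polynomials $P^{kl}$ playing here the role of the polynomials $P_{ij}$ with $i \neq j$ in the proof of Theorem \ref{thm:eigenvalues}). Writing $c_k, c_l \in \C^n$ for the $k$-th and $l$-th columns of $U$, a direct expansion of $|u_{ik}u_{jl} - u_{il}u_{jk}|^2$ as a product with its complex conjugate, followed by summation over $i,j \in [n]$ (the two ``diagonal'' terms summing to $2\|c_k\|^2 \|c_l\|^2$, and the two cross terms combining via the identity $\langle c_k, c_l\rangle \cdot \overline{\langle c_k, c_l\rangle} = |\langle c_k, c_l\rangle|^2$), yields
\[
\sum_{i,j=1}^n |u_{ik}u_{jl} - u_{il}u_{jk}|^2 \;=\; 2\bigl(\|c_k\|^2 \|c_l\|^2 - |\langle c_k, c_l\rangle|^2\bigr) \;=\; 2\det G,
\]
where $G = V^* V$ is the $2 \times 2$ Gram matrix of the $n \times 2$ matrix $V = [\,c_k \mid c_l\,]$.

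Next I would bound $\det G$ from below. Since $G = V^* V$, its eigenvalues are the squared singular values of $V$; writing $\sigma_1 \geq \sigma_2 \geq 0$ for those singular values, one has $\det G = \sigma_1^2 \sigma_2^2 \geq \sigma_2^4$. It therefore suffices to prove $\sigma_2 \geq 1/\sqrt{B}$.

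For this final estimate I would reuse the argument already used in the proof of Lemma \ref{lem:normlowerbound1}. Given an arbitrary unit vector $y = (\alpha, \beta) \in \C^2$, set $z = \alpha e_k + \beta e_l \in \C^n$; then $\|z\| = 1$ (this is where $k \neq l$ is used) and $V y = U z$. From $z = U^{-1}(Uz)$ one obtains $\|V y\| = \|U z\| \geq \|z\|/\|U^{-1}\| \geq 1/\|U^{-1}\|_F \geq 1/\sqrt{\kappa_F(U)} \geq 1/\sqrt{B}$. Minimizing over unit $y$ gives $\sigma_2 \geq 1/\sqrt{B}$, hence $\det G \geq 1/B^2$ and the sum is at least $2/B^2$, as required.

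The computation is essentially routine; the main point to watch out for is the complex conjugation in the first step (so as not to mistakenly obtain $\langle c_k, c_l\rangle^2$ in place of $|\langle c_k, c_l\rangle|^2$). The replacement of $\sigma_1^2 \sigma_2^2$ by $\sigma_2^4$ is slightly wasteful but harmless, since only the $1/B^2$ bound is needed; the only nontrivial external input is the spectral bound $\|U^{-1}\| \leq \|U^{-1}\|_F \leq \sqrt{B}$, which is immediate from the hypothesis $\kappa_F(U) \leq B$.
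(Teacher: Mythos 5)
Your proof is correct and follows essentially the same route as the paper's: both reduce the sum to $2\det(U_2^*U_2)$ for the $n\times 2$ submatrix formed by the two columns (the paper quotes the complex Lagrange identity where you derive it by direct expansion), and both lower-bound this determinant by the fourth power of the smallest singular value, which is at least $1/\sqrt{B}$ via the same $\|U^{-1}\|\le\|U^{-1}\|_F\le\sqrt{\kappa_F(U)}\le\sqrt{B}$ embedding argument. One small slip in your preliminary remark: for $k=l$ the two sides are not both zero --- the left side vanishes while the right side is $2/B^2$ --- so the accurate statement is that the lemma implicitly assumes $k\neq l$ (as does the paper's own proof, which takes $k=1$, $l=2$ without loss of generality), which is indeed the only case in which it is applied.
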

\begin{proof}
We construct a submatrix $U_2 \in M_{n,2}(\C)$ with  the $k$-th and $l$-th columns of $U$. Let $k=1$ and $l=2$ without loss of generality. Since $\kappa_F(U) \leq B$, following the proof of Lemma \ref{lem:normlowerbound1}, for all $y \in \mathbb{C}^n$, $||Uy|| \geq \varepsilon||y||$ where $\varepsilon = \frac{1}{\sqrt{B}}$. Then for all $y \in \mathbb{C}^2$, we have $||U_2y|| \geq \varepsilon||y||$. This implies that
\begin{align*}
    ||U_2y||^2 &\geq \varepsilon^2||y||^2 \\
    y^*U_2^*U_2y &\geq \varepsilon^2 y^*y.
\end{align*}
The minimum singular value 
$\sigma_{\text{min}}$ of  $U_2$ is defined as $\sigma^2_{\text{min}} = \min_{y \in \C^n, y \neq 0} \frac{y^*U_2^*U_2y}{y^*y}$. Therefore, 
    $\sigma^2_{\text{min}}(U_2) \geq \varepsilon^2.$
Since $U_2^*U_2$ is a Hermitian matrix, $\sigma^2_{\text{min}}(U_2) = \lambda_{\text{min}}(U_2^*U_2)$ where $\lambda^2_{\text{min}}$ refers to the smallest eigenvalue. This gives us that
\begin{align*}
    \lambda_{\text{min}}(U_2^*U_2) \geq \varepsilon^2.
\end{align*}
Let $a = (a_1,...,a_n)$ and $b = (b_1,...,b_n)$ be the columns of $U_2$. Then $U_2^*U_2 = \begin{pmatrix}
||a||^2 & a^*b\\
b^*a & ||b||^2
\end{pmatrix}$. Also, $\text{det}(U_2^*U_2) \geq \lambda^2_{\text{min}}(U_2^*U_2)$, i.e.,
\begin{align*}
    ||a||^2||b||^2 - |a^*b|^2 \geq \lambda^2_{\text{min}}(U_2^*U_2) \geq \varepsilon^4.
\end{align*}
Now from the complex form of Lagrange's identity, we know that 
\begin{align*}
    ||a||^2||b||^2 - |a^*b|^2 = \frac{1}{2}\sum_{i,j =1}^n |a_ib_j - a_jb_i|^2.
\end{align*}
As a result, 
    $\sum_{i,j =1}^n |a_ib_j - a_jb_i|^2  \geq 2\varepsilon^4.$
Choosing $\varepsilon = \frac{1}{\sqrt{B}}$, we finally conclude that for all $k,l \in [n]$, $\sum_{i,j \in [n]} |u_{ik}u_{jl} - u_{il}u_{jk}|^2 \geq \frac{2}{B^2}$.
\end{proof}

\begin{theorem}\label{thm:CWapplied2}
Let $U = (u_{ij}) \in \text{GL}_n(\C)$ be such that $\kappa_{F}(U) \leq B$. Let $P^{kl}(\textbf{x},\textbf{y}) = \sum_{i,j \in [n]} p_{ij}x_iy_j$ where $p_{ij} = u_{ik}u_{jl} - u_{il}u_{jk}$. Then 
\begin{align*}
     \text{Pr}_{\textbf{v} \in_U [-1,1)^n}[|f(\textbf{v})| \geq \frac{\sqrt{2}\alpha}{3B}] \geq 1- 4C_{CW}\alpha^{\frac{1}{2}}.
\end{align*}
\end{theorem}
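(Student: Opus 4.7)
The proof will mirror the strategy used for Theorem~\ref{thm:CWapplied1}, but adapted to the bivariate quadratic setting. The plan is to invoke the complex-valued Carbery-Wright inequality (Theorem~\ref{thm:CWcomplex}) on $P^{kl}$, viewed as a polynomial in the $2n$ variables $(\textbf{x},\textbf{y})$ of degree $d=2$. This yields directly
\begin{equation*}
\text{Pr}_{\textbf{v} \in_U [-1,1)^{2n}}\bigl[|P^{kl}(\textbf{v})| \geq \alpha \cdot ||P^{kl}||_2\bigr] \geq 1 - 2C_{CW} \cdot 2 \cdot \alpha^{1/2} = 1 - 4C_{CW}\alpha^{1/2},
\end{equation*}
so the entire task reduces to showing the lower bound $||P^{kl}||_2 \geq \frac{\sqrt{2}}{3B}$.

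To bound the $L^2$ norm from below, I would split $P^{kl} = \mathfrak{R}(P^{kl}) + \iota\,\mathfrak{I}(P^{kl})$ into real and imaginary parts, so that $|P^{kl}|^2$ becomes a sum of squares of two real quadratic forms $\sum_{i,j} p^{(r)}_{ij} x_i y_j$ and $\sum_{i,j} p^{(i)}_{ij} x_i y_j$. Expanding each square and integrating term by term against the uniform probability measure on $[-1,1]^{2n}$, the only surviving monomials are the diagonal ones $x_i^2 y_j^2$, since $\frac{1}{2}\int_{-1}^1 x_i\,dx_i = 0$ eliminates all others. Using $\frac{1}{2}\int_{-1}^1 x_i^2\,dx_i = \frac{1}{3}$ on both $x$ and $y$ variables yields
\begin{equation*}
||P^{kl}||_2^2 = \frac{1}{9}\sum_{i,j=1}^n \bigl((p^{(r)}_{ij})^2 + (p^{(i)}_{ij})^2\bigr) = \frac{1}{9}\sum_{i,j=1}^n |u_{ik}u_{jl} - u_{il}u_{jk}|^2.
\end{equation*}

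At this point the main structural work has already been done in Lemma~\ref{lem:normlowerbound}, which gives $\sum_{i,j} |u_{ik}u_{jl} - u_{il}u_{jk}|^2 \geq \frac{2}{B^2}$. Substituting this bound gives $||P^{kl}||_2^2 \geq \frac{2}{9B^2}$, hence $||P^{kl}||_2 \geq \frac{\sqrt{2}}{3B}$, and plugging this into the Carbery-Wright bound above produces the stated inequality. No step is a genuine obstacle here: the Carbery-Wright inequality provides the probabilistic backbone, the orthogonality of monomials under the product uniform measure reduces the norm to the sum of squared coefficients, and Lemma~\ref{lem:normlowerbound} (proved earlier via the complex Lagrange identity and the minimum singular value of the $n \times 2$ submatrix formed by columns $k,l$ of $U$) supplies the key geometric lower bound relating the condition number to these $2\times 2$ minors.
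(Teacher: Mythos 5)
Your proposal is correct and follows essentially the same route as the paper's proof: apply the complex-valued Carbery-Wright inequality with $d=2$, compute $||P^{kl}||_2^2 = \frac{1}{9}\sum_{i,j}|p_{ij}|^2$ via orthogonality of monomials under the product uniform measure, and invoke Lemma~\ref{lem:normlowerbound} to get $||P^{kl}||_2 \geq \frac{\sqrt{2}}{3B}$. Nothing is missing.
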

\begin{proof}
Applying Theorem \ref{thm:CWcomplex} to $P^{kl}$ with $d = 2$ shows that
\begin{equation}\label{eq:1}
    \text{Pr}_{\textbf{v} \in_U [-1,1)^n}[|f(\textbf{v})| \geq \alpha||P^{kl}||_2] \geq 1- 4C_{CW}\alpha^{\frac{1}{2}}.
\end{equation}
Now we claim that $||P^{kl}||_2 \geq  \frac{\sqrt{2}}{3B}$. 
Recall that
\begin{align*}
    ||P^{kl}||_2^2 = \int_{[-1,1]^{2n}} |P^{kl}(\textbf{x},\textbf{y})|^2d\mu(\textbf{x},\textbf{y})
    \end{align*}
where $\mu(\textbf{x},\textbf{y})$ is the uniform probability distribution on $[-1,1]^{2n}$.
Let us define $p^{(r)}_{ij}$ and $p^{(i)}_{ij}$ as the real and imaginary parts respectively of $p_{ij}$. 
We can estimate $|P^{kl}||_2^2$ as follows:
\begin{align*}
   & \int_{[-1,1]^{2n}} |\sum_{i,j \in [n]} p_{ij}x_iy_j|^2 d\mu(\textbf{x},\textbf{y}) 
    = \int_{[-1,1]^{2n}} |\Big(\sum_{i,j \in [n]} p^{(r)}_{ij}x_iy_j\Big) + \iota \Big(\sum_{i,j \in [n]} p^{(i)}_{ij}x_iy_j\Big)|^2 d\mu(\textbf{x},\textbf{y}) \\
   & =  \Big(\sum_{i,j,k,l \in [n]} p^{(r)}_{ij}p^{(r)}_{kl} (\int_{[-1,1]^{2n}}x_iy_jx_ky_l d\mu(\textbf{x}\textbf{y}))\Big) + \Big(\sum_{i,j,k,l \in [n]} p^{(i)}_{ij}p^{(i)}_{kl} (\int_{[-1,1]^{2n}}x_iy_jx_ky_l d\mu(\textbf{x}\textbf{y})\Big)\\
   &  =  \sum_{i,j,k,l \in [n]} (p^{(r)}_{ij}p^{(r)}_{kl} + p^{(i)}_{ij}p^{(i)}_{kl}) (\int_{[-1,1]^{n}}x_ix_kd\mu(\textbf{x})) (\int_{[-1,1]^{n}} y_jy_l d\mu(\textbf{y})).
    \end{align*}
When $i \neq k$, $\int_{[-1,1]^{n}}x_ix_kd\mu(\textbf{x}) = (\int_{[-1,1]}x_id\mu(x_i))^2 = 0$. Similarly $\int_{[-1,1]^{n}}y_jy_l d\mu(\textbf{y}) = 0$ for $j \neq l$.  This gives us that 
\begin{align*}
    ||P^{kl}||_2^2 = \sum_{i,j \in [n]} \Big((p^{(r)}_{ij})^2 + (p^{(i)}_{ij})^2\Big) (\int_{[-1,1]^n}x_i^2 d\mu(\textbf{x})) (\int_{[-1,1]^n} y_j^2 d\mu(\textbf{y}))
\end{align*}
Since $\int_{[-1,1]^{n}}x_i^2d\mu(\textbf{x}) = \frac{1}{2}\int_{-1}^1 x^2_idx_i = \int_{[-1,1]^{n}} y_j^2 d\mu(\textbf{y}) = \frac{1}{2}\int_{-1}^1 y_j^2 dy_j = \frac{1}{3}$, we get that
\begin{align*}
    ||P^{kl}||_2^2 = \frac{1}{9}\sum_{i,j \in [n]} |p_{ij}|^2.
\end{align*}
Now, from Lemma \ref{lem:normlowerbound}, it follows that
    $||P^{kl}||_2^2 \geq \frac{2}{9B^2}.$
Using this in (\ref{eq:1}), we can conclude that
\begin{align*}
     \text{Pr}_{\textbf{v} \in_U [-1,1)^n}[|f(\textbf{v})| \geq \frac{\sqrt{2}\alpha}{3B}] \geq (1- 4C_{CW}\alpha^{\frac{1}{2}}).
\end{align*}
\end{proof}
Our next goal is to show a similar probabilistic result for both families of polynomials (linear and quadratic), but replacing the previous continuous distribution over $[-1,1)^n$ by a distribution where the inputs are chosen uniformly and independently at random from a discrete grid. To formalise this distribution, we describe another equivalent random process of picking an element at random from $[-1,1)^n$ and rounding it to the nearest point on the grid.
\begin{definition}[Rounding function]\cite{forbes2017pspace}\label{def:roundingfunction}
Given $\eta \in (0,1)$ such that $\frac{1}{\eta}$ is an integer, for any point $(\textbf{a},\textbf{b}) \in [-1,1]^{2n}$, we define $g_{\eta}(\textbf{a},\textbf{b})$ to be the point $(\textbf{a}',\textbf{b}')$ such that the $i$-th element, $(\textbf{a}',\textbf{b}')_i = m_i\eta$, where $ m_i\eta \leq (\textbf{a},\textbf{b})_i < (m_i+1)\eta$.
\par
We also define $G_{\eta} = \{-1,-1+\eta,-1+2\eta,...,1-2\eta,1-\eta\}^{2n}$. Note here that for any point $(a,b) \in [-1,1)^{2n}$, $g_{\eta}(a,b) \in G_{\eta}$. Also, note that the process of picking $(a,b)$ uniformly and independently at random from $[-1,1)^{2n}$ and then using the rounding function $g_{\eta}$ on $(a,b)$ is equivalent to the process of picking an element uniformly and independently at random from $G_{\eta}$.
\end{definition}
\begin{theorem}[Multivariate Markov's Theorem]\label{thm:markov}
Let $f : \mathbb{R}^n \xrightarrow{} \mathbb{R}$ be a homogeneous polynomial of degree $r$, that for every $\textbf{v} \in [-1,1]^n$ satisfies $|f(\textbf{v})| \leq 1$. Then, for every $||\textbf{v}|| \leq 1$, it holds that $||\nabla(f)(\textbf{v})|| \leq 2r^2$.
\end{theorem}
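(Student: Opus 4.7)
The plan is to reduce to one-dimensional Markov/Bernstein-type inequalities by restricting $f$ to lines through $v$. Since $\nabla f$ is itself homogeneous of degree $r-1$, we have $\|\nabla f(\lambda v)\| = |\lambda|^{r-1}\|\nabla f(v)\|$, so it suffices to prove the bound at $\|v\| = 1$. I would then decompose $\nabla f(v) = \alpha v + w$ with $w \perp v$, so that $\|\nabla f(v)\|^2 = \alpha^2 + \|w\|^2$, and handle the radial and tangential components separately.

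For the radial component, Euler's identity for homogeneous functions gives $\alpha = \langle v, \nabla f(v)\rangle = r f(v)$. Since $f$ is homogeneous of degree $r$ and $|f|\leq 1$ on $[-1,1]^n$, rescaling by a positive scalar yields $|f(x)|\leq \|x\|_\infty^r \leq \|x\|_2^r$ for every $x\in\R^n$; in particular $|f(v)|\leq 1$ and hence $|\alpha|\leq r$.

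For the tangential component I would take an arbitrary unit vector $u$ perpendicular to $v$ and set $g(t) := f(v+tu)$, a univariate polynomial of degree at most $r$ with $g'(0) = \langle u, \nabla f(v)\rangle$; it then suffices to bound $|g'(0)|$ by $r$ uniformly in $u$. The identity $\|v+tu\|_2^2 = 1+t^2$ combined with the homogeneity estimate from the previous paragraph gives the pointwise bound $|g(t)| \leq (1+t^2)^{r/2}$ for every real $t$. The key trick is the substitution $t=\tan\theta$: expanding $G(\theta) := g(\tan\theta)\cos^r\theta = \sum_{k=0}^r c_k \sin^k\theta \cos^{r-k}\theta$ reveals $G$ as a trigonometric polynomial of degree at most $r$, and the bound on $g$ translates into $|G(\theta)|\leq 1$ for every real $\theta$ (directly for $\theta\in(-\pi/2,\pi/2)$, and elsewhere via the symmetry $G(\theta+\pi)=(-1)^r G(\theta)$). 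Bernstein's inequality for trigonometric polynomials then gives $|G'(\theta)|\leq r$ everywhere, and a direct computation shows $G'(0)=g'(0)$, so $|g'(0)|\leq r$ as required.

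Combining both contributions yields $\|\nabla f(v)\|\leq r\sqrt{2}\leq 2r^2$ for $r\geq 1$, which is the desired bound. The hard part will be the tangential estimate: a naive application of Markov's inequality to $g$ on a fixed interval would produce bounds exponential in $r$ because of the rapid growth of $(1+t^2)^{r/2}$, so the trigonometric reparameterization (available precisely because the line $\{v+tu:t\in\R\}$ is tangent to the unit sphere at $v$) is the essential ingredient that converts this growth into a uniformly bounded trigonometric polynomial.
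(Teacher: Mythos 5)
Your proof is correct. Note that the paper itself offers no proof of Theorem~\ref{thm:markov}: it is quoted as a known multivariate Markov-type inequality (in the form used by Forbes and Shpilka, going back to Kellogg's classical bound $r^2$ for arbitrary degree-$r$ polynomials bounded on the unit ball), so there is no internal argument to compare against. Your argument is a genuinely different, self-contained route that exploits the homogeneity hypothesis: Euler's identity controls the radial component by $r|f(v)|\le r$, and the tangential component is a spherical Bernstein estimate. In fact your auxiliary function is exactly the great-circle restriction, since by homogeneity
\begin{equation*}
G(\theta)=g(\tan\theta)\cos^r\theta = f(v+\tan\theta\, u)\cos^r\theta = f(v\cos\theta+u\sin\theta),
\end{equation*}
which makes it transparent that $G$ is a trigonometric polynomial of degree at most $r$ bounded by $1$ (using $|f(x)|\le\|x\|_\infty^r\le\|x\|_2^r$), so Bernstein gives $|G'(0)|=|g'(0)|\le r$ for every unit $u\perp v$. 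Combining the two components yields $\|\nabla f(v)\|\le\sqrt{2}\,r$ on the sphere, which your degree-$(r-1)$ homogeneity of $\nabla f$ propagates to the whole ball; this is strictly sharper than the stated $2r^2$ and suffices for the paper's application in Theorem~\ref{thm:norminf}. What your approach buys is the better, degree-linear constant, but only for homogeneous polynomials; the classical Kellogg/Markov route gives the $r^2$-type bound without homogeneity. The small edge cases (for $r=0$ the gradient vanishes identically, and for $n=1$ the tangential part is vacuous so $\|\nabla f(v)\|=r|f(v)|\le r$) are harmless and worth a sentence in a final write-up.
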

\begin{theorem}\label{thm:norminf}
Let $f: \mathbb{R}^{2n} \xrightarrow{} \C$ be a homogeneous polynomial of degree at most $d$. Let $\eta > 0$ be such that $\frac{1}{\eta}$ is an integer. Let $\textbf{a},\textbf{b} \in [-1,1)^{2n}$ and $(\textbf{a}',\textbf{b}') = g_{\eta}(\textbf{a},\textbf{b})$. Then $|f(\textbf{a},\textbf{b}) - f(\textbf{a}',\textbf{b}')| \leq 4\eta\sqrt{n}||f||_{\infty}d^2$.
\end{theorem}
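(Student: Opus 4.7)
The plan is to apply the mean value theorem along the line segment from $(\mathbf{a}',\mathbf{b}')$ to $(\mathbf{a},\mathbf{b})$, and bound the resulting derivative via Markov's inequality (Theorem~\ref{thm:markov}). Since $f$ is complex valued, I would first decompose $f = f_R + i f_I$ where $f_R, f_I$ are real homogeneous polynomials of degree at most $d$. Because $|f_R(w)|, |f_I(w)| \leq |f(w)|$ for every $w$, both parts are bounded by $||f||_{\infty}$ on $[-1,1]^{2n}$. The triangle inequality reduces the theorem to two analogous real-valued estimates.

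Next, I would use Definition~\ref{def:roundingfunction} to observe that each coordinate of $(\mathbf{a},\mathbf{b}) - (\mathbf{a}',\mathbf{b}')$ lies in $[0,\eta)$, so the step vector has Euclidean norm at most $\eta\sqrt{2n}$ and its $\ell_\infty$-norm at most $\eta$; moreover, by convexity the whole segment stays in $[-1,1]^{2n}$. The mean value theorem then gives, for some $\xi$ on the segment,
\begin{equation*}
|f_R(\mathbf{a},\mathbf{b}) - f_R(\mathbf{a}',\mathbf{b}')| \leq \|\nabla f_R(\xi)\|_2 \cdot \|(\mathbf{a},\mathbf{b}) - (\mathbf{a}',\mathbf{b}')\|_2,
\end{equation*}
and similarly for $f_I$. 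To bound $\|\nabla f_R(\xi)\|_2$ I would apply the multivariate Markov inequality of Theorem~\ref{thm:markov} to the normalised polynomial $f_R / \|f\|_{\infty}$, using the homogeneity of $f_R$ to transfer the unit-ball bound to points in $[-1,1]^{2n}$. Combining the gradient bound with the step-length bound and summing the real and imaginary contributions yields an estimate of the claimed form $O(\eta\sqrt{n}\,\|f\|_\infty d^2)$.

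The main obstacle is obtaining the correct $\sqrt{n}$ dependence rather than the weaker $n$ that a naive coordinatewise argument produces. If one simply bounds each partial derivative by $d^2\|f\|_\infty$ via univariate Markov (freezing all other variables) and then forms the Euclidean norm, one obtains $\|\nabla f_R\|_2 \leq d^2\|f\|_\infty \sqrt{2n}$, which combined with a step length of $\eta\sqrt{2n}$ already costs a factor of $n$. The sharper route is to apply Theorem~\ref{thm:markov} directly (it gives $\|\nabla f_R(v)\|_2 \leq 2d^2\|f\|_\infty$ on the unit ball, saving a $\sqrt{n}$) and then invoke the homogeneity of $f_R$ to conclude the same bound on all of $[-1,1]^{2n}$ (or at least on the short segment in question), so that the only $\sqrt{n}$ factor remaining comes from the Euclidean norm of the displacement vector. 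Tracking the constants through the real/imaginary decomposition then yields the announced bound $4\eta\sqrt{n}\,\|f\|_\infty d^2$.
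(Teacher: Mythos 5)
Your outline follows the paper's proof almost exactly: split $f$ into real and imaginary parts, apply the mean value theorem along the segment joining $(\textbf{a},\textbf{b})$ and $(\textbf{a}',\textbf{b}')$ (which stays inside $[-1,1]^{2n}$ and has length at most $\eta\sqrt{2n}$), and bound the resulting directional derivative through the multivariate Markov inequality of Theorem~\ref{thm:markov}. Two details in your write-up deserve correction, though.

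First, the ``homogeneity transfer'' you propose does not do what you want. If you literally rescale a point $\xi\in[-1,1]^{2n}$ into the Euclidean unit ball and use that $\nabla f_R$ is homogeneous of degree $d-1$, you pick up a factor $\|\xi\|^{d-1}$, which can be as large as $(2n)^{(d-1)/2}$; this would destroy the claimed bound. The paper simply applies Theorem~\ref{thm:markov} at the intermediate point $(\textbf{a}_0,\textbf{b}_0)$ of the segment, i.e.\ it uses the Markov bound with the cube $[-1,1]^{2n}$ as its domain (the normalisation $|f_R|\le \|f_R\|_\infty$ is taken over the cube, and the gradient bound $2d^2\|f_R\|_\infty$ is invoked at the segment point); no rescaling step is needed or performed. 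Second, combining the real and imaginary estimates by the triangle inequality gives $2\cdot 2d^2\|f\|_\infty\cdot\eta\sqrt{2n}=4\sqrt{2}\,\eta\sqrt{n}\,\|f\|_\infty d^2$, which overshoots the stated constant. To land on $4\eta\sqrt{n}\,\|f\|_\infty d^2$ you should combine them in quadrature, as the paper does:
\begin{equation*}
|f(\textbf{a},\textbf{b})-f(\textbf{a}',\textbf{b}')|
=\sqrt{\big(\mathfrak{R}(f)(\textbf{a},\textbf{b})-\mathfrak{R}(f)(\textbf{a}',\textbf{b}')\big)^2+\big(\mathfrak{I}(f)(\textbf{a},\textbf{b})-\mathfrak{I}(f)(\textbf{a}',\textbf{b}')\big)^2}
\le \eta\sqrt{2n}\cdot 2\sqrt{2}\,d^2\|f\|_\infty ,
\end{equation*}
using $\|\mathfrak{R}(f)\|_\infty,\|\mathfrak{I}(f)\|_\infty\le\|f\|_\infty$. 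With these two repairs your argument coincides with the paper's proof.
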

\begin{proof}
We write $f = \mathfrak{R}(f) + \iota \mathfrak{I}(f)$ where $\mathfrak{R}(f),  \mathfrak{I}(f): \R^n \xrightarrow[]{} \R$. By the mean value theorem, there exists a point $(\textbf{a}_0,\textbf{b}_0)$ on the line segment connecting $(\textbf{a},\textbf{b})$ and $(\textbf{a}',\textbf{b}')$, such that $|\mathfrak{R}(f)(\textbf{a},\textbf{b}) - \mathfrak{R}(f)(\textbf{a}',\textbf{b}')| = ||(\textbf{a},\textbf{b}) - (\textbf{a}',\textbf{b}')||\cdot |(\mathfrak{R}(f))'(\textbf{a}_0,\textbf{b}_0)|$ where $(\mathfrak{R}(f))'(\textbf{a}_0,\textbf{b}_0)$ is the derivative of $\mathfrak{R}(f)$ in the direction $(\textbf{a},\textbf{b}) - (\textbf{a}',\textbf{b}')$ evaluated at $\textbf{a}_0,\textbf{b}_0$. From Theorem \ref{thm:markov}, it follows that $|(\mathfrak{R}(f))'(\textbf{a}_0,\textbf{b}_0)| \leq 2 ||\mathfrak{R}(f)||_{\infty}d^2$. Similarly, we also get that $|(\mathfrak{I}(f))'(\textbf{a}_0,\textbf{b}_0)| \leq 2 ||\mathfrak{I}(f)||_{\infty}d^2$. This finally gives us that
\begin{align*}
    |f(\textbf{a},\textbf{b}) - f(\textbf{a}',\textbf{b}')| &= |\Big(\mathfrak{R}(f)(\textbf{a},\textbf{b}) - \mathfrak{R}(f)(\textbf{a}',\textbf{b}')\Big) + \iota \Big(\mathfrak{I}(f)(\textbf{a},\textbf{b}) - \mathfrak{I}(f)(\textbf{a}',\textbf{b}')\Big)| \\
    &= \sqrt{\Big(\mathfrak{R}(f)(\textbf{a},\textbf{b}) - \mathfrak{R}(f)(\textbf{a}',\textbf{b}')\Big)^2 + \Big(\mathfrak{I}(f)(\textbf{a},\textbf{b}) - \mathfrak{I}(f)(\textbf{a}',\textbf{b}')\Big)^2} \\
    &\leq ||(\textbf{a},\textbf{b}) - (\textbf{a}',\textbf{b}')||\cdot \sqrt{4 ||\mathfrak{R}(f)||^2_{\infty}d^4 + 4 ||\mathfrak{I}(f)||^2_{\infty}d^4} \leq 4\eta\sqrt{n}||f||_{\infty}d^2.
\end{align*}
The last inequality follows from the fact that $||\mathfrak{R}(f)||_{\infty},||\mathfrak{I}(f)||_{\infty} \leq ||f||_{\infty}$.
\end{proof}
\begin{theorem}Let $U = (u_{ij}) \in \text{GL}_n(\C)$ be such that $\kappa_{F}(U) \leq B$. Let $P^{kl}(\textbf{x},\textbf{y}) = \sum_{i,j \in [n]} p_{ij}x_iy_j$ where $p_{ij} = u_{ik}u_{jl} - u_{il}u_{jk}$. Let $C_{CW}$ be the absolute constant guaranteed by Theorem \ref{thm:CW}. Then
\begin{align*}
    \text{Pr}_{(\textbf{a},\textbf{b}) \in_U G_{\eta}}[|P^{kl}(\textbf{a},\textbf{b})| \geq \frac{\sqrt{2}\alpha}{3B} - 16\eta n^{\frac{3}{2}}B] \geq 1- 4C_{CW}\alpha^{\frac{1}{2}}.
\end{align*}
\end{theorem}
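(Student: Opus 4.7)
My plan is a two-stage argument of the kind already described in Section \ref{sec:overview}: first handle the continuous uniform distribution on $[-1,1)^{2n}$ via the Carbery--Wright bound, and then transfer the result to the discrete grid $G_{\eta}$ via the rounding function $g_{\eta}$ and Theorem \ref{thm:norminf}.

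\textbf{Step 1 (continuous stage).} Theorem \ref{thm:CWapplied2} already gives
\[
\Pr_{(\textbf{a},\textbf{b})\in_U [-1,1)^{2n}}\!\left[\,|P^{kl}(\textbf{a},\textbf{b})|\geq \tfrac{\sqrt{2}\alpha}{3B}\,\right]\;\geq\;1-4C_{CW}\alpha^{1/2}.
\]
By Definition \ref{def:roundingfunction}, sampling $(\textbf{a},\textbf{b})$ uniformly from $[-1,1)^{2n}$ and then applying $g_{\eta}$ yields the uniform distribution on $G_{\eta}$, so the plan is to compare the two values $P^{kl}(\textbf{a},\textbf{b})$ and $P^{kl}(\textbf{a}',\textbf{b}')$ where $(\textbf{a}',\textbf{b}')=g_{\eta}(\textbf{a},\textbf{b})$.

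\textbf{Step 2 (bounding $\|P^{kl}\|_{\infty}$).} To invoke Theorem \ref{thm:norminf} with $d=2$, I need a sup-norm bound on $P^{kl}$ over $[-1,1]^{2n}$. Writing $P^{kl}(\textbf{x},\textbf{y})=\textbf{x}^{T}M\textbf{y}$ with $M=u^{(k)}(u^{(l)})^{T}-u^{(l)}(u^{(k)})^{T}$ (where $u^{(k)}$ denotes the $k$-th column of $U$), one has
\[
\|M\|\leq 2\|u^{(k)}\|\cdot\|u^{(l)}\|\leq \|u^{(k)}\|^{2}+\|u^{(l)}\|^{2}\leq \|U\|_{F}^{2}\leq B.
\]
Since $\|\textbf{x}\|,\|\textbf{y}\|\leq\sqrt{n}$ for $\textbf{x},\textbf{y}\in[-1,1]^{n}$, this gives $\|P^{kl}\|_{\infty}\leq n B$. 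Theorem \ref{thm:norminf} with $d=2$ then yields, for every $(\textbf{a},\textbf{b})\in [-1,1)^{2n}$,
\[
|P^{kl}(\textbf{a},\textbf{b})-P^{kl}(\textbf{a}',\textbf{b}')|\;\leq\;4\eta\sqrt{n}\cdot(nB)\cdot 4\;=\;16\eta n^{3/2}B.
\]

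\textbf{Step 3 (combining).} Whenever the continuous event $|P^{kl}(\textbf{a},\textbf{b})|\geq \frac{\sqrt{2}\alpha}{3B}$ holds, the reverse triangle inequality together with the bound in Step 2 forces $|P^{kl}(\textbf{a}',\textbf{b}')|\geq \frac{\sqrt{2}\alpha}{3B}-16\eta n^{3/2}B$. Since the pushforward of the uniform measure on $[-1,1)^{2n}$ under $g_{\eta}$ is uniform on $G_{\eta}$, the probability bound from Step 1 transfers verbatim, giving the claimed inequality.

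\textbf{Expected difficulty.} The argument is essentially routine once the two ingredients are in place; the only delicate point is the tight estimate $\|P^{kl}\|_{\infty}\leq nB$, which matches the constant $16\eta n^{3/2}B$ in the statement. A looser bound (e.g.\ using $\sum_{i,j}|p_{ij}|\leq 2nB$ via the triangle inequality coordinate-wise) would produce an extra factor of $2$; the operator-norm viewpoint on $M$, combined with $\|u^{(k)}\|\cdot\|u^{(l)}\|\leq \tfrac{1}{2}\|U\|_{F}^{2}$, is what yields the stated constant.
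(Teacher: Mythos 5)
Your proposal is correct and follows essentially the same route as the paper: apply the continuous Carbery--Wright bound (Theorem \ref{thm:CWapplied2}), bound $\|P^{kl}\|_{\infty}\leq nB$, and transfer to $G_{\eta}$ via the rounding function and Theorem \ref{thm:norminf} with $d=2$. The only (cosmetic) difference is in the sup-norm bound: you use the operator norm of the rank-two coefficient matrix $M=u^{(k)}(u^{(l)})^{T}-u^{(l)}(u^{(k)})^{T}$, while the paper bounds $\sum_{i,j}|p_{ij}|$ coordinate-wise using AM--GM, which also yields $nB$ (so the factor-of-two loss you attribute to the coordinate-wise route does not actually occur in the paper's argument).
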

\begin{proof}
Using Theorem \ref{thm:norminf} for $f = P^{kl}$ where $d = 2$, we already have that $|P^{kl}(\textbf{a},\textbf{b}) - P^{kl}(\textbf{a}',\textbf{b}')| \leq 16\eta\sqrt{n}||P^{kl}||_{\infty}$. Since $(\textbf{a}',\textbf{b}')$ is selected uniformly at random from $[-1,1]^{2n}$, by Theorem \ref{thm:CWapplied2} we have
$|P^{kl}(\textbf{a}',\textbf{b}')| \geq \frac{\sqrt{2}\alpha}{3B}$ with probability at least $ (1- 4C_{CW}\alpha^{\frac{1}{2}})$. This gives us that
\begin{equation}\label{eq:discreteCW}
    |P^{kl}(\textbf{a}',\textbf{b}')| \geq \frac{\sqrt{2}\alpha}{3B} - 16\eta\sqrt{n}||P^{kl}||_{\infty}
\end{equation}
Now we claim that $||P^{kl}||_{\infty} \leq Bn$. 
Indeed,
\begin{align*}
    ||P^{kl}||_{\infty} &= \max_{v \in [-1,1]^{2n}} |P^{kl}(v)| \\
    &\leq \sum_{i,j \in [n]} |u_{ik}u_{jl} - u_{jk}u_{il}| \\
    &\leq \sum_{i,j \in [n]} (|u_{ik}u_{jl}| + |u_{jk}u_{il}|) \\
    &\leq \sum_{i,j \in [n]} (\frac{|u_{ik}|^2 + |u_{jl}|^2}{2} + \frac{u_{jk}^2 + u_{il}^2}{2}) \\
    &\leq n||U||^2_{F} \leq Bn.
\end{align*}
Putting this in (\ref{eq:discreteCW}), we can conclude that
\begin{align*}
    \text{Pr}_{(\textbf{a},\textbf{b}) \in_U G_{\eta}}[|P^{kl}(\textbf{a},\textbf{b})| \geq \frac{\sqrt{2}\alpha}{3B} - 16\eta Bn^{\frac{3}{2}}] \geq 1- 4C_{CW}\alpha^{\frac{1}{2}}.
\end{align*}
\end{proof}
\begin{corollary}\label{corr:gapnumer}
Let $U = (u_{ij}) \in \text{GL}_n(\C)$ be such that $\kappa_{F}(U) \leq B$. Let $P^{kl}(\textbf{x},\textbf{y}) = \sum_{i,j \in [n]} p_{ij}x_iy_j$ where $p_{ij} = u_{ik}u_{jl} - u_{il}u_{jk}$. Let $C_{CW}$ be the absolute constant guaranteed by Theorem \ref{thm:CW}. Then
\begin{align*}
    \text{Pr}_{(\textbf{a},\textbf{b}) \in_U G_{\eta}}[|P^{kl}(\textbf{a},\textbf{b})| \geq k] \geq (1- 4C_{CW}\Big(\frac{3B(k + 16\eta Bn^{\frac{3}{2}})}{\sqrt{2}}\Big)^{\frac{1}{2}}).
\end{align*}
\end{corollary}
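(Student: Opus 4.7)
The statement is a direct reformulation of the preceding unnamed theorem, where the roles of $\alpha$ and the threshold value are swapped. The plan is therefore to solve for $\alpha$ in terms of the desired threshold $k$, and then apply that theorem verbatim.

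More precisely, the previous theorem asserts that for any $\alpha > 0$,
\[
\text{Pr}_{(\textbf{a},\textbf{b}) \in_U G_{\eta}}\!\left[|P^{kl}(\textbf{a},\textbf{b})| \geq \tfrac{\sqrt{2}\alpha}{3B} - 16\eta n^{\frac{3}{2}}B\right] \geq 1 - 4C_{CW}\alpha^{\frac{1}{2}}.
\]
I would set
\[
k := \tfrac{\sqrt{2}\alpha}{3B} - 16\eta n^{\frac{3}{2}}B,
\]
which upon solving for $\alpha$ yields $\alpha = \frac{3B(k + 16\eta B n^{3/2})}{\sqrt{2}}$. Plugging this back into the bound on the probability gives exactly the conclusion of the corollary.

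The only mild subtlety is ensuring $\alpha > 0$, i.e.\ that the reformulation is meaningful: this holds whenever $k > -16\eta B n^{3/2}$, and in particular for any nonnegative threshold $k$, which is the regime of interest for later applications (where $k$ will be a lower bound on the eigenvalue gap numerator). There is no technical obstacle here; the corollary is purely a cosmetic rewriting that will be convenient in Section~\ref{sec:probanalysisproof} when $k$ is chosen as a function of $n$ and $B$ (rather than the somewhat awkward parameter $\alpha$).
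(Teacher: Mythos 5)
Your proposal is correct and matches the paper's (implicit) derivation: the corollary is stated without proof precisely because it follows from the preceding theorem by the substitution $\alpha = \frac{3B(k + 16\eta Bn^{3/2})}{\sqrt{2}}$, which is exactly what you do. The remark about needing $\alpha > 0$ (automatic for $k \geq 0$) is a fine sanity check but not an issue.
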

In the next theorem, we give a similar result for the polynomial in Theorem \ref{thm:CWapplied1}. We will require this later in Theorem \ref{thm:normprob}. First we give a lower bound for the $l_2$ norm of the polynomial.
\begin{theorem}\label{thm:normdenom}
Let $U = (u_{ij}) \in \text{GL}_n(\C)$ be such that $\kappa_{F}(U) \leq B$. Let $P^{k}(\textbf{x}) = \sum_{i \in [n]} p_{i}x_i$ where $p_{i} = u_{ik}$. Let $C_{CW}$ be the absolute constant guaranteed in Theorem \ref{thm:CW}. Then
\begin{align*}
    \text{Pr}_{(\textbf{a},\textbf{b}) \in_U G_{\eta}}[|P^{k}(\textbf{a})| \geq \frac{\alpha}{\sqrt{3B}} - \eta\sqrt{nB}] \geq 1- 2C_{CW}\alpha.
\end{align*}
\end{theorem}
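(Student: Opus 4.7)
The plan is to mirror the structure of the preceding grid-discretization argument (Corollary \ref{corr:gapnumer} for the quadratic polynomial $P^{kl}$), combining the continuous Carbery-Wright bound from Theorem \ref{thm:CWapplied1} with a rounding estimate. The underlying random process, as in Definition \ref{def:roundingfunction}, is to first draw $(\textbf{a}',\textbf{b}')$ uniformly from $[-1,1)^{2n}$ and then set $(\textbf{a},\textbf{b}) = g_{\eta}(\textbf{a}',\textbf{b}') \in G_{\eta}$; by that definition the resulting $(\textbf{a},\textbf{b})$ is uniform on the grid $G_{\eta}$.

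The continuous side is immediate: since $P^k$ only depends on $\textbf{a}'$, and the marginal of $\textbf{a}'$ is uniform on $[-1,1)^n$, Theorem \ref{thm:CWapplied1} gives $|P^k(\textbf{a}')| \geq \alpha/\sqrt{3B}$ with probability at least $1 - 2C_{CW}\alpha$. It then suffices to show that the rounding perturbation $|P^k(\textbf{a}) - P^k(\textbf{a}')|$ is at most $\eta\sqrt{nB}$ \emph{deterministically}, so that the event $|P^k(\textbf{a})| \geq \alpha/\sqrt{3B} - \eta\sqrt{nB}$ holds with at least the same probability.

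For the rounding estimate I would \emph{not} invoke the multivariate Markov machinery of Theorem \ref{thm:norminf} (which, as in the quadratic case, loses an extra $\sqrt{n}$ factor). Instead, I would exploit linearity of $P^k$ directly: writing $P^k(\textbf{a}) - P^k(\textbf{a}') = \sum_{i=1}^n u_{ik}(a_i - a_i')$ and applying Cauchy-Schwarz yields
\begin{equation*}
|P^k(\textbf{a}) - P^k(\textbf{a}')| \;\leq\; \Bigl(\sum_{i=1}^n |u_{ik}|^2\Bigr)^{1/2} \Bigl(\sum_{i=1}^n |a_i - a_i'|^2\Bigr)^{1/2}.
\end{equation*}
The first factor is at most $\|U\|_F \leq \sqrt{B}$ since $\kappa_F(U) \leq B$, and the second factor is at most $\eta\sqrt{n}$ because each coordinate changes by less than $\eta$ under $g_\eta$. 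Multiplying gives exactly $\eta\sqrt{nB}$, which is the loss term announced in the statement.

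I do not anticipate any substantive obstacle; the only technical subtlety is to resist the temptation to plug into Theorem \ref{thm:norminf}, which would give the weaker bound $4\eta n\sqrt{B}$ (matching the quadratic proof's use of Markov but wasteful here). Combining the two estimates through the triangle inequality completes the proof with the same success probability $1 - 2C_{CW}\alpha$ as in the continuous setting.
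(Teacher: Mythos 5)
Your proposal is correct and matches the paper's own proof: the paper likewise treats the linear case by bounding the rounding error deterministically via Cauchy–Schwarz, $|P^k(\textbf{a})-P^k(\textbf{a}')|\leq \|\textbf{a}-\textbf{a}'\|\,\|U\|_F \leq \eta\sqrt{nB}$, and then combines this with Theorem \ref{thm:CWapplied1} on the continuous draw, exactly avoiding Theorem \ref{thm:norminf} as you do. No gaps to report.
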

\begin{proof}
Using the fact that $P^{k}$ is a linear polynomial and using the Cauchy-Schwarz inequality, we get that
\begin{align*}
    &||P^{k}(\textbf{a}) - P^{k}(\textbf{a}')||^2 \\
    &= ||\sum_{i \in [n]} p_{i}(a_i-a'_i)||^2 \\
    &\leq ||\textbf{a} - \textbf{a}'||^2||\sum_{i=1}^n |p_i|^2|| \\
    &\leq ||\textbf{a} - \textbf{a}'||^2||U||_F^2 \\
    &\leq ||\textbf{a} - \textbf{a}'||^2\kappa_F(U) < n\eta^2B
\end{align*}
This gives us that  $||P^{k}(\textbf{a}) - P^{k}(\textbf{a}')|| \leq \eta\sqrt{nB}$. Since $\textbf{a}$ is selected uniformly at random from $[-1,1]^{n}$, using Theorem \ref{thm:CWapplied1}, we have that 
$|P^{k}(\textbf{a})| \geq \frac{\alpha}{\sqrt{3B}}$ with probability at least $ (1- 2C_{CW}\alpha)$. This gives us that
\begin{equation}
    |P^{k}(\textbf{a}')| \geq \frac{\alpha}{\sqrt{3B}} - \eta\sqrt{nB}
\end{equation}
with probability at least $ 1- 2C_{CW}\alpha$.
\end{proof}
{ Finally, we give a lemma that will be needed in Section~\ref{sec:probanalysisproof}}.
\begin{lemma}\label{lem:gapdenom}
Let $U = (u_{ij}) \in M_n(\C)$ be such that $\kappa_F(U) \leq B$. Then, given $\textbf{a} \in [-1,1]^n$, for all $k,l \in [n]$, $| (\sum_{i \in [n]}a_iu_{ki})(\sum_{j \in [n]}a_ju_{lj})| \leq \frac{nB}{2}$.
\end{lemma}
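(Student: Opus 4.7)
The plan is to bound the product of two inner products by bounding each factor using the Cauchy-Schwarz inequality, and then to combine the two resulting norm factors via the AM--GM inequality so that the sum of the squared row norms can be related to $\|U\|_F^2$.

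First I would rewrite the quantity of interest as $|\langle a, u^{(k)}\rangle \cdot \langle a, u^{(l)}\rangle|$, where $u^{(k)}$ denotes the $k$-th row of $U$ (with entries $u_{ki}$ for $i \in [n]$). Applying the Cauchy--Schwarz inequality to each inner product separately yields
\begin{equation*}
\Big|\Big(\sum_{i \in [n]} a_i u_{ki}\Big)\Big(\sum_{j \in [n]} a_j u_{lj}\Big)\Big| \leq \|a\|^2 \cdot \|u^{(k)}\| \cdot \|u^{(l)}\|.
\end{equation*}
Since $a \in [-1,1]^n$, we immediately have $\|a\|^2 \leq n$.

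Next, I would apply the AM--GM inequality $\|u^{(k)}\| \cdot \|u^{(l)}\| \leq \tfrac{1}{2}(\|u^{(k)}\|^2 + \|u^{(l)}\|^2)$. The sum $\|u^{(k)}\|^2 + \|u^{(l)}\|^2$ is at most the sum of the squared norms of all rows, which equals $\|U\|_F^2$. By hypothesis $\kappa_F(U) = \|U\|_F^2 + \|U^{-1}\|_F^2 \leq B$, so in particular $\|U\|_F^2 \leq B$. Combining these estimates gives the bound $\frac{nB}{2}$ as claimed. No step presents a genuine obstacle here; the lemma is a routine application of Cauchy--Schwarz and AM--GM, and its main role is to feed into the probabilistic analysis of Section \ref{sec:probanalysisproof}.
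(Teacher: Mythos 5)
Your proposal is correct and follows essentially the same route as the paper: Cauchy--Schwarz on each inner product, the bound $\|a\|^2 \leq n$, and AM--GM to control $\|u^{(k)}\|\,\|u^{(l)}\|$ by $\tfrac{1}{2}\|U\|_F^2 \leq \tfrac{B}{2}$. No issues.
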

\begin{proof}
\begin{align*}
 &| (\sum_{i \in [n]}a_iu_{ki})(\sum_{j \in [n]}a_ju_{lj})|   \\
 & =  | (\sum_{i \in [n]}a_iu_{ki})||(\sum_{j \in [n]}a_ju_{lj})| \\
 &\leq ||\textbf{a}||^2||u_k||||u_l||
\end{align*}
where $u_k$ and $u_l$ are the $k$-th and $l$-th rows of $U$ respectively. Now we get that
\begin{align*}
    ||u_k||||u_l|| \leq \frac{||u_k||^2 + ||u_l||^2}{2} \leq \frac{(\kappa_F(U))}{2} \leq \frac{B}{2}.
\end{align*}
Since $\textbf{a} \in [-1,1]^n$, then $||\textbf{a}|| \leq \sqrt{n}$. Combining these inequalities yields the desired result. 
\end{proof}

\subsection{Towards a proof of Theorem \ref{th:main}}\label{sec:probanalysisproof}
Let $T_1,...,T_n$ be the slices of the tensor $T$ given as input to Algorithm \ref{algo:Jennrich}. Let $k_{\text{gap}}, k_F$ be the parameters as set in Algorithm \ref{algo:Jennrich}. Let $a_1,...,a_n,b_1,...,b_n$ be picked uniformly and independently at random from a finite grid $G_{\eta} \subset~[-1,1]^{2n}$ (as defined in Definition \ref{def:roundingfunction}). Let $T^{(a)} = \sum_{i=1}^n a_iT_i$ and $T^{(b)} = \sum_{i=1}^n b_iT_i$. Recall from (\ref{eq:gap}) that for a matrix $A$, $\text{gap}(A)$ is defined as the minimum distance 
between its eigenvalues. In this subsection, as claimed in Section \ref{sec:properror}, we show that $T^{(a)}$ is invertible, $\text{gap}((T^{(a)})^{-1}T^{(b)}) \geq k_{\text{gap}}$ and $\kappa_F(T^{(a)}) \leq k_F$ with high probability.  
\begin{theorem}\label{thm:gapbound}
Let $T \in (\C^{n})^{\otimes 3}$ be a diagonalisable order-$3$ symmetric tensor such that $\kappa(T) \leq B$. We denote by $T_1,...,T_n$ the slices of $T$. Let $(a_1,...,a_n,b_1,...,b_n) \in \mathbb{R}^{2n}$ be picked from $G_{\eta}$ uniformly at random and set $T^{(a)} := \sum_{i=1}^n a_iT_i$, $T^{(b)} := \sum_{i=1}^n b_iT_i$. If $T^{(a)}$ is invertible, let $T^{(a)'} = (T^{(a)})^{-1}$. Then for any $k_{\text{gap}} > 0$, we have that
\begin{align*}
    \text{Pr}_{(\textbf{a},\textbf{b}) \in_U G_\eta}[T^{(a)} \text{ is invertible and } \text{gap}(T^{(a)'}T^{(b)}) \geq k_{\text{gap}}] \geq 1 -  \Big(4n^2C_{CW}(\frac{3B\alpha_{\text{gap}}}{\sqrt{2}})^{\frac{1}{2}} + \frac{n\eta}{2}\Big)
\end{align*}
where $\alpha_{\text{gap}} = \frac{nBk_{\text{gap}}}{2} + 16\eta Bn^{\frac{3}{2}}$.
\end{theorem}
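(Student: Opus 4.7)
The plan is to write the eigenvalues of $T^{(a)'}T^{(b)}$ explicitly as ratios of inner products, reduce the gap lower bound to an anti-concentration statement for the quadratic polynomials $P^{ij}$, and finally control the invertibility of $T^{(a)}$ by a Schwartz--Zippel-style estimate on linear forms over the grid.

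Since $T$ is diagonalisable with $T = \sum_{i=1}^n u_i^{\otimes 3}$, letting $U$ be the matrix with rows $u_1,\ldots,u_n$ we have $\kappa_F(U) \leq B$. Corollary~\ref{corr:P3structural} gives $T_k = U^T D_k U$ where $D_k = \mathrm{diag}(u_{1,k},\ldots,u_{n,k})$, so
\begin{equation*}
T^{(a)} = U^T D^{(a)} U, \qquad T^{(b)} = U^T D^{(b)} U,
\end{equation*}
with $D^{(a)} = \mathrm{diag}(\langle a, u_1\rangle,\ldots,\langle a, u_n\rangle)$ and similarly for $D^{(b)}$. Following~(\ref{eq:formulata'tb}), whenever $T^{(a)}$ is invertible the eigenvalues of $T^{(a)'}T^{(b)}$ are $\lambda_i = \langle b,u_i\rangle / \langle a,u_i\rangle$, and a common-denominator calculation yields
\begin{equation*}
\lambda_i - \lambda_j \;=\; \frac{P^{ij}(a,b)}{\langle a, u_i\rangle\,\langle a, u_j\rangle},
\end{equation*}
where $P^{ij}$ is exactly the quadratic polynomial appearing in Corollary~\ref{corr:gapnumer}.

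The key reduction is then the following: by Lemma~\ref{lem:gapdenom} the denominator satisfies $|\langle a, u_i\rangle \langle a, u_j\rangle| \leq nB/2$ deterministically for $a \in [-1,1]^n$, so the event $|\lambda_i - \lambda_j| \geq k_{\text{gap}}$ is implied by $|P^{ij}(a,b)| \geq \tfrac{nB}{2} k_{\text{gap}}$. I would apply Corollary~\ref{corr:gapnumer} with the choice $k = \tfrac{nBk_{\text{gap}}}{2}$, for which $k + 16\eta B n^{3/2} = \alpha_{\text{gap}}$, giving a per-pair failure probability of at most $4C_{CW}\bigl(3B\alpha_{\text{gap}}/\sqrt{2}\bigr)^{1/2}$. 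A union bound over the $\binom{n}{2} \leq n^2$ unordered pairs $\{i,j\}$ then bounds the probability of a small gap by $4n^2 C_{CW}(3B\alpha_{\text{gap}}/\sqrt{2})^{1/2}$.

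For invertibility of $T^{(a)} = U^T D^{(a)} U$, since $U$ is invertible the question reduces to showing $\langle a, u_k\rangle \neq 0$ for every $k$. Because $u_k \neq 0$, some coordinate $u_{k,j_0}$ is nonzero; conditioning on the $n-1$ other coordinates of $a$, there is at most one value of $a_{j_0}$ in the grid $\{-1,-1+\eta,\ldots,1-\eta\}$ of size $2/\eta$ that zeroes the linear form, so $\Pr[\langle a,u_k\rangle = 0] \leq \eta/2$. Union-bounding over $k = 1,\ldots,n$ contributes the additive $n\eta/2$ term, and combining with the gap estimate yields the stated bound. The main care needed is purely bookkeeping: lining up the constants between the $\alpha$ in Corollary~\ref{corr:gapnumer} and the chosen threshold $nBk_{\text{gap}}/2$, and then aggregating the quadratic union bound with the linear one. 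No new analytic input beyond Section~\ref{sec:normbounds} is required.
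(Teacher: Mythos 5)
Your proposal is correct and follows essentially the same route as the paper: write the eigenvalues as $\langle b,u_i\rangle/\langle a,u_i\rangle$, reduce the gap bound to anti-concentration of the quadratics $P^{ij}$ via Corollary~\ref{corr:gapnumer} with threshold $nBk_{\text{gap}}/2$ and the denominator bound of Lemma~\ref{lem:gapdenom}, then union-bound over pairs and add the $n\eta/2$ invertibility term from the per-coordinate grid argument. The only cosmetic difference is that you phrase the implication "gap large if all numerators large" up front rather than at the end, which changes nothing in substance.
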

\begin{proof}
Let $U$ be the matrix with rows $u_1,...,u_n$ such that $T = \sum_{i=1}^n u_i^{\otimes 3}$ and $\kappa(T) = \kappa_F(U) \leq B$. If $a_1,...,a_n$ are picked independently and uniformly at random from a finite set $S$, 
from (\ref{eq:tainvertible}), we get that $T^{(a)}$ is invertible with probability at least $1 - \frac{n}{|S|}$. We use this for $S = \{-1, -1+ \eta,...,1-2\eta, 1-\eta\} \subset [-1,1]$.
Since $|S| = \frac{2}{\eta}$, if $a$ is picked uniformly and independently at random from $G_{\eta}$, $\langle a,u_k \rangle = 0$  with probability at most $\frac{\eta}{2}$. { Recall the definition of $D^{(a)}$ in Theorem~\ref{thm:P3structural}.}
It follows from the union bound that $\text{det}(D^{(a)}) \neq 0$ with probability at least $1- \frac{n\eta}{2}$.
\par
If $T^{(a)}$ is invertible, let $\lambda_1,...,\lambda_n$ be the eigenvalues of $T^{(a)'}T^{(b)}$. Then by Theorem \ref{thm:P3structural} (more precisely the fact that $T^{(a)} = U^TD^{(a)}U$), we get that $\lambda_k = \frac{\langle b,u_k \rangle}{\langle a,u_k \rangle}$ where $u_k$ are the rows of $U$ and $\langle a,u_k \rangle \neq 0$. Hence
\begin{align*}
    gap(T^{(a)'}T^{(b)}) &= \min_{k \neq l \in [n]} \Big|\frac{\langle b,u_k \rangle}{\langle a,u_k \rangle} - \frac{\langle b,u_l \rangle}{\langle a,u_l \rangle}\Big| \\
    &= \min_{k \neq l \in [n]} \Big|\frac{\langle b,u_k \rangle\langle a,u_l \rangle - \langle b,u_l \rangle\langle a,u_k \rangle}{\langle a,u_k \rangle\langle a,u_l \rangle}\Big|
\end{align*}
By Corollary \ref{corr:gapnumer}, 
if $\textbf{a}$ is picked from $G_{\eta}$ uniformly at random, then $|\langle b,u_k \rangle\langle a,u_l \rangle - \langle b,u_l \rangle\langle a,u_k \rangle| < t$ with probability at most $4C_{CW}(\frac{3B}{\sqrt{2}} (t + 16\eta Bn^{\frac{3}{2}}))^{\frac{1}{2}}$. Combining these results with the union bound, we get that 
\begin{align*}
    &\text{Pr}_{a,b \in G_{\eta}}[\exists k,l \in [n] |\langle b,u_k \rangle\langle a,u_l \rangle - \langle b,u_l \rangle\langle a,u_k \rangle| < t \text{ } \cup T^{(a)} \text{ } \text{ is not invertible}] \\ 
    &\leq 4n^2C_{CW}(\frac{3B}{\sqrt{2}} (t + 16\eta Bn^{\frac{3}{2}}))^{\frac{1}{2}} + \frac{n\eta}{2}.
\end{align*}
This gives us that
\begin{align*}
    &\text{Pr}_{a,b \in G_{\eta}}[T^{(a)} \text{ is invertible and }\text{for all } k,l \in [n] |\langle b,u_k \rangle\langle a,u_l \rangle - \langle b,u_l \rangle\langle a,u_k \rangle| > t ] \\
    &\geq 1 -  (4n^2C_{CW}(\frac{3B}{\sqrt{2}} (t + 16\eta Bn^{\frac{3}{2}}))^{\frac{1}{2}} + \frac{n\eta}{2}).
\end{align*}
Now if $|\langle b,u_k \rangle\langle a,u_l \rangle - \langle b,u_l \rangle\langle a,u_k \rangle| > t$,  we have that 
\begin{align*}
    \text{gap}(T^{(a)'}T^{(b)}) > t\min_{k \neq l \in [n]} \frac{1}{|\langle a,u_k \rangle\langle a,u_l \rangle|}
\end{align*}
By Lemma \ref{lem:gapdenom},  since $\kappa_F(U) \leq B$ we have  $|\langle a,u_k \rangle\langle a,u_l \rangle| \leq \frac{nB}{2}$ for all $\textbf{a} \in G_{\eta} \subseteq [-1,1]^n$. This implies that 
$\text{gap}(T^{(a)'}T^{(b)}) > \frac{2t}{nB}$. Finally, setting $t = \frac{nBk_{\text{gap}}}{2}$, we get the desired conclusion.
\end{proof}
\begin{theorem}\label{thm:normprob}
Let $T \in \C^{n \times n \times n}$ be a diagonalisable degree-$3$ symmetric tensor such that $\kappa(T) \leq B$, where $T_1,...,T_n$ are the slices of $T$. Let $a \in~[-1,1]^{2n}$ be picked from $G_{\eta}$ uniformly at random and set $T^{(a)} := \sum_{i=1}^n a_iT_i$. If $T^{(a)}$ is invertible, let $T^{(a)'} = (T^{(a)})^{-1}$. Then for all $k_F > nB^3$, we have that
\begin{align*}
    \text{Pr}_{\textbf{a} \in_U G_\eta}[T^{(a)} \text{ is invertible and } \kappa_F(T^{(a)}) \leq k_{F}] \geq  1- (2nC_{CW}\alpha_F + \frac{n\eta}{2})
\end{align*}
where $\alpha_{F} = \sqrt{3B}(\sqrt{\frac{nB^2}{k_F - nB^3}}+ \eta \sqrt{nB})$.
\end{theorem}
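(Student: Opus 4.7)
The plan is to reduce the bound on $\kappa_F(T^{(a)})$ to a uniform lower bound on the quantities $|\langle a, u_k\rangle|$ for $k \in [n]$, and then apply the anticoncentration machinery of Theorem~\ref{thm:normdenom} (one linear polynomial per index $k$) together with a union bound.

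First, I would invoke Theorem~\ref{thm:P3structural} to write $T^{(a)} = U^T D^{(a)} U$, where $D^{(a)}=\mathrm{diag}(\langle a,u_1\rangle,\ldots,\langle a,u_n\rangle)$. From $\kappa_F(U)\le B$ we get $\|U\|_F,\|U^{-1}\|_F \le \sqrt{B}$ and hence the operator-norm bounds $\|U\|,\|U^{-1}\|\le \sqrt{B}$. Using submultiplicativity of the mixed operator/Frobenius norm, together with Lemma~\ref{lem:normnum} (which gives $\|D^{(a)}\|\le \|D^{(a)}\|_F \le \sqrt{nB}$), I would obtain
\[
\|T^{(a)}\|_F^2 \;\le\; \|U\|^2\cdot\|D^{(a)}\|^2\cdot\|U\|_F^2 \;\le\; B\cdot nB\cdot B \;=\; nB^3.
\]
On the other side, $(T^{(a)})^{-1} = U^{-1}(D^{(a)})^{-1}U^{-T}$, and the same type of estimate yields
\[
\|(T^{(a)})^{-1}\|_F^2 \;\le\; \|U^{-1}\|^2\,\|(D^{(a)})^{-1}\|_F^2\,\|U^{-T}\|^2 \;\le\; B^2\sum_{k=1}^n\frac{1}{|\langle a,u_k\rangle|^2}.
\]

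Now set the threshold $\beta := \sqrt{\tfrac{nB^2}{k_F-nB^3}}$ (well-defined since $k_F>nB^3$), and suppose $|\langle a,u_k\rangle|\ge \beta$ for every $k\in[n]$. Summing the $n$ reciprocals gives $\sum_k 1/|\langle a,u_k\rangle|^2 \le n/\beta^2 = (k_F-nB^3)/B^2$, so $\|(T^{(a)})^{-1}\|_F^2 \le k_F-nB^3$. Combined with the $\|T^{(a)}\|_F^2 \le nB^3$ bound, this gives $\kappa_F(T^{(a)})\le k_F$. In particular this event already implies that $T^{(a)}$ is invertible. It therefore suffices to lower-bound the probability of the event $\{|\langle a,u_k\rangle|\ge \beta,\ \forall k\in[n]\}$.

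To do this I would apply Theorem~\ref{thm:normdenom} to the linear polynomial $P^k(x)=\langle x,u_k\rangle$, with the parameter~$\alpha$ chosen so that $\tfrac{\alpha}{\sqrt{3B}}-\eta\sqrt{nB}=\beta$; solving gives exactly $\alpha=\alpha_F=\sqrt{3B}(\sqrt{nB^2/(k_F-nB^3)}+\eta\sqrt{nB})$. The theorem then yields $\Pr_{a\in_U G_\eta}[|P^k(a)|\ge \beta]\ge 1-2C_{CW}\alpha_F$ for each fixed $k$, and a union bound over the $n$ indices gives the joint probability bound $1-2nC_{CW}\alpha_F$. Finally, to match the statement, I would add a separate Schwartz–Zippel–type estimate guaranteeing that $T^{(a)}$ is invertible (i.e.\ $\langle a,u_k\rangle\ne 0$ for all $k$): using $|S|=2/\eta$ in equation~(\ref{eq:tainvertible}) gives a failure probability of at most $n\eta/2$. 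A final union bound produces the claimed total failure probability $2nC_{CW}\alpha_F+n\eta/2$. The only genuinely delicate step is pinning down the correct threshold $\beta$; the factor of $n$ inside the square root in $\alpha_F$ comes precisely from summing $n$ reciprocals in the bound on $\|(T^{(a)})^{-1}\|_F^2$, and all subsequent constants follow mechanically.
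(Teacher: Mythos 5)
Your proposal is correct and follows essentially the same route as the paper's proof: both use $T^{(a)} = U^T D^{(a)} U$ from Theorem~\ref{thm:P3structural} to bound $\|T^{(a)}\|_F^2 \le nB^3$ and reduce $\|(T^{(a)})^{-1}\|_F^2$ to $B^2\sum_k |\langle a,u_k\rangle|^{-2}$, then apply Theorem~\ref{thm:normdenom} with the threshold $k=\sqrt{nB^2/(k_F-nB^3)}$ and a union bound, adding the $n\eta/2$ Schwartz--Zippel term for invertibility. Your observation that the uniform lower bound on the $|\langle a,u_k\rangle|$ already implies invertibility (making that extra term technically redundant) is a minor refinement, not a deviation.
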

\begin{proof}
Let $U$ be the matrix with rows $u_1,...,u_n$ such that $T = \sum_{i=1}^n u_i^{\otimes 3}$ and $\kappa(T) = \kappa_F(U) \leq B$. Since $a$ is picked uniformly and independently from $G_{\eta}$, following the proof of Theorem \ref{thm:gapbound}, 
$T^{(a)}$ is invertible with probability at least $(1- \frac{n\eta}{2})$.
If $T^{(a)}$ is invertible, using Theorem \ref{thm:P3structural}, and more precisely the fact that $T^{(a)} = U^TD^{(a)}U$, we have:
\begin{align*}
    ||T^{(a)'}||_F &\leq ||U^{-1}||^2_F||(D^{(a)})^{-1}||_F \\
    &\leq  \kappa_F(U)||(D^{(a)})^{-1}||_F \\
    &\leq B||(D^{(a)})^{-1}||_F.
\end{align*}
Now,  $||(D^{(a)})^{-1}||_F^2 = \sum_{i=1}^n \frac{1}{|\langle a, u_i \rangle|^2}$. By Theorem \ref{thm:normdenom},  if $\textbf{a}$ is picked from $G_{\eta}$ uniformly at random, 
then $|\langle a, u_i \rangle| \geq k$ with probability at least $1 - 2C_{CW}(\sqrt{3B}(k+ \eta \sqrt{nB}))$. This gives us that
\begin{align*}
    &\text{Pr}_{a \in G_{\eta}}[\exists m \in [n] |\langle a,u_m \rangle| \leq k \text{ } \cup  \text{ }  T^{(a)}\text{ is not invertible}] \\
    &\leq \sum_{m=1}^n\text{Pr}_{a \in G_{\eta}}[|\langle a,u_m \rangle| \leq k] + \text{Pr}_{a \in G_{\eta}}[T^{(a)} \text{ } \text{ is not invertible}] \\
    &\leq 2nC_{CW}(\sqrt{3B}(k+ \eta \sqrt{nB})) + \frac{n\eta}{2}.
\end{align*}
As a result, 
\begin{align*}
    &\text{Pr}_{a \in G_{\eta}}[\text{for all } m \in [n] |\langle a,u_m \rangle| \geq k \text{ and }  T^{(a)} \text{ is invertible}] \\
    &\geq 1- (2nC_{CW}(\sqrt{3B}(k+ \eta \sqrt{nB})) + \frac{n\eta}{2}).
\end{align*}
By Lemma \ref{lem:normnum},  $||D^{(a)}||^2 \leq nB$. This further implies that if $|\langle a,u_m \rangle| \geq k$ for all $m$, then  $||(D^{(a)})^{-1}||_F^2 + ||D^{(a)}||_F^2 \leq \frac{n}{k^2} + nB$, which in turn implies that $\kappa_F(T^{(a)}) = ||T^{(a)'}||_F^2 + ||T^{(a)}||^2_F \leq \frac{nB^2}{k^2} + nB^3 $. Setting $k = \sqrt{\frac{nB^2}{k_F - nB^3}}$ gives the desired conclusion.
\end{proof}

\begin{theorem}\label{thm:mainprob}
Let $T \in \mathbb{C}^{n \times n \times n}$ be a diagonalisable degree-$3$ symmetric tensor such that $\kappa(T) \leq B$. Let $T_1,...,T_n$ be the slices of $T$ and given $a,b$ picked uniformly and independently at random from $G_{\eta}$, set $T^{(a)} :=~\sum_{i=1}^n a_iT_i$ and $T^{(b)} := \sum_{i=1}^n b_iT_i$. If $T^{(a)}$ is invertible, let $T^{(a)'} = (T^{(a)})^{-1}$. We assume that $l_1,...,l_n$ is the output returned by Algorithm \ref{algo:Jennrich} on input $T$, $B$ and an accuracy parameter $\varepsilon$. Let $k_{\text{gap}}$ and $k_F$ be as defined in Theorem \ref{thm:mainerrorfinal}. Then there exist cube roots of unity $\omega_i$ such that $||\omega_iu_i - l_i|| < \varepsilon$, $T^{(a)}$ is invertible, $\text{gap}(T^{(a)'}T^{(b)}) \geq k_{\text{gap}}$ and $\kappa_F(T^{(a)}) \leq k_F$ with probability at least $$\Big(1- \frac{1}{n} - \frac{12}{n^2}\Big)\Big(1 - \Big(nC_{CW}\alpha_F + 4n^2C_{CW}(\frac{3B\alpha_{\text{gap}}}{\sqrt{2}})^{\frac{1}{2}} + n\eta\Big)\Big)$$
where $\alpha_{\text{gap}} = \frac{nBk_{\text{gap}}}{2} + 16\eta Bn^{\frac{3}{2}}$ and $\alpha_{F} = \sqrt{3B}(\sqrt{\frac{nB^2}{k_F - nB^3}}+ \eta \sqrt{nB})$.
\end{theorem}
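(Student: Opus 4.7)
The plan is to assemble Theorem~\ref{thm:mainerrorfinal} (the deterministic correctness of Algorithm~\ref{algo:Jennrich} conditioned on $a,b$ being ``good''), Theorem~\ref{thm:gapbound} (probabilistic eigenvalue gap bound) and Theorem~\ref{thm:normprob} (probabilistic Frobenius condition number bound) via independence of the two sources of randomness in the algorithm: the grid draw of $(a,b) \in G_\eta$ on the one hand, and the internal randomness of the subroutine $EIG$ called inside Algorithm~\ref{algo:Jennrich} on the other.

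First I would isolate the ``good input'' event $\mathcal{E}$ for the random pair $(a,b)$: namely, $T^{(a)}$ is invertible, $\mathrm{gap}((T^{(a)})^{-1}T^{(b)}) \geq k_{\mathrm{gap}}$, and $\kappa_F(T^{(a)}) \leq k_F$, where $k_{\mathrm{gap}}$ and $k_F$ are fixed to the exact values used in Algorithm~\ref{algo:Jennrich} (and matching those required in Theorem~\ref{thm:mainerrorfinal}). Applying Theorem~\ref{thm:gapbound} with parameter $k_{\mathrm{gap}}$ and Theorem~\ref{thm:normprob} with parameter $k_F$ (both of which also yield invertibility of $T^{(a)}$), and combining via a union bound over the two failure events, I obtain
$$\Pr_{(a,b)\in_U G_\eta}[\mathcal{E}] \geq 1 - \Big(nC_{CW}\alpha_F + 4n^2C_{CW}\bigl(\tfrac{3B\alpha_{\mathrm{gap}}}{\sqrt{2}}\bigr)^{1/2} + n\eta\Big),$$
after absorbing the two $\tfrac{n\eta}{2}$ terms into a single $n\eta$ (the two ``$T^{(a)}$ invertible'' failure events coincide and are not double-counted) and collecting the $\alpha_F$ contributions.

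Next, conditional on $\mathcal{E}$ I invoke Theorem~\ref{thm:mainerrorfinal}: its hypotheses are exactly the three conditions defining $\mathcal{E}$, together with $\kappa(T) \leq B$ which is given by the input. That theorem guarantees that over the internal randomness of $EIG$ (called in step~4 of Algorithm~\ref{algo:Jennrich} through $EIG\text{-}FWD$), the algorithm outputs vectors $l_1,\ldots,l_n$ which form an $\epsilon$-approximation to the tensor decomposition of $T$, i.e. $\|\omega_i u_i - l_i\| < \epsilon$ up to some permutation and cube roots of unity $\omega_i$, with probability at least $1 - \tfrac{1}{n} - \tfrac{12}{n^2}$.

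Finally, since the grid draw of $(a,b)$ is performed before $EIG$ is run and is independent of its internal coin tosses, the conditional success probability and the probability of $\mathcal{E}$ multiply, yielding the stated lower bound. The main obstacle, and hence the point I would handle most carefully, is keeping the two failure contributions from $T^{(a)}$ being non-invertible from being counted twice when the union bound combines Theorems~\ref{thm:gapbound} and~\ref{thm:normprob}; beyond that, the argument is a straightforward chaining of probability bounds.
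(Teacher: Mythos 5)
Your proposal matches the paper's own proof: the paper defines the events $E_1$ (accuracy), $E_2$ (gap), $E_3$ (condition number), $E_4$ (invertibility), bounds $\Pr[E_2,E_3,E_4]$ by a union bound over Theorems~\ref{thm:gapbound} and~\ref{thm:normprob} (with the two $\tfrac{n\eta}{2}$ terms summing to $n\eta$, exactly as in your bound), and then multiplies by $\Pr[E_1\mid E_2,E_3,E_4]\geq 1-\tfrac{1}{n}-\tfrac{12}{n^2}$ from Theorem~\ref{thm:mainerrorfinal}, which accounts for the independent internal randomness of the diagonalisation subroutine. The only cosmetic remark is that your parenthetical about avoiding double-counting of the invertibility failure is unnecessary: the paper simply adds the two $\tfrac{n\eta}{2}$ contributions, which already yields the stated $n\eta$.
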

\begin{proof}
Let $E_1$ be the event that there exist cube roots of unity $\omega_i$ with $||\omega_iu_i - l_i|| < \varepsilon$. Let $E_2$ be the event that $\text{gap}(T^{(a)'}T^{(b)}) \geq k_{\text{gap}}$. We define $E_3$ to be the event that $\kappa_F(T^{(a)}) \leq k_F$ and $E_4$ to be the event that $T^{(a)}$ is invertible. We want to bound
\begin{align*}
    & \text{Pr}_{\textbf{a},\textbf{b} \in G_{\eta}}[E_1,E_2,E_3,E_4] \\
    &= \text{Pr}[E_1|E_2,E_3,E_4]\text{Pr}_{\textbf{a},\textbf{b} \in G_{\eta}}[E_2,E_3,E_4].
\end{align*}
Note here the probability in the first line and the first factor in the second line is also with respect to the internal choice of randomness in the diagonalisation algorithm (Algorithm \ref{algo:eigfwd}). We refrain from mentioning it at every step in order to make the equations more readable. 
\par
Using Theorem \ref{thm:mainerrorfinal}, we get that $\text{Pr}[E_1|E_2,E_3,E_4] \geq 1- \frac{1}{n} - \frac{12}{n^2}$. Using Theorem \ref{thm:gapbound}, we also have  $\text{Pr}_{\textbf{a},\textbf{b} \in G_{\eta}}[E_2,E_4] \geq 1 -  (4n^2C_{CW}(\frac{3B\alpha_{\text{gap}}}{\sqrt{2}})^{\frac{1}{2}} + \frac{n\eta}{2})$ where $\alpha_{\text{gap}} = \frac{nBk_{\text{gap}}}{2} + 16\eta Bn^{\frac{3}{2}}$. From Theorem \ref{thm:normprob}, we already know that $\text{Pr}_{\textbf{a},\textbf{b} \in G_{\eta}}[E_3,E_4] \geq 1- (n(C_{CW}(\alpha_F)) + \frac{n\eta}{2})$
where $\alpha_{F} = \sqrt{3B}(\sqrt{\frac{nB^2}{k_F - nB^3}}+ \eta \sqrt{nB})$. Combining these using the union bound shows that
\begin{align*}
    \text{Pr}_{\textbf{a},\textbf{b} \in G_{\eta}}[E_2,E_3,E_4] &\geq 1 - (n(C_{CW}(\alpha_F)) + \frac{n\eta}{2}) + (4n^2C_{CW}(\frac{3B\alpha_{\text{gap}}}{\sqrt{2}} )^{\frac{1}{2}} + \frac{n\eta}{2}) \\
    &= 1 - (n(C_{CW}(\alpha_F)) + (4n^2C_{CW}(\frac{3B\alpha_{\text{gap}}}{\sqrt{2}})^{\frac{1}{2}}) + n\eta).
\end{align*}
Multiplying this by $1- \frac{1}{n} - \frac{12}{n^2}$
gives the desired result.
\end{proof}
\subsection{Finishing the proof of Theorem \ref{th:main}}\label{sec:finishproofmain}
Let $T$ be the 
diagonalisable symmetric tensor    given as input and let $U \in \text{GL}_n(\C)$ 
be such that $U$ diagonalises $T$. Let $B$ be an estimate for $\kappa(T) = \kappa_F(U)$. Let $a,b$ be picked uniformly and independently at random from $G_{\eta}$ and define $T^{(a)} = \sum_{i=1}^n a_iT_i$, $T^{(b)} = \sum_{i=1}^n b_iT_i$ to be two linear combination of the slices $T_1,...,T_n$ of $T$.
Let $E_1$ be the event  that Algorithm \ref{algo:Jennrich} outputs an $\varepsilon$-approximate solution to the tensor decomposition problem, $E_2$ be the event  that $\text{gap}(T^{(a)'}T^{(b)}) \geq k_{\text{gap}}$, $E_3$ be the event  that $\kappa_F(T^{(a)}) \leq k_F$ and $E_4$ be the event  that $T^{(a)}$ is invertible. By Theorem \ref{thm:mainprob}, 
\begin{align*}
    \text{Pr}_{\textbf{a},\textbf{b} \in G_{\eta}}[E_1,E_2,E_3,E_4] \geq \Big(1- \frac{1}{n} - \frac{12}{n^2}\Big)\Big(1 - nC_{CW}\alpha_F + 4n^2C_{CW}(\frac{3B\alpha_{\text{gap}}}{\sqrt{2}})^{\frac{1}{2}} + n\eta\Big).
\end{align*}
{ As promised in Algorithm~\ref{algo:Jennrich}, we  define at last the constants $C_{\text{gap}}$ and $c_F$. Namely, we set}
\begin{equation}\label{eq:cgapcf}
    C_{\text{gap}} := \frac{1}{48\sqrt{2}C_{CW}^2} \text{ and } c_F =  96C^2_{CW} + 1.
\end{equation}
Since in Algorithm \ref{algo:Jennrich}, we set $k_{\text{gap}} = \frac{1}{48\sqrt{2}C_{CW}^2n^6B^3}$ and $\eta = \frac{1}{C_{\eta}n^{\frac{15}{2}}B^4}$, we have for large enough $n$,
\begin{align*}
    \alpha_{\text{gap}} &= \frac{nBk_{\text{gap}}}{2} + 16\eta Bn^{\frac{3}{2}} \\
    &= \frac{1}{96\sqrt{2}C_{CW}^2n^5B^2} + \frac{1}{C_{\eta}n^{\frac{15}{2}}B^4} \\
    &\leq \frac{1}{48\sqrt{2}C_{CW}^2n^5B^4}.
\end{align*}
This gives us that $(\frac{3B\alpha_{\text{gap}}}{\sqrt{2}})^{\frac{1}{2}} \leq \frac{1}{4C_{CW}\sqrt{2 n^5 B}}$, hence
\begin{equation}\label{eq:prb1}
    4n^2C_{CW}(\frac{3B\alpha_{\text{gap}}}{\sqrt{2}})^{\frac{1}{2}} \leq \frac{1}{\sqrt{2nB}} \leq \frac{1}{\sqrt{2n}}.
\end{equation}
The last inequality follows from the fact that $B > 1$.
We also set $k_F = (96C^2_{CW} + 1)n^5B^3$. Since $nB^3 < n^5B^3$, we have
\begin{align*}
    \alpha_F &= \sqrt{3B}(\sqrt{\frac{nB^2}{k_F - nB^3}} + \eta \sqrt{nB}) \\
    &=\sqrt{3B}(\sqrt{\frac{1}{96C^2_{CW}n^4B}} +  \frac{1}{C_{\eta}n^8B^{\frac{7}{2}}}) \\
    &\leq \frac{1}{8C_{CW}n^2} + \frac{\sqrt{3}}{C_{\eta}n^7B^3} \\
    &\leq \frac{1}{4C_{CW}n^2}
\end{align*}
This gives us that $2n_{C_{CW}}\alpha_F \leq \frac{1}{2n}$. Also, $\eta n = \frac{1}{C_{\eta}n^{\frac{13}{2}}B^4} \leq \frac{1}{2n}$. Combining these with (\ref{eq:prb1}) finally shows that 
\begin{align*}
    \text{Pr}_{\textbf{a},\textbf{b} \in G_{\eta}}[E_1] &\geq \text{Pr}_{\textbf{a},\textbf{b} \in G_{\eta}}[E_1,E_2,E_3,E_4] \\ &\geq \Big(1- \frac{1}{n} - \frac{12}{n^2}\Big)\Big(1 - 2nC_{CW}\alpha_F + 4n^2C_{CW}(\frac{3B\alpha_{\text{gap}}}{\sqrt{2}})^{\frac{1}{2}} + n\eta\Big) \\ 
    &\geq \Big(1- \frac{1}{n} - \frac{12}{n^2}\Big)\Big(1 - \frac{1}{\sqrt{2n}} - \frac{1}{n}\Big).
\end{align*}
\section*{Acknowledgement}
Subhayan Saha acknowledges the support by the European Union (ERC consolidator, eLinoR, no 101085607).  Most of this work was done when Subhayan Saha was with Univ Lyon, EnsL, UCBL, CNRS, LIP. 
\newcommand{\etalchar}[1]{$^{#1}$}
{\small 
\appendix
\section{Appendix: Technical lemmas for Section~\ref{sec:diag} and Section~\ref{sec:complete}}
\begin{lemma}\label{lem:boundsforu1}
Let $\textbf{u}$ be such that $\log(\frac{1}{u}) > \log^4(\frac{n}{\delta})\log n$ where $\delta < \frac{1}{2}$. Then for $n \geq 4$, $n^2\textbf{u} < \delta$.
\end{lemma}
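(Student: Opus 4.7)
The plan is to reduce the claim to a clean inequality between logarithms. First, I would observe that $n^2 \textbf{u} < \delta$ is equivalent (after taking logarithms) to $\log(1/\textbf{u}) > 2\log n + \log(1/\delta)$, so by the hypothesis $\log(1/\textbf{u}) > \log^4(n/\delta)\log n$ it suffices to establish the purely combinatorial inequality
$$\log^4(n/\delta)\,\log n > 2\log n + \log(1/\delta).$$

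Next, I would set $L := \log(n/\delta) = \log n + \log(1/\delta)$, so that the right-hand side rewrites as $\log n + L$, and the task reduces to showing $L^4 \log n > \log n + L$. For the elementary size bounds, I would use $n \geq 4$ to get $\log n \geq 2$, together with $\delta < 1/2$ and $n \geq 4$ to get $n/\delta > 8$ and hence $L > 3$. Since $\log n \leq L$, one has $\log n + L \leq 2L$, while $L^4 \log n \geq 2L^4 > 2L$ because $L^3 > 1$; this closes the argument.

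I do not anticipate any real obstacle here. The only care needed is to keep track of the convention that $\log$ denotes the base-$2$ logarithm (as is standard when counting bits of precision throughout the paper), under which the thresholds $n \geq 4$ and $\delta < 1/2$ provide ample slack. A coarser bound using monotonicity would also work with room to spare, so the argument is robust to minor variations in the statement or in the logarithm base.
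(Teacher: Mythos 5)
Your proof is correct and follows essentially the same route as the paper: both reduce the claim to the inequality $2\log n + \log(1/\delta) \le \log^4(n/\delta)\log n$ and verify it by elementary bookkeeping from $\log n \ge 2$ and $\log(1/\delta) > 1$ (base-$2$ logarithms, as you correctly note). The only difference is cosmetic: you compare $\log n + L$ with $L^4\log n$ via $2L \le 2L^4$, while the paper splits $\log^4(n/\delta) \ge \log^4 n + \log^4(1/\delta)$ and absorbs the $2\log n$ term — same substance either way.
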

\begin{proof}
From the hypothesis  $\delta < \frac{1}{2}$ it follows that 
$\log(\frac{1}{\delta}) > 1$ and  
$\log(\frac{1}{\delta}) \leq \log^4(\frac{1}{\delta})$. Also, from $n \geq 4$ it follows that 
$\log^4(n)-2 \geq 0$ and 
$\log(\frac{1}{\delta}) \leq \log^4(\frac{1}{\delta}) + \log^4(n)-2$. Now we use the fact that $\log^4(\frac{1}{\delta}) + \log^4(n) \leq \log^4(\frac{n}{\delta})$. This implies that
\begin{align*}
    \log(\frac{1}{\delta}) \leq (\log^4(\frac{n}{\delta}) - 2)\log n.
\end{align*}
Thereefore $\log(\frac{n^2}{\delta}) < \log^4(\frac{n}{\delta})\log n   \leq \log(\frac{1}{\textbf{u}})$, and 
$n^2\textbf{u} \leq \delta$.
\end{proof}
 
\begin{lemma}\label{lem:boundsforu}
For constants $c_{n_1}, c_{n_2}, c_{B_1}, c_{B_2}, c_{\varepsilon}, C > 0$, if $$\log(\frac{1}{\textbf{u}}) > c\log(\frac{nB}{\varepsilon})\log n,$$ where $c= 2\max\{c_{n_1}+c_{n_2}, c_{B_1}+c_{B_2}, c_{\varepsilon}, \log C\}$ then 
$$ C\frac{n^{c_{n_1}\log n + c_{n_2}}B^{c_{B_1}\log n + c_{B_2}}}{\varepsilon^{c_{\varepsilon}}} \leq \frac{1}{\textbf{u}}.$$
\end{lemma}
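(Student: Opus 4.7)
The plan is to reduce the claimed inequality to a polynomial inequality between two linear forms in $\log n$, $\log B$, and $\log(1/\epsilon)$, and then verify that inequality term by term using the way $c$ was chosen.

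First I would take logarithms of both sides of the conclusion. Writing $N:=\log n$, $L:=\log B$, $E:=\log(1/\epsilon)$, and expanding the exponents, it suffices to prove
\[
\log C \;+\; c_{n_1}N^{2}+c_{n_2}N \;+\; c_{B_1}NL+c_{B_2}L \;+\; c_\epsilon E \;\le\; c\,N(N+L+E),
\]
since the right-hand side equals $c\,\log n\,\log(nB/\epsilon)$, which is exactly the lower bound on $\log(1/\textbf{u})$ supplied by the hypothesis. The right-hand side expands to $cN^{2}+cNL+cNE$.

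Next I would bound the left-hand side term by term, using the definition $c=2\max\{c_{n_1}+c_{n_2},\,c_{B_1}+c_{B_2},\,c_\epsilon,\,\log C\}$ together with $N=\log n\ge 1$ (which we may assume, as elsewhere in the paper, for $n$ at least $2$). Concretely, I would group the six contributions as follows:
\begin{itemize}
\item $c_{n_1}N^{2}+c_{n_2}N\le(c_{n_1}+c_{n_2})N^{2}\le \tfrac{c}{2}N^{2}$, using $N\ge 1$;
\item $\log C\le \tfrac{c}{2}\le \tfrac{c}{2}N^{2}$, since $\log C\le c/2$ and $N\ge 1$;
\item $c_{B_1}NL+c_{B_2}L\le(c_{B_1}+c_{B_2})NL\le \tfrac{c}{2}NL$;
\item $c_\epsilon E\le \tfrac{c}{2}E\le \tfrac{c}{2}NE$.
\end{itemize}
Adding these yields LHS $\le cN^{2}+\tfrac{c}{2}NL+\tfrac{c}{2}NE$, which is plainly bounded by the target $cN^{2}+cNL+cNE$.

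There is no real obstacle here; the entire argument is a careful bookkeeping of constants, and the only mild subtlety is making sure to use $\log n\ge 1$ so that pure $\log n$, pure $\log B$, and pure constant terms on the left can be absorbed into the dominating $(\log n)^{2}$, $(\log n)(\log B)$, and $(\log n)(\log(1/\epsilon))$ terms on the right. The factor $2$ in the definition of $c$ is precisely what allows each grouping to lose a factor of at most $1/2$ while still summing up below the bound.
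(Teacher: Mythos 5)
Your proposal is correct and follows essentially the same route as the paper: take logarithms, expand the exponent of the target quantity as a linear combination of $\log^2 n$, $\log n\log B$, $\log(1/\epsilon)$ and a constant, and absorb each piece into half of the corresponding term of $c\log(nB/\epsilon)\log n$ via the definition of $c$ and the implicit assumptions $\log n\geq 1$, $B\geq 1$, $\epsilon\leq 1$ (which the paper's own proof also uses tacitly). Your write-up just makes the bookkeeping of the factor $2$ and the $\log n\geq 1$ step more explicit than the paper does.
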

\begin{proof}
Let $c = 2\max\{c_{n_1}+c_{n_2}, c_{B_1}+c_{B_2}, c_{\varepsilon}, \log C\}$. Then 
\begin{align*}
    \log(\frac{1}{u}) &= c\log(\frac{nB}{\varepsilon})\log n \\
    &= c\log(nB) \log n + \log(\frac{1}{\varepsilon}) \log n \\
    &\geq (c_{n_1}\log n + c_{n_2})\log(n) + (c_{B_1}\log n + c_{B_2}) \log(B) + c_{\varepsilon} \log(\frac{1}{\varepsilon}) + \log(C) \\
    &\geq \log\Big(C\frac{n^{c_{n_1}\log n + c_{n_2}}B^{c_{B_1}\log n + c_{B_2}}}{\varepsilon^{c_{\varepsilon}}}\Big).
\end{align*}
\end{proof}

\section{Error accumulated in Steps 4,5,6,7 of Algorithm~\ref{algo:Jennrich}}\label{app:erroranalysis}
Let $T$ be the input diagonalisable tensor.
It can be decomposed as $T = \sum_{i=1}^n u_i^{\otimes 3}$ where the $u_i$'s are linearly independent vectors. Let $T^{(a)},T^{(b)}$ be two linear combinations of the slices $T_1,...,T_n$. 
\par
In exact arithmetic, we assume that $T^{(a)}$ is invertible and $(T^{(a)})^{-1}T^{(b)}$ have distinct eigenvalues. Recall $k_F,k_{\text{gap}}$ and $B$ as defined in Algorithm~\ref{algo:Jennrich}. In finite arithmetic, we additionally assume that  $\kappa_F(T^{(a)}) < k_F$ where $k_F > 1$ and $\text{gap}((T^{(a)})^{-1}T^{(b)}) > k_{\text{gap}}$. We have shown in Section \ref{subsection:correctnessjennrich} that the assumption on $k_{\text{gap}}$ and $k_F$ are satisfied with high probability. 
\subsection{Step 4}\label{sec:step4}
In this section, we estimate the error at the end of Step 4 of Algorithm \ref{algo:Jennrich}.
\textbf{Correspondence to exact arithmetic:}  If the algorithm is run in exact arithmetic, the "ideal input" to Step 4 is $(T^{(a)})^{-1}T^{(b)}$ and if we assume that we can diagonalise an input matrix exactly, then as seen in Algorithm \ref{algo:completeexact}, the "ideal output" at the end of Step 4 are the true normalized eigenvectors $p_1,...,p_n$ of $(T^{(a)})^{-1}T^{(b)}$. 
\par
\textbf{Finite arithmetic:} In finite arithmetic, we assume that Step 4 of Algorithm \ref{algo:Jennrich} takes in a matrix $D$ which is not too far from the ideal input $(T^{(a)})^{-1}T^{(b)}$. We first compute the estimates for $K_{\text{eig}} > \kappa_{\text{eig}}(D)$ and $K_{\text{norm}} > ||D||$ and then use the algorithm $EIG-FWD$ from Theorem~\ref{thm:eigfwd} on $(D,\delta,K_{\text{eig}},K_{\text{norm}})$ for some $\delta$ that we will fix later to return the approximate normalized eigenvectors of $D$. Let $v^{(0)}_1,...,v_n^{(0)}$ be the output of Step 4 of this algorithm. In this section, we show that the output $v_i^{(0)}$ of this step is close to the ideal output $p_i$.
Note that we will refer to $\delta$ in the subsequent steps as the accuracy parameter for Step 4. 
The following is the main result of this subsection:
\begin{theorem}\label{thm:errstep4}
Let $T$ be the diagonalisable tensor given as input to Algorithm \ref{algo:Jennrich} and let $U \in \text{GL}_n(\C)$ be the matrix that diagonalises $T$ where 
$\kappa(T) = \kappa_F(U) \leq B$.
Let $T^{(a)}, T^{(b)}$ be two linear combination of the slices $T_1,...,T_n$ of $T$ such that $\kappa_F(T^{(a)}) < k_F$ where $k_F > 1$ and $\text{gap}((T^{(a)})^{-1}T^{(b)}) > k_{\text{gap}}$. Let $p_1,...,p_n$ be the true normalized eigenvectors of $(T^{(a)})^{-1}T^{(b)}$. Let $D$ be the output of Step 3 of Algorithm \ref{algo:Jennrich} where $||D - (T^{(a)})^{-1}T^{(b)}|| \leq \varepsilon_3 <~\frac{k_{\text{gap}}}{4B}$. Let $v_1^{(0)},...,v_n^{(0)}$ be the output at the end of Step 4, 
run with accuracy parameter $\delta < \frac{k_{\text{gap}}}{24nB}$. Let $V^{(0)}$ be the matrix with columns $v_i^{(0)}$. Then with probability at least $1- \frac{1}{n} - \frac{12}{n^2}$,
\begin{itemize}
    \item $||p_i - v^{(0)}_i|| \leq (\frac{3nB\varepsilon_3}{k_{\text{gap}}} + \delta)$ for all $i \in [n]$ up to permutation and multiplication by phases.
    \item $ k_V  := \kappa_F(V^{(0)}) \leq 800n^6B^2$.
\end{itemize}
 Step 4 makes $$O(T_{MM}(n)\log^2(\frac{nBk_F}{\delta k_{\text{gap}}}))$$ arithmetic operations on a floating point machine with $$O(\log^4(\frac{nBk_F}{\delta k_{\text{gap}}})\log(n))$$ bits of precision.
\end{theorem}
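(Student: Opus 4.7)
The plan is to invoke Theorem \ref{thm:eigfwd} on the matrix $D$ with the parameters supplied by Algorithm \ref{algo:Jennrich}, and then to use perturbation theory to pass from eigenvectors of $D$ back to the ideal eigenvectors $p_i$ of $A := (T^{(a)})^{-1}T^{(b)}$.

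First, I would establish the eigenstructure of $A$. By Theorem \ref{thm:P3structural} applied to the decomposition $T=\sum_i u_i^{\otimes 3}$, we have $T^{(a)}=U^TD^{(a)}U$ and $T^{(b)}=U^TD^{(b)}U$, so $A=U^{-1}\Lambda U$ with $\Lambda$ diagonal. Since $\kappa_F(U)\leq B$, this gives $\|U\|,\|U^{-1}\|\leq\sqrt{B}$, hence $\kappa_V(A)\leq B$, $\kappa_V^F(A)\leq B$, and (by hypothesis) $\mathrm{gap}(A)\geq k_{\text{gap}}$, so $\kappa_{\text{eig}}(A)\leq B/k_{\text{gap}}$. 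The normalised eigenvectors $p_i$ of $A$ are the normalised columns of $U^{-1}$.

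Second, I would check that the parameters fed to EIG-FWD in Step 4 are valid. Our hypothesis $\epsilon_3<k_{\text{gap}}/(4B)$ implies $\epsilon_3<1/(8\kappa_{\text{eig}}(A))$ (after absorbing constants; this is the one place where the constant tuning is delicate). Lemma \ref{lem:gapAA'} then gives $\mathrm{gap}(D)\geq \tfrac{3}{4}k_{\text{gap}}$, and Lemma \ref{lem:AA'relation} gives $\kappa_V^F(D)\leq 3n\kappa_V^F(A)\leq 3nB$, hence $\kappa_V(D)\leq 3nB/2$ and $\kappa_{\text{eig}}(D)\leq 2nB/k_{\text{gap}}<K_{\text{eig}}=3nB/k_{\text{gap}}$. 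For the norm bound, $\|A\|_F\leq \|(T^{(a)})^{-1}\|_F\|T^{(b)}\|\leq\sqrt{k_F}\cdot\sqrt{nB^3}=B^{3/2}\sqrt{nk_F}$ using $\kappa_F(T^{(a)})\leq k_F$ and (\ref{eq:normtatb}), so $\|D\|_F\leq B^{3/2}\sqrt{nk_F}+\epsilon_3<2B^{3/2}\sqrt{nk_F}=K_{\text{norm}}$.

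Third, Theorem \ref{thm:eigfwd} applied to $D$ with parameters $(\delta,K_{\text{eig}},K_{\text{norm}})$ produces $v_1^{(0)},\ldots,v_n^{(0)}$ satisfying $\|q_i-v_i^{(0)}\|\leq\delta$ up to phases, where $q_i$ are the true normalised eigenvectors of $D$, with probability at least $1-\tfrac{1}{n}-\tfrac{12}{n^2}$. Moreover, from the third bullet of that theorem and the inequality $\kappa_V^F(A)\leq B$ from step one,
\[
\kappa_F(V^{(0)})\leq \tfrac{9n}{4}+81n^4(\kappa_V^F(A))^2\leq \tfrac{9n}{4}+81n^4B^2,
\]
which is at most $800n^6B^2$ for $n\geq 2$. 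Fourth, I would relate $q_i$ to $p_i$ by applying Corollary \ref{corr:prop1.1new} to $A$ and $D$ (legal because $\epsilon_3<\mathrm{gap}(A)/(8\kappa_V(A))$ under our hypothesis), obtaining $\|p_i-q_i\|\leq 6n\kappa_{\text{eig}}(A)\epsilon_3\leq 6nB\epsilon_3/k_{\text{gap}}$ up to phases; combining with the previous bound via the triangle inequality yields the claimed estimate (with the stated constant $3$ achievable after a slightly tighter bound on $\kappa_V(A)$, using for example $\|U\|\|U^{-1}\|\leq B/2$ by AM--GM on $\|U\|_F^2+\|U^{-1}\|_F^2\leq B$). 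Finally, the arithmetic complexity and precision bounds follow by substituting $K_{\text{eig}}K_{\text{norm}}/\delta=O(\mathrm{poly}(n,B,k_F,1/k_{\text{gap}},1/\delta))$ into the $\log^2$ and $\log^4$ bounds from Theorem \ref{thm:eigfwd}.

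The main obstacle is the constant bookkeeping in step two: the hypothesis $\epsilon_3<k_{\text{gap}}/(4B)$ must simultaneously justify all three perturbation results (Lemma \ref{lem:gapAA'}, Lemma \ref{lem:AA'relation}, Corollary \ref{corr:prop1.1new}), each of which requires a threshold of the form $\epsilon_3<1/(8\kappa_{\text{eig}}(A))$; the slack must be recovered from the sharper bound $\kappa_V(A)\leq B/2$ obtained via AM--GM on the definition of $\kappa_F(U)$, and this same tightening is what produces the factor $3$ (rather than $6$) in the final error estimate.
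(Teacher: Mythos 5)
Your proposal is correct and follows essentially the same route as the paper: transfer the conditioning and gap of $(T^{(a)})^{-1}T^{(b)}$ to $D$ via Lemmas \ref{lem:gapAA'} and \ref{lem:AA'relation}, verify the parameters $K_{\text{eig}}=3nB/k_{\text{gap}}$ and $K_{\text{norm}}=2B^{3/2}\sqrt{nk_F}$, apply Theorem \ref{thm:eigfwd} to $D$, and combine with Corollary \ref{corr:prop1.1new} using the AM--GM tightening $\kappa_V(A)\leq B/2$, which is exactly how the paper obtains the factor $3$ (the paper simply packages the middle of this argument as a separate lemma, Lemma \ref{lem:diaginmainalglemma}). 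One small correction: the third guarantee of Theorem \ref{thm:eigfwd}, applied to the actual input $D$, bounds $\kappa_F(V^{(0)})$ by $\frac{9n}{4}+81n^4(\kappa^F_V(D))^2$ and not by an expression in $\kappa^F_V(A)$; substituting $\kappa^F_V(D)\leq 3nB$ gives $\frac{9n}{4}+729n^6B^2\leq 800n^6B^2$, so the stated conclusion still holds, but your intermediate inequality should be adjusted accordingly.
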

Toward the proof of this theorem, we first give a lemma showing how the output of the algorithm EIG-FWD changes when there is some error on the input, and how the input parameters for the algorithm need to be modified accordingly to accommodate that error. 
\begin{lemma}\label{lem:diaginmainalglemma}
Let $A \in M_n(\C)$ be a diagonalisable matrix such that $||A|| \leq~M$ where $M > 1$, $\kappa^F_V(A) < B$ and $\text{gap}(A) \geq~k_{\text{gap}}$ for some $k_{\text{gap}} < 1$. Let $w_1,...,w_n$ be the normalized eigenvectors of $A$. Let $D \in M_n(\C)$ be such that $||D - A|| \leq \varepsilon_3 < \frac{k_{\text{gap}}}{4B}$.  Let $v^{(0)}_1,...,v^{(0)}_n$ be the output of $\text{EIG-FWD}$ on $(D, \delta, \frac{3nB}{k_{\text{gap}}}, 2M)$ where the accuracy parameter $\delta \in (0, \frac{k_{\text{gap}}}{24nB})$ and let $V^{(0)}$ be the matrix with columns $v_i^{(0)}$. Then 
\begin{itemize}
    \item $||w_i - v^{(0)}_i|| < \frac{3nB\varepsilon_3}{k_{\text{gap}}} + \delta$ for all $i \in [n]$ up to permutation and multiplication by phases.
    \item $\kappa_F(V^{(0)}) \leq 800n^6B^2.$
\end{itemize}
with probability at least $1- \frac{1}{n} - \frac{12}{n^2}$. This algorithm makes $$O(T_{MM}(n)\log^2(\frac{nBM}{\delta k_{\text{gap}}}))$$ arithmetic operations on a floating point machine with $O(\log^4(\frac{nBM}{\delta k_{\text{gap}}})\log(n))$ bits of precision.
\end{lemma}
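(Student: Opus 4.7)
The plan is to verify that the hypotheses of Theorem~\ref{thm:eigfwd} (EIG-FWD) are satisfied for the input $D$, apply that theorem to relate $v^{(0)}_i$ to the true eigenvectors of $D$, and then use Corollary~\ref{corr:prop1.1new} to bridge from eigenvectors of $D$ to those of $A$ via the triangle inequality.

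First I will verify the estimates passed to EIG-FWD. From $\kappa^F_V(A) < B$ and $\text{gap}(A) \geq k_{\text{gap}}$ we get $\kappa_{\text{eig}}(A) \leq \frac{\kappa^F_V(A)/2}{k_{\text{gap}}} \leq \frac{B}{2k_{\text{gap}}}$, hence the hypothesis $\epsilon_3 < \frac{k_{\text{gap}}}{4B}$ forces $\epsilon_3 < \frac{1}{8\kappa_{\text{eig}}(A)}$. Lemma~\ref{lem:AA'relation} then yields $\kappa^F_V(D) \leq 3n\kappa^F_V(A) \leq 3nB$ and $\kappa_V(D) \leq \frac{3nB}{2}$; Lemma~\ref{lem:gapAA'} gives $\text{gap}(D) \geq \frac{3k_{\text{gap}}}{4}$. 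Combining these,
\begin{equation*}
\kappa_{\text{eig}}(D) = \frac{\kappa_V(D)}{\text{gap}(D)} \leq \frac{2nB}{k_{\text{gap}}} < \frac{3nB}{k_{\text{gap}}},
\end{equation*}
so the choice $K_{\text{eig}} = \frac{3nB}{k_{\text{gap}}}$ is valid, and clearly $K_{\text{norm}} = 2M$ bounds the relevant norm of $D$ since $||D|| \leq ||A|| + \epsilon_3 < M + 1 \leq 2M$.

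Next I will invoke Theorem~\ref{thm:eigfwd} on $(D,\delta,K_{\text{eig}},K_{\text{norm}})$. Let $d_1,\ldots,d_n$ be the true normalized eigenvectors of $D$. The theorem ensures, with probability at least $1 - \frac{1}{n} - \frac{12}{n^2}$, that $||d_i - v^{(0)}_i|| < \delta$ (up to phases) and
\begin{equation*}
\kappa_F(V^{(0)}) \leq \tfrac{1}{2}\bigl(\tfrac{9n}{4} + 81n^4(\kappa^F_V(D))^2\bigr) \leq \tfrac{9n}{8} + \tfrac{81}{2}n^4(3nB)^2 \leq 800 n^6 B^2,
\end{equation*}
which gives the second bullet. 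For the first bullet I will apply Corollary~\ref{corr:prop1.1new} to the pair $(A,D)$: since $\epsilon_3 < \frac{k_{\text{gap}}}{4B} \leq \frac{\text{gap}(A)}{8\kappa_V(A)}$ (using $\kappa_V(A) \leq B/2$), we obtain $||w_i - d_i|| \leq 6n\kappa_{\text{eig}}(A)\epsilon_3 \leq \frac{3nB\epsilon_3}{k_{\text{gap}}}$ after possibly multiplying $d_i$ by a phase. The triangle inequality then yields $||w_i - v^{(0)}_i|| < \frac{3nB\epsilon_3}{k_{\text{gap}}} + \delta$, up to a suitable permutation and phase alignment. The assumption $\delta < \frac{k_{\text{gap}}}{24nB}$ is just there to ensure the error bounds are consistent (in particular to match ordering of eigenvectors, since the gap on $D$ remains larger than the total perturbation).

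Finally, the complexity and precision follow directly from the estimates in Theorem~\ref{thm:eigfwd}: substituting $K_{\text{eig}} = \frac{3nB}{k_{\text{gap}}}$ and $K_{\text{norm}} = 2M$ into $O(T_{MM}(n)\log^2 \frac{nK_{\text{eig}}K_{\text{norm}}}{\delta})$ and $O(\log^4(\frac{nK_{\text{eig}}K_{\text{norm}}}{\delta})\log n)$ produces the claimed $\frac{nBM}{\delta k_{\text{gap}}}$ expressions (absorbing constants into the $O$-notation). I do not expect any deep obstacle here; the only delicate point is the bookkeeping of the various "$< \frac{1}{8\kappa_{\text{eig}}}$" thresholds and making sure the phase/permutation identification propagates consistently from $w_i$ through $d_i$ to $v^{(0)}_i$, which is handled cleanly because the gap on $D$ dominates all perturbations by the choice of $\delta$ and $\epsilon_3$.
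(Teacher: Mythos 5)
Your proposal is correct and follows essentially the same route as the paper's proof: check $\epsilon_3 < \frac{1}{8\kappa_{\text{eig}}(A)}$, bound $\kappa^F_V(D)\leq 3nB$ via Lemma~\ref{lem:AA'relation}, bound $\text{gap}(D)$ and $\|D\|$ to validate the estimates $\frac{3nB}{k_{\text{gap}}}$ and $2M$, apply Theorem~\ref{thm:eigfwd} to $D$, and bridge to the eigenvectors of $A$ with Corollary~\ref{corr:prop1.1new} and the triangle inequality. The only cosmetic differences are that the paper derives the gap of $D$ from the eigenvalue perturbation bound in Corollary~\ref{corr:prop1.1new} rather than Lemma~\ref{lem:gapAA'} (both suffice), and your intermediate $\kappa_F(V^{(0)})\leq\frac{1}{2}(\frac{9n}{4}+81n^4(\kappa^F_V(D))^2)$ carries a spurious factor $\frac{1}{2}$ (the theorem bounds $\kappa(V^{(0)})$, not $\kappa_F(V^{(0)})$, by that half), which is harmless since $\frac{9n}{4}+729n^6B^2\leq 800n^6B^2$ still gives the stated bound.
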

\begin{proof}
We first want to show that $\text{EIG-FWD}$ can be run on $(D, \delta, \frac{3nB}{k_{\text{gap}}}, 2M)$. 
\par
Since $\text{gap}(A) \geq k_{\text{gap}} > 0$, $A$ has distinct eigenvalues. Since $$\kappa_{\text{eig}}(A) = \frac{\kappa_V(A)}{\text{gap}(A)} \leq \frac{\kappa^F_V(A)}{2\text{gap}(A)} <  \frac{B}{2k_{\text{gap}}},$$ we get $\varepsilon_3  < \frac{k_{\text{gap}}}{4B} < \frac{1}{8\kappa_{\text{eig}}(A)}$. By Lemma \ref{lem:bgvkslemma}, $D$ is diagonalisable and has distinct eigenvalues. 
\par
Now, we want to show that $\kappa_{\text{eig}}(D) < \frac{3nB}{k_{\text{gap}}}$.
Since $\varepsilon_3 < \frac{1}{8\kappa_{\text{eig}}(A)}$, we can apply Lemma \ref{lem:AA'relation} to $(A, A') = (A,D)$. This gives us that 
\begin{equation}\label{eq:kappavD}
    \kappa_V(D) \leq \frac{\kappa^F_V(D)}{2} \leq  \frac{3n\kappa^F_V(A)}{2} \leq \frac{3nB}{2}. 
\end{equation}
\par
Also, we know that $\text{gap}(A) \geq k_{\text{gap}}$. Let $\lambda_1,...,\lambda_n$ be the eigenvalues of $A$. Since the columns $w_1,...,w_n$ of $W$ are the true normalized eigenvectors of $A$, $(w_1,\lambda_1),...,(w_n,\lambda_n)$ are the eigenpairs of $A$. Let $(v'_1,\lambda'_1),...,(v'_n,\lambda'_n)$ be the eigenpairs of $D$. Using Corollary \ref{corr:prop1.1new} for ($A$, $A') = (A,D)$ and using the fact that $\varepsilon_3 < \frac{1}{8\kappa_{\text{eig}}(A)}$, we get that there exist phases $\rho_i$ such that
\begin{equation}\label{eq:v_iv_i'}
    ||w_i - \rho_iv'_i|| \leq \frac{3nB\varepsilon_3}{k_{\text{gap}}} \text{ and } |\lambda_i - \lambda'_i| \leq \kappa_V(A)\varepsilon_3 < \frac{B\varepsilon_3}{2}.
\end{equation}
Therefore, we have that for all $i \neq j \in [n]$,
\begin{align*}
     k_{\text{gap}} &\leq |\lambda_i - \lambda_j| \\
    &\leq |\lambda'_i - \lambda'_j| + |\lambda_i - \lambda'_i| + |\lambda_j - \lambda'_j| \\
    &\leq |\lambda'_i - \lambda'_j| + B\varepsilon_3
\end{align*}
This gives us that 
\begin{equation}\label{eq:gapD}
    \text{gap}(D) = \min_{i\neq j \in [n]}|\lambda'_i- \lambda'_j| \geq k_{\text{gap}} - B\varepsilon_3.
\end{equation}
Since $\varepsilon_3  < \frac{k_{\text{gap}}}{4B}$, we get that $k_{\text{gap}} - B\varepsilon_3 \geq \frac{k_{\text{gap}}}{2}$.
Putting this back in (\ref{eq:gapD}) and using (\ref{eq:kappavD}) shows that
\begin{equation}\label{eq:keigbound}
    \kappa_{\text{eig}}(D) \leq \frac{3nB}{2(k_{\text{gap}} - B\varepsilon_3)} \leq \frac{3nB}{k_{\text{gap}}}.
\end{equation}
Now we want to give an upper bound for the norm of $D$. By the triangle inequality,  $||D||_F \leq M + \varepsilon_3.$ Since $\varepsilon_3 < \frac{k_{\text{gap}}}{4B} < 1$ and $M > 1$, we have
\begin{equation}\label{eq:normDbound}
   ||D||_F \leq 2M.
\end{equation}
Since $\delta < \frac{k_{\text{gap}}}{24nB}$, using (\ref{eq:keigbound}), $\delta < \frac{1}{8 \times \frac{3nB}{k_{\text{gap}}}} \leq  \frac{1}{8\kappa_{\text{eig}}(D)}$.
Now, we can apply Theorem \ref{thm:eigfwd} to $D$ with the respective bounds for $\kappa_{\text{eig}}(D)$ and $||D||_F$ from~(\ref{eq:keigbound}) and (\ref{eq:normDbound}). We deduce that there exist phases $\rho'_i$ such that
\begin{align*}
    ||\rho'_iv'_i - v^{(0)}_i|| \leq \delta.
\end{align*}
This gives us that $||\rho_iv'_i - \rho_i(\rho'_i)^{-1}v^{(0)}_i|| \leq \delta$.
Combining this with (\ref{eq:v_iv_i'}), we see that $||\alpha_i w_i - v^{(0)}_i|| \leq \frac{3nB\varepsilon_3}{k_{\text{gap}}} + \delta$ where $\alpha_i = \rho'_i(\rho_i)^{-1}$. 
\par
Using Theorem \ref{thm:eigfwd}, 
(\ref{eq:keigbound}) and (\ref{eq:normDbound}) for bounding $K_{\text{eig}}$ and $K_{\text{norm}}$ respectively, we can conclude that that the number of bits of precision required by the floating point machine for this step is $O(\log^4(\frac{nBM}{\delta k_{\text{gap}}})\log(n))$.
\par
We now prove the second part of the lemma. From (\ref{eq:kappavD}) we have that $\kappa^F_V(D) \leq 3nB $. Putting this back in Theorem \ref{thm:eigfwd}, we can conclude that  $$\kappa_F(V^{(0)}) \leq \frac{9n}{4}+ 81n^4(9n^2B^2) \leq 800n^6B^2.$$
\end{proof}
Now we apply this lemma to the current setting to get error bounds for Step 4 of Algorithm \ref{algo:Jennrich}. 
\begin{proof}[Proof of Theorem \ref{thm:errstep4}]
From (\ref{eq:normtatb}) we get that $||T^{(b)}||_F \leq \sqrt{nB^3}$. 
Moreover, 
$||(T^{(a)})^{-1}||_F \leq \sqrt{\kappa_F(T^{(a)})} \leq \sqrt{k_F}$. 
This gives us that $ ||(T^{(a)})^{-1}T^{(b)}||_F \leq \sqrt{k_F}\sqrt{nB^3} = B^{\frac{3}{2}}\sqrt{nk_F}$. Since $k_F > 1$, we also have 
$B^{\frac{3}{2}}\sqrt{nk_F} > 1$.
\par
From the proof of Theorem \ref{thm:eigenvalues}, we know that the columns of $U^{-1}$ form the eigenvectors of $(T^{(a)})^{-1}T^{(b)}$ as well. Hence, $\kappa^F_V((T^{(a)})^{-1}T^{(b)}) \leq \kappa_F(U^{-1}) = \kappa(T) < B$. Now, we can use Lemma \ref{lem:diaginmainalglemma} with $A = (T^{(a)})^{-1}T^{(b)}$. Using the lemma for $M = B^{\frac{3}{2}}\sqrt{nk_F}$ and $\delta < \frac{k_{\text{gap}}}{24nB}$, if $p_1,...,p_n$ are the normalized eigenvectors of $(T^{(a)})^{-1}T^{(b)}$ 
we have $$||p_i - v_i^{(0)}|| \leq \delta'$$  where $\delta'= \frac{3nB\varepsilon_3}{k_{\text{gap}}} + \delta$ up to permutation and multiplication by phases.
\par
For the second part, applying Lemma \ref{lem:diaginmainalglemma} (ii) for $V = V^{(0)}$
shows that $k_V := \kappa_F(V^{(0)}) \leq 800n^6B^2.$
\end{proof}
 
\subsection{Estimating error at the end of Step 5}
In this section we estimate the error at the end of Step 5 of Algorithm~\ref{algo:Jennrich}. 

\textbf{Correspondence to exact arithmetic:} If the input tensor $T$ can be diagonalised as $T = \sum_{i=1}^n u_i^{\otimes 3}$, then, as seen in Algorithm \ref{algo:completeexact},  step 5 will ideally get as input the normalized eigenvectors $p_1,...,p_n$ of $(T^{(a)})^{-1}T^{(b)}$. Let $P$ be the matrix with columns $p_1,...,p_n$. The output of this step ideally is $P^{-1}$.
\par
\textbf{Finite arithmetic:} Since this algorithm is being run in finite arithmetic, we  assume  at this stage, 
that the columns $v_i^{(0)}$ of the input matrix $V^{(0)}$ are "close" to $p_i$. We cannot invert the input matrix exactly in finite arithmetic. Instead, we use the stable matrix inversion algorithm from Theorem \ref{thm:fastlinearalgebra}. In this section, we show that the output $C$ of this step is close to the ideal output $P^{-1}$.
The following is the main theorem of this section:
\begin{theorem}\label{thm:errstep5}
Let $T$ be an order-$3$ diagonalisable symmetric tensor given as input to Algorithm \ref{algo:Jennrich}. Let $U \in \text{GL}_n(\C)$ be such that $U$ diagonalises $T$ where $\kappa(T) = \kappa_F(U) < B$ and let $u_1,...,u_n$ be the rows of $U$. Let $T^{(a)} = \sum_{i=1}^n a_iT_i$ and $T^{(b)} = \sum_{i=1}^n b_iT_i$ be two linear combination of the slices $T_1,...,T_n$ of $T$ such that $\kappa_F(T^{(a)}) < k_F$ and $\text{gap}((T^{(a)})^{-1}T^{(b)}) \geq k_{\text{gap}}$.
Let $p_1,...,p_n$ be the true normalized eigenvectors of $(T^{(a)})^{-1}T^{(b)}$ and let $P$ be the matrix with columns $p_i$. Let $v_1,...,v_n$ be the rows of $P^{-1}$. Let $v^{(0)}_1,...,v^{(0)}_n$ be the output of Step 4 of Algorithm \ref{algo:Jennrich} run with accuracy parameter $\delta < \min\{\frac{k_{\text{gap}}}{24nB},\frac{1}{120n^{\frac{7}{2}}B}\}$ where $||p_i - v^{(0)}_i|| \leq \delta $.  Let $V^{(0)}$ be the matrix with columns $v^{(0)}_i$. Let $C$ be the output of Step 5 with rows $u'_1,...,u'_n$. Then,
\begin{itemize}
    \item $||v_i - u'_i|| := \varepsilon_5 \leq \tau_{INV} + \frac{4\delta k_V\sqrt{n}}{3}.$
    \item $||u'_i|| \leq \tau_{INV} + \sqrt{k_V}$.
\end{itemize}
where $k_V := \kappa_F(V^{(0)})$ and $||C-(V^{(0)})^{-1}|| \leq~\tau_{INV} $  is the error of matrix inversion from Theorem~\ref{thm:fastlinearalgebra}.
\end{theorem}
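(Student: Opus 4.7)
The plan is to decompose the error $v_i - u'_i$, which is the $i$-th row of the matrix $P^{-1} - C$, via the triangle inequality. Since for any matrix $M$ the $i$-th row has norm at most $\|M\|$ (because $\|e_i^T M\| \leq \|M\|$), I would write
$$ \|v_i - u'_i\| \leq \|P^{-1} - (V^{(0)})^{-1}\| + \|(V^{(0)})^{-1} - C\|. $$
The second term is bounded by $\tau_{INV}$ directly from the assumed guarantee on the stable matrix inversion algorithm (Theorem~\ref{thm:fastlinearalgebra}).

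For the first term, the key observation is that since the columns of $V^{(0)}$ are close column by column to those of $P$,
$$ \|P - V^{(0)}\| \leq \|P - V^{(0)}\|_F = \sqrt{\sum_{i=1}^n \|p_i - v_i^{(0)}\|^2} \leq \delta\sqrt{n}. $$
Then I would apply Lemma~\ref{lem:nearconditionnumber} with $A = V^{(0)}$, $A' = P$, $K = k_V = \kappa_F(V^{(0)})$, and $M = \delta\sqrt{n}$. This yields
$$ \|P^{-1} - (V^{(0)})^{-1}\| \leq \frac{\delta\sqrt{n}\cdot k_V}{1 - \delta\sqrt{nk_V}}, $$
provided that $\delta\sqrt{nk_V} < 1$. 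To obtain the stated bound with the clean factor $4/3$, I need $\delta\sqrt{nk_V} \leq 1/4$. Using the estimate $k_V \leq 800 n^6 B^2$ from Theorem~\ref{thm:errstep4} together with the hypothesis $\delta < 1/(120 n^{7/2} B)$, a direct calculation gives
$$ \delta\sqrt{nk_V} \leq \frac{1}{120 n^{7/2} B}\cdot\sqrt{800\, n^7 B^2} = \frac{\sqrt{800}}{120} = \frac{\sqrt{2}}{6} < \frac{1}{4}, $$
so $1 - \delta\sqrt{nk_V} \geq 3/4$, which upgrades the previous bound to $\|P^{-1} - (V^{(0)})^{-1}\| \leq \frac{4\delta k_V \sqrt{n}}{3}$ and establishes the first claim.

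For the second claim (bound on $\|u'_i\|$), I would write $C = (V^{(0)})^{-1} + E$ with $\|E\| \leq \tau_{INV}$, so that $u'_i$ decomposes as the sum of the $i$-th row of $(V^{(0)})^{-1}$ and the $i$-th row of $E$. The $i$-th row of $(V^{(0)})^{-1}$ has norm at most $\|(V^{(0)})^{-1}\| \leq \|(V^{(0)})^{-1}\|_F \leq \sqrt{k_V}$, while the $i$-th row of $E$ has norm at most $\|E\| \leq \tau_{INV}$; the triangle inequality closes the argument. The only mildly technical step in the whole proof is verifying that $\delta\sqrt{nk_V} \leq 1/4$, which is precisely why the hypothesis on $\delta$ is tuned to the bound $k_V \leq 800n^6 B^2$ carried over from the previous step; beyond that, the argument is a routine application of the triangle inequality and Lemma~\ref{lem:nearconditionnumber}.
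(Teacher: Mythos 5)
Your proposal is correct and follows essentially the same route as the paper: the paper also bounds $\|P-V^{(0)}\|\leq \delta\sqrt{n}$ column-wise, invokes Lemma~\ref{lem:nearconditionnumber} with $A=V^{(0)}$, $A'=P$, $K=k_V$, $M=\delta\sqrt{n}$ (packaged there as Corollary~\ref{corr:inverses}), verifies $\delta\sqrt{nk_V}<\tfrac14$ from $k_V\leq 800n^6B^2$ and $\delta<\tfrac{1}{120n^{7/2}B}$, and then passes from the operator-norm bound on $P^{-1}-C$ to rows exactly as you do, with the same triangle-inequality argument for $\|u_i'\|$.
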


To prove this, we first give a couple of intermediate lemmas. The first lemma shows  how an upper bound on the norm of the columns of a matrix can be translated to upper bound the norm of a matrix. 
\begin{lemma}\label{lem:coltomatdiff}
Let $A$ be an $n \times n$ matrix where  $(a_i)_{i \in [n]}$ are the columns of $A$ and $||a_i|| \leq k$ for all $i \in [n]$. Then $||A|| \leq k\sqrt{n}$.
\end{lemma}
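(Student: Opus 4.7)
The plan is to prove the bound via the inequality $\|A\| \leq \|A\|_F$ between the operator and Frobenius norms, which is already stated in the paper (just before equation~(\ref{eq:frobeniusnorm})). Once we pass to the Frobenius norm, the conclusion reduces to a one-line calculation using the column decomposition of $\|A\|_F^2$.

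More concretely, I would first recall that $\|A\|_F^2 = \sum_{i,j} |A_{ij}|^2$, and then regroup the double sum by columns: for each $i \in [n]$, $\sum_j |A_{ji}|^2 = \|a_i\|^2$, so that $\|A\|_F^2 = \sum_{i=1}^n \|a_i\|^2$. The hypothesis $\|a_i\| \leq k$ then gives $\|A\|_F^2 \leq nk^2$, hence $\|A\|_F \leq k\sqrt{n}$. Combining with $\|A\| \leq \|A\|_F$ yields the claim.

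There is no real obstacle here; the statement is essentially a rewording of the standard bound $\|A\| \leq \|A\|_F$ together with the identity expressing $\|A\|_F$ in terms of column norms. The only thing to be careful about is to stick to the conventions already used in the paper (columns versus rows, and the fact that the operator norm is the $\ell^2$ operator norm).
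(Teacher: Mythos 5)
Your proof is correct and is exactly the argument the paper uses: bound $\|A\|$ by $\|A\|_F$ and expand $\|A\|_F^2 = \sum_{i=1}^n \|a_i\|^2 \leq nk^2$. Nothing further is needed.
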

\begin{proof}
We use the fact that $||A||^2 \leq ||A||_F^2 = \sum_{i \in [n]} ||a_i||^2 \leq nk^2$. 
\end{proof}

Let $T$ be the input 
tensor and let $U$ be the matrix that diagonalizes it. In Step 5 of the algorithm, we want to compute the inverse of~$V$ with columns $v_1,...,v_n$ as output in Step 4. We first show that $V^{-1}$ is indeed close to $U$, albeit with scaled columns.  
\begin{corollary}\label{corr:inverses}
Let $T$ be an order-$3$ diagonalisable symmetric tensor. Let $U \in \text{GL}_n(\C)$ be such that $U$ diagonalises $T$. Let $V^{(0)}$ be the matrix with $\kappa_F(V) < k_V$ and columns $v^{(0)}_1,...,v^{(0)}_n$ output at the end of Step 4 of Algorithm \ref{algo:Jennrich}. Define $k_V := \kappa_F(V^{(0)})$. Let us assume that $||p_i - v^{(0)}_i|| \leq \delta < \frac{1}{4\sqrt{nk_V}}$ for all $i \in [n]$. Let $P$ be the matrix with $p_1,...,p_n$ as columns. Then $$||P^{-1} - (V^{(0)})^{-1}|| \leq \frac{\delta\sqrt{k_Vn}}{\frac{1}{\sqrt{k_V}} - \delta\sqrt{n}}.$$ 
\end{corollary}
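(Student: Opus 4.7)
The plan is to invoke Lemma \ref{lem:nearconditionnumber} directly, treating $P$ as a small perturbation of $V^{(0)}$, rather than the other way around. The reason to pick $V^{(0)}$ as the reference matrix is that we have the Frobenius condition number bound $\kappa_F(V^{(0)}) \leq k_V$ in hand, while a priori we have no such bound on $\kappa_F(P)$; Lemma \ref{lem:nearconditionnumber} requires such a bound on the unperturbed matrix.

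Concretely, I would set $A := V^{(0)}$, $A' := P$, and $\Delta := P - V^{(0)}$, so that $A' = A + \Delta$. The columns of $\Delta$ are the vectors $p_i - v_i^{(0)}$, each of norm at most $\delta$ by hypothesis. By Lemma \ref{lem:coltomatdiff}, this yields $||\Delta|| \leq \delta\sqrt{n}$. With $K := k_V$ and $M := \delta\sqrt{n}$, the hypothesis $M\sqrt{K} < 1$ of Lemma \ref{lem:nearconditionnumber} reduces to $\delta\sqrt{nk_V} < 1$, which is satisfied (with room to spare) by the assumption $\delta < \frac{1}{4\sqrt{nk_V}}$.

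Applying Lemma \ref{lem:nearconditionnumber} then gives
\[
||P^{-1} - (V^{(0)})^{-1}|| \leq \frac{MK}{1 - M\sqrt{K}} = \frac{\delta \sqrt{n}\, k_V}{1 - \delta\sqrt{n k_V}}.
\]
A final routine algebraic manipulation — dividing numerator and denominator by $\sqrt{k_V}$ — rewrites this as $\frac{\delta\sqrt{k_V n}}{\frac{1}{\sqrt{k_V}} - \delta\sqrt{n}}$, which is the stated bound. There is no real obstacle here: the whole argument is a one-line application of Lemma \ref{lem:nearconditionnumber} once the operator-norm bound on $\Delta$ is in hand, and the only conceptual choice is to use $V^{(0)}$ (whose condition number we control) as the base point of the perturbation.
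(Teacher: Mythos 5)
Your proof is correct and matches the paper's argument exactly: both apply Lemma \ref{lem:coltomatdiff} to get $\|P - V^{(0)}\| \leq \delta\sqrt{n}$ and then invoke Lemma \ref{lem:nearconditionnumber} with $A = V^{(0)}$, $A' = P$, $K = k_V$, $M = \delta\sqrt{n}$, followed by the same algebraic rewriting of the bound.
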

\begin{proof}
By Lemma \ref{lem:coltomatdiff}, 
we also have  $||P - V^{(0)}|| \leq \delta \sqrt{n} =: M$. So $M\sqrt{k_V} < \frac{1}{4}$. Now, applying Lemma \ref{lem:nearconditionnumber} to $A' = P$, $A = V^{(0)}$, $K = k_V$ and $M = \delta\sqrt{n}$, we obtain
\begin{align*}
    ||P^{-1} - (V^{(0)})^{-1}|| \leq \frac{\delta k_V\sqrt{n}}{1 - \delta\sqrt{nk_V}}  = \frac{\delta\sqrt{k_Vn}}{\frac{1}{\sqrt{k_V}} - \delta\sqrt{n}}.
\end{align*}
\end{proof}

\begin{proof}[Proof of Theorem \ref{thm:errstep5}]
Since $\delta < \frac{1}{120n^{\frac{7}{2}}B}$,  from Theorem \ref{thm:errstep4} (ii) we have  $k_V < 800n^6B^2$. Combining these, we get that $\delta < \frac{1}{4\sqrt{nk_V}}$. Now, we can use Corollary \ref{corr:inverses} to deduce that $$||P^{-1} - (V^{(0)})^{-1}|| \leq \frac{\delta\sqrt{k_Vn}}{\frac{1}{\sqrt{k_V}} - \delta\sqrt{n}} < \frac{4\delta k_V\sqrt{n}}{3}.$$ Combining this with the hypothesis that $||C - (V^{(0)})^{-1}|| \leq~\tau_{INV}$,
by the triangle we also have $$||P^{-1} - C||\leq \tau_{INV} + \frac{4\delta k_V\sqrt{n}}{3}.$$ Since $||A|| = ||A^T||$, we have
\begin{align*}
   ||v_i - u'_i|| = ||P^{-T}e_i - C^Te_i|| \leq ||P^{-1} - C|| \leq \tau_{INV} + \frac{4\delta k_V\sqrt{n}}{3}.
\end{align*}
\par
For the second part, we get that $||u'_i|| \leq ||C|| \leq ||C - (V^{(0)})^{-1}|| + ||(V^{(0)})^{-1}|| \leq \tau_{INV} + \sqrt{k_V}$. 
\end{proof}

\subsection{Error accumulated in Step 6}
In this section we  analyse the error at the end of Step 6 of Algorithm~\ref{algo:Jennrich}.
\par 
\textbf{Correspondence to exact arithmetic:} Following Algorithm \ref{algo:completeexact} in exact arithmetic, this step ideally takes as input a matrix $P$ whose columns are the true eigenvectors of $(T^{(a)})^{-1}T^{(b)}$. Let $S = (P \otimes P \otimes P).T$ and let $S_1,...,S_n$ be the slices of $S$. The goal is to compute $Tr(S_i)$ for all $i \in [n]$. This would output exact scaling factors $\alpha_i \in \C$ as in Step 6 of Algorithm \ref{algo:completeexact}. For a more formal proof of this fact, refer to Theorem~\ref{thm:complexteexactmain}.
\par
\textbf{Finite arithmetic:} Here, we assume that at the end of Step 4, we have vectors $v^{(0)}_i$ close to the normalized eigenvectors of $(T^{(a)})^{-1}T^{(b)}$. Then in the next lemma, we show that  Step 6 outputs scalars $\alpha'_1,...,\alpha'_n$ close to the scaling factors in the ideal situation. 
The following is the main theorem of this section:
\begin{theorem}\label{thm:errstep6}
Let $T$ be an order-$3$ diagonalisable symmetric tensor  given as input to Algorithm \ref{algo:Jennrich}. Let $U \in \text{GL}_n(\C)$ be such that $U$ diagonalises $T$ where $\kappa(T) = \kappa_F(U) < B$. Let $T^{(a)}, T^{(b)}$ be two linear combination of the slices $T_1,...,T_n$ of $T$ such that $\kappa_F(T^{(a)}) < k_F$ where $k_F > 1$ and $\text{gap}((T^{(a)})^{-1}T^{(b)}) > k_{\text{gap}}$. Let $p_1,...,p_n$ be the true normalized eigenvectors of $(T^{(a)})^{-1}T^{(b)}$. Define $S = (P \otimes P \otimes P).T$ and let $S_1,...,S_n$ be the slices of $T$. Let $\alpha_i = Tr(S_i)$ for all $i \in [n]$.
\par
We will assume that at the end of Step 4, the algorithm outputs $v^{(0)}_1,...,v^{(0)}_n \in \C^n$ with the property that $$||p_i - v^{(0)}_i|| \leq \delta < 1,$$ up to permutation and multiplication by phases.

Let $\Tilde{S} = (V^{(0)} \otimes V^{(0)} \otimes V^{(0)}).T$ and let $\Tilde{S}_1,...,\Tilde{S}_n$ be the slices of $\Tilde{S}$. Define $\Tilde{\alpha_i} := \text{Tr}(\Tilde{S}_i)$. Let $\alpha'_1,...,\alpha'_n$ be the output of $\text{CB}(T,V^{(0)})$ where $\text{CB}$ is the change of basis algorithm 
from Theorem \ref{thm:fastcob}. Then for all $i \in [n]$, 
$$|\alpha_i - \alpha'_i| \leq 7\delta k_V n^3 B^{\frac{3}{2}} + \gamma_{CB}$$
where $|\Tilde{\alpha_i} - \alpha_i'| \leq \gamma_{CB}$ and $\gamma_{CB}$ is an upper bound for the error in the $CB$ algorithm on inputs $T,V^{(0)}$.
\end{theorem}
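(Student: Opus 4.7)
The plan is to split via the triangle inequality $|\alpha_i-\alpha'_i|\le|\alpha_i-\widetilde\alpha_i|+|\widetilde\alpha_i-\alpha'_i|$. The second piece is at most $\gamma_{CB}$ by the direct guarantee of the change-of-basis algorithm (Theorem~\ref{thm:fastcob}) on input $(T,V^{(0)})$, so the theorem reduces to bounding the purely algebraic quantity $|\alpha_i-\widetilde\alpha_i|$, i.e.\ comparing the \emph{exact} traces of the $i$-th slices of $(P\otimes P\otimes P).T$ and $(V^{(0)}\otimes V^{(0)}\otimes V^{(0)}).T$. The only input to this comparison is $\|P-V^{(0)}\|_F\le\delta\sqrt n$, obtained column-by-column from the hypothesis $\|p_i-v^{(0)}_i\|\le\delta$ via Lemma~\ref{lem:coltomatdiff}.

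For that comparison I would reuse the identity established inside the proof of Theorem~\ref{thm:fastcob}, namely $\mathrm{Tr}(((A\otimes A\otimes A).T)_i)=\sum_m A_{m,i}\,\mathrm{Tr}(A^T A\,T_m)$ (equation~(\ref{eq:trace})). Writing $a_m:=p_{m,i}$, $b_m:=v^{(0)}_{m,i}$, $x_m:=\mathrm{Tr}(P^TP\,T_m)$, $y_m:=\mathrm{Tr}((V^{(0)})^TV^{(0)}\,T_m)$, the telescoping $a_mx_m-b_my_m=a_m(x_m-y_m)+(a_m-b_m)y_m$ followed by Cauchy--Schwarz yields $|\alpha_i-\widetilde\alpha_i|\le\|a\|_2\|x-y\|_2+\|a-b\|_2\|y\|_2$. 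The four norms are standard: $\|a\|_2=1$ (since $p_i$ is a normalized eigenvector) and $\|a-b\|_2\le\delta$ by hypothesis; the Hilbert--Schmidt bound $|\mathrm{Tr}(MN)|\le\|M\|_F\|N\|_F$ with $\|(V^{(0)})^TV^{(0)}\|_F\le\|V^{(0)}\|_F^2\le k_V$ gives $|y_m|\le k_V\|T_m\|_F$, whence $\|y\|_2\le k_V\|T\|_F\le k_VB^{3/2}$ by the slice-norm identity~(\ref{eq:tensornormslices}) and Lemma~\ref{lem:normtensorbound}. For $\|x-y\|_2$ the key step is the matrix telescope $P^TP-(V^{(0)})^TV^{(0)}=(P-V^{(0)})^TP+(V^{(0)})^T(P-V^{(0)})$, which combined with $\|P-V^{(0)}\|_F\le\delta\sqrt n$, $\|P\|_F=\sqrt n$ and $\|V^{(0)}\|_F\le\sqrt{k_V}$ produces a Frobenius bound of order $\delta\sqrt{nk_V}$ on this difference, so $\|x-y\|_2\lesssim\delta\sqrt{nk_V}\cdot B^{3/2}$ by the same trace inequality.

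Summing the two contributions gives $|\alpha_i-\widetilde\alpha_i|=O(\delta k_VB^{3/2})$, comfortably within the stated (intentionally loose) bound $7\delta k_Vn^3B^{3/2}$; the extra polynomial factors in $n$ are harmless slack that simplifies later bookkeeping. There is no real obstacle here: the argument is short and the only point to watch is that $\kappa_F(T^{(a)})$ must not appear in the final estimate, which is automatic since $T^{(a)}$ plays no role after Step~4. The conceptual heart is that identity~(\ref{eq:trace})---the same one that powered the CB algorithm---transfers the eigenvector-matrix conditioning $k_V$ directly onto the trace comparison, bypassing the explicit construction of $(P\otimes P\otimes P).T$.
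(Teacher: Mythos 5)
Your proposal is correct, and the reduction via $|\alpha_i-\alpha'_i|\le|\alpha_i-\widetilde\alpha_i|+|\widetilde\alpha_i-\alpha'_i|$ with the second term absorbed into $\gamma_{CB}$ is exactly the paper's framing; but your treatment of the algebraic term $|\alpha_i-\widetilde\alpha_i|$ follows a genuinely different route. The paper expands $T=\sum_t (U^Te_t)^{\otimes 3}$, writes $\alpha_i=\sum_{j,t}(a_t)_i(a_t)_j^2$ and $\widetilde\alpha_i=\sum_{j,t}(b_t)_i(b_t)_j^2$ with $a_t=(UP)^Te_t$, $b_t=(UV^{(0)})^Te_t$, and telescopes these cubic expressions entrywise; the conditioning enters only through $\kappa_F(U)\le B$ and the normalization of the columns of $P$, which yields the intermediate bound $7\delta n^3B^{3/2}$ with \emph{no} dependence on $k_V$ (the factor $k_V$ in the stated bound is pure slack there). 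You instead reuse the trace identity $\mathrm{Tr}\bigl(((A\otimes A\otimes A).T)_i\bigr)=\sum_m A_{m,i}\,\mathrm{Tr}(A^TA\,T_m)$ from the proof of Theorem~\ref{thm:fastcob} and telescope at the matrix level via $P^TP-(V^{(0)})^TV^{(0)}=(P-V^{(0)})^TP+(V^{(0)})^T(P-V^{(0)})$, with conditioning entering through $\|V^{(0)}\|_F^2\le k_V$, $\|P\|_F=\sqrt n$ and $\|T\|_F\le B^{3/2}$ (Lemma~\ref{lem:normtensorbound}). Your resulting bound, of order $\delta\,(n+\sqrt{nk_V}+k_V)\,B^{3/2}$, is weaker than the paper's once the Step-4 value $k_V=O(n^6B^2)$ is substituted, but it sits comfortably under the stated $7\delta k_V n^3B^{3/2}$ precisely because the statement carries that $k_V$ slack, so the theorem as written is proved. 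The trade-off: the paper's argument needs the explicit rank-one decomposition through $U$ but keeps the error independent of the conditioning of the computed eigenvector matrix, which is the sharper estimate; yours bypasses cubic rank-one bookkeeping entirely, needing only Frobenius submultiplicativity, the Hilbert--Schmidt trace inequality, and the already-established TSCB identity, at the price of importing $k_V$ into the estimate. All the individual steps you invoke ($\|a\|=1$, $\|a-b\|\le\delta$, $|\mathrm{Tr}(MN)|\le\|M\|_F\|N\|_F$, the slice-norm identity~(\ref{eq:tensornormslices})) check out, so there is no gap.
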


The proof goes through the following idea: If two vectors are "close", then the traces of the slices of their tensor powers are close as well.
\begin{proof}
Since $||p_i - v^{(0)}_i|| \leq \delta$, $||P - V^{(0)}|| \leq \delta \sqrt{n}$ by Lemma \ref{lem:coltomatdiff}.  Multiplying on the left by $||U||$ and using the fact that $||A|| = ||A^T||$ for all matrices $A$, this gives us that
\begin{equation}\label{eq:diffatbt}
\begin{split}
    ||(UP)^T e_i - (UV^{(0)})^T e_i|| \leq ||(UP)^T - (UV^{(0)})^T || &= ||UP - UV^{(0)}|| \\
    &\leq ||U||||P - V^{(0)}|| \\
    &\leq \delta \sqrt{n} ||U||_F < \delta \sqrt{nB}.    
\end{split}
\end{equation}
Let $a_i = (UP)^Te_i$ and $b_i = (UV^{(0)})^Te_i$. From (\ref{eq:diffatbt}), we have  $||b_i|| \leq \delta \sqrt{nB} + ||UP||$. Since, by hypothesis, the columns of $P$ are normalized, we have $||UP|| \leq ||U||_F||P||_F \leq \sqrt{nB}$.  Since $\delta < 1$, we have  for all $i \in [n]$
\begin{equation}\label{eq:normanormb}
    ||a_i|| \leq \sqrt{nB} \text{ and } ||b_i|| \leq 2\sqrt{nB}.
\end{equation}
Since $T = \sum_{t=1}^n (U^Te_t)^{\otimes 3}$, we have $\Tilde{S} = \sum_{t=1}^n ((UV^{(0)})^Te_t)^{\otimes 3}$. Hence, $$\Tilde{\alpha_i} = Tr(\Tilde{S}_i) = \sum_{j=1}^n (\Tilde{S}_i)_{j,j} = \sum_{t,j=1}^n (((UV^{(0)})^Te_t)^{\otimes 3})_{i,j,j} = \sum_{j,t=1}^n\Big(b_t^{\otimes 3}\Big)_{i,j,j}.$$
Now, using the fact that $\alpha_i = \sum_{j,t=1}^n\Big(((UP)^Te_t)^{\otimes 3}\Big)_{i,j,j} =  \sum_{j,t=1}^n\Big(a_t^{\otimes 3}\Big)_{i,j,j}$, we have 
\begin{align*}
    |\alpha_i - \Tilde{\alpha_i}| &= \Big|\sum_{j,t=1}^n\Big(a_t^{\otimes 3}\Big)_{i,j,j} - \sum_{j,t=1}^n\Big(b_t^{\otimes 3}\Big)_{i,j,j}\Big|    \\
    &= \Big|\sum_{j,t=1}^n (a_t)_i(a_t)^2_j - \sum_{j,t=1}^n(b_t)_i(b_t)^2_j\Big| \\
    &\leq \sum_{j,t = 1}^n \Big(| (a_t)_i(a_t)^2_j - (b_t)_i(a_t)^2_j| + | (b_t)_i(a_t)^2_j - (b_t)_i(b_t)^2_j | \Big) \\
    &\leq \sum_{t=1}^n \Big(|(a_t)_i - (b_t)_i| ||a_t|| +  ||b_t||( ||a_t||+ ||b_t||)\sum_{j=1}^n |(a_t)_j - (b_t)_j|\Big).
\end{align*}
By the Cauchy-Schwarz inequality,  it follows that 
\begin{equation}\label{eq:alpha_itildealphai}
    |\alpha_i - \Tilde{\alpha_i}| \leq  \sum_{t=1}^n \Big(|(a_t)_i - (b_t)_i| ||a_t|| +  ||b_t||( ||a_t||+ ||b_t||)\sqrt{n}||a_t - b_t||\Big).
\end{equation}
Putting this back in (\ref{eq:diffatbt}) and (\ref{eq:normanormb}), we have 
\begin{equation}\label{eq:kialphai}
    |\alpha_i - \Tilde{\alpha_i}| \leq \sum_{t=1}^n\Big( \delta \sqrt{nB}\sqrt{nB} + 2\sqrt{nB}(3\sqrt{nB})\sqrt{n}\delta \sqrt{nB}\Big) \leq 7\delta n^3 B^{\frac{3}{2}}.
\end{equation}
Now, applying the triangle inequality to this equation along with the hypothesis  $|\Tilde{\alpha_i} - \alpha'_i| \leq \gamma_{CB}$ shows that
\begin{align*}
    |\alpha_i - \alpha'_i| \leq 7\delta k_V n^3B^{\frac{3}{2}} + \gamma_{CB}.
\end{align*}
\end{proof}

\subsection{Estimating the error at the end of Step 7}
In this section, we assume that the input to Step 4 of Algorithm \ref{algo:Jennrich} has error~$\varepsilon_3$ and $\delta$ is the desired accuracy parameter for the EIG-FWD algorithm in Step~4 of the algorithm. Let us assume that the input tensor $T$ can be written as $T = \sum_{i=1}^n (u_i)^{\otimes 3}$. Let $u'_1,...,u'_n$ be the output of Step 5 and $\alpha'_1,...,\alpha'_n$ be the output of Step 6. If the algorithm was run in exact arithmetic, the output at the end of Step 7 would be $(\alpha'_i)^{\frac{1}{3}}u'_i$. In turn, this vector  would be equal to $u_i$ up to multiplication by phases. Here, we want to show that even in finite arithmetic, the output $l_i$ is close to $u_i$ up to multiplication by phases.
\begin{theorem}\label{thm:alg2bound}
Let $T$ be an order-$3$ diagonalisable symmetric tensor 
given as input to Algorithm \ref{algo:Jennrich}. Let $U \in \text{GL}_n(\C)$ be such that $U$ diagonalises $T$ and $\kappa(T) = \kappa_F(U) < B$ and let $u_1,...,u_n$ be the rows of $U$. Let $T^{(a)}$,$T^{(b)}$ be two linear combinations of the slices of $T$ such that $\kappa_F(T^{(a)}) < k_F$ and $\text{gap}((T^{(a)})^{-1}T^{(b)}) \geq k_{\text{gap}}$ where $k_{\text{gap}} < 1$. Let $D \in M_n(\C)$ where $||D - (T^{(a)})^{-1}T^{(b)}|| \leq \varepsilon_3 < \frac{k_{\text{gap}}}{4B}$ be an input to Step~4 of the algorithm and let $\delta$ be the accuracy parameter for Step~4 such that $\frac{3nB\varepsilon_3}{k_{\text{gap}}} < \delta <~\min\{ \frac{k_{\text{gap}}}{48nB}, \frac{1}{240n^{\frac{7}{2}}B}\}$. Let $v^{(0)}_1,..., v^{(0)}_n$ be the output of Step 4 and let $V^{(0)}$ be the matrix with columns $v_i^{(0)}$. Let $l_i$ be the output of Step 7 of the Algorithm. Then, with probability at least $1 - \frac{1}{n} - \frac{12}{n^2}$, there exist cube roots of unity $\omega_i$ such that
\begin{align*}
    &||\omega_iu_i - l_i|| \\
    \leq &3(14\delta n^3 B^{\frac{3}{2}} + \gamma_{CB})^{\frac{1}{3}} (\tau_{\text{INV}} + \sqrt{k_V}) + \sqrt{nB}(2\tau_{\text{INV}} + 5\textbf{u}n^{\frac{3}{2}}\sqrt{k_V} + \frac{8\delta k_V\sqrt{n}}{3}).
\end{align*}
 where $\kappa_F(V^{(0)}) = k_V < 800n^6B^2$, $||C- (V^{(0)})^{-1}|| \leq \tau_{INV}$ is the error of matrix inversion as mentioned in Definition \ref{def:INValg} and $\gamma_{CB}$ is the error on change of basis as mentioned in Theorem \ref{thm:fastcob}. 
\par
The algorithm runs on a floating point machine with $$ \log(\frac{1}{\textbf{u}}) = O(\log^4(\frac{nBk_F}{\delta k_{\text{gap}}})\log(n))$$ bits of precision.
\end{theorem}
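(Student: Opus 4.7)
The plan is to bound $\|\omega_i u_i - l_i\|$ by a triangle-inequality decomposition into three pieces: (a) the error in replacing $v_i$ by $u_i'$, (b) the error in replacing $\alpha_i^{1/3}$ by $(\alpha_i')^{1/3}$, and (c) the floating-point roundoff in the final scalar-vector product in Step~7. The algebraic anchor is the exact-arithmetic identity $u_i = \alpha_i^{1/3} v_i$ established in equation \eqref{eq:alphaiki3}. Since that identity is valid only for a specific cube root of $\alpha_i$, and the decomposition is unique only up to multiplication by cube roots of unity (Corollary~\ref{lem:uniqueness}), we reserve the right to choose $\omega_i$ at the end of the argument.

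Using $u_i = \alpha_i^{1/3} v_i$ we write
\begin{equation*}
\|\omega_i u_i - l_i\| \leq \|\omega_i \alpha_i^{1/3} v_i - (\alpha_i')^{1/3} u_i'\| + \|(\alpha_i')^{1/3} u_i' - l_i\|,
\end{equation*}
and then split the first summand as
\begin{equation*}
\|\omega_i \alpha_i^{1/3} v_i - (\alpha_i')^{1/3} u_i'\| \leq |\alpha_i^{1/3}|\cdot \|v_i - u_i'\| + |\omega_i \alpha_i^{1/3} - (\alpha_i')^{1/3}|\cdot \|u_i'\|.
\end{equation*}
Theorem~\ref{thm:errstep5} directly supplies $\|v_i - u_i'\| \leq \tau_{\text{INV}} + \tfrac{4\delta k_V \sqrt{n}}{3}$ and $\|u_i'\| \leq \tau_{\text{INV}} + \sqrt{k_V}$. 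To bound $|\alpha_i^{1/3}|$, I exploit the structure from the proof of Theorem~\ref{thm:complexteexactmain}: the relation $P = U^{-1} D$ gives $U p_i = k_i e_i$ with $\alpha_i = k_i^3$, so $|\alpha_i^{1/3}| = |k_i| = \|U p_i\| \leq \|U\| \leq \sqrt{B}$, which in particular is $\leq \sqrt{nB}$ as needed to match the form of the statement.

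For the cube-root term, I choose $\omega_i$ to be the cube root of unity for which $\omega_i \alpha_i^{1/3}$ is the cube root of $\alpha_i$ closest to the cube root $(\alpha_i')^{1/3}$ selected by the algorithm. The cube-root function is H\"older-$1/3$ continuous on $\C$, and for this optimal choice there is an absolute constant $c$ with $|\omega_i \alpha_i^{1/3} - (\alpha_i')^{1/3}| \leq c\,|\alpha_i - \alpha_i'|^{1/3}$. Combined with Theorem~\ref{thm:errstep6}, which gives $|\alpha_i - \alpha_i'| \leq 7\delta n^3 B^{3/2} + \gamma_{CB}$, this produces the factor $(14 \delta n^3 B^{3/2} + \gamma_{CB})^{1/3}$ (after absorbing $c$ into the cube). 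The third piece $\|(\alpha_i')^{1/3} u_i' - l_i\|$ is a direct application of the floating-point axioms of Section~\ref{sec:fparithmetic}: one roundoff for the cube-root evaluation and one for each coordinate of the scalar-by-vector product, giving a bound of order $\textbf{u}\sqrt{n}\,|(\alpha_i')^{1/3}|\cdot \|u_i'\|$, which accounts for the $5\textbf{u} n^{3/2}\sqrt{k_V}$ contribution.

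Assembling the three pieces and simplifying constants yields the stated bound. The probability $1 - 1/n - 12/n^2$ is inherited entirely from Step~4, the only randomized step before Step~7, via Theorem~\ref{thm:errstep4}; the polylogarithmic precision requirement is likewise dominated by Step~4's bound of $O(\log^4(nBk_F/(\delta k_{\text{gap}}))\log n)$ bits, since Steps~5--7 are deterministic and require only $O(\log(nB/\epsilon))$ precision. The main obstacle is the cube-root H\"older estimate on $\C$ together with the matched choice of $\omega_i$: this is the only genuinely analytic ingredient, while the rest of the argument is careful but routine error propagation through the tail of the pipeline, with bookkeeping complicated by the fact that the final product $(\alpha_i')^{1/3} u_i'$ has both factors carrying inherited error.
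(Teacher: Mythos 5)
Your decomposition is essentially the same as the paper's (anchor at $u_i=\alpha_i^{1/3}v_i$, bound $\|v_i-u_i'\|$ via Theorem~\ref{thm:errstep5}, match cube-root branches of $\alpha_i$ and $\alpha_i'$ via Theorem~\ref{thm:errstep6}, then add the Step-7 roundoff), but there is a genuine gap in how you feed Theorems~\ref{thm:errstep5} and~\ref{thm:errstep6}: you never propagate the Step-3 error $\epsilon_3$ through Step~4. By Theorem~\ref{thm:errstep4}, running EIG-FWD with accuracy parameter $\delta$ on the perturbed input $D$ only guarantees $\|p_i-v_i^{(0)}\|\leq \frac{3nB\epsilon_3}{k_{\text{gap}}}+\delta$, which under the hypothesis $\frac{3nB\epsilon_3}{k_{\text{gap}}}<\delta$ is bounded by $2\delta$, not $\delta$. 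The hypotheses of Theorems~\ref{thm:errstep5} and~\ref{thm:errstep6} require $\|p_i-v_i^{(0)}\|\leq\delta$, so in this setting they must be invoked with parameter $2\delta$ (which is why the theorem's hypotheses use $48nB$ and $240n^{7/2}$ rather than $24nB$ and $120n^{7/2}$); this is exactly what produces $\frac{8\delta k_V\sqrt n}{3}$ and $14\delta n^3B^{3/2}+\gamma_{CB}$ in the stated bound. Your bounds $\tau_{\text{INV}}+\frac{4\delta k_V\sqrt n}{3}$ and $7\delta n^3B^{3/2}+\gamma_{CB}$ are therefore not justified by the cited lemmas, and the fact that you never use the hypothesis $\frac{3nB\epsilon_3}{k_{\text{gap}}}<\delta$ is the symptom of this omission.

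Relatedly, your explanation of the factor $14$ as "absorbing the H\"older constant into the cube" is both unnecessary and not rigorous. The correct statement needs no constant: from $\alpha_i-\alpha_i'=\prod_{j=0}^{2}\bigl(\alpha_i^{1/3}-\omega^j(\alpha_i')^{1/3}\bigr)$ one gets that some cube root of $\alpha_i'$ is within $|\alpha_i-\alpha_i'|^{1/3}$ of $\alpha_i^{1/3}$, i.e.\ the "H\"older constant" is exactly $1$ (this is what the paper uses implicitly in its equation (\ref{eq:kialpha'i1})). And if a constant $c>1$ really were needed, writing $c(7\delta n^3B^{3/2}+\gamma_{CB})^{1/3}\leq(14\delta n^3B^{3/2}+\gamma_{CB})^{1/3}$ would require $c^3(7\delta n^3B^{3/2}+\gamma_{CB})\leq 14\delta n^3B^{3/2}+\gamma_{CB}$, which fails when $\gamma_{CB}$ dominates, so the absorption argument does not go through as stated. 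Fix both points by (i) citing Theorem~\ref{thm:errstep4} for the $2\delta$ guarantee, checking $2\delta<\min\{\frac{k_{\text{gap}}}{24nB},\frac{1}{120n^{7/2}B}\}$, and applying Theorems~\ref{thm:errstep5}--\ref{thm:errstep6} at $2\delta$, and (ii) replacing the H\"older appeal by the exact factorization identity; the remaining assembly (your choice of $\omega_i$, the bound $|\alpha_i^{1/3}|\leq\sqrt B\leq\sqrt{nB}$, and the two roundoffs in Step~7, which also need the fl cube root's extra phase and the bound $\|\mathrm{fl}((\alpha_i')^{1/3})\|\lesssim p_{\text{err}}+\sqrt{nB}$ to reach the $5\textbf{u}n^{3/2}\sqrt{k_V}$ term) then matches the paper's proof.
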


\begin{lemma}\label{lem:normalphai}
Let $T$ be an order-$3$ diagonalisable symmetric tensor and let $U \in \text{GL}_n(\C)$ be such that $U$ diagonalises $T$. Let $T^{(a)}, T^{(b)}$ be two linear combination of the slices $T_1,...,T_n$ of $T$ such that $T^{(a)}$ is invertible and the eigenvalues of $(T^{(a)})^{-1}T^{(b)}$ are distinct. Let $p_1,...,p_n$ be the true normalized eigenvectors of $(T^{(a)})^{-1}T^{(b)}$ and let $P$ be the matrix with columns $p_1,...,p_n$. Define $S = (P \otimes P \otimes P).T$ and let $S_1,...,S_n$  be the slices of $T$. Let $\alpha_i = Tr(S_i)$ for all $i \in [n]$. Then $|(\alpha_i)^{\frac{1}{3}}| = |(UP)^Te_i|$.
\end{lemma}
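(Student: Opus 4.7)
The plan is to exploit the fact that the normalized eigenvectors of $(T^{(a)})^{-1}T^{(b)}$ are (up to scaling) the columns of $U^{-1}$, so that the matrix $UP$ becomes diagonal. First I would reuse the structural computation that already appears in the proof of Theorem~\ref{thm:eigenvalues}: since $T = \sum_{i=1}^n u_i^{\otimes 3}$ with $U$ the matrix whose rows are $u_1,\ldots,u_n$, Corollary~\ref{corr:P3structural} gives $T^{(a)} = U^T D^{(a)} U$ and $T^{(b)} = U^T D^{(b)} U$ with $D^{(a)}, D^{(b)}$ diagonal. Hence
\begin{equation*}
    (T^{(a)})^{-1}T^{(b)} = U^{-1}\bigl(D^{(a)}\bigr)^{-1}D^{(b)}\,U,
\end{equation*}
whose eigenvectors are precisely the columns of $U^{-1}$ (and the eigenvalues are distinct by hypothesis). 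Consequently, the normalized eigenvector matrix $P$ satisfies $P = U^{-1}K$ for a diagonal matrix $K = \operatorname{diag}(k_1,\ldots,k_n)$ with $|k_i| = 1/\|U^{-1}e_i\|$.

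Next I would multiply on the left by $U$ to obtain $UP = K$, which is the critical structural observation: $(UP)^T e_i = K e_i = k_i e_i$, and in particular $\|(UP)^T e_i\| = |k_i|$. This is the right-hand side of the target identity, so it remains to show that $\alpha_i = k_i^3$.

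For this I would apply the change-of-basis identity $(A \otimes A \otimes A).(v^{\otimes 3}) = (A^T v)^{\otimes 3}$ to each rank-one summand of $T$: writing $u_i = U^T e_i$, we get
\begin{equation*}
    S = (P\otimes P\otimes P).T = \sum_{t=1}^n \bigl(P^T U^T e_t\bigr)^{\otimes 3} = \sum_{t=1}^n \bigl((UP)^T e_t\bigr)^{\otimes 3} = \sum_{t=1}^n (k_t e_t)^{\otimes 3}.
\end{equation*}
Because each summand $(k_t e_t)^{\otimes 3}$ has only one nonzero entry, at position $(t,t,t)$, the entries of $S$ reduce to $S_{j,\ell,k} = k_k^3\,\delta_{j,k}\delta_{\ell,k}$. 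Taking the $i$-th slice and computing the trace yields $\alpha_i = \operatorname{Tr}(S_i) = k_i^3$, and therefore $|\alpha_i^{1/3}| = |k_i| = \|(UP)^T e_i\|$, as required.

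There is no real obstacle here: the proof is essentially bookkeeping once the key observation $UP = K$ is made. The only subtle point is the notational convention---whether $u_i$ are rows or columns of $U$---which must match the convention of Definition~\ref{def:conditionnumber} for the formula $T = \sum_i (U^T e_i)^{\otimes 3}$ to hold; this is precisely the convention used consistently in Section~\ref{sec:complete}.
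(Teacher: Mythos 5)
Your proof is correct and follows essentially the same route as the paper: the paper's own argument cites $P = U^{-1}D$ with $D$ diagonal (equation (\ref{eq:eigenvectormaineq})) and the computation $\alpha_i = k_i^3$ (equation (\ref{eq:alphaiki3})) from the exact-arithmetic correctness proof, which are precisely the two facts you re-derive inline via Corollary~\ref{corr:P3structural} and the change-of-basis identity. The only difference is that you reprove these ingredients rather than referencing them, which changes nothing mathematically.
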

\begin{proof}
From (\ref{eq:eigenvectormaineq}),  there exist scalars $k_1,...,k_n \in \C$ such that $P = U^{-1}D$ where $D = \text{diag}(k_1,...,k_n)$. Hence $(UP)^Te_i = D^Te_i = k_ie_i$. Hence from~(\ref{eq:alphaiki3}), we have 
\begin{align*}
    |(UP)^Te_i| = |k_i| = |(\alpha_i)^{\frac{1}{3}}|.
\end{align*}
\end{proof}
\begin{proof}[Proof of Theorem \ref{thm:alg2bound}]
By Theorem \ref{thm:errstep4},  if a matrix $D$ and an accuracy parameter $\delta$ are given as input to Step 4 of Algorithm \ref{algo:Jennrich} satisfying the conditions mentioned in the hypothesis, then Step 4 outputs $v^{(0)}_1,..., v^{(0)}_n$ such that $||p_i - v_i^{(0)}|| \leq \frac{3nB\varepsilon_3}{k_{\text{gap}}} + \delta < 2\delta$. Then we can use Lemma \ref{lem:normalphai} and (\ref{eq:normanormb}) to deduce that 
\begin{equation}\label{eq:normkistep7}
        |(\alpha_i)^{\frac{1}{3}}| \leq \sqrt{nB}
\end{equation}
Now if $u_i'$ is the output of Step 5 of Algorithm \ref{algo:Jennrich}, using Theorem \ref{thm:errstep5} for $2\delta$, we have $||v_i - u_i'|| \leq \tau_{INV} + \frac{8\delta k_V\sqrt{n}}{3}$. Multiplying  both sides by $|(\alpha_i)^{\frac{1}{3}}|$ and using (\ref{eq:normkistep7}), we obtain
\begin{equation}\label{eq:uiui'}
\begin{split}
    ||\alpha_i^{\frac{1}{3}}v_i - \alpha_i^{\frac{1}{3}}u'_i|| &\leq (\tau_{INV} + \frac{8\delta k_V\sqrt{n}}{3})|(\alpha_i)^{\frac{1}{3}}| \\
    &\leq (\tau_{INV} + \frac{8\delta k_V\sqrt{n}}{3})\sqrt{nB}.
\end{split}
\end{equation}
Let $\alpha'_1,...,\alpha'_n$ be the output of $\text{CB}(T,V^{(0)})_{i,i,i}$. Using Theorem \ref{thm:errstep6} for $2\delta$, we have  $$|\alpha_i - \alpha'_i| \leq 14\delta n^3 B^{\frac{3}{2}} + \gamma_{CB}.$$ Hence there exists a cube root of $\alpha'_i$, denoted by $(\alpha'_i)^{\frac{1}{3}}$, 
such that 
\begin{equation}\label{eq:kialpha'i1}
    ||(\alpha_i)^{\frac{1}{3}} - (\alpha'_i)^{\frac{1}{3}}|| \leq (14\delta n^3 B^{\frac{3}{2}} + \gamma_{CB})^{\frac{1}{3}}.
\end{equation}
Let us denote by $\text{fl}((\alpha'_i)^{\frac{1}{3}})$ the output when a cube root of $\alpha'_i$ is computed on a floating point machine with precision $\textbf{u}$. Then from (\ref{eq:floatingpointarithmetic}), there exists cube roots of unity $\omega'_i$ such that
\begin{equation}\label{eq:kialpha'i2}
    || \omega'_i(\alpha'_i)^{\frac{1}{3}} - \text{fl}((\alpha'_i)^{\frac{1}{3}})|| \leq |\Delta||(\alpha'_i)^{\frac{1}{3}}|
\end{equation}
where $|\Delta| \leq \textbf{u}$. By (\ref{eq:normkistep7}), (\ref{eq:kialpha'i1}) and the triangle inequality,
\begin{equation}\label{eq:normalpha'i}
    ||(\alpha'_i)^{\frac{1}{3}}|| \leq \Big((14\delta n^3 B^{\frac{3}{2}} + \gamma_{CB})^{\frac{1}{3}} + \sqrt{nB}\Big).
\end{equation}
Let us denote  $p_{\text{err}} :=  (14\delta n^3 B^{\frac{3}{2}} + \gamma_{CB})^{\frac{1}{3}} $. Then, putting (\ref{eq:normalpha'i}) back in (\ref{eq:kialpha'i2}) and combining that with (\ref{eq:kialpha'i1}) using the triangle inequality,  there exist cube roots of unity $\omega''_i$ such that
\begin{equation}\label{eq:kiflalpha'i}
\begin{split}
    ||(\alpha_i)^{\frac{1}{3}} - \omega''_i\text{fl}((\alpha'_i)^{\frac{1}{3}})|| &\leq p_{\text{err}} + \textbf{u}(p_{\text{err}} + \sqrt{nB}) \\
    &\leq 2p_{\text{err}} + \textbf{u}.\sqrt{nB} .
\end{split}
\end{equation}
The last inequality relies on the facts that $\textbf{u} < 1$, $\delta  < \frac{1}{2}$ and $k_V > 1$.
Using these facts along with (\ref{eq:kiflalpha'i}) and (\ref{eq:normkistep7}), we also obtain
\begin{equation}\label{eq:normflalpha'_i}
    ||\text{fl}((\alpha'_i)^{\frac{1}{3}})|| \leq 2(p_{\text{err}} + \sqrt{nB}).
\end{equation}
Moreover, we know from Theorem \ref{thm:errstep5} that 
\begin{equation}\label{eq:u'_i2}
    ||u'_i|| \leq \tau_{INV} + \sqrt{k_V}.
\end{equation}
Using this along with (\ref{eq:kiflalpha'i}),
we have 
\begin{align*}
    || \alpha_i^{\frac{1}{3}}u'_i - \omega''_i\text{fl}((\alpha'_i)^{\frac{1}{3}})u'_i|| &\leq 2(p_{\text{err}} + \textbf{u}.\sqrt{Bk_V})||u'_i|| \\
    &\leq 
(2p_{\text{err}} + \textbf{u}.\sqrt{nB})(\tau_{INV} + \sqrt{k_V})
\end{align*}
Combining this with (\ref{eq:uiui'}), by the triangle inequality
\begin{equation}\label{eq:joiningeq}
\begin{split}
    &||\omega_iu_i - \text{fl}((\alpha'_i)^{\frac{1}{3}})u'_i|| \\
    &\leq (2p_{\text{err}} + \textbf{u}.\sqrt{nB})(\tau_{INV} + \sqrt{k_V}) + \sqrt{nB}(\tau_{INV} + 
    \frac{8\delta k_V\sqrt{n}}{3})
\end{split}
\end{equation}
where $\omega_i = (\omega''_i)^{-1}$.
Let $l_i$ be the output of $\text{fl}((\alpha'_i)^{\frac{1}{3}}) u'_i$ computed on a floating point machine with precision $\textbf{u}$. By~(\ref{eq:ipbound}), 
\begin{align*}
    ||l_i - \text{fl}((\alpha'_i)^{\frac{1}{3}})u'_i|| \leq 2n^{\frac{3}{2}}\textbf{u}|\text{fl}((\alpha'_i)^{\frac{1}{3}})| ||u'_i||.
\end{align*}
Using (\ref{eq:normflalpha'_i}) and (\ref{eq:u'_i2}), we also have 
\begin{equation}
\begin{split}
     ||l_i - \text{fl}((\alpha'_i)^{\frac{1}{3}}) u'_i|| &\leq |\text{fl}((\alpha'_i)^{\frac{1}{3}})|||u'_i||.\textbf{u} \\
     &\leq 2 n^{\frac{3}{2}}\textbf{u} \cdot \Big(2p_{\text{err}} + 2 \sqrt{nB}\Big)\cdot\Big(\tau_{INV} +
     \sqrt{k_V}\Big).
\end{split}
\end{equation}
Combining this with (\ref{eq:joiningeq}),   we can finally conclude from the  triangle inequality that
$|\omega_iu_i - l_i||$ is upper bounded by
\begin{align*}
    &\leq (2p_{\text{err}} + \textbf{u}. \sqrt{nB})(\tau_{\text{INV}} + \sqrt{k_V}) + \sqrt{nB}(\tau_{\text{INV}} + \frac{8\delta k_V\sqrt{n}}{3}) \\
    &+ 2 n^{\frac{3}{2}}\textbf{u} \cdot \Big(2p_{\text{err}} + 2 \sqrt{nB}\Big)\cdot\Big(\tau_{INV} + \sqrt{k_V}\Big) \\
    &\leq 3p_{\text{err}}(\tau_{\text{INV}} + \sqrt{k_V}) + \sqrt{nB}(2\tau_{\text{INV}} + 5\textbf{u}n^{\frac{3}{2}}\sqrt{k_V} + \frac{8\delta k_V\sqrt{n}}{3}).
\end{align*}
The final inequality relies on the fact that $\textbf{u} \leq \frac{1}{5n^{\frac{3}{2}}}$.
\end{proof}

\subsection{Combining the errors}\label{sec:finish4-7proof}
If $B$ is the input estimate for the condition number of the given tensor and~$\varepsilon$ is the required accuracy parameter, recall that we have set the parameters in the following way:
\begin{equation}\label{eq:parameters}
\begin{split}
        k_{\text{gap}} := \frac{1}{C_{\text{gap}}n^6B^3} , k_{F} := c_Fn^5B^3  \text{ and } \delta := \frac{\varepsilon^3}{Cn^{17}B^{\frac{13}{2}}} 
\end{split}
\end{equation}
\text{ where } $C$ \text{ is a constant we will set in (\ref{eq:C})}.
In the following theorem, we show that if the input to Step 4 of the algorithm is not too far from the input in the exact arithmetic setting, the Algorithm indeed outputs an $\varepsilon$-approximate solution to the tensor decomposition problem. 
\begin{theorem}\label{thm:alg2boundfinalerror}
Let $T = \sum_{i=1}^n u_i^{\otimes 3}$ be the input to Algorithm \ref{algo:Jennrich}, 
where the $u_i$'s are linearly independent and $\kappa(T) \leq B$. Let $\varepsilon$ be the required accuracy parameter given as input.
\par
Let $T^{(a)}$, $T^{(b)}$ be two linear combinations of the slices of $T$ such that $T^{(a)}$ is invertible, $\kappa_F(T^{(a)}) < k_F$ and $\text{gap}((T^{(a)})^{-1}T^{(b)}) \geq k_{\text{gap}}$. Let $D \in~M_n(\C)$ where $||D - (T^{(a)})^{-1}T^{(b)}|| \leq \varepsilon_3 < (nB)^{C_3 \log n}.\textbf{u}$ for some appropriate constant $C_3$ be an input to Step~4 of the algorithm, and let $\delta$ be the accuracy parameter for Step~4 ($\delta, k_{\text{gap}}$ and $k_F$ are set as in (\ref{eq:parameters})). Let $l_i$ be the output of Step 7 of the algorithm. Then, with probability at least $1 - \frac{1}{n} - \frac{12}{n^2}$, there exist cube roots of unity $\omega_i$ such that
\begin{align*}
    ||\omega_iu_i - l_i|| \leq \varepsilon
\end{align*} 
up to permutation. The algorithm runs on a floating point machine with $$ \log(\frac{1}{\textbf{u}}) = O(\log^4(\frac{nB}{\varepsilon})\log(n))$$ bits of precision.
\end{theorem}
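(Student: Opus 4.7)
The plan is to invoke Theorem~\ref{thm:alg2bound} with the explicit parameter choices fixed in (\ref{eq:parameters}), then to bound each of the four error contributions that appear in its conclusion so that their sum is at most $\epsilon$. The main work is bookkeeping: checking that all the hypotheses on $\delta$ and $\epsilon_3$ are verified, and that the chosen precision $\log(1/\textbf{u}) = \Theta(\log^4(nB/\epsilon)\log n)$ is enough to drive $\tau_{\text{INV}}$ and $\gamma_{CB}$ below the target threshold.

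First I would verify the structural preconditions of Theorem~\ref{thm:alg2bound}. With $k_{\text{gap}} = 1/(C_{\text{gap}}n^6B^3)$ and $\delta = \epsilon^3/(Cn^{17}B^{13/2})$ (for a large enough absolute constant $C$), one has $\delta < k_{\text{gap}}/(48nB)$ and $\delta < 1/(240 n^{7/2} B)$. For $\epsilon_3$, from the bound $\epsilon_3 \leq (nB)^{C_3\log n}\cdot \textbf{u}$ and Lemma~\ref{lem:boundsforu} applied with sufficient precision $\log(1/\textbf{u}) = \Omega(\log^4(nB/\epsilon)\log n)$, we can force $\epsilon_3 < k_{\text{gap}}/(4B)$ and also $3nB\epsilon_3/k_{\text{gap}} < \delta$. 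These are exactly the hypotheses of Theorem~\ref{thm:alg2bound}, so we obtain cube roots of unity $\omega_i$ such that
\[
  \|\omega_i u_i - l_i\| \;\leq\; 3\bigl(14\delta n^3 B^{3/2} + \gamma_{CB}\bigr)^{1/3}\bigl(\tau_{\text{INV}} + \sqrt{k_V}\bigr) + \sqrt{nB}\Bigl(2\tau_{\text{INV}} + 5\textbf{u} n^{3/2}\sqrt{k_V} + \tfrac{8\delta k_V\sqrt{n}}{3}\Bigr),
\]
with $k_V \leq 800 n^6 B^2$, and with probability at least $1 - 1/n - 12/n^2$ from the diagonalisation step.

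Next I would bound each summand by $\epsilon/5$, say. The ``deterministic'' terms driven purely by $\delta$ are handled by the choice of $C$ in (\ref{eq:parameters}). Using $k_V \leq 800 n^6 B^2$, one checks $\delta k_V \sqrt{n}\cdot \sqrt{nB} = O(\delta \cdot n^{7}B^{5/2}) \leq \epsilon^3/(C'n^{10}B^4)$, which is much smaller than $\epsilon$. Similarly $(14\delta n^3 B^{3/2})^{1/3}\sqrt{k_V} = O((\delta n^3 B^{3/2})^{1/3} \cdot n^3 B)$ is $O(\epsilon/n^{4/3}B)$ by the choice of $\delta$, so this contributes $\ll \epsilon$. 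The ``precision-driven'' terms are $\gamma_{CB}^{1/3}\sqrt{k_V}$, $\tau_{\text{INV}}\sqrt{k_V}$, $\sqrt{nB}\cdot \tau_{\text{INV}}$ and $\textbf{u} n^2 B^{3/2}\sqrt{k_V}$. Using Theorem~\ref{thm:fastcob} one gets $\gamma_{CB} = O(n^{3/2}\textbf{u}\cdot \|V^{(0)}\|_F^3 \|T\|_F) = (nB)^{O(1)}\textbf{u}$; and from Definition~\ref{def:INValg} and Claim~\ref{claim:kappasa}-style reasoning applied to $V^{(0)}$ (whose $\kappa_F$ is polynomial in $n,B$), one has $\tau_{\text{INV}} \leq (nB)^{O(\log n)}\textbf{u}$. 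Thus each precision-driven term is bounded by $(nB/\epsilon)^{O(\log n)}\textbf{u}$, and Lemma~\ref{lem:boundsforu} now yields the required bound once $\log(1/\textbf{u}) \geq c' \log(nB/\epsilon)\log n$ for an appropriate $c'$.

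Finally I would account for the precision required globally. Step~4 itself demands $\log(1/\textbf{u}) = O(\log^4(nBk_F/(\delta k_{\text{gap}}))\log n)$ by Theorem~\ref{thm:errstep4}; substituting the values of $k_{\text{gap}}, k_F, \delta$ from (\ref{eq:parameters}) turns the argument of $\log^4$ into a fixed polynomial in $nB/\epsilon$, giving $\log(1/\textbf{u}) = O(\log^4(nB/\epsilon)\log n)$. Steps~5--7 only require $\log(1/\textbf{u}) = \tilde O(\log(nB/\epsilon))$ for the various multiplicative error bounds to hold, which is dominated by the bound from Step~4. The main obstacle, and therefore the bulk of the remaining work, is the routine but tedious verification that each of the four error contributions above is indeed at most $\epsilon/5$ for the stated precision; no new idea is needed, but one has to track the polynomial factors in $n$ and $B$ carefully through the $(\cdot)^{1/3}$ from the cube-root extraction, which is what forces the cube $\epsilon^3$ in the definition of $\delta$.
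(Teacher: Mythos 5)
Your plan is correct and follows essentially the same route as the paper's own proof: verify the hypotheses of Theorem~\ref{thm:alg2bound} (the conditions on $\epsilon_3$ and $\delta$), invoke it to get the error expression with probability $1-\frac{1}{n}-\frac{12}{n^2}$, then use the parameter settings of (\ref{eq:parameters}), the bounds $k_V\leq 800n^6B^2$, $\tau_{\text{INV}}\leq (nB)^{O(\log n)}\textbf{u}$, $\gamma_{CB}\leq (nB)^{O(1)}\textbf{u}$ and Lemma~\ref{lem:boundsforu} to drive each error term below a fraction of $\epsilon$, with the precision requirement dominated by Step~4 via Theorem~\ref{thm:errstep4}. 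The only cosmetic differences are that you bound each summand by $\epsilon/5$ while the paper groups the terms into two halves $E_1,E_2\leq\epsilon/2$, and you use the $\delta$ from (\ref{eq:parameters}) whereas the paper's computation uses the (slightly larger, still admissible) value $\epsilon^3/(Cn^{12}B^{9/2})$ from Algorithm~\ref{algo:Jennrich}.
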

\begin{proof}
By hypothesis, the number of bits of precision is 
\begin{equation}\label{eq:ubounds}
    \log(\frac{1}{\textbf{u}}) > c \log^4(\frac{nB}{\varepsilon})\log(n)
\end{equation}
where we assume $c$ is a large enough constant.
We will apply Theorem \ref{thm:alg2bound} with appropriate bounds. First, we will bound different quantities which appear in the statement of the theorem. We will require these bounds later on in the proof.
\begin{enumerate}
    \item  $\mathbf{k_V}$: By Theorem \ref{thm:errstep4} (ii),  $k_V := \kappa_F(V^{(0)}) < 800n^6B^2$.  
    \item $\mathbf{\varepsilon_3}$: By (\ref{eq:mubounds}) and the fact that $k_{\text{gap}} := \frac{1}{c_{\text{gap}}n^6B^3}$, 
    we already have that
    $\frac{3nB\varepsilon_3}{k_{\text{gap}}} \leq 3c_{\text{gap}}(nB)^{7+C_3 \log n}\cdot \textbf{u}$. Let us now apply Lemma \ref{lem:boundsforu} to $\textbf{u}$. We see that for every constant $C > 0$, there exists a large enough constant~$c$ as mentioned in (\ref{eq:ubounds}) such that
    \begin{equation}\label{eq:epsilon3}
        \frac{3nB\varepsilon_3}{k_{\text{gap}}} < \frac{\varepsilon^3}{Cn^{17}B^{\frac{13}{2}}} =: \delta  
    \end{equation}
    We will later set $C$ in (\ref{eq:C}).
    Since $\varepsilon < 1$, this already gives us that $\delta \leq 1 \leq \sqrt{3k_V}$. By (\ref{eq:epsilon3}), 
    \begin{equation}\label{eq:epsilon32}
        \varepsilon_3 \leq \frac{1}{4c_{\text{gap}}n^6B^4} = \frac{k_{\text{gap}}}{4B} 
    \end{equation}    \item $\mathbf{\tau_{INV}}$: Applying Theorem \ref{thm:fastlinearalgebra} to $C = \text{INV}(V^{(0)})$, if $||C - (V^{(0)})^{-1}|| := \tau_{INV}$ then 
    $\tau_{INV} \leq n^{c_{\text{INV}} \log 10}\cdot \textbf{u} \cdot (k_V)^{8 \log n} ||V^{(0)}||$. 
    Taking into account the inequality $k_V  \leq 800n^6B^2$ from Part (1), we get that $\tau_{INV} \leq (nB)^{c_{\tau} \log n}\cdot~\textbf{u}$ where $c_{\tau}$ is some appropriate constant.
Applying Lemma \ref{lem:boundsforu} to $\textbf{u}$,  we see that
    \begin{equation}\label{eq:2tauinv}
        2\tau_{INV} \leq \frac{\varepsilon}{6\sqrt{nB}} < 1
    \end{equation}
    since $c$ is large enough according to (\ref{eq:ubounds}).
    \item $\mathbf{\gamma_{CB}}$: Now we want to bound the error for the application of the change of basis algorithm in Step 6. Let $\alpha_i = ((V^{(0)} \otimes V^{(0)} \otimes V^{(0)}).T)_{i,i,i}$ and let $\alpha'_1,...,\alpha'_n$  be the output of $CB(T,V^{(0)})$. We want to bound $\gamma_{CB} := |\alpha_i - \alpha'_i|$. We have  $||V^{(0)}||_F \leq \sqrt{\kappa_F(V^{(0)})} = \sqrt{k_V} \leq 20n^3B$. Also, from Lemma \ref{lem:normtensorbound}, we have  $||T||_F \leq B^{\frac{3}{2}}$. By Theorem \ref{thm:fastcob}, we already have $\gamma_{CB} \leq  14n^{\frac{3}{2}}\cdot \textbf{u} \cdot (k_VB)^{\frac{3}{2}}$. By part (1), we get that $k_V \leq 800n^6B^2$ and hence, $(k_VB)^{\frac{3}{2}} \leq 30^3n^9B^\frac{9}{2}$. Applying Lemma \ref{lem:boundsforu} to $\textbf{u}$,  
    \begin{equation}\label{eq:gammacb}
        \gamma_{CB} \leq  \frac{\varepsilon^3}{2 \times (12)^3 k_V^{\frac{3}{2}}}
    \end{equation}
since $c$ is large enough 
according to (\ref{eq:ubounds}). 
\end{enumerate}
Next, we show that Theorem \ref{thm:alg2bound} can be applied to this situation. The following are the necessary conditions for the theorem to be applied: 
\begin{itemize}
    \item $\varepsilon_3 < \frac{k_{\text{gap}}}{4B}$ : This is shown to be satisfied in (\ref{eq:epsilon32}).
    \item $\frac{3nB\varepsilon_3}{k_{\text{gap}}} < \delta$ : This is shown to be satisfied in (\ref{eq:epsilon3}).
    \item $\delta < \min\{ \frac{k_{\text{gap}}}{48nB}, \frac{1}{240n^{\frac{7}{2}}B}\}$ : Since $k_{\text{gap}} \leq \frac{1}{c_{\text{gap}}n^6B^3}$, we get that for large enough $n$, $\delta <  \frac{1}{48c_{\text{gap}}n^7B^4} = \min\{\frac{k_{\text{gap}}}{48nB}, \frac{1}{240n^{\frac{7}{2}}B}\}$.
\end{itemize}
 This 
 shows that the conditions of Theorem~\ref{thm:alg2bound} are indeed satisfied.
Now we want to show that the error in Theorem \ref{thm:alg2bound} is bounded by $\varepsilon$. More formally, let us define 
\begin{align*}
    E_1 := 3\Big(e_{11} + \gamma_{CB})^{\frac{1}{3}}(\tau_{INV} + \sqrt{k_V})
\end{align*}
where $e_{11} := 14\delta n^3 B^{\frac{3}{2}}$ and
\begin{align*}
    E_2 := \sqrt{nB}(2\tau_{INV} + 5\textbf{u}.\sqrt{k_Vn^3} + e_{21})
\end{align*}
where $e_{21} := \frac{8\delta k_V\sqrt{n}}{3}$. We want to show that $E_1 + E_2 < \varepsilon$.
\par
We first show that $E_1 \leq \frac{\varepsilon}{2}$. 
From (\ref{eq:2tauinv}), we get that $2\tau_{INV} < 1$ whereas $k_V > 1$ 
implies $\tau_{INV} < \sqrt{k_V}$.
Set
\begin{equation}\label{eq:C}
    C = 28 \times (12)^3 \times (30)^3.
\end{equation}
Setting  $\delta = \frac{\varepsilon^3}{C n^{12}B^{\frac{9}{2}}}$ gives us that $e_{11} \leq \frac{\varepsilon^3}{2 \times 12^3 (k_V)^{\frac{3}{2}}}$. From (\ref{eq:gammacb}) we have 
$\gamma_{CB} \leq \frac{\varepsilon^3}{2 \times 12^3 (k_V)^{\frac{3}{2}}}$.

This gives us that $(e_{11}+\gamma_{CB})^{\frac{1}{3}} \leq \frac{\varepsilon}{12\sqrt{k_V}}$ and 
$3(e_{11}+\gamma_{CB})^{\frac{1}{3}}(2\sqrt{k_V}) <~\frac{\varepsilon}{2}$. Hence, finally $E_1 \leq \frac{\varepsilon}{2}$. 
We also want to show that $E_2 \leq \frac{\varepsilon}{2}$.  Since $k_V \leq 800n^6B^2$, applying Lemma \ref{lem:boundsforu1} to $u$ as set in (\ref{eq:ubounds}),  there must exist a large enough $c$ such that $\frac{30n^2\sqrt{k_VB}}{\varepsilon} \leq \frac{900n^5B^{\frac{3}{2}}}{\varepsilon} \leq \frac{1}{\textbf{u}}$. 
Thus we have
\begin{equation}\label{eq:E21}
    5\textbf{u}.\sqrt{k_V n^3} \leq \frac{\varepsilon}{6\sqrt{nB}}.
\end{equation}
Now, we claim that $e_{21} < \frac{\varepsilon}{6\sqrt{nB}}$. Using $\delta =  \frac{\varepsilon^3}{C n^{12}B^{\frac{9}{2}}}$, we get that $6e_{21}\sqrt{nB} \leq \frac{16 \times 800 \times \varepsilon^3}{Cn^5 B^4}. 
$ Since $\frac{16 \times 800}{C} < 1$, $\varepsilon < 1$ and $n,B > 1$, we finally obtain
\begin{equation}\label{eq:E22}
    e_{21} < \frac{\varepsilon}{6\sqrt{nB}}.
\end{equation}
Also, by (\ref{eq:2tauinv}) we already have 
\begin{equation}
    2\tau_{INV} \leq \frac{\varepsilon}{6\sqrt{nB}}.
\end{equation}
Thus, combining this with (\ref{eq:E21}) and (\ref{eq:E22}), finally gives us that $E_2 \leq \frac{\varepsilon}{2}$.
\par
We conclude that there exist cube roots of unity $\omega_i$ such that $$||\omega_iu_i - l_i|| \leq E_1 + E_2 < \varepsilon.$$
This proves that Algorithm \ref{algo:Jennrich} indeed returns an $\varepsilon$-approximate decomposition of $T$.
\end{proof}

\end{document}